\newcommand{\myfigure}[2]{ \includegraphics*[#1]{#2} }
\newcommand{\qed}{\hfill $\Box$}
\newenvironment{proofof}[1]{%
\begin{trivlist}\item[]{\em Proof of #1}\ }{\end{trivlist}}
\newcommand{\iteg}{{\em e.g.\/}}
\newcommand{\itie}{{\em i.e.\/}}
\newcommand{\itcf}{{\em cf.\/}}
\newcommand{\tdfloor}[1]{[ #1 ]}
\newcommand{\Qlfin}{Q^{\lambda}_\Lambda}
\newcommand{\tmicro}{\tilde{t}}
\newcommand{\omla}{\omega^\lambda}
\newcommand{\Fone}{F_1^\lambda}
\newcommand{\Fzero}{F_0^\lambda}
\newcommand{\PFzero}{\Phi_0^\lambda}
\newcommand{\PFone}{\Phi_1^\lambda}
\newcommand{\Ptrunc}{\widehat{\mathcal{P}}}
\newcommand{\Z}{{\mathbb Z}}
\newcommand{\R}{{\mathbb R}}
\newcommand{\C}{{\mathbb C\hspace{0.05 ex}}}
\newcommand{\ci}{{\rm i}}
\newcommand{\re}{{\rm Re\,}}
\newcommand{\im}{{\rm Im\,}}
\newcommand{\norm}[1]{\Vert #1\Vert}
\newcommand{\defset}[2]{ \left\{ #1\left|\,
 #2\makebox[0cm]{$\displaystyle\phantom{#1}$}\right.\!\right\} }
\newcommand{\set}[1]{\{#1\}}
\newcommand{\order}[1]{{\mathcal O}(#1)}
\newcommand{\mean}[1]{\langle #1\rangle}
\newcommand{\vep}{\varepsilon}
\newcommand{\defem}[1]{{\em #1\/}}
\newcommand{\qand}{\quad\text{and}\quad}
\newcommand{\E}{{\mathbb E}}
\newcommand{\Elfin}{{\mathbb E}^{\lambda}_\Lambda}
\newcommand{\Evepfin}{{\mathbb E}^{\lambda}_\Lambda}
\newcommand{\EG}{{\mathbb E}^{0}}
\newcommand{\sabs}[1]{\langle #1\rangle}
\newcommand{\Msing}{M^{\rm sing}}
\newcommand{\Ns}{N_{\rm s}}
\newcommand{\Fbcr}{F^{\rm cr}}
\newcommand{\Qmain}{Q^{\rm main}}
\newcommand{\Qpairs}{Q^{\rm pairs}}
\newcommand{\Nnp}{N_{\rm np}}
\newcommand{\oTtree}{\mathcal{T}_{\rm o}}
\newcommand{\oTedges}{\mathcal{E}_{\mathcal{T}_{\rm o}}}
\newcommand{\Ttree}{\mathcal{T}}
\newcommand{\Tverts}{\mathcal{V}_{\mathcal{T}}}
\newcommand{\Tedges}{\mathcal{E}_{\mathcal{T}}}
\newcommand{\verts}{\mathcal{V}}
\newcommand{\Rverts}{\mathcal{V}_{\rm R}}
\newcommand{\Fverts}{\mathcal{V}_{\rm F}}
\newcommand{\Iverts}{\mathcal{V}_{\rm I}}
\newcommand{\Cverts}{\mathcal{V}_{\rm C}}
\newcommand{\Dverts}{\mathcal{V}_{\rm 0}}
\newcommand{\fedges}{\mathcal{F}}
\newcommand{\edges}{\mathcal{E}}
\newcommand{\graph}{\mathcal{G}}
\newcommand{\rootv}{v_{\rm R}}
\newcommand{\upn}[1]{^{(#1)}}
\newcommand{\taup}{\hat{\tau}}
\newcommand{\N}{{\mathbb N}}
\newcommand{\T}{{\mathbb T}}
\newcommand{\1}{{\mathbbm 1}}
\newcommand{\rme}{{\rm e}}
\newcommand{\rmd}{{\rm d}}
\newcommand{\FT}[1]{\hat{#1}}
\newcommand{\IFT}[1]{\widetilde{#1}}
\newcounter{jlisti}
\newenvironment{jlist}[1][(\thejlisti)]{\begin{list}{{\rm #1}\ \ }{ %
      \usecounter{jlisti} %
    \setlength{\itemsep}{0pt}
    \setlength{\parsep}{0pt}  %
    \setlength{\leftmargin}{0pt} %
    \setlength{\labelwidth}{0pt} %
    \setlength{\labelsep}{0pt} %
}}{\end{list}}
\newtheorem{theorem}{Theorem}[section]
\newtheorem{definition}[theorem]{Definition}
\newtheorem{corollary}[theorem]{Corollary}
\newtheorem{lemma}[theorem]{Lemma}
\newtheorem{proposition}[theorem]{Proposition}
\newtheorem{assumption}[theorem]{Assumption}
\newtheorem{remark}[theorem]{Remark}
\newenvironment{proof}{\begin{trivlist}\item[]{\em Proof:}\/}{\end{trivlist}}
\numberwithin{equation}{section}
\numberwithin{theorem}{section}
\begin{document}
\selectlanguage{english}

\newcommand{\email}[1]{E-mail: \tt #1}
\newcommand{\emailjani}{\email{jani.lukkarinen@helsinki.fi}}
\newcommand{\addressjani}{\em University of Helsinki,
Department of Mathematics and Statistics\\
\em P.O. Box 68,
FI-00014 Helsingin yliopisto, Finland}
\newcommand{\addressherbert}{\em Zentrum Mathematik,
Technische Universit\"at M\"unchen, \\
\em Boltzmannstr. 3, D-85747 Garching, Germany}
\newcommand{\emailherbert}{\email{spohn@ma.tum.de}}

\title{Weakly nonlinear Schr\"{o}dinger equation with\\ random initial data}
\author{Jani Lukkarinen\thanks{\emailjani},
  Herbert Spohn\thanks{\emailherbert}\\[1em]
$^*$\addressjani \\[1em] $^\dag$\addressherbert }

\maketitle

\begin{abstract}
 It is common practice to approximate a weakly nonlinear wave equation through
 a kinetic transport equation, thus raising the issue of controlling
 the validity of the kinetic limit for a suitable choice of the random initial 
data. While for the 
 general case a proof of the kinetic limit  remains open, we report on 
 first progress. As wave equation we consider the
 nonlinear Schr\"{o}dinger equation discretized on a hypercubic
 lattice. Since this is a Hamiltonian system, a natural choice of
 random initial data is distributing them according to the corresponding Gibbs
measure with a
chemical potential chosen so that the Gibbs field has exponential mixing.
 The solution $\psi_t(x)$ of the nonlinear Schr\"{o}dinger equation 
 yields then a stochastic process stationary in $x\in\mathbb{Z}^d$ and
 $t\in \mathbb{R}$. If $\lambda$ denotes the strength of the
 nonlinearity, we prove that the space-time covariance of
 $\psi_t(x)$ has a limit as $\lambda\to 0$ for $t=\lambda^{-2}\tau$,
 with $\tau$ fixed and $|\tau|$ sufficiently small.
 The limit agrees with the prediction from kinetic theory.
\end{abstract}

\tableofcontents

\section{Introduction}\label{sec:intro}

The nonlinear Schr\"{o}dinger equation (NLS) governs the evolution
of a complex valued wave field
$\psi:\mathbb{R}\times\mathbb{Z}^d\to\mathbb{C}$ and reads
\begin{equation}\label{eq:1.1}
\textrm{i}\frac{\textrm{d}}{\textrm{d}t} \psi_t(x)=\sum_{y\in
\mathbb{Z}^d}\alpha(x-y)\psi_t(y)+\lambda|\psi_t(x)|^2 \psi_t(x)\,.
\end{equation}
Here $\alpha(x)$ are the ``hopping amplitudes'' and we assume
that they satisfy
\begin{jlist}
\item $\alpha:\mathbb{Z}^d\to \mathbb{R}$, $\alpha(x)=\alpha(-x)$.
\item $|\alpha|$ has an exponentially decreasing upper bound.
\end{jlist}
We consider only the dispersive case $\lambda\geq 0$. Usually the
NLS is studied in the continuum setting, where $\mathbb{Z}^d$ is
replaced by $\mathbb{R}^d$ and the linear term is $\Delta\psi_t(x)$.
It will become evident later on why for our purposes the spatial
discretization is a necessity.

The NLS is a Hamiltonian system. To see this, we define the
canonical degrees of freedom $q_x,p_x\in \R$, $x\in\mathbb{Z}^d$,
via $\psi(x)=(q_x+\ci p_x)/\sqrt{2}$.  Their Hamiltonian function 
is obtained by substitution in
\begin{equation}\label{eq:1.2}
H(\psi)= \sum_{x,y\in\mathbb{Z}^d}\alpha(x-y)\psi(x)^\ast\psi(y)
+\tfrac{1}{2}\lambda \sum_{x\in\mathbb{Z}^d} |\psi(x)|^4\,.
\end{equation}
It is easy to check that the corresponding equations of motion,
\begin{equation}\label{eq:1.3}
\frac{\textrm{d}}{\textrm{d}t}q_x=\frac{\partial}{\partial p_x}
 H\,,\quad
\frac{\textrm{d}}{\textrm{d}t}p_x=-\frac{\partial}{\partial q_x} H\, ,
\end{equation}
are identical to the NLS. In particular, we conclude that the energy
is conserved, $H(\psi_t)=H(\psi_0)$ for all $t\in\mathbb{R}$.
Also the $\ell_2$-norm is conserved, in this context also referred
to as particle number $N$,
\begin{equation}\label{eq:1.3a}
N(\psi)= \sum_{x\in\mathbb{Z}^d}|\psi(x)|^2\,,\quad
N(\psi_t)=N(\psi_0)\textrm{ for all } t\in\mathbb{R}\,.
\end{equation}
Because of energy conservation law, if $H(\psi_0)<\infty$, then the Cauchy
problem for (\ref{eq:1.1}) has a
unique global solution. We refer to \cite{SuSu99} for a more detailed
information on the NLS.

In this work we are interested in random initial data. From a
statistical physics point of view a very natural choice is to take
the initial $\psi$-field to be distributed according to a Gibbs
measure for $H$ and $N$, which physically means that the wave field
is in thermal equilibrium. Somewhat formally the Gibbs measure is
defined through
\begin{equation}\label{eq:1.4}
\frac{1}{Z}\exp \big[-\beta\big(H(\psi)-\mu N(\psi)\big)\big]\prod_x
\left[\rmd\bigl(\re \psi(x)\bigr) \rmd\bigl(\im \psi(x)\bigr)\right] .
\end{equation}
Here $\beta>0$ is the inverse temperature and $\mu\in\mathbb{R}$ the
chemical potential.
The partition function $Z$ is a constant chosen so that
(\ref{eq:1.4}) is a probability measure. To properly define
the Gibbs measure one has to restrict (\ref{eq:1.4}) to some finite box
$\Lambda\subset\mathbb{Z}^d$, which yields a well-defined probability
measure $\mathbb{P}^\lambda_{\beta,\mu,\Lambda}$ on
$\mathbb{C}^{|\Lambda|}$. The Gibbs probability measure
$\mathbb{P}^\lambda_{\beta,\mu}$ on $\mathbb{Z}^d$ is then obtained
in the limit $\Lambda\nearrow\mathbb{Z}^d$. The existence of this
limit is a well-studied problem \cite{LePr76}. If $\lambda$ is
sufficiently small and $\mu$ sufficiently negative, then the Gibbs
measure $\mathbb{P}^\lambda_{\beta,\mu}$ exists.
The random field
$\psi(x)$, $x\in\mathbb{Z}^d$, distributed according to
$\mathbb{P}^\lambda_{\beta,\mu}$, is stationary with a rapid decay
of correlations. It is also gauge invariant in the sense that
$\psi(x)=\mathrm{e}^{i\theta}\psi(x)$ in distribution for 
any $\theta\in [0,2\pi]$.

Of course, $\mathbb{P}^\lambda_{\beta,\mu}$-almost surely it holds
$H(\psi)=\infty$ and $N(\psi)=\infty$. Thus one has to define
solutions for the NLS with initial data of infinite energy. This has
been accomplished for standard anharmonic Hamiltonian systems by
Lanford, Lebowitz, and Lieb \cite{lll77}, who prove existence and
uniqueness  under a suitable growth condition at infinity for the
initial data.  These arguments extend to the Hamiltonian system (\ref{eq:1.3}).
It remains to prove that the so-defined infinite volume dynamics is well
approximated by the  
finite volume dynamics with periodic boundary conditions. Very likely 
such a result can be achieved using the methods developed in
\cite{butta07}. For our purposes 
it is more convenient to circumvent the issue by proving estimates which are
uniform in the volume.

Let us briefly comment why the more conventional continuum NLS, 
\begin{equation}
\ci \frac{\partial }{\partial t}u(x,t) = -\Delta_x u(x,t) + \lambda 
\int_{\R^d} \!\rmd y\,
|u(y,t)|^2 V(y-x)u(x,t),\quad x\in \R^d,\ t\in \R\, ,
\end{equation}
poses additional difficulties. The Gibbs measure at finite volume is a
perturbed Gaussian measure 
which is singular at short distances. Thus the construction of the dynamics
requires an effort. 
Furthermore,  the limit $V(x) \to \delta(x)$ is a fundamental problem 
of constructive quantum field theory and 
is known to be difficult \cite{glimmjaffe}. To establish the existence
of the dynamics 
for such singular initial data has not even been attempted.

In the present context the most basic quantity is the stationary
covariance
\begin{equation}\label{eq:1.5}
\mathbb{E}^\lambda_{\beta,\mu}\big(\psi_0(x_0)^\ast\psi_t(x)\big)=
F^\lambda_2 (x-x_0,t)\, ,
\end{equation}
where $\mathbb{E}^\lambda_{\beta,\mu}$ denotes expectation with respect to 
$\mathbb{P}^\lambda_{\beta,\mu}$. 
The existence of such a function $F^\lambda_2$ follows from the translation
invariance of the measure, and one would like to know its
qualitative dependence on $x,t$. For deterministic infinitely extended
Hamiltonian systems, such as the NLS, establishing the qualitative 
behavior of equilibrium time correlations is known to be an extremely
difficult problem with very few results available, despite intense efforts.
For linear systems one has an explicit solution in Fourier space, see below. 
But already for
completely integrable systems, like the Toda chain, not much is
known about time correlations in thermal equilibrium.

It is instructive first to discuss the linear case, $\lambda=0$,
for which purpose we introduce Fourier transforms. For
$f:\mathbb{Z}^d\to\mathbb{C}$ let us denote its Fourier transform by
\begin{equation}\label{eq:1.6}
\hat{f}(k)=\sum_{x\in\mathbb{Z}^d} f(x) \textrm{e}^{-\textrm{i} 2\pi
k\cdot x}\,,
\end{equation}
$k\in\mathbb{R}$, and the inverse Fourier transform by
\begin{equation}\label{eq:1.7}
\tilde{g}(x)=\int_{\mathbb{T}^d} \textrm{d}k\, g(k)
\textrm{e}^{\textrm{i} 2\pi k\cdot x}
\end{equation}
with $\mathbb{T}^d=[0,1]^d$, a parametrization of the $d$-dimensional torus. 
(We will use arithmetic relations on $\T^d$.  These are defined using 
the arithmetic induced on the torus via its definition as equivalence classes
$\R^d/\Z^d$,
\itie , by using ``periodic boundary conditions''.)
In particular, we set
\begin{equation}\label{eq:1.8}
\omega(k)=\hat{\alpha}(k),\quad k\in \T^d\, .
\end{equation}
The function $\omega$ is the dispersion relation 
of our discretized linear Schr\"{o}\-din\-ger equation. 
It follows from the assumptions on $\alpha$ that
\begin{jlist}
\item $\omega:\mathbb{T}^d\to\mathbb{R}$ and its periodic extension is
a real analytic function.
\item $\omega(k)=\omega(-k)$.
\end{jlist}
In Fourier space the energy is given by
\begin{align}\label{eq:1.9}
& H(\psi)=  \int_{\mathbb{T}^d} \textrm{d}k\,
\omega(k)|\hat{\psi}(k)|^2
\nonumber\\ & \quad
+ \tfrac{1}{2}\lambda
 \int_{(\mathbb{T}^{d})^4} \textrm{d}k_1 \textrm{d}k_2 \textrm{d}k_3
\textrm{d}k_4
 \delta(k_1+k_2-k_3-k_4) \hat{\psi}(k_1)^\ast \hat{\psi}(k_2)^\ast
\hat{\psi}(k_3)\hat{\psi}(k_4)\, ,
\end{align}
where $\delta$ is a formal Dirac $\delta$-function, 
used here to simplify 
the notation for the convolution integral.
Clearly, $H(\psi)\geq (\inf_k \omega(k)) N(\psi)$. The NLS after Fourier
transform reads
\begin{align}\label{eq:1.10}
&
\frac{\textrm{d}}{\textrm{d}t}\hat{\psi}_t(k_1)=
-\textrm{i} \omega(k_1) \hat{\psi}_t(k_1) -\textrm{i} \lambda \int
\textrm{d}k_2 \textrm{d}k_3 \textrm{d}k_4
 \delta(k_1+k_2-k_3-k_4)
\nonumber\\ & \qquad\times
\hat{\psi}_t(k_2)^\ast\hat{\psi}_t(k_3)\hat{\psi}_t(k_4)\,.
\end{align}

For $\lambda=0$, $\mathbb{P}^0_{\beta,\mu}$ is a Gaussian measure
with mean zero and covariance
\begin{equation}\label{eq:1.11}
\mathbb{E}^0_{\beta,\mu} \big( \psi(0)^\ast \psi(x)\big)= F^0_2
(x,0)=\int_{\mathbb{T}^d} \textrm{d}k
\big(\beta(\omega(k)-\mu)\big)^{-1} \textrm{e}^{\textrm{i} 2\pi
k\cdot x}\,,
\end{equation}
provided $\mu<\inf_k \omega(k)$. Under our assumptions on
$\omega$ the Gaussian field has exponential mixing. For the
time-dependent equilibrium covariance one obtains
\begin{equation}\label{eq:1.12}
F^0_2(x,t)=\int_{\mathbb{T}^d} \textrm{d}k
\big(\beta(\omega(k)-\mu)\big)^{-1} \textrm{e}^{\textrm{i} 2\pi
k\cdot x} \textrm{e}^{-\textrm{i} \omega(k)t}\,.
\end{equation}
Clearly, $F^0_2(x,t)$ is a solution of the linear wave
equation for exponentially localized initial data and thus spreads
dispersively.

If $\lambda> 0$, as general heuristics the nonlinearity 
should induce an exponential damping of $F^\lambda_2$. The physical
picture is based on excitations of wave modes which interact weakly
and are damped through collisions. Approximate theories have been
developed in the context of phonon physics and wave turbulence, see
\iteg ~\cite{Gu86,ZLF92}. To mathematically establish such a time-decay is
completely out of reach, at present, whatever the choice of the
nonlinear wave equation.

To make some progress we will investigate here the regime of small
nonlinearity, $\lambda\ll 1$. The idea is not to aim for results
which are valid globally in time, but rather to consider the first
time scale on which the effect of the nonlinearity becomes visible.
For small $\lambda$ the rate of collision for two resonant waves is
of order $\lambda^2$. Therefore, the nonlinearity is expected to show
up on a time scale $\lambda^{-2}$. This suggests to study the limit
\begin{equation}\label{eq:1.13}
F^\lambda_2 (x,\lambda^{-2} t)\, ,\quad \textrm{as }\lambda\to 0\,.
\end{equation}
Note that the location $x$ is not scaled. For this limit to exist,
one has to remove the oscillating phase
resulting from (\ref{eq:1.12}), which on the speeded-up time scale is
rapidly oscillating, of order $\lambda^{-2}$.   In fact, a second rapidly
oscillating phase of order $\lambda^{-1}$ will show up, which also has
to be removed. Under suitable conditions on $\omega$, we will prove
that $F^\lambda_2 (x,\lambda^{-2} t)$, with the removals just
mentioned, has a limit for $\lambda\to 0$, at least for $|t|\leq
t_0$ with some suitable $t_0>0$. The limit function indeed exhibits
exponential damping.

A similar result has been obtained a long time ago for a system of
hard spheres in equilibrium and at low, but fixed, density
\cite{BLLS80}. There the small parameter is the density rather than the
strength $\lambda$ of the nonlinearity. But the over-all philosophy is the
same. To establish the decay of time-correlations in equilibrium at
a fixed low density is an apparently very hard problem. Therefore,
one looks for the first time scale on which the collisions between
hard spheres have a visible effect. By fiat, hard spheres remain
well localized in space, and on the time scale of interest only a
finite number of collisions per particle are taken into account. In 
contrast, waves tend to delocalize through collisions. This is
the reason why the problem under study has remained open. Our resolution
uses techniques totally different from \cite{BLLS80}.

The limit $\lambda\to 0$, $t=\lambda^{-2}\tau$ with $\tau$ fixed, together
with a possible rescaling of space by a factor $\lambda^{-2}$, is called
\textit{kinetic limit}, because the limit object is governed by a kinetic type
transport equation. Formal derivations are discussed extensively in the
literature, \iteg , see \cite{janssen03,LN04}. 
On the mathematical side, Erd{\H o}s and Yau \cite{erdyau99} 
study in great detail the linear 
Schr\"{o}dinger equation with a random potential, extended to
even longer time scales in \cite{erdyau05b,erdyau05a}.
The discretized wave equation with a random
index of refraction is covered in \cite{ls05}. For nonlinear wave equations
the only related study is by Benedetto \textit{et al.\/} \cite{BCEP08} on the
dynamics of weakly interacting quantum particles. They transform to multipoint
Wigner functions, 
which leads to an expansion somewhat different from the one used here. We
refer to \cite{ls09} 
for a comparison. As in our contribution, Benedetto \textit{et al}. have to
analyze the asymptotics 
of high-dimensional oscillatory integrals. But in contrast, they
have no control on the error term in the expansion.

Before closing the introduction, we owe the reader some explanations
why a seemingly perturbative result requires so many pages for its
proof. From the solution to (\ref{eq:1.1}) one can regard
$\psi_t(x)$ as some functional $\mathcal{F}_{x,t}$ of the initial
field $\psi$,
\begin{equation}\label{eq:1.14}
\psi_t(x)=\mathcal{F}_{x,t}(\psi)\,.
\end{equation}
For given $t$ it depends only very little on those $\psi(y)$'s for
which $|y-x|\gg t$. To make progress it seems necessary to first
average the initial conditions over
$\psi(x_0)^\ast\mathbb{P}^\lambda_{\beta,\mu}$ so that subsequently
one can control the limit $\lambda\to 0$ with $t=\lambda^{-2}\tau$,
$\tau>0$. Such an average can be accomplished by writing
$\mathcal{F}_{x,t}$ as a power series in $\psi$, which is done
through the Duhamel formula. For any $n\geq 1$ we write
\begin{align}\label{eq:1.15}
 & \prod^n_{j=1} \textrm{e}^{\textrm{i} 2\pi \sigma_j
\omega(k_j)t} \hat{\psi}_{t}(k_j,\sigma_j )
 = \prod^n_{j=1}\hat{\psi}_{0}(k_j,\sigma_j )
 +\int^t_0 \textrm{d}s
\frac{\textrm{d}}{\textrm{d}s}\prod^n_{j=1} \textrm{e}^{\textrm{i}
2\pi \sigma_j  \omega(k_j)s}\hat{\psi}_{s}(k_j,\sigma_j)\,.
\end{align}
Here $\sigma_j \in \set{\pm 1}$ and $\hat{\psi}_t(k,1)=\hat{\psi}_t(k)$,
$\hat{\psi}_{t}(k,-1)=\hat{\psi}_t(-k)^\ast$,
$\hat{\psi}_0(k)=\hat{\psi}(k)$. Using the product rule and the
equations of motion (\ref{eq:1.10}) yields a formula relating the
$n$:th moment at time $t$ to the time-integral of a sum over
$(n+2)$:th moments at time $s$. Iterating this equation leads to a (formal)
series representation
\begin{equation}\label{eq:1.16}
\hat{\psi}_t(k)=\sum^\infty_{n=1} \mathcal{P}^n_{k,t}
(\hat{\psi})\,,
\end{equation}
where $\mathcal{P}^n_{k,t}$ is a sum/integral over monomials of
order $n$ in $\hat{\psi}$ and $\hat{\psi}^\ast$. Since each
time-derivative increases the degree of the monomial by two, we
have
\begin{equation}\label{eq:1.17}
\delta(k'-k)\sum_{x\in\mathbb{Z}^d} \textrm{e}^{-\textrm{i}2\pi
k\cdot x} \mathbb{E}^\lambda_{\beta,\mu}\big(\psi(0)^\ast
\psi_t(x)\big)=\sum^\infty_{n=0}
\mathbb{E}^\lambda_{\beta,\mu}\big(\hat{\psi}(k')^\ast
\mathcal{P}^{2n+1}_{k,t}(\hat{\psi})\big)\,.
\end{equation}

The first difficulty arises from the fact that the sum in
(\ref{eq:1.17}) does not converge absolutely for any $t$. Very roughly,
$\mathcal{P}^n_{k,t}$ is a sum of $n!$ terms of equal size. The
iterated time-integration yields a factor $t^n/n!$. However, for
the approximately Gaussian average the $n$:th moment grows also as
$n!$. To be able to proceed one has to stop the series expansion at some large
$N$ which depends on $\lambda$. A similar situation was encountered by
Erd\H{o}s and Yau \cite{erdyau99} in their study of the Schr\"{o}dinger
equation with a weak random potential. We will use the powerful  
Erd\H{o}s-Yau techniques as a guideline for handling the series in
(\ref{eq:1.17}). 

The stopping of the series expansion will leave a remainder term containing 
the full original time-evolution.
Erd\H{o}s and Yau control the error term in essence by unitarity of the
time-evolution. For the NLS mere conservation of $N(\psi)$ will not suffice. 
Instead, we use stationarity of $\psi_t(x)$. In
wave turbulence \cite{ZLF92} one is also interested in non-stationary
initial measures, \iteg , in Gaussian measures with a covariance
different from $(\beta(\omega(k)-\mu))^{-1}$. For such initial data
we have no idea how to control the error term, while other parts of
our proof apply unaltered.

The central difficulty resides in
$\mathbb{E}^\lambda_{\beta,\mu}\big(\hat{\psi}(k')^\ast
\mathcal{P}^{2n+1}_{k,t}(\hat{\psi})\big)$ which is a sum of rather
explicit, but high-dimensional, dimension $n(1+3d)+d$, oscillatory
integrals. On top, because of the $\delta$-function in
(\ref{eq:1.10}), the integrand is restricted to a non-trivial linear
subspace. In the limit $\lambda\to 0$, $t=\lambda^{-2}\tau$,
$\tau>0$, only a few oscillatory integrals have a non-zero limit.
Summing up these leading oscillatory integrals results in the anticipated 
exponential damping. The major task of our paper is to discover
an iterative structure in all remaining oscillatory integrals, 
in a way which allows for an estimate in terms of a few basic ``motives''. 
Each of these subleading integrals is shown to contain at least one motive
whose appearance leads to
an extra fractional power of $\lambda$, thereby ensuring a zero limit.

In Section \ref{sec:finite volume}  we first give the
mathematical definition of the above system in finite volume, and
state in Section \ref{sec:model} the assumptions and main results. 
Their connection to kinetic theory is discussed in Section \ref{sec:link}.
The proof of the main result
is contained in the remaining sections: we derive a suitable
time-dependent perturbation expansion in Section \ref{sec:graphs},
and develop a graphical language to describe the large, but finite,
number of terms in the expansion in Section \ref{sec:diagrams}.  The
analysis of the oscillatory integrals in the expansion is contained
in Sections \ref{sec:momdeltas}--\ref{sec:fullypaired}. More detailed outline
of the technical structure of the proof can be found in Section
\ref{sec:structureofproof}. The estimates are collected together and
the limit of the non-zero terms is computed in Section
\ref{sec:completion} where we complete the proof of the main
theorem. In an Appendix, we show that the standard nearest neighbor
couplings in $d\ge 4$ dimensions lead to dispersion relations
satisfying all assumptions of the main theorem.

\bigskip

\noindent
{\it Acknowledgments.} 
We would like to thank L\'{a}szl\'{o} Erd\H{o}s and
Horng-Tzer Yau for many illuminating discussions on the subject.
The research of J.\ Lukkarinen was supported by the Academy of Finland.

\section{Kinetic limit and main results}\label{sec:main}

\subsection{Finite volume dynamics}\label{sec:finite volume}

To properly define expectations such as (\ref{eq:1.5}), one has to go through a
finite volume construction, which will be specified in this
subsection.

Let
\begin{equation}\label{eq:1.1.1}
  L\ge 2\,,\quad \Lambda = \{0,1,\ldots,L-1\}^d\,,
\end{equation}
the dimension $d$ an arbitrary positive integer. We apply periodic
boundary conditions on $\Lambda$, and let $[x]=x \bmod L\in \Lambda$
for all $x\in \mathbb{Z}^d$.  Fourier transform of
$f:\Lambda\to\mathbb{C}$ is denoted by $\hat{f} : \Lambda^* \to
\mathbb{C}$, with the dual lattice $\Lambda^* =
\{0,\frac{1}{L},\ldots,\frac{L-1}{L}\}^d$ and with
\begin{equation}
   \hat{f}(k) = \sum_{x\in \Lambda} f(x) \textrm{e}^{-\textrm{i} 2\pi k \cdot x}
   \end{equation} for all
   $k\in \Lambda^*$ (or for all $k\in (\mathbb{Z}/L)^d$, which yields the
periodic extension of $\hat{f}$).   The inverse transform is given
by
\begin{equation} 
   \tilde{g}(x) = \frac{1}{|\Lambda|} \sum_{k\in \Lambda^*}  g(k)
     \textrm{e}^{\textrm{i} 2\pi k \cdot x}\,,
\end{equation}
where $|\Lambda|=L^d$. For all $x\in \Lambda$, it holds
$\tilde{{\hat{f}}}(x)=f(x)$. The arithmetic operations on $\Lambda$
are done periodically, identifying it as a parametrization of
$\mathbb{Z}_L^d$, the cyclic group of $L$ elements (for instance, for
$x,y\in \Lambda$, we have then $x+y=[x+y]$ and $-x=[-x]$.)
Similarly, $\Lambda^*$
is identified as a subset of the $d$-torus $\mathbb{T}^d$.

We will use the short-hand notations 
\begin{equation} 
  \int_{\Lambda^*} \textrm{d} k\, \cdots =
   \frac{1}{|\Lambda|} \sum_{k\in \Lambda^*} \cdots \, ,
\end{equation}
and
\begin{equation}\label{eq:2.5}
\langle f,\psi\rangle=\sum_{x\in\Lambda} f(x)^\ast \psi(x)\, ,
\end{equation}
as well as the similar but unrelated notation for ``regularized'' absolute
values
\begin{align}
 \sabs{x} =\sqrt{1+x^2}, \qquad \text{for all } x\in \R\, .
\end{align}
Let us also denote the limit $L\to \infty$ by $\Lambda\to\infty$.
Let $\omega:\mathbb{T}^d\to\mathbb{R}$ be defined as in (\ref{eq:1.8}).
For the finite volume, we introduce the periodized $\alpha_\Lambda$
through
 \begin{equation} 
  \alpha_\Lambda(x) = \int_{\Lambda^*}\! \textrm{d} k \, \textrm{e}^{\textrm{i}
2\pi x \cdot k}
  \omega(k) = \frac{1}{|\Lambda|} \sum_{k\in \Lambda^*}
  \textrm{e}^{\textrm{i} 2\pi x\cdot k} \omega(k).
\end{equation}
Clearly, $\alpha_\Lambda\in \mathbb{R}$ and
$\alpha_\Lambda(-x)=\alpha_\Lambda(x)$ for all $x\in \Lambda$.

After these preparations, we define the finite volume Hamiltonian for
$\psi:\Lambda \to \mathbb{C}$ by
\begin{align}\label{eq:HfinV}
&  H_\Lambda(\psi) = \sum_{x,y\in \Lambda}
\alpha_\Lambda(x-y)
  \psi(x)^* \psi(y)  +  \tfrac{1}{2} \lambda \sum_{x\in\Lambda} |\psi(x)|^4
\nonumber \\ & \quad
   = \int_{\Lambda^*} \textrm{d} k \, \omega(k) |\hat{\psi}(k)|^2
\nonumber \\ & \qquad
   + \tfrac{1 }{2}\lambda
   \int_{(\Lambda^*)^4}\!\!\! \textrm{d} k_1 \textrm{d} k_2 \textrm{d}
k_3\textrm{d} k_4\,
  \delta_{\Lambda}(k_1+k_2-k_3-k_4)
   \hat{\psi}(k_1)^*\hat{\psi}(k_2)^*\hat{\psi}(k_3)
   \hat{\psi}(k_4)\,,
\end{align}
where $\lambda\ge 0$ and $\delta_{\Lambda}: (\mathbb{Z}/L)^d \to
\mathbb{R}$ is the following discrete $\delta$-function
\begin{equation} 
  \delta_{\Lambda}(k) = |\Lambda| \mathbbm{1}(k \bmod 1 = 0) .
\end{equation}
Here $\1$ denotes a generic characteristic function: $\1(P)=1$, if
the condition $P$ is true, and $\1(P)=0$ otherwise.
$H_\Lambda(\psi)\ge c\|\psi\|^2_2$ for all $\psi$, with
$c=\inf_k\omega(k)>-\infty$ and $\|\psi\|_2$ denoting the
$\ell_2(\Lambda)$-norm.

Introducing, as before, the canonical conjugate pair $q_x,p_x\in\R$ through
$\psi(x)=(q_x+\ci p_x)/\sqrt{2}$, and then applying 
the evolution equations associated to
$H_\Lambda$,
we find that $\psi_t(x)$ satisfies the finite volume discrete NLS
\begin{equation}\label{eq:dNLS}  
\textrm{i} \frac{\textrm{d} }{\textrm{d} t} \psi_t(x) =
  \sum_{y\in \Lambda} \alpha_\Lambda(x-y) \psi_t(y)
+ \lambda |\psi_t(x)|^2 \psi_t(x) \, .
\end{equation}
The Fourier-transform $\FT{\psi}_t(k)$ satisfies the evolution equation
\begin{align}\label{eq:FTdNLS2}
& \frac{\rmd }{\rmd t} \FT{\psi}_t(k_1) = -\ci \omega(k_1) \FT{\psi}_t(k_1)
\nonumber \\ & \quad
-\ci \lambda \int_{(\Lambda^*)^3}\!\! \rmd k_2 \rmd k_3\rmd k_4\,
 \delta_{\Lambda}(k_1+k_2-k_3-k_4)
  \FT{\psi}_t(k_2)^* \FT{\psi}_t(k_3) \FT{\psi}_t(k_4) .
\end{align}

The evolution equations have a continuously differentiable
solution for all $t\in\R$ and for any given initial conditions
$\psi_0\in \C^{\Lambda}$, which follows by a standard
fixed point argument and the conservation laws stated below.
The energy $H_\Lambda(\psi)$ is naturally
conserved by the time-evolution.  In addition, for all $x$,
\begin{align}\label{eq:l2deriv}
 \frac{\rmd }{\rmd t} |\psi_t(x)|^2 = -\ci
\sum_{y\in \Lambda} \alpha_\Lambda(x-y) \left(\psi_t(x)^* \psi_t(y)-
 \psi_t(y)^* \psi_t(x)\right) .
\end{align}
The right hand side sums to zero if we sum over all $x\in\Lambda$.
Therefore, for $t\in\R$,
\begin{align}\label{eq:normisconstant}
\norm{\psi_t}_2^2=\sum_{x\in \Lambda} |\psi_t(x)|^2 = \norm{\psi_0}_2^2\, ,
\end{align}
and thus also $\norm{\psi_t}_2$ is a constant of motion.

The initial field is taken to be distributed according to the finite
volume Gibbs measure as explained in the introduction.  We assume
that its parameters are fixed to some values satisfying 
$\beta>0$ and $\mu<\inf_k\omega(k)$,
and we drop the dependence on these parameters from the notation. 
Then the Gibbs measure is
\begin{align}\label{eq:LGibbs}
& \int_{\C^\Lambda} \mathbb{P}_{\Lambda}^\lambda(\rmd \psi) f(\psi)
= \frac{1}{Z_{\Lambda}^\lambda}
\int_{(\R^2)^{\Lambda}} \prod_{x\in \Lambda}
\left[\rmd(\re \psi(x))\, \rmd(\im \psi(x))\right]
  \rme^{-\beta (H_{\Lambda}(\psi)-\mu \norm{\psi}^2)} f(\psi)
 \, .
\end{align}
Expectation values with respect to the
finite volume, perturbed measure $\mathbb{P}_{\Lambda}^\lambda$ are
denoted by $\Elfin$.  Taking the limits $\Lambda\to\infty$ and
$\lambda\to 0$ leads to a Gaussian measure.  It is 
defined via its covariance function which
has a Fourier transform
\begin{align} 
W(k) = \FT{F}_2^0(k,0)= \frac{1}{\beta(\omega(k)-\mu)}\, .
\end{align}
We denote expectations over this Gaussian measure by $\EG$.
Note that by the translation invariance of the finite volume Gibbs
measure, there always exists a function
$W^{\lambda}_\Lambda:\Lambda^*\to \C$ such that for all $k,k'\in
\Lambda^*$,
\begin{align} 
  \Elfin[\FT{\psi}(k)^* \FT{\psi}(k')] = \delta_\Lambda(k-k')
  W^{\lambda}_\Lambda(k)\, .
\end{align}

Since the energy and norm are conserved, the Gibbs measure is time
stationary.  In other words, for all $t$ and any integrable $f$
\begin{align} 
\Elfin[f(\psi_t)] = \Elfin[f(\psi_0)] .
\end{align}
In addition, since the dynamics and the Gibbs measure are invariant under
periodic translations of $\Lambda$, under $\mathbb{P}^\lambda_\Lambda$ the
stochastic process $(x,t)\mapsto \psi_t(x)$ is stationary jointly in
space and time.

\subsection{Main results}\label{sec:model}

We have to impose two types of assumptions. Those in Assumption 
\ref{th:disprelass}
are conditions on the dispersion relation $\omega$. 
Assumption \ref{th:Ainitcond}
is concerned with a specific form of the clustering of the Gibbs
measure. In each case we comment on their current status.

\begin{assumption}[Equilibrium correlations]\label{th:Ainitcond}
Let $\beta>0$ and  $\mu < \inf_k\omega(k)$ be given.
We take the initial conditions $\psi_0$ to be distributed according
to the Gibbs measure $\mathbb{P}_{\Lambda}^\lambda$ which is  assumed to be
{\em $\ell_1$-clustering\/} in the following sense:
We assume that
there exists $\lambda_0 > 0$ and $c_0 > 0$, independent of $n$,
such that for  $0 <\lambda\le \lambda_0$ and all $n\ge  4$ one has the
following bound for the fully truncated correlation functions (\itie , cumulants)
\begin{align}\label{eq:l1clustering}
 \sup_{\Lambda,\sigma\in \set{\pm 1}^n} \sum_{x\in \Lambda^{n}}
\1(x_1=0)
  \Bigl|\Elfin\Bigl[\prod_{i=1}^n \psi(x_i,\sigma_i)\Bigr]^{\rm trunc}\Bigr|
 \le \lambda (c_0)^n n!\, ,
\end{align}
where $\psi(x,1)=\psi(x)$, $\psi(x,-1)=\psi(x)^*$.
We also assume a comparable convergence of
the two-point correlation functions for $0 <\lambda\le \lambda_0$,
\begin{equation}\label{eq:twopointlim}
 \limsup_{\Lambda\to \infty} \sum_{\norm{x}_\infty\le L/2} \left|
  \Elfin[\psi(0)^* \psi(x)] - \E^0[\psi(0)^* \psi(x)]\right| \le \lambda
2 (c_0)^2\,.
\end{equation}
\end{assumption}
In the present proof, valid for $d\ge 4$, we do not use the full strength of
the  bound in (\ref{eq:l1clustering}), namely, we could omit 
the pre\-factor $\lambda$. 
However, the pre\-factor could be needed in any proof which concerns
$d\le 3$.
In contrast, we do make use of the pre\-factor in (\ref{eq:twopointlim}).
The second condition can equivalently be recast in terms of $W$ as
\begin{equation}
 \limsup_{\Lambda\to \infty} \sum_{\norm{x}_\infty\le L/2} \left|
   \IFT{W}^{\lambda}_\Lambda(x)-\IFT{W}(x)\right| \le \lambda
2 (c_0)^2\,.
\end{equation}

Technically, Assumption \ref{th:Ainitcond} refers to the clustering of a weakly
coupled massive two-com\-po\-nent $\lambda\phi^4$-theory. Such problems
have a long tradition in equilibrium statistical mechanics and are
handled through cluster expansions, \iteg , see \cite{MM91,Salm08}.
The difficulty with Assumption \ref{th:Ainitcond} resides in the precise $n$-
and $\lambda$-dependence of the bounds. Motivated by our work, 
the issue was reinvestigated for 
the equilibrium measure (\ref{eq:LGibbs}) 
in the contribution of Abdesselam, Procacci, and
Scoppola  \cite{abdesselam-2009}, in which 
they prove Assumption \ref{th:Ainitcond} 
for hopping amplitudes of finite range and with zero
boundary conditions, \itie , setting $\psi(x) = 0$ for $x \notin\Lambda$.
The authors ensure us that their results remain valid also for periodic
boundary conditions, thereby establishing Assumption  \ref{th:Ainitcond} for a
large class of hopping amplitudes.

For the main theorem we will need properties of the linear dynamics,
$\lambda=0$, which can be thought of as implicit conditions on
$\omega$.
\begin{assumption}[Dispersion relation]\label{th:disprelass}
Suppose $d\ge 4$, and $\omega:\T^d\to\R$ satisfies all of the following:
\begin{jlist}[(DR\thejlisti)]
\item\label{it:DR1} The periodic extension of $\omega$ is real-analytic
and $\omega(-k)=\omega(k)$.
\item\label{it:DRdisp} ($\ell_3$-dispersivity). Let us consider the 
{\em free propagator\/}
  \begin{align}\label{eq:defptx}
    p_t(x) = \int_{\T^d} \!\rmd k\, \rme^{\ci 2\pi x\cdot k}
    \rme^{-\ci t \omega(k)} \, .
  \end{align}
We assume that there are $C,\delta>0$ such that for all $t\in\R$,
\begin{align} 
  \norm{p_t}_3^3 = \sum_{x\in\Z^d} |p_t(x)|^3 \le C \sabs{t}^{-1-\delta} \, .
\end{align}
\item\label{it:DRinterf} (constructive interference).
There exists a set $\Msing \subset \T^d$ consisting
of a union of a finite number of closed, one-dimensional, smooth submanifolds,
and a constant $C$ such that
for all $t\in \R$, $k_0\in \T^d$, and $\sigma\in \set{\pm 1}$,
\begin{align} 
  \Bigl| \int_{\T^d}\!\rmd k\, \rme^{-\ci t (\omega(k)+\sigma
    \omega(k-k_0))}\Bigr| \le \frac{C\sabs{t}^{-1} }{d(k_0,\Msing)}\, ,
\end{align}
where $d(k_0,\Msing)$ is the distance (with respect to the standard metric
on the $d$-torus, $\R^d/\Z^d$) of $k_0$ from $\Msing$.
\item\label{it:DRcrossing} (crossing bounds).
Define for $t_0,t_1,t_2\in \R$, $u_1,u_2\in \T^d$, and $x\in \Z^d$,
  \begin{align}\label{eq:defp2tx}
    K(x;t_0,t_1,t_2,u_1,u_2) =
    \int_{\T^d} \!\rmd k\, \rme^{\ci 2\pi x\cdot k}
    \rme^{-\ci (t_0 \omega(k)+t_1 \omega(k+u_1)+ t_2 \omega(k+u_2))} \, .
  \end{align}
We assume that there is
a measurable function $\Fbcr:\T^d \times \R_+ \to [0,\infty]$
so that constants  $0<\gamma\le 1$, $c_1,c_2$,
for the following bounds can be found.
\begin{enumerate}
\item  For any $u_i\in \T^d$, $\sigma_i\in \set{\pm 1}$, $i=1,2,3$,
and  $0<\beta\le 1$, the following bounds are satisfied:
\begin{align}
&
 \int_{-\infty}^\infty\! \rmd t\, \norm{p_{t}}_3^2
 \int_{-\infty}^\infty\! \rmd s\, \rme^{-\beta |s|}
 \norm{K(t,\sigma_1 s,\sigma_2 s,u_1,u_2)}_3
 \le \beta^{\gamma-1}  \Fbcr(u_2-u_1;\beta) \, ,
\label{eq:crossingest1c}
\\ &
  \int_{-\infty}^\infty\! \rmd t
 \int_{-\infty}^\infty\! \rmd s\, \rme^{-\beta |s|}
 \prod_{i=1}^3 \norm{K(t,\sigma_i s,0,u_i,0)}_3
 \nonumber \\ & \quad
\le \beta^{\gamma-1}  \Fbcr(u_n;\beta),\quad \text{for any }n\in\set{1,2,3}\, .
\label{eq:crossingest1b}
\end{align}
\item  For all   $0<\beta\le 1$ we have
\begin{align}\label{eq:crossingest2a}
&  \int_{\T^d} \rmd k\, \Fbcr(k;\beta) \le c_1 \sabs{\ln \beta}^{c_2}  ,
\end{align}
and if also $u,k_0\in \T^d$,
  $\alpha\in \R$,
  $\sigma\in\set{\pm 1}$, and $n\in \set{1,2,3}$,
and we denote  $k=(k_1,k_2,k_0-k_1-k_2)$, then
\begin{align}\label{eq:crossingest2}
&  \int_{(\T^d)^2} \rmd k_1\rmd k_2\, \Fbcr(k_n+u;\beta)
\frac{1}{|\alpha-\Omega(k,\sigma)+\ci\beta|}
 \le c_1 \sabs{\ln \beta}^{1+c_2}  ,
\end{align}
where $\Omega:(\T^d)^3\times \set{\pm 1}\to \R$ is defined by
\begin{align}\label{eq:defOmega}
& \Omega(k,\sigma)
= \omega(k_3)-\omega(k_1)+\sigma(\omega(k_2)-\omega(k_1+k_2+k_3))\, .
\end{align}
\end{enumerate}
\end{jlist}
\end{assumption}

\begin{remark}
We prove in Appendix \ref{sec:appNN} that the nearest neighbor
interactions satisfy all of the above assumptions for $d\ge 4$, 
if we use $\gamma=\frac{4}{7}$, $c_2=0$, and the
function
\begin{align}\label{eq:defnnFcr}
 \Fbcr(u;\beta) = C
 \prod_{\nu=1}^d \frac{1}{|\sin (2 \pi u^\nu)|^{\frac{1}{7}}}
\end{align}
with a certain constant $C$ depending only on $d$ and $\omega$.
Presumably a larger class of $\omega$'s could be covered, but this
needs a separate investigation.

The estimates in Appendix \ref{sec:appNN} in fact imply that also
for $d=3$ the dispersion relation of the
nearest neighbor interactions satisfies assumptions 
DR\ref{it:DR1}--DR\ref{it:DRinterf}.  However, 
even if also DR\ref{it:DRcrossing} could be checked, this would 
not be sufficient to generalize the result to $d=3$ since $d\ge 4$ is
used to facilitate the analysis of
constructive interference effects in Sec.~\ref{sec:proveerrcut}.  The
present estimates require that the co-dimension of the bad set is at least
three which for $d=3$ would allow only a finite collection of bad points.
As we have no examples of such dispersion relations, we have assumed $d\ge 4$
throughout the proof.  Nevertheless, by more careful analysis of the
constructive interference effects we expect the results to generalize to
interactions in $d=3$.  Again, this remains a topic for further
investigation. \qed
\end{remark}

We wish to inspect the decay of the space-time covariance on the
kinetic time scale $t=\order{\lambda^{-2}}$.  More precisely, given
some test-functions $f,g\in \ell_2(\Z^d)$, with a compact support,
we study the expectation of a quadratic form,
\begin{align}
 \Elfin\bigl[\,\mean{f_\Lambda,\psi_0}^*
   \mean{g_\Lambda,\psi_{t/\vep}}\,\bigr]\, ,
\end{align}
where $\vep=\lambda^2$, $f_\Lambda(x)= \sum_{n\in \Z^d} f(x+L n)$,
and $g_\Lambda$ is obtained from $g$ similarly.  Since we assume the
test-functions to have a compact support, $f_\Lambda$ and
$g_\Lambda$ are, in fact, independent of $\Lambda$ for all large
enough lattice sizes. In addition, $\FT{f}_\Lambda(k)=\FT{f}(k)$,
and $\FT{g}_\Lambda(k)=\FT{g}(k)$ for all $k\in \Lambda^*$. To get a
finite limit, it will be necessary to cancel the rapidly oscillating
factors. To this end, let us define
\begin{align}\label{eq:defomla}
\omla(k) = \omega(k) + \lambda R_0 \, ,
\end{align}
where
\begin{equation}\label{eq:defR0}
 R_0=R_0(\lambda,\Lambda)= 2  \Evepfin[|\psi_0(0)|^2]\,.
\end{equation}
Differentiating the expectation value and applying Assumption
\ref{th:Ainitcond} shows that
\begin{align}\label{eq:R0lim}
\lim_{\Lambda\to\infty}R_0(\lambda,\Lambda) = 2\int_{\T^d}\rmd
 k\,W(k) \Bigl(1-2 \beta \lambda \int_{\T^d}\rmd k'\,W(k')^2\Bigr)
 + \order{\lambda^2}\, .
\end{align}
Then the task is to control the limit of the quadratic form
\begin{align}\label{eq:cov}
 \Qlfin[g,f](t) =
\Elfin\!\left[\mean{\FT{f},\FT{\psi}_0}^*
\mean{\rme^{-\ci \omla  t/\vep}
\FT{g},\FT{\psi}_{t/\vep}}\right] ,  \quad \vep=\lambda^2\, .
\end{align}

\begin{theorem}\label{th:main}
Consider the system described in Section \ref{sec:model} with an initial Gibbs
measure
satisfying Assumption \ref{th:Ainitcond} and
a dispersion relation satisfying Assumption
\ref{th:disprelass}.  Then there is $t_0>0$ such that
for all $|t|< t_0$, and for any $f,g\in \ell_2(\Z^d)$ with finite support,
\begin{align}\label{eq:mainQlim}
\lim_{\lambda\to 0}
\limsup_{\Lambda\to \infty} \left|
\Qlfin[g,f](t) - \int_{\T^d}\rmd k\, \FT{g}(k)^* \FT{f}(k)
W(k) \rme^{-\Gamma_1(k)|t|-\ci t \Gamma_2(k)}\right| = 0\, ,
\end{align}
where $\Gamma_j(k)$ are real, and $\Gamma(k)=\Gamma_1(k)+\ci \Gamma_2(k)$
is given by
\begin{align}\label{eq:defGamma}
& \Gamma(k_1) = -2 \int_0^\infty \!\rmd t
 \int_{(\T^d)^3} \rmd k_2 \rmd k_3 \rmd k_4  \delta(k_1+k_2-k_3-k_4)
\nonumber \\ & \quad  \times
\rme^{\ci t (\omega_1+\omega_2-\omega_3-\omega_4)} \left(
W_3 W_4 - W_2 W_4 - W_2 W_3
\right)
\end{align}
with $\omega_i = \omega(k_i)$, $W_i= W(k_i)$.
\end{theorem}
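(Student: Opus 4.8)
\textbf{Overview of the strategy.} The plan is to establish \eqref{eq:mainQlim} by controlling the Duhamel series \eqref{eq:1.16}--\eqref{eq:1.17} for $\Qlfin[g,f](t)$ uniformly in $\Lambda$ and then extracting the $\lambda\to 0$ limit term by term. Concretely, I would first substitute the series expansion $\FT{\psi}_{t/\vep}(k)=\sum_{n\geq 1}\mathcal{P}^n_{k,t/\vep}(\FT{\psi})$ into $\Qlfin[g,f](t)$ and pair it against $\FT{\psi}_0$, producing a sum over $n$ of expectations $\Elfin[\FT{\psi}(k')^*\mathcal{P}^{2n+1}_{k,t/\vep}(\FT{\psi})]$ of the form described after \eqref{eq:1.17}. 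The key point is that the series does not converge absolutely, so the expansion must be \emph{truncated} at an $N=N(\lambda)$ chosen to diverge slowly as $\lambda\to 0$ (as in Erd\H{o}s--Yau): the truncated sum will be handled by explicit oscillatory-integral analysis, and the remainder term --- which still contains the full nonlinear flow $\psi_{t/\vep}$ --- will be bounded using \emph{stationarity} of the process $(x,t)\mapsto\psi_t(x)$ together with the $\ell_1$-clustering bound \eqref{eq:l1clustering}, since conservation of $N(\psi)$ alone does not close the estimate.

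\textbf{Main steps.} (1) Derive the graphical/diagrammatic representation of each $\mathcal{P}^{2n+1}_{k,t/\vep}$ as a sum over binary trees with $\pm$ labels on edges, momenta conserved via $\delta_\Lambda$, and phases $\rme^{-\ci s_j \sigma_j\omega(k_j)}$ from \eqref{eq:1.15}; after taking the Gibbs expectation and using Assumption~\ref{th:Ainitcond}, the cumulant expansion organizes the expectation into pairings plus genuinely truncated pieces, each giving a high-dimensional oscillatory integral on the linear subspace cut out by the momentum $\delta$'s. (2) Rescale time by $t/\vep$ with $\vep=\lambda^2$; the phases become $\rme^{-\ci\vep^{-1}(\cdots)}$, so only \emph{resonant} configurations survive the limit. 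Identify the leading diagrams: these are the chains of ``pairing bubbles'' whose resummation produces the exponent $\Gamma(k)$ in \eqref{eq:defGamma}; here one must also absorb the $\lambda R_0$ counterterm in $\omla$, which cancels the $\order{\lambda^{-1}}$ oscillation, and check that the geometric-type sum of leading terms converges for $|t|<t_0$ (this is where the smallness of $t_0$ enters). (3) Bound all subleading diagrams: show that each non-leading oscillatory integral contains at least one ``motive'' (a crossing subgraph, a non-resonant momentum integral, or a repeated-index loop) to which one applies the dispersive estimates DR\ref{it:DRdisp}--DR\ref{it:DRcrossing} --- in particular $\norm{p_t}_3^3\leq C\sabs{t}^{-1-\delta}$, the constructive-interference bound via $\Msing$, and the crossing bound \eqref{eq:crossingest1c}--\eqref{eq:crossingest2} with the auxiliary function $\Fbcr$ --- yielding an extra fractional power $\lambda^{\kappa}$ per diagram, so the subleading contribution vanishes as $\lambda\to 0$ even after summing over the $\order{N!}$ diagrams at level $N$, provided $N(\lambda)$ grows slowly enough. (4) Collect the estimates: the truncation remainder vanishes by Step~(1)'s stationarity bound, the subleading diagrams vanish by Step~(3), and the leading diagrams converge to the right-hand side of \eqref{eq:mainQlim} using \eqref{eq:twopointlim} (and \eqref{eq:R0lim}) to replace $W^\lambda_\Lambda$ by $W$ in the limit. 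Splitting $\Gamma=\Gamma_1+\ci\Gamma_2$ and checking $\Gamma_1(k)\geq 0$ gives the stated exponential damping.

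\textbf{Expected main obstacle.} The hard part is Step~(3): extracting a uniform fractional gain $\lambda^{\kappa}$ from \emph{every} subleading oscillatory integral while keeping the constants under control across the combinatorially exploding family of diagrams. This requires discovering the iterative structure that reduces an arbitrary diagram to a bounded list of motives, and the delicate part is the interplay between the ``constructive interference'' singularities along $\Msing$ (handled by DR\ref{it:DRinterf}, and requiring codimension $\geq 3$, hence $d\geq 4$) and the crossing estimates, where the logarithmic losses $\sabs{\ln\beta}^{c_2}$ must be beaten by the power gains when one integrates over the internal time variables with the regulator $\rme^{-\beta|s|}$. Getting these estimates to be summable over $n$ up to $N(\lambda)$ --- and simultaneously large enough $N(\lambda)$ that the stationarity-based remainder bound of Step~(1) is still $o(1)$ --- is the technical crux on which the length of the proof rests.
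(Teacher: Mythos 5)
Your outline reproduces the architecture of the paper's proof: truncated Duhamel expansion at a slowly diverging $N_0(\lambda)$, cumulant expansion of the Gibbs expectation via Assumption \ref{th:Ainitcond}, a diagrammatic classification in which only the fully paired ``immediate recollision'' graphs survive and resum to $W(k)\rme^{-\Gamma_1|t|-\ci t\Gamma_2}$, fractional powers of $\lambda$ for the subleading graphs extracted from DR\ref{it:DRdisp}--DR\ref{it:DRcrossing}, and a stationarity-based bound on the remainder. In that sense you have the same approach as the paper.

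There is, however, one concrete gap in the step you treat most casually: the truncation remainder. After Schwarz and stationarity the remainder is bounded by $(t/\vep)\,\E[|\mean{\FT{f},\FT{a}_0}|^2]$ times $\sup_s\E[|F_s[\FT{a}_s]|^2]$, which costs a factor $\lambda^{-2}$, and the second factor, when expanded by cumulants, still contains \emph{fully paired, leading-type} graphs with $N_0$ collisions. These graphs contain no crossing, no nest, no extra cluster, hence no ``motive'' producing a fractional power of $\lambda$: their amplitude is only $O(\lambda^2)$ up to $\sabs{\ln\lambda}^{O(N_0)}$ and factorial combinatorics, so the product with the $\lambda^{-2}$ prefactor does not vanish. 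Plain truncation plus stationarity therefore does not close, exactly as the paper warns (``our methods of estimating the error terms produce always an additional factor of $\lambda^{-2}$''). The missing idea is the \emph{soft partial time-integration}: the Duhamel iteration is performed with exponential weights $\kappa_n$ that vanish for the first $N_0/2$ steps and equal $\kappa'=\lambda^2N_0^{b_0}$ afterwards, so that in the order-$N_0$ remainder at least a quarter of the time slices of any leading-type graph carry a damping $\rme^{-\kappa' r_i}$; forcing that many collisions into a window of length $1/\kappa'$ produces the factor $N_0^{-b_0 N_0/4}$ which beats the $\lambda^{-2}$ and the factorials (Proposition \ref{th:amperr}). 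A related structural point you gloss over: the cutoff $\PFzero/\PFone$ near $\Msing$ and the pairing truncation $\Ptrunc$ must be built into the expansion itself (this is how the counterterm $R_0$ is generated and cancelled, and it creates a separate error term $Q^{\rm err}_{\rm cut}$ with its own estimate), rather than being invoked only as bounds on subleading diagrams. Finally, a minor correction: the restriction $|t|<t_0$ does not come from the limit of the leading sum (which is entire in $t$) but from the summability of the \emph{bounds} on the error terms and on the graph counts, cf.\ Remark 2.4 and the choice $t_0=(2^6c_0)^{-1}$ in Section \ref{sec:completion}.
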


As discussed in the introduction,
we expect that the infinite volume limit of $\Qlfin[g,f](t)$ exists,
but since
proving this would have been a diversion from our main results, we have stated
the main theorem in a form which does not need this property.  Clearly,
if the limit does exist, then (\ref{eq:mainQlim}) implies
the stronger result
\begin{align}
\lim_{\lambda\to 0}
\lim_{\Lambda\to \infty}
\Qlfin[g,f](t) = \int_{\T^d}\rmd k\, \FT{g}(k)^* \FT{f}(k)
W(k) \rme^{-\Gamma_1(k)|t|-\ci t \Gamma_2(k)}\, .
\end{align}
Independently of the convergence issue, the theorem implies that, 
if $\Lambda$ is sufficiently large, $\lambda$ is small enough, 
and $|\tmicro | \lambda^2$ is not too large, we can approximate
\begin{align}
&\Elfin[\FT{\psi}_0(k')^*\FT{\psi}_{\tmicro }(k)]
\approx 
\delta_\Lambda(k'-k)  W(k)  \rme^{-\ci \omega_{\text{ren}}^\lambda(k) \tmicro }
 \rme^{- \left|\lambda^2 \tmicro \right| \Gamma_1(k)}\, ,
\end{align}
where the ``renormalized dispersion relation'' is given by
\begin{align}
  \omega_{\text{ren}}^\lambda(k) = 
\omega(k) + \lambda R_0 + \lambda^2 \Gamma_2(k)\, .
\end{align}

We point out that $\Gamma_1(k)\ge 0$, as by explicit computation
\begin{align}\label{eq:Gamma2}
& \Gamma_1(k_1) = 2\pi W(k_1)^{-2}
 \int_{(\T^d)^3} \rmd k_2 \rmd k_3 \rmd k_4  \delta(k_1+k_2-k_3-k_4)
\nonumber \\ & \qquad  \times
\delta(\omega_1+\omega_2-\omega_3-\omega_4)  \prod_{i=1}^4 W(k_i) \, .
\end{align}
(We prove in Section \ref{sec:freeint} that the integral in 
(\ref{eq:defGamma}) and the positive measure 
in (\ref{eq:Gamma2}) are well-defined for any $\omega$ satisfying items
DR\ref{it:DR1} and DR\ref{it:DRdisp} of Assumption \ref{th:disprelass}.)
If $\Gamma_{1}(k) > 0$, then the term $\exp[-\Gamma_{1}(k)|t|]$ yields the
exponential damping in $|t|$, both forward and backwards in time, and if
$\Gamma_{1}(k) \geq \gamma >0$ for all  $k\in \T^d$, then  
on the kinetic scale the covariance has an exponential bound
$\mathrm{e}^{-\gamma|t|}$. 

\begin{remark}
The restriction to finite
times with $|t|<t_0<\infty$ appears artificial since
the limit equation is  obviously well-defined for all $t\in\R$.  In fact, as
can be inferred from the proof given in Sec.~\ref{sec:completion}, 
if we collect only the terms having a non-zero contribution to the
limit, the expansion is not restricted by such a finite radius of convergence.
However, the 
bounds used to control the remaining terms are not summable beyond certain
radius.  As a comparison, let us observe that even the perturbation
expansions of solutions to nonlinear kinetic equations, such as
(\ref{eq:4}) below, have generically only a finite radius of convergence. \qed
\end{remark}

\subsection{Link to kinetic theory}\label{sec:link}

To briefly explain the connection of our result to the kinetic
theory for weakly nonlinear wave equations, we assume that the
initial data $\psi(x)$, $x\in\mathbb{Z}^d$, are distributed
according to a Gaussian measure, $\mathbb{P}_\mathrm{G}$, with mean
zero and covariance
\begin{equation}\label{1}
\mathbb{E}_\mathrm{G}\big(\psi(y)^\ast\psi(x)\big)=\int_{\mathbb{T}^d}
\rmd k\,  h^0(k) \rme^{\ci 2\pi k\cdot(x-y)}\,,\quad
\mathbb{E}_\mathrm{G}\big(\psi(y)\psi(x)\big)=0\,.
\end{equation}
$\mathbb{P}_\mathrm{G}$ is stationary under the $\lambda=0$
dynamics, but nonstationary for $\lambda>0$. Since translation and
gauge invariance are preserved in time, necessarily
\begin{equation}\label{2}
\mathbb{E}_\mathrm{G}\big(\psi_t(y)^\ast\psi_t(x)\big)=\int_{\mathbb{T}^d}
\rmd k\,  h_\lambda(k,t) \rme^{\ci 2\pi k\cdot(x-y)}\,,\quad
\mathbb{E}_\mathrm{G}\big(\psi_t(y)\psi_t(x)\big)=0\,.
\end{equation}
The central claim of kinetic theory is the existence of the limit
\begin{equation}\label{3}
\lim_{\lambda\to 0} h_\lambda(k,\lambda^{-2}t)=h(k,t)\,,
\end{equation}
where $h(t)$ is the solution of the spatially homogeneous kinetic
equation
\begin{equation}\label{eq:4}
\frac{\partial}{\partial t}
h(k,t)=\mathcal{C}\big(h(\cdot,t)\big)(k)
\end{equation}
with initial conditions $h(k,0)=h^0(k)$. The collision operator,
$\mathcal{C}$, is defined by
\begin{align}\label{5}
& \mathcal{C}(h)(k_1) =  4\pi\int_{(\T^{d})^3} \rmd k_2 \rmd k_3 \rmd k_4
\delta(k_1+k_2-k_3-k_4) \delta
(\omega_1+\omega_2-\omega_3-\omega_4)\nonumber \\ & \qquad\qquad\qquad\quad
\times\left(h_2 h_3 h_4 + h_1 h_3 h_4 - h_1 h_2 h_3 - h_1 h_2 h_4\right)
\end{align}
with $h_j$ shorthand for $h(k_j)$, $j=1,2,3,4$. The proof of the
limit (\ref{3}) remains as mathematical challenge.

Under our conditions on $\beta$ and $\mu$, the covariance function 
$h^\mathrm{eq}(k)= W(k) = (\beta(\omega(k)-\mu))^{-1}$ is a
stationary solution of (\ref{eq:4}). 
The time correlation $\Qlfin[g,f](t)$ can be viewed as a small perturbation of
the equilibrium situation 
and should thus be governed by the linearization of (\ref{eq:4}) at 
$h^\mathrm{eq}$. As
discussed in \cite{spohn05},
the precise form of the linearization depends on the context. Our result
corresponds to the linearization of the loss term of $\mathcal{C}(h)$ relative 
to  $h^\mathrm{eq}$. In addition,
 only ``half'' of the energy conservation shows up: instead
of
\begin{equation}\label{6}
 \int_{-\infty}^\infty \rmd t\, \rme^{\ci t (\omega_1+ \omega_2 - \omega_3 -
\omega_4)} 
 = 2\pi\delta (\omega_1+\omega_2-\omega_3-\omega_4)\, ,
\end{equation}
only
\begin{equation}\label{7}
\int^\infty_0 \rmd t\, \rme^{\ci t (\omega_1+ \omega_2 - \omega_3 - \omega_4)}
\end{equation}
appears in the definition of the decay rate (\ref{eq:defGamma}).

\subsection{Restriction to times $t>0$}
\label{sec:firstpf}

From now on we assume that Assumptions \ref{th:Ainitcond} and
\ref{th:disprelass} are satisfied.
We begin by showing that then
it is sufficient to prove the theorem under the assumption $t>0$.
For simplicity, let us  denote $\E= \Elfin$ and $F_2= F^\lambda_2$, \itie ,
we define
\begin{align} 
 F_2(x,t) = \E[\psi_0(0)^* \psi_t(x)], \quad x\in \Lambda,\ t\in \R\,.
\end{align}

In order to study the infinite volume limit $\Lambda\to \Z^d$, we
define the natural ``cell step function'' 
$\tdfloor{k}:\R^d\to\Lambda^*$ by setting $\tdfloor{k}_i$ equal to 
$\lfloor L (k_i \bmod 1)\rfloor /L$. Since $\tdfloor{k}$ is periodic, we can
also identify
it with a map $\T^d\to \Lambda^*$. Clearly, for any $F:\Lambda^* \to \C$ 
we can then apply the following obvious formula relating the
discrete sum over $\Lambda^*$ and a Lebesgue integral:
\begin{align}\label{eq:LamtoLeb}
  \int_{\Lambda^*}\rmd k\, F(k)
 =  \int_{\T^d}\rmd k\, F(\tdfloor{k}) \, ,
\end{align}
where $F(\tdfloor{k})$ is a piecewise constant ``step function''
on $\T^d$.
Now if $F_\Lambda$ is any sequence of functions $\Lambda^* \to \C$
such that $F_\Lambda([k])$ converges on $\T^d$ to $F$, and for which 
$\sup_{\Lambda}\sup_{k\in\Lambda^*} |F_\Lambda(k)|<\infty$,
then by dominated convergence, we have
\begin{align} 
 \lim_{\Lambda\to \infty} \int_{\Lambda^*}\rmd k\, F_\Lambda(k)
 =  \int_{\T^d}\rmd k\, F(k) \, .
\end{align}

At $t=0$, $\FT{F}_2(k,0)=W^{\lambda}_\Lambda(k)$, for $k\in \Lambda^*$.
In the following Lemma, we show that it remains uniformly bounded in the
infinite volume limit, with a bound that vanishes as $\lambda\to 0$.  
Thus we can employ the definition
(\ref{eq:cov}) for $t=0$, and apply
the smoothness of $\FT{f},\FT{g}$, to conclude that
\begin{align} 
  \lim_{\lambda\to 0}
\limsup_{\Lambda\to \infty}\left|
\Qlfin[g,f](0) - \int_{\T^d}\rmd k\, \FT{g}(k)^* \FT{f}(k)
W(k)\right|=0\, .
\end{align}
This proves that the main theorem holds  at $t=0$.
\begin{lemma}\label{th:unifW2}
For all $0<\lambda\le \lambda_0$,
\begin{align} 
\limsup_{\Lambda\to \infty}\sup_{k\in \T^d}\left|
W^{\lambda}_\Lambda([k])-W(k)\right|\le 2c_0^2 \lambda \, .
\end{align}
\end{lemma}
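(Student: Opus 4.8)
The plan is to bound $|W^{\lambda}_\Lambda([k]) - W(k)|$ by inserting the exact $\lambda=0$ covariance $W$ as an intermediate term and controlling the resulting two pieces separately in Fourier space, then translating a real-space $\ell_1$-type control back to the uniform bound on $\T^d$. Concretely, recall that $\FT{F}^\lambda_2(k,0) = W^\lambda_\Lambda(k)$ on $\Lambda^*$, and that $W^\lambda_\Lambda(k) - W(k)$ is, by translation invariance, the Fourier transform of $x\mapsto \Elfin[\psi(0)^*\psi(x)] - \EG[\psi(0)^*\psi(x)]$. Since a Fourier coefficient is bounded by the $\ell_1$-norm of its coefficients, we have for every $k\in \Lambda^*$
\begin{align}
 \bigl|W^\lambda_\Lambda(k)-W(k)\bigr|
 \le \sum_{\norm{x}_\infty\le L/2} \bigl| \Elfin[\psi(0)^*\psi(x)] - \EG[\psi(0)^*\psi(x)]\bigr| \, ,
\end{align}
where we have used that periodicity on $\Lambda$ lets us restrict the sum to the fundamental cell $\norm{x}_\infty\le L/2$ (the periodized Gaussian covariance $\IFT{W}$ restricted to $\Lambda$ agrees with this truncated sum up to a correction vanishing as $\Lambda\to\infty$, which is absorbed below). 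Taking $\limsup_{\Lambda\to\infty}$ and invoking the second hypothesis of Assumption \ref{th:Ainitcond}, namely (\ref{eq:twopointlim}), the right-hand side is $\le \lambda\, 2 (c_0)^2$. Since the bound is uniform in $k\in\Lambda^*$ and $[k]$ ranges over all of $\Lambda^*$ as $k$ ranges over $\T^d$, this gives exactly $\limsup_{\Lambda\to\infty}\sup_{k\in\T^d}|W^\lambda_\Lambda([k])-W(k)|\le 2c_0^2\lambda$.

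The one genuine subtlety — and the step I expect to need the most care — is the passage between the \emph{periodized} Gaussian covariance that naturally appears on the torus $\Lambda^*$ and the \emph{infinite-volume} Gaussian covariance $\IFT{W}(x)$ on $\Z^d$ that appears in (\ref{eq:twopointlim}). One has $\EG[\psi(0)^*\psi(x)]$ defined on $\Z^d$, while $W(k) = \FT{F}^0_2(k,0)$ evaluated on $\Lambda^*$ produces its periodization $\sum_{n\in\Z^d}\IFT{W}(x+Ln)$ on $\Lambda$. Because $\omega$ is real-analytic with $\mu<\inf_k\omega$, the function $\IFT{W}$ decays exponentially, so the difference between $\sum_{\norm{x}_\infty\le L/2}|\IFT{W}(x) - \text{(periodized version)}|$ is $\order{\rme^{-cL}}$ and vanishes in the limit $\Lambda\to\infty$; this is precisely why the lemma is stated with $\limsup_{\Lambda\to\infty}$ rather than a finite-$\Lambda$ bound, and why the alternative formulation of the assumption in terms of $\IFT{W}^\lambda_\Lambda(x)-\IFT{W}(x)$ given right after (\ref{eq:twopointlim}) is the convenient one to quote. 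Using that reformulation, the argument is immediate: $|W^\lambda_\Lambda([k])-W(k)| \le \sum_{\norm{x}_\infty\le L/2}|\IFT{W}^\lambda_\Lambda(x)-\IFT{W}(x)| + \order{\rme^{-cL}}$, and taking $\limsup_{\Lambda\to\infty}$ finishes it.

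Everything else is routine: the bound $|\FT{h}(k)|\le\norm{h}_1$ for $h:\Lambda\to\C$ is elementary, the exponential decay of $\IFT{W}$ follows from DR\ref{it:DR1} together with $\mu<\inf\omega$ by a standard contour-shift / analyticity argument, and no property of the nonlinear dynamics is used — only the defining relation of $W^\lambda_\Lambda$ and Assumption \ref{th:Ainitcond}. The lemma is then exactly the input needed to justify evaluating $\Qlfin[g,f](0)$ via (\ref{eq:cov}) and to start the time-dependent analysis from $t=0$.
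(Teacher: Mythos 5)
Your argument is essentially the paper's own: insert the limiting Gaussian covariance, convert the Fourier-space difference into a real-space $\ell_1$ bound over the fundamental cell $\norm{x}_\infty\le L/2$ plus a periodization tail of $\IFT{W}$, and quote the second part of Assumption \ref{th:Ainitcond}; the paper handles the tail simply by noting $\IFT{W}\in\ell_1(\Z^d)$ (smoothness of $W$), so your exponential-decay/contour argument is correct but stronger than needed.

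There is one small but genuine oversight in your final step. The lemma compares $W^{\lambda}_\Lambda([k])$ with $W(k)$, not with $W([k])$, so the supremum over $k\in\T^d$ does \emph{not} reduce to a supremum over $k\in\Lambda^*$ as you assert ("$[k]$ ranges over all of $\Lambda^*$ as $k$ ranges over $\T^d$"): an additional discretization term $|W(k)-W([k])|$ is present. The paper's proof writes the three-term bound $|W^{\lambda}_\Lambda([k])-W(k)|\le |W(k)-W([k])| + \sum_{\norm{x}_\infty\ge L/2}|\IFT{W}(x)| + \sum_{\norm{x}_\infty\le L/2}\bigl|\Elfin[\psi(0)^*\psi(x)]-\EG[\psi(0)^*\psi(x)]\bigr|$ and disposes of the first term by uniform continuity of $W$ together with $|k-[k]|\le \sqrt{d}/L\to 0$, so that only the last term survives the $\limsup$. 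Your proof is repaired by inserting exactly this triangle-inequality step, using the smoothness of $W$ you already invoke; without it the chain of inequalities as written does not yield the stated uniform bound on $\T^d$.
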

\begin{proof}
Since our conditions on $\beta$ and $\mu$ imply that
\begin{align} 
  W(k) = \frac{1}{\beta(\omega(k)-\mu)} =
\sum_{x\in \Z^d}  \rme^{-\ci 2\pi x \cdot k}
  \EG\bigl[\psi(0)^*\psi(x)\bigr]
\end{align}
is smooth, its inverse Fourier transform
$\IFT{W}(x)=\EG\bigl[\psi(0)^*\psi(x)\bigr]$ belongs to
$\ell_1(\Z^d)$.  Now for any $k\in \T^d$
\begin{align} 
& |W^{\lambda}_\Lambda([k])-W(k)| \le
|W(k)-W([k])|
\nonumber \\ & \qquad
+ \sum_{\norm{x}_\infty \ge L/2}\!\!\! |\IFT{W}(x)|
+  \sum_{\norm{x}_\infty \le L/2} \Bigl|\Elfin\bigl[\psi(0)^*\psi(x)\bigr] -
  \EG\bigl[\psi(0)^*\psi(x)\bigr] \Bigr| \, ,
\end{align}
and the second part of Assumption \ref{th:Ainitcond} implies that
the Lemma holds.
\qed \end{proof}

The initial state is invariant under periodic translations of the lattice.
Since the time evolution also commutes with these translations,
we have
\begin{align} 
 \E[\psi_0(x_0)^* \psi_t(x)] = F_2(x-x_0,t)\, ,
\end{align}
and thus for $k,k'\in \Lambda^*$,
\begin{align} 
 \E[\FT{\psi}_0(k')^* \FT{\psi}_t(k)] = \delta_{\Lambda}(k'-k)
\FT{F}_2(k,t)\, .
\end{align}
Therefore,
\begin{align}\label{eq:Qlformula}
  \Qlfin[g,f](t) =
  \int_{\Lambda^*}\rmd k\, \FT{g}(k)^* \FT{f}(k)
  \rme^{\ci \omla (k)  t/\vep}
  \FT{F}_2(k,t/\vep)\, .
\end{align}

In addition, since the initial measure is stationary
and the process fully translation invariant, we have
\begin{align}\label{eq:C2inv}
  F_2(-x,-t)^* =
 \E[\psi_0(0) \psi_{-t}(-x)^*] =
 \E[\psi_0(x) \psi_{-t}(0)^*] =
  F_2(x,t) \, ,
\end{align}
and thus
\begin{align}\label{eq:FTC2inv}
  \FT{F}_2(k,-t)^* =   \FT{F}_2(k,t) \, .
\end{align}
Applied to (\ref{eq:Qlformula}) this implies that, in fact,
\begin{align}\label{eq:Qlinvrel}
  \bigl(\Qlfin[g,f](-t)\bigr)^* =
  \Qlfin[f,g](t) \, .
\end{align}

Let us assume that the main theorem has been proven for $t>0$.
Then for any $-t_0<t<0$ we have $0<-t<t_0$ and thus
\begin{align}
\lim_{\lambda\to 0}
\limsup_{\Lambda\to \infty} \left|
\Qlfin[f,g](-t) - \int_{\T^d}\rmd k\, \FT{f}(k)^* \FT{g}(k)
W(k) \rme^{-\Gamma_1(k)(-t)+\ci t \Gamma_2(k)}\right| = 0\, .
\end{align}
By (\ref{eq:Qlinvrel}), this implies that (\ref{eq:mainQlim}) holds then
also for all $-t_0<t<0$.  We have thus shown that it is sufficient to
prove the main theorem under the additional assumption $t>0$.
This will be done in the following sections.

\section{Duhamel expansion}
\label{sec:graphs}

From now on, let $d\ge 4$ and $t>0$ be given and fixed.
We also denote $\E= \Elfin$, as before.
In this section, we describe how the time-correlations are expanded into a sum
over amplitudes---integrals with somewhat complicated structure  
which can be encoded in Feynman graphs.

We begin from the Fourier transformed evolution equations,
(\ref{eq:FTdNLS2}).  Constructive interference turns out to be a problem for
the perturbation expansion, and we have to treat the wave numbers near the
``singular'' manifold $\Msing$ differently from the rest.
To this end, we introduce a cutoff function $\PFzero:(\T^d)^3\to [0,1]$ which
is smooth, depends on $\lambda$,
and is zero apart from a small neighborhood of $\Msing$.  Given such a
function let us denote $\PFone=1-\PFzero$.  We postpone the explicit
construction of the  function $\PFzero$ until Section \ref{sec:cutoff} where
we will also show that there is a constant $\lambda'_0>0$ 
such that the following Proposition holds.
\begin{proposition}\label{th:PFcorr}
Set $b=\frac{3}{4}$. There is a constant $C_1>0$ such that
for any $k\in (\T^d)^3$, $0<\lambda<\lambda'_0$, and for
any pair of indices $i\ne j$,
$i,j\in \set{1,2,3}$, all of the following statements hold:
\begin{enumerate}
\item If $k_i+k_j=0$,  then $\PFone(k)=0$ and $\PFzero(k)=1$.
\item $0\le \PFone (k)\le C_1 \lambda^{-b} d(k_i+k_j,\Msing)$.
\end{enumerate}
In addition,
$\PFone(k_3,k_2,k_1)=\PFone(k_1,k_2,k_3)$,
$\PFzero(k_3,k_2,k_1)=\PFzero(k_1,k_2,k_3)$, and
\begin{align}\label{eq:PFzeroineq}
0\le \PFzero (k)\le \sum_{i,j=1; i<j}^3
\1\!\left(d(k_i+k_j,\Msing)<\lambda^b\right)\, .
\end{align}
\end{proposition}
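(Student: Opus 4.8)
The plan is to \emph{construct} $\PFzero$ (equivalently $\PFone=1-\PFzero$) explicitly and then read off all four assertions. Write $\Msing=\bigcup_{l=1}^{\Ns}M^{(l)}$ as the given finite union of closed, smooth, one-dimensional submanifolds of $\T^d$. I would first record that $0\in\Msing$: applying DR\ref{it:DRinterf} with $k_0=0$ and $\sigma=-1$ makes the integral there equal to $1$ for every $t$, so the stated bound $\le C\sabs{t}^{-1}/d(0,\Msing)$ can hold for all $t$ only if $d(0,\Msing)=0$. Next, for each component $M^{(l)}$ I would fix a radius $r_l>0$ and, by a standard cut‑and‑paste off the tubular neighbourhood of $M^{(l)}$ (carried out per component, so as to cope with possible crossings of the $M^{(l)}$), a function $\rho_l\in C^\infty(\T^d;[0,\infty))$ that equals $d(\cdot,M^{(l)})^2$ on $\{d(\cdot,M^{(l)})<r_l\}$ and is bounded below by a positive constant outside that set. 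Finally fix, once and for all, a nonincreasing $\chi\in C^\infty([0,\infty);[0,1])$ with $\chi\equiv1$ on $[0,1]$ and $\chi\equiv0$ on $[2,\infty)$, and a small $\vep_0\in(0,\tfrac12]$.

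For $0<\lambda<\lambda'_0$, with $\lambda'_0$ chosen below, define for $u\in\T^d$ and $k=(k_1,k_2,k_3)\in(\T^d)^3$
\begin{align*}
 g_\lambda(u)&=1-\prod_{l=1}^{\Ns}\Bigl(1-\chi\bigl(\rho_l(u)/(\vep_0\lambda^{2b})\bigr)\Bigr)\,,\\
 \PFone(k)&=\prod_{1\le i<j\le 3}\bigl(1-g_\lambda(k_i+k_j)\bigr)\,,
\end{align*}
and $\PFzero=1-\PFone$. Then $g_\lambda$ and $\PFone$ are smooth, take values in $[0,1]$, and $g_\lambda$ (hence $\PFzero$) manifestly depends on $\lambda$ and is supported in an $\order{\lambda^b}$‑neighbourhood of $\Msing$. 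Since the unordered triple $\{k_1+k_2,k_1+k_3,k_2+k_3\}$ is invariant under every permutation of $(k_1,k_2,k_3)$, both $\PFone$ and $\PFzero$ are permutation symmetric, giving in particular $\PFone(k_3,k_2,k_1)=\PFone(k_1,k_2,k_3)$ and the same for $\PFzero$.

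For the quantitative claims I would choose $\lambda'_0$ so small that, for every $l$, $\sqrt{2\vep_0}\,(\lambda'_0)^{b}<r_l$ and $2\vep_0(\lambda'_0)^{2b}$ is below the positive lower bound of $\rho_l$ off $\{d(\cdot,M^{(l)})<r_l\}$. Then, if $g_\lambda(u)>0$, some $\chi(\rho_l(u)/(\vep_0\lambda^{2b}))>0$, i.e.\ $\rho_l(u)<2\vep_0\lambda^{2b}$, which forces $d(u,M^{(l)})<r_l$, whence $\rho_l(u)=d(u,M^{(l)})^2$ and $d(u,\Msing)\le d(u,M^{(l)})<\sqrt{2\vep_0}\,\lambda^b\le\lambda^b$; combined with $0\le g_\lambda\le1$ this gives $g_\lambda(u)\le\1(d(u,\Msing)<\lambda^b)$, and the elementary inequality $1-\prod_m(1-a_m)\le\sum_m a_m$ yields $\PFzero(k)\le\sum_{i<j}g_\lambda(k_i+k_j)\le\sum_{i<j}\1(d(k_i+k_j,\Msing)<\lambda^b)$, which is (\ref{eq:PFzeroineq}). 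For property~2, fix $i\ne j$, set $u=k_i+k_j$, $D=d(u,\Msing)$, and $C_1:=\vep_0^{-1/2}$. If $D\ge\sqrt{\vep_0}\,\lambda^b$ then $C_1\lambda^{-b}D\ge1\ge\PFone(k)$. If $D<\sqrt{\vep_0}\,\lambda^b$, take a component with $d(u,M^{(l_*)})=D<r_{l_*}$; then $\rho_{l_*}(u)=D^2<\vep_0\lambda^{2b}$, so the $l_*$‑factor in $g_\lambda$ vanishes, $g_\lambda(u)=1$, and $\PFone(k)=0\le C_1\lambda^{-b}D$. This is property~2; property~1 is its special case $k_i+k_j=0$, since then $D=d(0,\Msing)=0$.

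The only genuinely delicate point is keeping $\PFzero$ in $C^\infty$ up to, and across the self‑intersections of, $\Msing$: $d(\cdot,\Msing)$ is merely Lipschitz there, so the cutoff must not be built from $d(\cdot,\Msing)$ directly nor from a $\min$/$\max$ over components. Composing a fixed smooth profile with the \emph{squared} distances to the individual components (each smooth on its own tubular neighbourhood) and combining them multiplicatively resolves this, and has the further virtue of producing a cutoff that vanishes \emph{identically} near $\Msing$, which is exactly why the linear bound in property~2 costs nothing (it is $0\le C_1\lambda^{-b}D$ on $\{D\lesssim\lambda^b\}$ and $\PFone\le1<C_1\lambda^{-b}D$ on the complement). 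Everything else is bookkeeping of $\vep_0,C_1,\lambda'_0$ and the radii $r_l$; the value $b=\tfrac34$ is immaterial here, the construction working verbatim for any fixed $b\in(0,1)$, the particular choice being forced only by the estimates of later sections.
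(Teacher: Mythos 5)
Your construction is essentially the paper's own: a product over the components of $\Msing$ of smooth bumps of the distance at scale $\lambda^b$, combined multiplicatively and then evaluated at the three pairwise sums $k_i+k_j$, with property 1 coming from $0\in\Msing$ (which the paper merely asserts and you correctly derive from (DR\ref{it:DRinterf})) and (\ref{eq:PFzeroineq}) from $1-\prod_m(1-a_m)\le\sum_m a_m$. The only minor variations are technical: the paper composes a profile that is constant near $0$ directly with $d(\cdot,M_j)$ (which secures smoothness just as your smoothed squared distances $\rho_l$ do) and obtains the linear bound in property 2 from the gradient estimate $|\nabla\Fone|\le C_1\lambda^{-b}$ integrated along a path to the nearest point of $\Msing$, whereas you use the equally valid two-case comparison of $d(k_i+k_j,\Msing)$ with $\sqrt{\vep_0}\,\lambda^b$, exploiting that your cutoff vanishes identically on the inner region.
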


We can then use the equality $1=\PFzero+\PFone$ to split the integral 
in (\ref{eq:FTdNLS2}) into two parts. More precisely, this way we obtain
\begin{align}\label{eq:NLS2}
& \frac{\rmd }{\rmd t} \FT{\psi}_t(k_1) = -\ci \omega(k_1) \FT{\psi}_t(k_1)
-\ci \lambda \int_{(\Lambda^*)^3}\!\! \rmd k_2 \rmd k_3\rmd k_4\,
 \delta_{\Lambda}(k_1+k_2-k_3-k_4)
\nonumber \\ & \qquad \times
  \FT{\psi}_t(k_2)^* \FT{\psi}_t(k_3) \FT{\psi}_t(k_4)
 \bigl[\PFone (-k_2,k_3,k_4) + \PFzero(-k_2,k_3,k_4)\bigr]  .
\end{align}
To see that there will be anharmonic effects of the order of $t \lambda$,
one only needs to multiply (\ref{eq:NLS2}) by $\FT{\psi}_t(k')^*$ and take
expectation value of the right hand side.  If we, for the moment, assume that
a Gaussian approximation is accurate, this indicates that the leading term
arises entirely from the term proportional to $\PFzero$, and that it can be
canceled by
using the constant $R_0=R_0(\lambda,\Lambda)$ defined in (\ref{eq:defR0}).
The following Lemma provides the exact connection.
\begin{lemma} 
Let $\Ptrunc$ denote the following ``pairing truncation'' operation:
\begin{align}\label{eq:defPtrunc}
  \Ptrunc\left[a_1 a_2 a_3 \right] =
  a_1 a_2 a_3 - \E[a_1 a_2] a_3
  - \E[a_1 a_3] a_2  - \E[a_2 a_3] a_1 \, .
\end{align}
Then considering any solution $\psi_t$,
for all $k_1\in \Lambda^*$, $t\in \R$,
\begin{align}\label{eq:1storder2}
&  \int_{(\Lambda^*)^3}\!\! \rmd k_2 \rmd k_3\rmd k_4\,
 \delta_{\Lambda}(k_1+k_2-k_3-k_4)
  \PFzero(-k_2,k_3,k_4) \FT{\psi}_t(k_2)^* \FT{\psi}_t(k_3) \FT{\psi}_t(k_4)
 \nonumber \\ & \quad
= R_0(\lambda,\Lambda)\FT{\psi}_t(k_1)
 + \int_{(\Lambda^*)^3}\!\! \rmd k_2 \rmd k_3\rmd k_4\,
 \delta_{\Lambda}(k_1+k_2-k_3-k_4)
\nonumber \\ & \qquad \times
 \PFzero (-k_2,k_3,k_4) \Ptrunc
\bigl[\FT{\psi}_t(k_2)^* \FT{\psi}_t(k_3) \FT{\psi}_t(k_4)\bigr] .
\end{align}
\end{lemma}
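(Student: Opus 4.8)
The plan is to prove the identity by separating the contribution of the ``diagonal'' term hidden inside the sum over $\PFzero$. The key observation is part~1 of Proposition~\ref{th:PFcorr}: whenever two of the three arguments of $\PFzero$ sum to zero, the function equals $1$. When the arguments are $(-k_2,k_3,k_4)$ the relevant coincidences are $-k_2+k_3=0$ (i.e.\ $k_3=k_2$) and $-k_2+k_4=0$ (i.e.\ $k_4=k_2$), both compatible with the constraint $\delta_\Lambda(k_1+k_2-k_3-k_4)$, which on these sets forces $k_1=k_4$ respectively $k_1=k_3$. So first I would apply the pairing truncation $\Ptrunc$ from (\ref{eq:defPtrunc}) to the monomial $\FT{\psi}_t(k_2)^*\FT{\psi}_t(k_3)\FT{\psi}_t(k_4)$ and move the three expectation-valued ``contraction'' terms to the other side: the left-hand side of (\ref{eq:1storder2}) equals the $\Ptrunc$-term written there plus the three integrals obtained by replacing the monomial by $\E[a_1 a_2]a_3$, $\E[a_1 a_3]a_2$, $\E[a_2 a_3]a_1$ in turn.

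Next I would evaluate those three contraction integrals. Here I use that $\E=\Elfin$ is the finite-volume Gibbs measure, which is gauge invariant: $\E[\FT{\psi}(k)\FT{\psi}(k')]=0$ and $\E[\FT{\psi}(k)^*\FT{\psi}(k')^*]=0$, while $\E[\FT{\psi}(k)^*\FT{\psi}(k')]=\delta_\Lambda(k-k')W^\lambda_\Lambda(k)$. With $a_1=\FT{\psi}_t(k_2)^*$, $a_2=\FT{\psi}_t(k_3)$, $a_3=\FT{\psi}_t(k_4)$, time-stationarity of the Gibbs measure lets me replace $\psi_t$ by $\psi_0$ inside each $\E[\cdot]$. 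Then $\E[a_2 a_3]=\E[\FT{\psi}(k_3)\FT{\psi}(k_4)]=0$, so that integral drops out. The other two are nonzero: $\E[a_1 a_2]=\delta_\Lambda(k_3-k_2)W^\lambda_\Lambda(k_2)$ and $\E[a_1 a_3]=\delta_\Lambda(k_4-k_2)W^\lambda_\Lambda(k_2)$. Carrying out the $k_3,k_4$ integrations against these two discrete delta functions and the momentum-conservation $\delta_\Lambda$, and using $\PFzero(-k_2,k_2,k_4)=1$ on the support (so the cutoff disappears), both reduce to $\int_{\Lambda^*}\rmd k_2\, W^\lambda_\Lambda(k_2)\,\delta_\Lambda(k_1-k_4)\FT{\psi}_t(k_4)$-type expressions; after the remaining integration each gives $\bigl(\int_{\Lambda^*}\rmd k\, W^\lambda_\Lambda(k)\bigr)\FT{\psi}_t(k_1)$. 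By the definition (\ref{eq:defR0}) of $R_0$ together with $\int_{\Lambda^*}\rmd k\, W^\lambda_\Lambda(k)=\Elfin[|\psi_0(0)|^2]$ (Parseval/translation invariance), the sum of the two contributions is exactly $R_0(\lambda,\Lambda)\FT{\psi}_t(k_1)$, which is the leading term on the right-hand side of (\ref{eq:1storder2}).

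Finally I would collect the pieces: the left-hand side equals $R_0\FT{\psi}_t(k_1)$ plus the $\PFzero$-weighted $\Ptrunc$-integral, which is precisely (\ref{eq:1storder2}). The routine bookkeeping is the delta-function arithmetic on the periodic dual lattice $\Lambda^*$ (making sure factors of $|\Lambda|$ from $\delta_\Lambda$ and from $\int_{\Lambda^*}\rmd k=\tfrac{1}{|\Lambda|}\sum_k$ cancel correctly), and checking that each surviving contraction really does land on the part of the support where $\PFzero\equiv1$ so the cutoff can be dropped. The one genuine subtlety, and the step I expect to be the main obstacle, is the combinatorics of which contractions survive: one must verify that $\Ptrunc$ as defined in (\ref{eq:defPtrunc}) subtracts exactly the two gauge-allowed pairings (the $a_1a_2$ and $a_1a_3$ contractions) and that the third pairing vanishes by gauge invariance, so that no spurious term remains and no needed term is missing; relatedly one must confirm that both surviving pairings lie on the diagonal where Proposition~\ref{th:PFcorr}(1) applies with value $1$, since otherwise the identity would acquire an extra $\PFzero$-dependent factor.
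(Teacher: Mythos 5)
Your proposal is correct and follows essentially the same route as the paper's own proof: expand $\Ptrunc$, kill the $\E[\FT{\psi}_t(k_3)\FT{\psi}_t(k_4)]$ contraction by gauge invariance and stationarity, use translation invariance to reduce the two surviving contractions to $\delta_\Lambda$-diagonal terms on which Proposition \ref{th:PFcorr}(1) gives $\PFzero=1$, and identify the resulting $2\int_{\Lambda^*}\rmd k\,W^\lambda_\Lambda(k)\,\FT{\psi}_t(k_1)$ with $R_0\FT{\psi}_t(k_1)$. The bookkeeping you flag (delta-function normalization and the cutoff being exactly $1$ on the support of the contractions) works out exactly as you anticipate, so no gap remains.
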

\begin{proof}
Let us consider the second term on the right hand side of
(\ref{eq:1storder2}), and use the definition of $\Ptrunc$, equation
(\ref{eq:defPtrunc}), to expand it as a sum of four terms.
One of them is equal to the left hand side of (\ref{eq:1storder2}).
To evaluate the other three, let us first note that
\begin{align} 
\E\bigl[\FT{\psi}_t(k_3) \FT{\psi}_t(k_4)\bigr]=
\E\bigl[\FT{\psi}_0(k_3) \FT{\psi}_0(k_4)\bigr]= 0\, ,
\end{align}
by stationarity and invariance of the initial measure under rotations of
the total
phase of $\psi$.  Secondly, using also invariance of the measure under
spatial translations, we find, for both $i=3$ and $i=4$,
\begin{align} 
& \E\bigl[\FT{\psi}_t(k_2)^* \FT{\psi}_t(k_i)\bigr]
= \delta_\Lambda(k_i-k_2)
\sum_{x'\in\Lambda} \rme^{-\ci 2\pi x'\cdot k_i}
\E\bigl[\psi_0(0)^* \psi_0(x')\bigr]\, .
\end{align}
By Proposition \ref{th:PFcorr}, $\PFzero (-k_2,k_3,k_4)=1$, if $k_3=k_2$, or
$k_4=k_2$.
On the other hand, the above result implies that the expectation value is
zero otherwise.  Therefore,
\begin{align} 
 \PFzero (-k_2,k_3,k_4) \E\bigl[\FT{\psi}_t(k_2)^* \FT{\psi}_t(k_i)\bigr]
= \delta_\Lambda(k_i-k_2)
\E\bigl[\psi_0(0)^* \FT{\psi}_0(k_2)\bigr]\, .
\end{align}
Thus we can conclude that the right minus the left hand side of
(\ref{eq:1storder2}) is equal to
\begin{align} 
& R_0(\lambda,\Lambda)\FT{\psi}_t(k_1) -
 \int_{(\Lambda^*)^3}\!\! \rmd k_2 \rmd k_3\rmd k_4\,
 \delta_{\Lambda}(k_1+k_2-k_3-k_4)
\nonumber \\ & \qquad \times
\E\bigl[\psi_0(0)^* \FT{\psi}_0(k_2)\bigr]
\left( \delta_\Lambda(k_3-k_2)
\FT{\psi}_t(k_4) + \delta_\Lambda(k_4-k_2) \FT{\psi}_t(k_3) \right) \, .
\end{align}
Then summation over $k_3$ and $k_4$, and application of the definition of
$R_0$, shows that the term is equal to zero.  This completes the proof of
the Lemma.
\qed \end{proof}

We recall the definition of $\omla$ in (\ref{eq:defomla}), and use it to define
a random field $a_t(x)$ via its Fourier transform,
\begin{align} 
\FT{a}_t(k)= \rme^{\ci t \omla(k)} \FT{\psi}_t(k),
\quad k\in \Lambda^*.
\end{align}
Then $a_0(x)=\psi_0(x)$ and $\FT{a}_t$ satisfies the evolution equation
\begin{align}\label{eq:aNLS}
& \frac{\rmd }{\rmd t} \FT{a}_t(k_1) =
-\ci \lambda \int_{(\Lambda^*)^3}\!\! \rmd k_2 \rmd k_3\rmd k_4\,
 \delta_{\Lambda}(k_1+k_2-k_3-k_4)
\nonumber \\ & \qquad \times
\rme^{\ci t (\omega(k_1)+\omega(k_2)-\omega(k_3)-\omega(k_4))}
\Bigl\{ \PFone (-k_2,k_3,k_4)
 \FT{a}_t(k_2)^* \FT{a}_t(k_3) \FT{a}_t(k_4)
\nonumber \\ & \qquad\quad
 + \PFzero (-k_2,k_3,k_4) \Ptrunc\bigl[
 \FT{a}_t(k_2)^* \FT{a}_t(k_3) \FT{a}_t(k_4) \bigr]
\Bigl\} \, .
\end{align}
Note that the pure phase factor depending on $R_0$ cancels out from the
equation.  Clearly, now
\begin{align}\label{eq:Qwitha}
 \Qlfin[g,f](t) =
\E\!\left[\mean{\FT{f},\FT{a}_0}^* \mean{\FT{g},\FT{a}_{t/\vep}}\right] .
\end{align}

We set next
$a_t(x,1) = a_t(x)$ and $a_t(x,-1) = a_t(x)^*$,
which imply $\FT{a}_t(k,-1) = \FT{a}_t(-k,1)^*$ and
$\FT{a}_t(k,1)^*=\FT{a}_t(-k,-1)$.
By the above discussion, we need to study
the limit of
\begin{align} 
&  \Qlfin[g,f](t) =
\int_{(\Lambda^*)^2}\!\! \rmd k\rmd k'\, \FT{g}(k)^* \FT{f}(-k')
\E[\FT{a}_0(k',-1)\FT{a}_{t/\vep}(k,1) ]
\end{align}
where we have changed variables to $-k'$ in the second integral.
The new fields satisfy
\begin{align}\label{eq:apmNLS}
& \frac{\rmd }{\rmd t} \FT{a}_t(k,\sigma) =
-\ci \lambda \sigma \int_{(\Lambda^*)^3}\!\! \rmd k_1' \rmd k_2'\rmd k_3'\,
 \delta_{\Lambda}(k-k'_1-k'_2-k'_3) \rme^{-\ci t \Omega(k',\sigma)}
\nonumber \\ & \ \times
\Bigl\{ \PFone (k'_1,k'_2,k'_3)
 \FT{a}_t(k'_1,-1) \FT{a}_t(k'_2,\sigma) \FT{a}_t(k'_3,1)
\nonumber \\ & \qquad
+ \PFzero (k'_1,k'_2,k'_3) \Ptrunc\bigl[
 \FT{a}_t(k'_1,-1) \FT{a}_t(k'_2,\sigma) \FT{a}_t(k'_3,1)
 \bigr]
\Bigl\}
\end{align}
where $\Omega$ is defined by (\ref{eq:defOmega}),
\begin{align} 
& \Omega(k,\sigma)
= \omega(k_3)-\omega(k_1)+\sigma(\omega(k_2)-\omega(k_1+k_2+k_3))\, .
\end{align}

One more obstacle needs to be overcome. The simplest estimates
for the additional decay for non-leading terms will produce decay only in a
form of a small additional power of $\lambda$. However, our methods of
estimating the 
error terms produce always an additional factor of $\lambda^{-2}$.
One could try to improve the decay estimates by resorting to much more refined
classification of the decay of each term, similarly to what was needed in the
analysis of the random Schr\"odinger equation beyond kinetic time scales in
\cite{erdyau04,erdyau05b,erdyau05a}.
For our present goal of studying the kinetic time scale, a more convenient
tool is
to use ``partial time-integration'' first introduced in \cite{erdyau99}, and
somewhat improved to ``soft partial time-integration'' in \cite{ls05}.
The idea of the partial time-integration is to repeat the Duhamel expansion in
the error term, but only ``inside'' a certain small time-window.  The
additional decay is then produced by a large number of collisions which are
forced to  happen in the short time available.

To use the soft partial time-integration, we first record the obvious relation
\begin{align}\label{eq:PIrel}
 \frac{\rmd }{\rmd t}\left[ \rme^{\kappa t} \FT{a}_t(k_1,\sigma) \right] =
 \kappa \rme^{\kappa t} \FT{a}_t(k_1,\sigma)
 + \rme^{\kappa t} \frac{\rmd }{\rmd t} \FT{a}_t(k_1,\sigma) ,
\end{align}
valid for all $\kappa\in \C$. Thus for higher moments
\begin{align}\label{eq:ahighermom}
& \frac{\rmd }{\rmd t}\Bigl[  \rme^{\kappa t}
  \prod_{i=1}^n \FT{a}_t(k_i,\sigma_i) \Bigr]
= \kappa   \rme^{\kappa t} \prod_{i=1}^n \FT{a}_t(k_i,\sigma_i)
-\ci \lambda \sum_{j=1}^n \sigma_j
 \prod_{i=1;i\ne j}^n\!\! \FT{a}_t(k_i,\sigma_i)
\nonumber \\ & \ \times
 \int_{(\Lambda^*)^3}\!\! \rmd k'\,
  \delta_{\Lambda}(k_j-k'_1-k'_2-k'_3)
 \rme^{\kappa t-\ci t \Omega(k',\sigma_j)}
\Bigl\{ \PFone (k')
 \FT{a}_t(k'_1,-1) \FT{a}_t(k'_2,\sigma_j) \FT{a}_t(k'_3,1)
\nonumber \\ & \qquad\qquad
 + \PFzero (k') \Ptrunc\bigl[
 \FT{a}_t(k'_1,-1) \FT{a}_t(k'_2,\sigma_j) \FT{a}_t(k'_3,1)
 \bigr]
\Bigl\}
  \, .
\end{align}
Integrating (\ref{eq:ahighermom}) over time, and then multiplying with
$\rme^{-\kappa t}$,
yields the following Duhamel
formula with soft partial time-integration: for any $\kappa\ge 0$,
\begin{align}\label{eq:kappaDuh}
& \prod_{i=1}^n \FT{a}_t(k_i,\sigma_i)
= \rme^{-\kappa t} \prod_{i=1}^n \FT{a}_0(k_i,\sigma_i)
+ \kappa \int_0^t\!\rmd s\, \rme^{-(t-s)\kappa}
 \prod_{i=1}^n \FT{a}_s(k_i,\sigma_i)
\nonumber \\ & \quad
-\ci \lambda  \int_0^t\!\rmd s \sum_{j=1}^n \sigma_j
 \int_{(\Lambda^*)^3}\!\! \rmd k'\,
  \delta_{\Lambda}(k_j-k'_1-k'_2-k'_3)
 \rme^{-\kappa (t-s)-\ci s  \Omega(k',\sigma_j)}
\nonumber \\ & \qquad \times
 \prod_{i=1;i\ne j}^n\!\! \FT{a}_s(k_i,\sigma_i)
\Bigl\{ \PFone (k')
 \FT{a}_s(k'_1,-1) \FT{a}_s(k'_2,\sigma_j) \FT{a}_s(k'_3,1)
\nonumber \\ & \qquad\qquad
 + \PFzero (k') \Ptrunc\bigl[
 \FT{a}_s(k'_1,-1) \FT{a}_s(k'_2,\sigma_j) \FT{a}_s(k'_3,1)  \bigr]
\Bigl\}
 \, .
\end{align}
If $\kappa>0$, the first two terms can be combined to get a formula similar to
that given in Theorem 4.3 in \cite{ls05},
\begin{align}\label{eq:softpiold}
& \rme^{-\kappa t} \prod_{i=1}^n \FT{a}_0(k_i,\sigma_i)
+ \kappa \int_0^t\!\rmd s\, \rme^{-(t-s)\kappa}
 \prod_{i=1}^n \FT{a}_s(k_i,\sigma_i)
= \kappa \int_0^\infty \!\rmd r\, \rme^{-r \kappa}
 \prod_{i=1}^n \FT{a}_{(t-r)_+}(k_i,\sigma_i)
\end{align}
with $(r)_+=r$, if $r\ge 0$, and $(r)_+=0$, if $r< 0$.

We now iterate this formula for  $N_0\ge 1$ times, using it \defem{only} in the
term containing $\PFone$,
the complement of the cutoff function.  Then at each iteration step we
get three new terms,
one depending only on the initial field, $a_0$, one coming from the remainder
of the partial time integration and one containing the cutoff function
$\PFzero$.  Explicitly, this yields for any
$\kappa\in \R_+^{\set{0,1,\ldots,N_0-1}}$ an expansion
\begin{align}\label{eq:mainaiter}
& \FT{a}_t(k,\sigma) = \sum_{n=0}^{N_0-1}
\mathcal{F}_n(t,k,\sigma,\kappa)[\FT{a}_0]
+ \sum_{n=0}^{N_0-1} \kappa_n \int_0^t\! \rmd s\,
\mathcal{G}_{n}(s,t,k,\sigma,\kappa)[\FT{a}_s]
 \nonumber \\ & \quad
+ \sum_{n=1}^{N_0} \int_0^t\! \rmd s\,
\mathcal{Z}_n(s,t,k,\sigma,\kappa)[\FT{a}_s]
+ \int_0^t\! \rmd s\,
\mathcal{A}_{N_0}(s,t,k,\sigma,\kappa)[\FT{a}_s] .
\end{align}
Each of the functionals is a polynomial of $\FT{a}_s$, for some fixed time
$s$, and their structure can be encoded in diagrams
whose construction we describe next.

For given $n, n'$, with $0\le n' \le n$, we first
define the index sets $I_n=\set{1,2,\ldots,n}$ and
$I_{n',n}=\set{n',n'+1,\ldots,n}$.
For further use,
let $m_0\ge 1$ denote the number of fields at the final time $t$ (in the above
case of $\FT{a}_t$
we thus set $m_0=1$).  Also, let $N\ge 0$ denote the total number of
interactions, \itie , iterations
of the Duhamel formula.
A term with $N$ interactions has the total time $t$ divided
into $N+1$ ``time slices'' of length $s_i$, $i=0,1,\ldots,N$, labeled in
their proper time-order (from bottom to top in the diagram).
Associated with a time slice $i$ there are in total
$m_{N-i}$ ``momentum integrals'' over $\Lambda^*$, where $m_n=m_0+2 n$.
We label the momenta by $k_{i,j}$
and associate a line segment in the diagram to each of them.
The interactions are denoted by an interaction vertex.
Each interaction vertex thus contains
a $\delta_\Lambda$-function which enforces the momenta below the vertex
(belonging to an earlier time slice)
to sum up to the momenta above the vertex.  The momenta not involved in an
interaction are continued unchanged from one time slice to the next.  Thus a
natural way of representing the line segments is to connect them into straight
lines passing through several time slices
until they encounter an interaction vertex at which three such lines fuse into
one new momentum line.
For this reason, we will call the interactions \defem{fusions} from now on.

To summarize the notations, the fusion number $1$, denoted by an interaction
vertex $v_1$,
happens after time $s_0$ which is the length of
the time slice number $0$, fusion $2$ happens after time $s_0+s_1$,
where $s_1$ is the length of the time slice $1$, etc.
In general, fusion $i$ happens in the beginning of the slice $i$.
For each time slice $i\in I_{0,N}$ we label the momenta
by $k_{i,j}$, $j=1,\ldots,m_{N-i}$.
Similar labeling is used for the ``parity'' $\sigma_{i,j}\in \set{\pm 1}$.
The structure of interactions is such that the parity of each line is uniquely
determined by the parities of the final lines.  In our diagrams, we use the
order implicit in (\ref{eq:kappaDuh}):
the parities of the fusing line-segments are required to appear in the order
$(-1,\sigma,+1)$, and then the parity of the
\defem{middle} line  will be carried on
to determine the parity resulting in a fusion.  Fig.~ \ref{fig:indexgr}
illustrates these definitions.

\begin{figure}
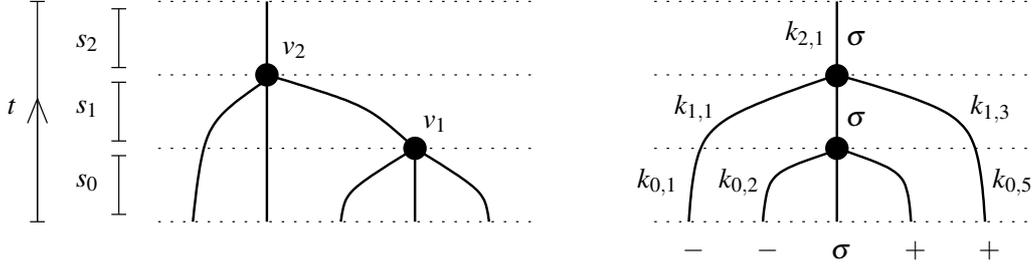

  \centering
  \myfigure{width=0.9\textwidth}{DiagramC}
\caption{Two examples of interaction diagrams with $N=2$ interactions and
$m_0=1$ final fields.
In the left diagram we have indicated the notations used for time slices and
interaction vertices. In the right, the parity of each line is shown, assuming
that the final line has parity $\sigma$, as well as some of the notations used
for momenta associated with line-segments on each time slice.
The ``interaction history'' of the diagram is $\ell=(3,1)$ on the left and
$\ell=(2,1)$ on the right.}\label{fig:indexgr}
\end{figure}

Let
$\mathcal{I}_{n;m_0} = \defset{(i,j)}{0\le i\le n-1, 1\le j \le m_0+2(n-i)}$ 
which is a subset of 
$I_{0,n-1}\times I_{m_0+2n}$. Then  the set $\mathcal{I}_{N;m_0}$
collects all index pairs associated with momentum line-segments,
excluding the final time slice with $i=N$.  We also employ the shorthand
notation $\mathcal{I}_{n}$ for $\mathcal{I}_{n;1}$.
We use a vector $\ell$ to define
the interaction history by collecting, for every time slice with $i\ge 1$,
the index of the new line formed in the fusion at the beginning of the
slice.  Then $\ell\in G_N$, with
$G_N= I_{m_{N-1}}\times I_{m_{N-2}}\times \cdots \times I_{m_{0}}$, and let also
$G_0=\emptyset$.  By the earlier explained procedure, the indices in each time
slice are matched so that
the indices made vacant by the fusion are filled by shifting the indices
following the fusion line down by two.
This corresponds to labeling the momenta in each time slice by
counting them from left to right in the natural graphical representation of the
interaction history, where the lines intersect only at interaction
vertices.  (See Fig.~ \ref{fig:indexgr} for an illustration.)

Explicitly, 
$\mathcal{F}_0(t,k,\sigma,\kappa)[\FT{a}]=\rme^{-\kappa_0 t} \FT{a}(k,\sigma)$
and, for $n> 0$,
\begin{align}\label{eq:defFn}
& \mathcal{F}_n(t,k_{n1},\sigma_{n1},\kappa)[\FT{a}]
=
(-\ci \lambda)^n \sum_{\ell \in G_n}
\sum_{\sigma\in \set{\pm 1}^{\mathcal{I}_{n}}}
\int_{(\Lambda^*)^{\mathcal{I}_{n}}} \!\rmd k\,
\Delta_{n,\ell}(k,\sigma;\Lambda)
\prod_{j=1}^{m_0+2 n} \FT{a}(k_{0,j},\sigma_{0,j})
 \nonumber \\ &  \quad \times
\prod_{i=1}^{n} \Bigl[ \sigma_{i,\ell_{i}} \PFone(k_{i-1;\ell_i}) \Bigr]
\int_{(\R_+)^{I_{0,n}}}\!\rmd s \,  \delta\Bigl(t-\sum_{i=0}^{n} s_i\Bigr)
\prod_{i=0}^{n} \rme^{-s_i \kappa_{n-i}}
\prod_{i=1}^{n} \rme^{-\ci t_i(s) \Omega_{i-1;\ell_i}(k,\sigma)}
 \, ,
\end{align}
where $t_i(s) = \sum_{j=0}^{i-1} s_j$
is the time needed to reach the beginning of the slice $i$, and
\begin{align} 
&
k_{i;j} = (k_{i,j},k_{i,j+1},k_{i,j+2})\in (\T^d)^3\, ,
 \\ &
\Omega_{i;j}(k,\sigma) = \Omega(k_{i;j},\sigma_{i+1,j}),
\end{align}
and $\Delta_{n,\ell}$ contains $\delta$-functions restricting the
integrals over
$k$ and $\sigma$ to coincide with the interaction history defined by $\ell$,
as described above.
Explicitly,
\begin{align} 
&\Delta_{n,\ell}(k,\sigma;\Lambda) =
\prod_{i=1}^{n} \Bigl\{
\prod_{j=1}^{\ell_i-1} \Bigl[
 \delta_\Lambda(k_{i,j}-k_{i-1,j}) \1(\sigma_{i,j}=\sigma_{i-1,j})
\Bigr]
 \nonumber \\ & \qquad \times
\delta_\Lambda\Bigl(k_{i,\ell_{i}}-\sum_{j=0}^{2} k_{i-1,\ell_i+j} \Bigr)
\prod_{j=\ell_i+1}^{m_{n-i}} \Bigl[
 \delta_\Lambda(k_{i,j}-k_{i-1,j+2}) \1(\sigma_{i,j}=\sigma_{i-1,j+2})
\Bigr]
 \nonumber \\ & \qquad \times
\1(\sigma_{i-1,\ell_{i}}=-1)
\1(\sigma_{i-1,\ell_{i}+1}=\sigma_{i,\ell_i})
\1(\sigma_{i-1,\ell_{i}+2}=+1)
\Bigr\}  \, .
\end{align}

The remaining terms are very similar.  For $n\ge 1$,
\begin{align} 
& \mathcal{A}_n(s_0,t,k_{n1},\sigma_{n1},\kappa)[\FT{a}]
= (-\ci \lambda)^n \sum_{\ell \in G_n}
\sum_{\sigma\in \set{\pm 1}^{\mathcal{I}_{n}}}
\int_{(\Lambda^*)^{\mathcal{I}_{n}}} \!\rmd k\,
\Delta_{n,\ell}(k,\sigma;\Lambda)
 \nonumber \\ & \quad \times
\prod_{j=1}^{m_0+2 n} \FT{a}(k_{0,j},\sigma_{0,j})
\prod_{i=1}^{n} \Bigl[ \sigma_{i,\ell_{i}}
 \PFone(k_{i-1;\ell_i})
 \Bigr]
 \nonumber \\ & \quad \times
\int_{(\R_+)^{I_{n}}}\!\rmd s \,  \delta\Bigl(t-s_0-\sum_{i=1}^{n} s_i\Bigr)
\prod_{i=1}^{n} \rme^{-s_i \kappa_{n-i}}
\prod_{i=1}^{n} \rme^{- \ci t_i(s) \Omega_{i-1;\ell_i}(k,\sigma)}
 \, .
\end{align}
We set $\mathcal{G}_{0}(s,t,k,\sigma,\kappa)[\FT{a}]
=\rme^{-(t-s)\kappa_0}\FT{a}(k,\sigma)=
\mathcal{F}_0(t-s,k,\sigma,\kappa)[\FT{a}]$, and for $n> 0$,
\begin{align}\label{eq:defGnviaA}
\mathcal{G}_{n}(s,t,k,\sigma,\kappa)[\FT{a}]
=\int_0^{t-s}\!\rmd r\, \rme^{-r\kappa_n}
\mathcal{A}_{n}(s+r,t,k,\sigma,\kappa)[\FT{a}],
\end{align}
and, finally,
\begin{align} 
& \mathcal{Z}_n(s_0,t,k_{n1},\sigma_{n1},\kappa)[\FT{a}]
= (-\ci \lambda)^n \sum_{\ell \in G_n}
\sum_{\sigma\in \set{\pm 1}^{\mathcal{I}_{n}}}
\int_{(\Lambda^*)^{\mathcal{I}_{n}}} \!\rmd k\,
\Delta_{n,\ell}(k,\sigma;\Lambda)
\sigma_{1,\ell_1} \PFzero (k_{0;\ell_1}) 
 \nonumber \\ & \ \times
\prod_{i=2}^{n} \Bigl[
\sigma_{i,\ell_{i}}
 \PFone(k_{i-1;\ell_i})
 \Bigr]
 \prod_{j=1}^{\ell_1-1} \FT{a}(k_{0,j},\sigma_{0,j})
\Ptrunc\Bigl[
 \prod_{j=\ell_1}^{\ell_1+2}  \FT{a}(k_{0,j},\sigma_{0,j})
\Bigr]
 \prod_{j=\ell_1+3}^{m_0+2 n} \FT{a}(k_{0,j},\sigma_{0,j})
 \nonumber \\ & \ \times
\int_{(\R_+)^{I_{n}}}\!\rmd s \,  \delta\Bigl(t-s_0-\sum_{i=1}^{n} s_i\Bigr)
\prod_{i=1}^{n} \rme^{-s_i \kappa_{n-i}}
\prod_{i=1}^{n} \rme^{-\ci t_i(s)  \Omega_{i-1;\ell_i}(k,\sigma)}
 \, .
\end{align}

Using these definitions, the validity of (\ref{eq:mainaiter}) can be proven by
induction in $N_0$,
applying (\ref{eq:kappaDuh}) to
$\mathcal{A}_{N_0}$ in (\ref{eq:mainaiter}).
For later use, let us point out that the total oscillating phase factor
in the above formulae can also be written as
\begin{align}\label{eq:phase2}
\prod_{i=1}^{n} \rme^{-\ci t_i(s) \Omega_i}
= \prod_{j=0}^{n-1}
\exp\Bigl[-\ci s_j \sum_{i=j+1}^n \Omega_i\Bigr], \quad \Omega_i =
\Omega_i(\ell,k,\sigma) = \Omega_{i-1;\ell_i}(k,\sigma) \, .
\end{align}
Applying the expansion to (\ref{eq:Qwitha}) proves the following result.
\begin{proposition}\label{th:Qmainerr}
For any $N_0\ge 1$ and for any choice of
$\kappa\in \R_+^{I_{0,N_0-1}}$, we have
\begin{align}\label{eq:pexpmainterm}
&  \Qlfin[g,f](t) = \Qmain + Q^{\rm err}_{\rm pti}+
Q^{\rm err}_{\rm cut} + Q^{\rm err}_{\rm amp} \, ,
\end{align}
where
\begin{align}\label{eq:Qmaindef}
\Qmain =
\int_{(\Lambda^*)^2}\!\! \rmd k\rmd k'\, \FT{g}(k)^* \FT{f}(-k')
\sum_{n=0}^{N_0-1}
\E\!\left[\FT{\psi}_0(k',-1)\mathcal{F}_n(t/\vep,k,1,\kappa)[\FT{\psi}_0]
\right]
\end{align}
and the error terms are given by
\begin{align}
&  Q^{\rm err}_{\rm pti} =
 \sum_{n=0}^{N_0-1} \kappa_n \int_0^{t/\vep}\! \rmd s\,
\E\Bigl[\mean{\FT{f},\FT{a}_0}^*
\int_{\Lambda^*}\!\! \rmd k\,  \FT{g}(k)^*
\mathcal{G}_{n}(s,t/\vep,k,1,\kappa)[\FT{a}_s]\Bigr], \\
&  Q^{\rm err}_{\rm cut} =
  \sum_{n=1}^{N_0} \int_0^{t/\vep}\! \rmd s\,
\E\Bigl[\mean{\FT{f},\FT{a}_0}^*
\int_{\Lambda^*}\!\! \rmd k\,  \FT{g}(k)^*
\mathcal{Z}_{n}(s,t/\vep,k,1,\kappa)[\FT{a}_s]\Bigr], \\
&  Q^{\rm err}_{\rm amp} =
 \int_0^{t/\vep}\! \rmd s\,
\E\Bigl[\mean{\FT{f},\FT{a}_0}^*
\int_{\Lambda^*}\!\! \rmd k\,  \FT{g}(k)^*
\mathcal{A}_{N_0}(s,t/\vep,k,1,\kappa)[\FT{a}_s]\Bigr] \, .
\end{align}
\end{proposition}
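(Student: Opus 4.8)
The plan is to prove the identity (\ref{eq:mainaiter}) for the field $\FT{a}_t$ by induction on $N_0$, and then to substitute it into the representation (\ref{eq:Qwitha}) of $\Qlfin[g,f](t)$; the four sums on the right of (\ref{eq:mainaiter}) then produce, term by term, $\Qmain$, $Q^{\rm err}_{\rm pti}$, $Q^{\rm err}_{\rm cut}$ and $Q^{\rm err}_{\rm amp}$, which is exactly (\ref{eq:pexpmainterm}). No analytic input is needed; the whole statement is an algebraic reorganization of the Duhamel series, valid because at finite volume all the $s$- and $k$-integrals run over finite-measure spaces, so expectation, integration and summation may be freely exchanged.

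For the base case $N_0=1$ I apply the Duhamel formula with soft partial time-integration (\ref{eq:kappaDuh}) with $n=1$ to the single field $\FT{a}_t(k,1)$ and read off its four pieces: the pure initial-data term $\rme^{-\kappa_0 t}\FT{a}_0(k,1)=\mathcal{F}_0$; the partial-time-integration remainder $\kappa_0\int_0^t\rmd s\,\rme^{-(t-s)\kappa_0}\FT{a}_s(k,1)=\kappa_0\int_0^t\rmd s\,\mathcal{G}_0$; the interaction contribution carrying $\PFzero$ together with $\Ptrunc$, which by inspection of the definition of $\mathcal{Z}_1$ equals $\int_0^t\rmd s\,\mathcal{Z}_1$; and the interaction contribution carrying $\PFone$, which is $\int_0^t\rmd s\,\mathcal{A}_1$. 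For the inductive step, assume (\ref{eq:mainaiter}) at level $N_0$ and apply (\ref{eq:kappaDuh}), via (\ref{eq:ahighermom}), to the $(m_0+2N_0)$-fold monomial $\prod_i\FT{a}_s(k_{0,i},\sigma_{0,i})$ sitting inside $\int_0^t\rmd s\,\mathcal{A}_{N_0}(s,t,k,1,\kappa)[\FT{a}_s]$, using the new damping parameter $\kappa_{N_0}$. The sum over the fusing index $j$ in (\ref{eq:ahighermom}) becomes the sum over the new entry $\ell_{N_0+1}$ of the interaction history; the fresh $\delta_\Lambda$-constraint from the new fusion vertex, together with the convention that the lines above the fusion are relabelled by a downward shift of two and the parities are fixed to the pattern $(-1,\sigma,+1)$ on the fusing triple and propagated on the spectator lines, is precisely what turns the product of constraint factors into $\Delta_{N_0+1,\ell}$; and the newly generated phase $\rme^{-\kappa_{N_0}(t_{N_0+1}-s)-\ci s\,\Omega(k',\sigma_j)}$ combines with the previous ones into the product displayed in (\ref{eq:phase2}) with $n=N_0+1$. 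The result is that $\int_0^t\rmd s\,\mathcal{A}_{N_0}$ is replaced by a new $\mathcal{F}_{N_0}$ term, a new $\kappa_{N_0}\int_0^t\rmd s\,\mathcal{G}_{N_0}$ term (with $\mathcal{G}_{N_0}$ reassembled via (\ref{eq:defGnviaA})), a new $\int_0^t\rmd s\,\mathcal{Z}_{N_0+1}$ term, and $\int_0^t\rmd s\,\mathcal{A}_{N_0+1}$, which is exactly (\ref{eq:mainaiter}) at level $N_0+1$. (When all $\kappa_n>0$ one may additionally recombine $\mathcal{F}_0$ with the $\mathcal{G}_0$-integral using (\ref{eq:softpiold}), but this is not needed for the statement.)

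Once (\ref{eq:mainaiter}) is established, I insert it into $\Qlfin[g,f](t)=\E[\mean{\FT{f},\FT{a}_0}^*\mean{\FT{g},\FT{a}_{t/\vep}}]$ from (\ref{eq:Qwitha}), write $\mean{\FT{g},\FT{a}_{t/\vep}}=\int_{\Lambda^*}\rmd k\,\FT{g}(k)^*\FT{a}_{t/\vep}(k,1)$, and expand $\FT{a}_{t/\vep}(k,1)$ by the right-hand side of (\ref{eq:mainaiter}) with $t$ replaced by $t/\vep$. Using $\FT{a}_0=\FT{\psi}_0$ together with Parseval and the relation $\FT{a}_0(k,1)^*=\FT{a}_0(-k,-1)$, which gives $\mean{\FT{f},\FT{a}_0}^*=\int_{\Lambda^*}\rmd k'\,\FT{f}(-k')\FT{\psi}_0(k',-1)$, the contribution of the $\mathcal{F}_n$-sum is exactly $\Qmain$ as in (\ref{eq:Qmaindef}); and the contributions of the three remaining sums, where the functionals $\mathcal{G}_n$, $\mathcal{Z}_n$, $\mathcal{A}_{N_0}$ act on the unexpanded field $\FT{a}_s$, are by direct inspection the stated formulas for $Q^{\rm err}_{\rm pti}$, $Q^{\rm err}_{\rm cut}$ and $Q^{\rm err}_{\rm amp}$.

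The only genuinely delicate point is the combinatorial bookkeeping in the inductive step: one has to check that the line-relabelling convention is consistent across successive applications of (\ref{eq:kappaDuh}), so that the constraint factors telescope exactly into $\Delta_{n,\ell}$ and the accumulated oscillatory phase is exactly $\prod_{i=1}^{n}\rme^{-\ci t_i(s)\Omega_{i-1;\ell_i}(k,\sigma)}$ with the $t_i(s)$ as in (\ref{eq:defFn}). This is most transparently verified by following the graphical representation of the interaction history in Fig.~\ref{fig:indexgr}; once the labelling is fixed the identity is purely algebraic. I expect this index-matching to be the main obstacle in writing out the proof in full detail, while everything else is a mechanical substitution.
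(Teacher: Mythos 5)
Your proposal is correct and follows essentially the same route as the paper: the paper also establishes (\ref{eq:mainaiter}) by induction in $N_0$, applying the soft partial time-integration Duhamel formula (\ref{eq:kappaDuh}) to the $\mathcal{A}_{N_0}$ term at each step, and then obtains the Proposition by inserting the expansion into (\ref{eq:Qwitha}). Your identification of the four terms (initial-data, partial-time-integration remainder, $\PFzero/\Ptrunc$ cutoff term, $\PFone$ remainder) with $\mathcal{F}_n$, $\mathcal{G}_n$, $\mathcal{Z}_n$, $\mathcal{A}_{N_0}$, and the bookkeeping of $\kappa_{N_0}$, $\Delta_{n,\ell}$ and the phases via (\ref{eq:phase2}), matches the paper's argument.
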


\subsection{Structure of the proof}
\label{sec:structureofproof}

We have now derived a
time-evolution equation for arbitrary moments of the field, and constructed a
related
Duhamel expansion of our observable.  Already at this stage we had to
introduce
certain additional structure compared to the standard Duhamel formula.
Certain regions of wavenumbers are treated differently, in
order to control ``bad'' constructive interference effects.
In addition, we have introduced an artificial exponential
decay for partial time integration which will be used to
amplify decay estimates which are too weak to be used in the error estimates.

The terms in this expansion either contain only finite moments of the
initial fields, or after relying on stationarity of the initial measure,
can be bounded by such moments.  We will employ our assumption
about the strong clustering properties of the initial measure to turn the
moments into cumulants whose analysis in the Fourier-space will result only
in additional ``Kirchhoff's rules'' on the initial time slice.  The
expectation values can then be expressed as a sum over graphs encoding the
various possible momentum- and time-dependencies of the integrand.
The construction of the graphs will be explained in Section \ref{sec:diagrams}

We will then derive a certain, essentially unique, way to resolve all the
momentum dependencies dictated by a graph, see Section \ref{sec:momdeltas}.
After this, it will
be a modest step to show that the limit $\Lambda\to \infty$ in essence
corresponds to replacing the discrete sums over $\Lambda^*$ by integrals
over $\T^d$.
The resulting graphs can then be classified, in the spirit of \cite{erdyau99},
by identifying in most of them certain integrals with oscillating factors which
produce additional decay compared to the leading graphs.
Here the idea is first to identify all ``motives'' which make the phase
factors to vanish identically in every second time slice, while the remaining
time slices are forced to have a subkinetic length due to the oscillating
phases.
These correspond to immediate recollisions in the language of the
earlier works, and repetitions of these motives yield the leading term
graphs.  Other graphs will be subleading either because they contain
additional
$k$-integrals, or because the $k$-integrals overlap in such a way that
additional time slices can be proven to have a subkinetic length.
As before, the overlap needs to be controlled in several different
fashions to find the appropriate mechanism for decay. This results in a
classification of these graphs into partially paired, nested, and crossing
graphs.

The control of the three different types of remainder terms can be accomplished
by slight modifications of the estimates used for  the main term.
The limit of the sum of the leading graphs is then shown to coincide with the
expression given in the main theorem.  The precise choice of expansion
parameters, as well as a preliminary classification of the graphs, will be
given in Section \ref{sec:classification}.  After establishing the main
technical lemmata in Section
\ref{sec:lemmas}, we derive the various estimates in two parts.
In Section \ref{sec:higherorder}
we consider higher order effects and the infinite volume limit.  Pairing graphs
can only be treated
after taking $\Lambda\to \infty$, and their analysis is given in Section
\ref{sec:fullypaired}.
Combined, the various estimates yield the result stated in Theorem
\ref{th:main}, as is shown in Section \ref{sec:completion}.


\section{Diagrammatic representation}\label{sec:diagrams}

In this section, we derive diagrammatic representations related to
the terms in Proposition \ref{th:Qmainerr}.  For the main terms summing to
$\Qmain$ the representation describes the value of the term, whereas for the
error terms, the representation is a contribution to an upper bound of the
term. The representations arise since we are able to derive upper bounds which
depend only on moments of the initial fields.
We first recall a standard result which relates moments to truncated
correlation functions of the time zero fields.

\subsection{Initial time clusters from a cumulant expansion}
\label{sec:cumulants}

Since $a_0(x)=\psi_0(x)$, the conditions for initial fields imply
$\EG[\FT{a}_0(k,\sigma)]=0$, and
\begin{align} 
  \EG[\FT{a}_0(k,\sigma) \FT{a}_0(k',\sigma')] =
  \1(\sigma+\sigma'=0) \delta(k+k') W(\sigma k)\, ,
\end{align}
for $\sigma',\sigma \in \set{\pm 1}$.  However, this formula is correct
only after taking the infinite volume and the weak coupling limit.
Before taking these limits there will be corrections to the cumulants.
These corrections can be controlled by relying on the strong clustering
assumption, as will be described next. 

Given $n\in \N$, we define for $k\in (\Lambda^*)^n$,
$\sigma\in \set{\pm 1}^n$, the truncated correlation function (or a cumulant
function) in Fourier-space as
\begin{align} 
 C_n(k,\sigma;\lambda,\Lambda) :=
 \sum_{x\in \Lambda^{n}} \1(x_1=0)
 \rme^{-\ci 2\pi \sum_{i=1}^{n} x_i\cdot k_i}
 \Elfin\Bigl[\prod_{i=1}^n \psi(x_i,\sigma_i)\Bigr]^{\rm trunc} \, .
\end{align}
An immediate consequence of the gauge invariance of the measure 
is that $C_n = 0$ if $\sum_{i=1}^n \sigma_i\ne 0$.  In particular, all odd
truncated correlation functions vanish. 
By Assumption \ref{th:Ainitcond}, apart from $n=2$, the functions for
all other even $n$ have uniform bounds,
\begin{align}\label{eq:Cnbound}
| C_n(k,\sigma;\lambda,\Lambda)|  \le \lambda (c_0)^n n!\, .
\end{align}

For $n=2$, we have
\begin{align} 
 C_2(k,\sigma;\lambda,\Lambda) =
 \sum_{x\in \Lambda}
 \rme^{-\ci 2\pi x \cdot k_2}
 \Elfin\!\left[\psi(0,\sigma_1)\psi(x,\sigma_2)\right] \, .
\end{align}
Thus, $C_2=0$ if $\sigma_1=\sigma_2$, and clearly also
$C_2(k,(1,-1))=C_2(-k,(-1,1))^*$, for all $k\in (\T^d)^2$.
A comparison with the definition
of $W^{\lambda}_\Lambda$ shows that
$C_2((k',k),(-1,1))=W^{\lambda}_\Lambda(k)$, and thus also
$C_2((k',k),(1,-1))=W^{\lambda}_\Lambda(-k)^*$.  However, by a direct
application of translation invariance in the definition we find that
$C_2((k',k),(1,-1))=W^{\lambda}_\Lambda(-k)$.  This implies that 
$W^{\lambda}_\Lambda$ is real valued.
By Lemma \ref{th:unifW2},
there is a constant $c_0'$ such that
$|W^{\lambda}_\Lambda(k)|\le c_0'$.  Thus also for $n=2$ the cumulant
functions are uniformly bounded, but this bound does not contain the factor
$\lambda$, as in the bound (\ref{eq:Cnbound}) for $n>2$. 

The cumulant functions are of interest since they allow expanding moments
in terms of uniformly bounded functions
via the following general result.
\begin{definition}\label{th:defPiI}
For any finite, non-empty set $I$, let $\pi(I)$ denote
the set of its partitions:
$S\in \pi(I)$ if and only if $S \subset \mathcal{P}(I)$ such that
each $A\in S$ is non-empty, $\cup_{A\in S} A = I$, and
if $A,A'\in S$ with $A'\ne A$ then $A'\cap A=\emptyset$.
In addition, we define $\pi(\emptyset)=\set{\emptyset}$.
\end{definition}
\begin{lemma}\label{th:cumulants}
For any index set $I$, and any $k\in (\T^d)^I$, $\sigma\in \set{\pm 1}^I$,
\begin{align} 
& \Elfin\Bigl[\prod_{i\in I} \FT{\psi}(k_i,\sigma_i)\Bigr]
= \sum_{S\in \pi(I)} \prod_{A\in S} \Bigl[ \delta_\Lambda\!\Bigl(\sum_{i\in
  A} k_i\Bigr)  C_{|A|}(k_A,\sigma_{\!A};\lambda,\Lambda) \Bigr] \, ,
\end{align}
where the sum runs over all partitions $S$ of the index set $I$,
and the shorthand notation $(k_A,\sigma_{\!A})$
refers to $(k_a,\sigma_a)_{a\in A}\in (\T^d\times \set{\pm 1})^A$, with an
arbitrary ordering of the elements $a\in A$.
\end{lemma}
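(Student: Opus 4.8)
The statement is the standard moment-to-cumulant (``moments in terms of truncated correlations'') expansion, specialized to the Fourier-space fields $\FT{\psi}(k,\sigma)$ at time zero, so the plan is to reduce it to the known position-space identity and then Fourier transform. First I would recall the definition of cumulants in position space: for any finite index set $I$ and any $(x_i,\sigma_i)_{i\in I}\in (\Lambda\times\set{\pm 1})^I$, the moment decomposes as
\begin{align}
 \Elfin\Bigl[\prod_{i\in I}\psi(x_i,\sigma_i)\Bigr]
 = \sum_{S\in\pi(I)}\prod_{A\in S}
 \Elfin\Bigl[\prod_{a\in A}\psi(x_a,\sigma_a)\Bigr]^{\rm trunc}\, ,
\end{align}
which is just the defining recursion for the truncated correlation functions (M\"obius inversion on the partition lattice). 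This is a purely algebraic fact about the finite family of random variables $\set{\psi(x,\sigma)}$, valid for \emph{any} probability measure on $\C^\Lambda$ with finite moments, so in particular for $\mathbb{P}^\lambda_\Lambda$; no properties of the Gibbs measure are used here.

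Next I would substitute $\FT{\psi}(k_i,\sigma_i) = \sum_{x_i\in\Lambda}\psi(x_i,\sigma_i)\rme^{-\ci 2\pi\sigma_i x_i\cdot k_i}$ — more precisely, using $\FT{\psi}(k,1)=\FT{\psi}(k)=\sum_x\psi(x)\rme^{-\ci 2\pi x\cdot k}$ and $\FT{\psi}(k,-1)=\FT{\psi}(-k)^*=\sum_x\psi(x)^*\rme^{-\ci 2\pi x\cdot k}$, so that in both cases $\FT{\psi}(k_i,\sigma_i)=\sum_{x_i}\psi(x_i,\sigma_i)\rme^{-\ci 2\pi x_i\cdot k_i}$ — into the left-hand side, pull the finite sums over $x_i\in\Lambda$ outside, and apply the position-space expansion above to $\Elfin[\prod_{i\in I}\psi(x_i,\sigma_i)]$. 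This turns the left-hand side into $\sum_{S\in\pi(I)}\prod_{A\in S}\sum_{(x_a)_{a\in A}} \rme^{-\ci 2\pi\sum_{a\in A}x_a\cdot k_a}\Elfin[\prod_{a\in A}\psi(x_a,\sigma_a)]^{\rm trunc}$, since the exponential factors and the sums over the $x_a$ factorize over the blocks $A\in S$. For each block $A$, translation invariance of $\mathbb{P}^\lambda_\Lambda$ under periodic shifts of $\Lambda$ lets me write the truncated correlation as a function of the differences $x_a - x_{a_0}$ for a fixed reference index $a_0\in A$; summing the free center-of-mass variable $x_{a_0}$ over $\Lambda$ produces exactly $\delta_\Lambda(\sum_{a\in A}k_a)$ (using $\sum_{x\in\Lambda}\rme^{-\ci 2\pi x\cdot k}=\delta_\Lambda(k)$ for $k\in\Lambda^*$), and the remaining sum over the relative coordinates, with the constraint that the reference coordinate is pinned to $0$, is precisely $C_{|A|}(k_A,\sigma_{\!A};\lambda,\Lambda)$ as defined above. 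Reassembling the product over $A\in S$ and the sum over $S\in\pi(I)$ gives the claimed formula.

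The only genuinely delicate point — and hence the step I would write most carefully — is the bookkeeping in the translation-invariance reduction: one must check that the choice of reference index $a_0$ inside each block does not matter (it doesn't, because $C_{|A|}$ is defined with $x_1$ pinned and translation invariance of the truncated correlation functions makes the pinned-coordinate choice irrelevant up to relabeling), and that the $\delta_\Lambda$ emerging from the center-of-mass sum is the correctly normalized discrete delta on $\Lambda^*$ consistent with the conventions fixed in Section \ref{sec:finite volume}. Everything else is the standard combinatorial identity plus Fubini on finite sums, so there is no analytic obstacle; the lemma holds at finite volume for all $\lambda\ge 0$ without invoking Assumption \ref{th:Ainitcond}, which is only needed later to estimate the resulting $C_n$'s.
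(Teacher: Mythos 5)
Your proposal is correct and follows essentially the same route as the paper: expand the Fourier fields in position space, apply the standard moment-to-cumulant decomposition (which the paper derives via the cumulant generating function, while you simply invoke it as the defining partition-lattice identity), and then use translation invariance of each truncated correlation to extract the center-of-mass sum as $\delta_\Lambda(\sum_{a\in A}k_a)$ with the pinned relative sum yielding $C_{|A|}$. Your corrected form $\FT{\psi}(k,\sigma)=\sum_x\psi(x,\sigma)\rme^{-\ci 2\pi x\cdot k}$ and the remark that no clustering assumption is needed at this stage both match the paper's argument.
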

\begin{proof}
Let $\E=\Elfin$.  We need to study
\begin{align} 
& \E\Bigl[\prod_{i\in I} \FT{\psi}(k_i,\sigma_i)\Bigr]
= \sum_{x\in \Lambda^I} \rme^{-\ci 2\pi \sum_i x_i\cdot k_i}
 \E\Bigl[\prod_{i\in I} \psi(x_i,\sigma_i)\Bigr]\, .
\end{align}
We denote the cumulant generating function by
$\mathcal{G}_c[f]=\ln \E[\rme^{\ci \sum_{x,\sigma}f(x,\sigma) \psi(x,\sigma)}]$,
using which 
\begin{align} 
& \E\Bigl[\prod_{i\in I} \psi(x_i,\sigma_i)\Bigr]
 = (-\ci)^{|I|} \Bigl[\prod_{i\in I} \partial_{f(x_i,\sigma_i)}\Bigr]
 \left.\rme^{\mathcal{G}_c[f]}\right|_{f=0}
\nonumber \\ & \quad
 = \sum_{S\in \pi(I)} \prod_{A\in S}
\Bigl[(-\ci)^{|A|} \prod_{i\in A} \partial_{f(x_i,\sigma_i)}
 \left.\mathcal{G}_c[f]\right|_{f=0}\Bigr]
 = \sum_{S\in \pi(I)} \prod_{A\in S}
 \E\Bigl[\prod_{i\in A} \psi(x_i,\sigma_i)\Bigr]^{\rm trunc} \, .
\end{align}
Since the measure is translation invariant, so are all of the truncated
correlation functions.  As is implicitly implied by the notation, 
they are obviously also
invariant under permutations of the index sets.  
Thus, if we choose an arbitrary ordering $i_A:\set{1,2,\ldots,|A|}\to A$ of 
the elements of  each $A\in S$, then
\begin{align} 
& \E\Bigl[\prod_{i\in I} \FT{\psi}(k_i,\sigma_i)\Bigr]
=  \sum_{S\in \pi(I)} \prod_{A\in S} \Bigl[
\sum_{x\in \Lambda^A} \rme^{-\ci 2\pi \sum_{a\in A}x_a\cdot k_a}
 \E\Bigl[\prod_{j=1}^{|A|} 
    \psi(x_{i_{A}(j)}-x_{i_{A}(1)},\sigma_{i_{A}(j)})\Bigr]^{\rm trunc}
 \Bigr]
\nonumber \\ & \quad
= \sum_{S\in \pi(I)} \prod_{A\in S}
\Bigl[ \delta_\Lambda\!\Bigl(\sum_{i\in A} k_i\Bigr)
  C_{|A|}(k_A,\sigma_A) \Bigr] \, .
\end{align}
This completes the proof of the Lemma.
\qed \end{proof}

\subsection{Main terms}\label{sec:maindiag}

Using Lemma \ref{th:cumulants} in $\Qmain$ yields a high dimensional integral
over the momenta, restricted to a certain subspace determined by
$\Delta_{n,\ell}$ and the $\delta_\Lambda$-functions arising from the cumulant
expansion.
The restrictions can be encoded in a ``Feynman diagram'', which is a planar 
graph where each edge corresponds to an independent momentum integral, and
each vertex carries the appropriate $\delta_\Lambda$-function (in physics
terminology, these can be interpreted as ``Kirchhoff's rules'' applied at the
vertex).  The explicit integral expressions are given in the following
proposition, and we will discuss their graphical representation in Section
\ref{sec:momdeltas}. 
\begin{proposition}\label{th:main1st}
For a given $N_0\ge 1$,
\begin{align}
\Qmain = \sum_{n=0}^{N_0-1}
 \sum_{\ell \in G_n} \sum_{S\in \pi(I_{0,2 n+1})}
\mathcal{F}_n^{\rm ampl}(S,\ell,t/\vep,\kappa)\, ,
\end{align}
where, setting
$\mathcal{I}''_{n} =\mathcal{I}_{n} \cup \set{(n,1)}\cup \set{(0,0)}$,
\begin{align}\label{eq:defFnampl}
& \mathcal{F}_n^{\rm ampl}(S,\ell,t/\vep,\kappa) =
(-\ci \lambda)^n
\sum_{\sigma\in \set{\pm 1}^{\mathcal{I}''_{n}}}
\int_{(\Lambda^*)^{\mathcal{I}''_{n}}} \!\rmd k\,
\Delta_{n,\ell}(k,\sigma;\Lambda)
 \nonumber \\ &  \quad \times
\prod_{A\in S}\Bigl[ \delta_\Lambda\!\Bigl(\sum_{i\in A} k_{0,i}\Bigr)
  C_{|A|}(\sigma_{0,A},k_{0,A};\lambda,\Lambda) \Bigr]
 \nonumber \\ & \quad \times
\1(\sigma_{n,1}=1)\1(\sigma_{0,0}=-1) \FT{g}(k_{n,1})^* \FT{f}(k_{n,1})
\prod_{i=1}^{n} \Bigl[ \sigma_{i,\ell_{i}} \PFone(k_{i-1;\ell_i}) \Bigr]
 \nonumber \\ & \quad \times
\int_{(\R_+)^{I_{0,n}}}\!\rmd s \,  \delta\Bigl(\frac{t}{\vep}-\sum_{i=0}^{n}
s_i\Bigr)
\prod_{i=0}^{n} \rme^{-s_i \kappa_{n-i}}
\prod_{m=0}^{n-1} \rme^{-\ci s_m \sum_{i=m+1}^{n}\Omega_{i-1;\ell_i}(k,\sigma)}
 \, .
\end{align}
\end{proposition}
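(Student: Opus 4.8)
The proof is a bookkeeping computation: one substitutes the explicit formula for $\mathcal{F}_n$ into the definition of $\Qmain$, recognizes the resulting expression as an average of a product of time-zero fields, and applies the cumulant expansion of Lemma~\ref{th:cumulants}. There is no analytic input; since everything takes place at fixed finite volume, all sums are finite and interchanging summations with the momentum, parity and time integrations is unproblematic. Concretely, I would start from the formula for $\Qmain$ in Proposition~\ref{th:Qmainerr} and insert the explicit expression (\ref{eq:defFn}) for $\mathcal{F}_n(t/\vep,k,1,\kappa)[\FT{\psi}_0]$, using $\FT{a}_0=\FT{\psi}_0$. The entire $\psi_0$-dependence then sits in the product $\FT{\psi}_0(k',-1)\prod_{j=1}^{2n+1}\FT{\psi}_0(k_{0,j},\sigma_{0,j})$, a $(2n+2)$-point moment (recall $m_0=1$, so $m_0+2n=2n+1$).

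Next I would absorb the two external data into the index notation of Section~\ref{sec:graphs}: set $k_{0,0}:=k'$, $\sigma_{0,0}:=-1$ and $k_{n,1}:=k$, $\sigma_{n,1}:=1$, which promotes the index set from $\mathcal{I}_{n}$ to $\mathcal{I}''_{n}=\mathcal{I}_{n}\cup\set{(n,1)}\cup\set{(0,0)}$ and identifies the above moment with $\E[\prod_{i\in I_{0,2n+1}}\FT{\psi}(k_{0,i},\sigma_{0,i})]$. Applying Lemma~\ref{th:cumulants} with index set $I_{0,2n+1}$ rewrites this moment as $\sum_{S\in\pi(I_{0,2n+1})}\prod_{A\in S}[\delta_\Lambda(\sum_{i\in A}k_{0,i})\,C_{|A|}(\cdots)]$. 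Pulling the partition sum out in front of the (finite) momentum, parity and time integrations produces exactly the triple sum $\sum_{n=0}^{N_0-1}\sum_{\ell\in G_n}\sum_{S\in\pi(I_{0,2n+1})}$ asserted in the statement, with the integral now running over $(\Lambda^*)^{\mathcal{I}''_n}$ and the parity sum over $\set{\pm1}^{\mathcal{I}''_n}$, the two newly introduced parities being pinned by the factor $\1(\sigma_{n,1}=1)\1(\sigma_{0,0}=-1)$.

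It then remains to check that, term by term, the integrand coincides with $\mathcal{F}_n^{\rm ampl}(S,\ell,t/\vep,\kappa)$ of (\ref{eq:defFnampl}). Most factors carry over verbatim: the prefactor $(-\ci\lambda)^n$, the combinatorial restriction $\Delta_{n,\ell}(k,\sigma;\Lambda)$, the cutoff/parity product $\prod_{i=1}^{n}[\sigma_{i,\ell_i}\PFone(k_{i-1;\ell_i})]$, the cumulant blocks $\prod_{A\in S}[\delta_\Lambda(\sum_{i\in A}k_{0,i})C_{|A|}(\cdots)]$, and the time-simplex integral with constraint $\delta(t/\vep-\sum_{i=0}^n s_i)$ and weights $\rme^{-s_i\kappa_{n-i}}$; the oscillatory phase $\prod_{i=1}^{n}\rme^{-\ci t_i(s)\Omega_{i-1;\ell_i}(k,\sigma)}$ is reorganized into $\prod_{m=0}^{n-1}\rme^{-\ci s_m\sum_{i=m+1}^n\Omega_{i-1;\ell_i}(k,\sigma)}$ by the identity (\ref{eq:phase2}). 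The only point that needs a small argument is the replacement of $\FT{g}(k)^*\FT{f}(-k')$ by $\FT{g}(k_{n,1})^*\FT{f}(k_{n,1})$: momentum conservation propagated through $\Delta_{n,\ell}$ forces $k_{n,1}=\sum_{j=1}^{2n+1}k_{0,j}$, while the cumulant $\delta_\Lambda$'s force $\sum_{i=0}^{2n+1}k_{0,i}=0$ on the relevant support, whence $k'=k_{0,0}=-k_{n,1}$ there, so $\FT{f}(-k')$ may be replaced by $\FT{f}(k_{n,1})$. Collecting the pieces yields (\ref{eq:defFnampl}).

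I expect the main (indeed the only) obstacle to be precisely this index and momentum bookkeeping: correctly embedding the external field and the observable momentum into the enlarged graph index set $\mathcal{I}''_n$, matching the $2n+2$ time-zero fields with the partition index set $I_{0,2n+1}$, and verifying the global momentum-conservation relation $-k'=k_{n,1}$. There is no estimate to establish here, only an algebraic identity, so once the labelling conventions are handled consistently the proof is complete.
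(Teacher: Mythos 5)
Your proposal is correct and follows essentially the same route as the paper: insert the explicit formula for $\mathcal{F}_n$ into $\Qmain$, relabel the external data as $k_{0,0}=k'$, $\sigma_{0,0}=-1$, $k_{n,1}=k$, $\sigma_{n,1}=1$, apply the cumulant expansion of Lemma~\ref{th:cumulants}, and use the cluster $\delta_\Lambda$'s ($\sum_i k_{0,i}=0$) together with the interaction $\delta_\Lambda$'s to get $k_{0,0}=-k_{n,1}$, justifying $\FT{f}(-k')\to\FT{f}(k_{n,1})$. Nothing essential is missing.
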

\begin{proof}
The representation is a corollary of the results in Section \ref{sec:graphs},
after we relabel  $k=k_{n1}$ and $k'=k_{00}$
and set $\sigma_{n1}=1$, $\sigma_{00}=-1$ in (\ref{eq:Qmaindef}).
In the resulting formula the cluster momentum $\delta_\Lambda$-functions
enforce $\sum_i k_{0,i}=0$.  Combined with the interaction
$\delta_\Lambda$-functions this implies $k_{0,0}=-k_{n,1}$ which we have used to
simplify the final formula by changing the argument of $\FT{f}$.\qed
\end{proof}

\begin{figure}
  \centering
  \myfigure{width=0.9\textwidth}{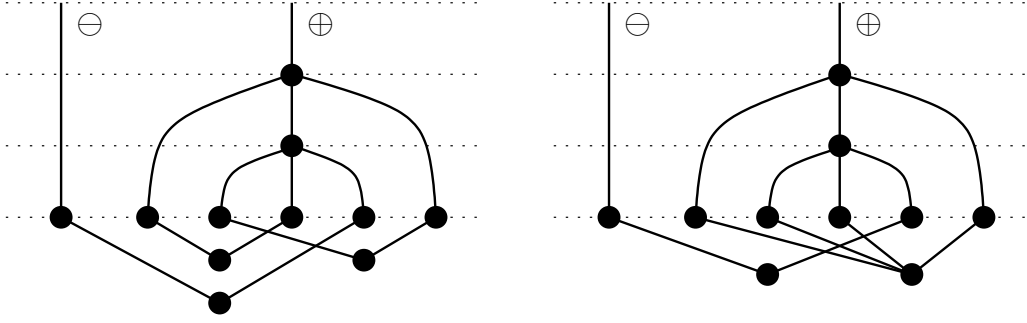}
\caption{Two diagrams representing nonzero $\mathcal{F}_2^{\rm
ampl}(S,\ell,t,\kappa)$
with interaction history $\ell=(2,1)$.
The left one has clustering $S=\set{\set{0,4},\set{1,3},\set{2,5}}$ and it
corresponds to a leading term.
The right one has $S=\set{\set{0,4},\set{1,2,3,5}}$ and corresponds to a
subleading term.
The symbol ``$\ominus$'' denotes the root of the (here trivial) minus tree,
and ``$\oplus$'' the root of the plus tree.\label{fig:maingraph}}
\end{figure}

Each choice of $n$, $S$, and $\ell$ corresponds to a unique diagram:
we take the earlier discussed ``interaction diagrams'' (as in
Fig.~\ref{fig:indexgr}),
add a ``dummy'' placeholder vertex
for each of the fields $\FT{a}_0$ at the bottom of the graph, and add a
``cluster'' vertex
for each $A\in S$ with the appropriate connections to the placeholder
vertices. Two simple examples are shown in Fig.~\ref{fig:maingraph}.
We have also added a line from the $(0,0)$-placeholder vertex to the top line, 
for reasons which will become apparent shortly.
For further use, we introduce here the concepts of ``plus'' and ``minus
tree''.  When all cluster vertices and their edges are removed, the diagram
splits into two components which are graph-theoretically trees. The left tree
(which here is a single edge connecting to the placeholder of the original
$\FT{\psi}_0(k',-1)$) is called the minus tree, and the right tree is called
the plus tree, for obvious reasons.

The integral defining the
corresponding amplitude can be constructed from
a diagram by applying the following ``Feynman rules'':
the parities of the two topmost lines are fixed to $-1$ on the left and $1$ on
the right.  The remaining parities can be computed going from top to bottom
and at each interaction vertex continuing the parity unchanged in the middle
line, and setting $-1$ on the left and $+1$ on the right.  A cluster
vertex does not affect the parities directly.
To each edge in the diagram there is
attached a momentum and they are related by Kirchhoff's rules at the
vertices: at a fusion vertex, the three momenta below need to sum to
the single momentum above, and at a cluster vertex all momenta sum
to zero.  In addition, each fusion vertex carries a factor
$-\ci \lambda \sigma \PFone$ ($\sigma$ is determined by the middle edge and
the arguments of $\PFone$ by the edges below the vertex)
and each cluster vertex a factor $C_{|A|}$ (with $\sigma$ and $k$ determined
by the edges attached to the vertex).  The total
amplitude still needs to be multiplied by
$\FT{g}(k_{n,1})^* \FT{f}(k_{n,1})$ and by the appropriate time-dependent
factor, the integrand in the last line of (\ref{eq:defFnampl}), before
integrating over $s$ and $k$.

The time-dependent factor can also be written as
\begin{align}\label{eq:defgammam}
\prod_{m=0}^n \rme^{-\ci s_m \gamma(m)},\qquad \text{where}\quad
\gamma(m) = \sum_{i=m+1}^n \Omega_i -\ci \kappa_{n-i}\, ,
\end{align}
and we recall the notation $\Omega_i =\Omega_{i-1;\ell_i}(k,\sigma)$.
The pure phase part for the time slice $m$, \itie ,
$\rme^{-\ci s_m \re \gamma(m)}$, can also be read directly from the diagram:
collect all edges which go through the time slice $m$
and for each edge $e$ add a factor $\rme^{-\ci s_m \sigma_e \omega(k_e)}$.
This follows from the following Lemma according to which inside any of the
above amplitude integrals we have
\begin{align}
\sum_{i=m+1}^{n}\Omega_{i-1;\ell_i}(k,\sigma)
= \sum_{j=1}^{2(n-m)+1} \sigma_{m,j} \omega(k_{m,j}) -
 \omega(k_{n1}) \, .
\end{align}
This yields the above mentioned factors when we follow the construction
explained earlier;
since $-\omega(k_{n1})=\sigma_{00}\omega(k_{00})$, and the corresponding edge
intersects all time slices of the diagram, also the last term comes out
correctly.

\begin{lemma}\label{th:omOmconv}
Suppose $m_0=1$, and $n\ge 0$ is given.  Then for any $\ell\in G_n$, for all
$0\le m\le n$, and with
$\sigma$ and $k$ such that $\Delta_{n,\ell}(k,\sigma;\Lambda)\ne 0$,
\begin{align}\label{eq:convertphase}
\sum_{j=1}^{2(n-m)+1} \sigma_{m,j} \omega(k_{m,j}) -
 \sum_{i=m+1}^{n} \Omega_{i-1;\ell_i}(k,\sigma)
= \sigma_{n,1} \omega(k_{n1}) \, ,
\end{align}
and
\begin{align}\label{eq:convertphase2}
\sum_{j=1}^{2(n-m)+1} \sigma_{m,j} = \sigma_{n,1}\, .
\end{align}
\end{lemma}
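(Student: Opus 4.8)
First I would check the base case $m=n$. When $m=n$, the sum on the right of the first identity, $\sum_{i=n+1}^n \Omega_{i-1;\ell_i}$, is empty, so (\ref{eq:convertphase}) reduces to $\sum_{j=1}^{1}\sigma_{n,j}\omega(k_{n,j}) = \sigma_{n,1}\omega(k_{n1})$, which is a tautology; similarly (\ref{eq:convertphase2}) becomes $\sigma_{n,1}=\sigma_{n,1}$. So the base case is immediate.

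For the inductive step, suppose both identities hold for $m+1$; I want to deduce them for $m$. The key is to understand how passing from time slice $m+1$ down to time slice $m$ changes the relevant sums, and this is exactly what the constraint $\Delta_{n,\ell}(k,\sigma;\Lambda)\ne 0$ controls. At the fusion occurring at the beginning of slice $m+1$, with fusion index $\ell_{m+1}$, the structure of $\Delta_{n,\ell}$ tells us: lines with $j<\ell_{m+1}$ satisfy $k_{m+1,j}=k_{m,j}$ and $\sigma_{m+1,j}=\sigma_{m,j}$; the fusion line satisfies $k_{m+1,\ell_{m+1}}=k_{m,\ell_{m+1}}+k_{m,\ell_{m+1}+1}+k_{m,\ell_{m+1}+2}$ with the three parities on slice $m$ constrained to $(-1,\sigma_{m+1,\ell_{m+1}},+1)$; and lines with $j>\ell_{m+1}$ satisfy $k_{m+1,j}=k_{m,j+2}$, $\sigma_{m+1,j}=\sigma_{m,j+2}$. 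Therefore the difference between $\sum_{j=1}^{2(n-m)+1}\sigma_{m,j}\omega(k_{m,j})$ and $\sum_{j=1}^{2(n-m-1)+1}\sigma_{m+1,j}\omega(k_{m+1,j})$ is precisely the contribution of the three fused lines on slice $m$ minus the contribution of the single resulting line on slice $m+1$, i.e.\ it equals
\begin{align*}
-\omega(k_{m,\ell_{m+1}}) + \sigma_{m+1,\ell_{m+1}}\omega(k_{m,\ell_{m+1}+1}) + \omega(k_{m,\ell_{m+1}+2})
 - \sigma_{m+1,\ell_{m+1}}\omega(k_{m,\ell_{m+1}}+k_{m,\ell_{m+1}+1}+k_{m,\ell_{m+1}+2})\,,
\end{align*}
which by the definition (\ref{eq:defOmega}) of $\Omega$ is exactly $\Omega_{m;\ell_{m+1}}(k,\sigma) = \Omega_{(m+1)-1;\ell_{m+1}}(k,\sigma)$, the $i=m+1$ term in the sum $\sum_{i=m+1}^n\Omega_{i-1;\ell_i}$. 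Adding this term to the inductive hypothesis for $m+1$ gives (\ref{eq:convertphase}) for $m$. The parity identity (\ref{eq:convertphase2}) follows the same way, even more simply: the three fused parities on slice $m$ sum to $-1 + \sigma_{m+1,\ell_{m+1}} + 1 = \sigma_{m+1,\ell_{m+1}}$, which is exactly the parity of the single resulting line on slice $m+1$, so $\sum_{j=1}^{2(n-m)+1}\sigma_{m,j} = \sum_{j=1}^{2(n-m-1)+1}\sigma_{m+1,j}$, and the induction closes.

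I expect the only real care to be bookkeeping: keeping the index-shift conventions of $\Delta_{n,\ell}$ straight (the ``fill vacated slots by shifting down by two'' rule), and matching the ordering convention $(-1,\sigma,+1)$ used in the definition of $\Omega$ against the middle-line parity that is carried upward. Once the telescoping is set up correctly, both identities fall out together, since summing the per-slice increments from slice $m$ up to slice $n$ telescopes the $\omega(k_{m,j})$ sums down to the single final line $\sigma_{n,1}\omega(k_{n1})$ and the single final parity $\sigma_{n,1}$. No analytic input is needed — this is a purely combinatorial/algebraic identity valid for every fixed $k,\sigma$ in the support of $\Delta_{n,\ell}$.
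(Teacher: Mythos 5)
Your proposal is correct and follows essentially the same route as the paper: a downward induction in $m$ starting from the trivial case $m=n$, with the inductive step using the constraints encoded in $\Delta_{n,\ell}$ to split the slice sum into the three fused lines plus the unchanged lines and identify the increment with $\Omega_{m;\ell_{m+1}}(k,\sigma)$ via the definition (\ref{eq:defOmega}). The paper phrases the step as going from $m$ to $m-1$ and writes the three-line contribution as $\Omega_{m-1;\ell_m}(k,\sigma)+\sigma_{m,\ell_m}\omega(k_{m,\ell_m})$, but this is the same telescoping bookkeeping as yours.
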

\begin{proof}
The proof goes via induction in $m$, starting from $m=n$ and proceeding
to smaller values.  The equation holds trivially for $m=n$, as the second sum
is not present then.  Assume that the equation holds for $m$, where
$1\le m\le n$, and to complete the induction, we need to prove that the
equation then holds for $m-1$.  Since $k,\sigma$ is consistent with
$\Delta_{n,\ell}$, we have
\begin{align}
& \sum_{j=1}^{2(n-m+1)+1} \sigma_{m-1,j} \omega(k_{m-1,j})
= \sum_{j=0}^{2} \sigma_{m-1,\ell_{m}+j} \omega(k_{m-1,\ell_{m}+j})
+ \sum_{j=1;j\ne \ell_{m}}^{2(n-m)+1} \sigma_{m,j} \omega(k_{m,j}) \, .
\end{align}
The first sum yields
$-\omega(k_{m-1,\ell_{m}})+\sigma_{m,\ell_m}  \omega(k_{m-1,\ell_{m}+1})
+ \omega(k_{m-1,\ell_{m}+2})$ which equals $\Omega_{m-1;\ell_{m}}(k,\sigma)
 + \sigma_{m,\ell_m} \omega(k_{m,\ell_m})$ .
Thus by the induction assumption,
\begin{align}
& \sum_{j=1}^{2(n-m+1)+1} \sigma_{m-1,j} \omega(k_{m-1,j}) -
 \sum_{i=m}^{n} \Omega_{i-1;\ell_i}(k,\sigma)
=  \sigma_{n,1} \omega(k_{n1}) \, ,
\end{align}
as was claimed in the Lemma.  The proof of (\ref{eq:convertphase2}) is
essentially identical,  and we will skip it.
\qed \end{proof}

\subsection{Error terms}
\label{sec:errors}

Each of the three error terms $Q^{\rm err}$ is a sum over terms of the type
\begin{align} 
\int_0^{t/\vep}\! \rmd s\,
\E\!\left[\mean{\FT{f},\FT{a}_0}^* F_s[\FT{a}_s]\right]\, ,
\end{align}
where $F_s$ contains only a finite moment of the fields $\FT{a}_s$.
We estimate it using the Schwarz inequality,
\begin{align} 
&\Bigl|\int_0^{t/\vep}\! \rmd s\,
\E\Bigl[\mean{\FT{f},\FT{a}_0}^* F_s[\FT{a}_s]\Bigr]\Bigr|^2
\le \Bigl(\int_0^{t/\vep}\! \rmd s\,
\E\Bigl[\bigl|\mean{\FT{f},\FT{a}_0}\bigr| \,
\bigl|F_s[\FT{a}_s]\bigr|\Bigr]\Bigr)^2
\nonumber \\ & \quad
\le \frac{t}{\vep} \E[|\mean{\FT{f},\FT{a}_0}|^2]\,
\int_0^{t/\vep}\! \rmd s\,
\E\bigl[|F_s[\FT{a}_s]|^2\bigr]
 \, .
\end{align}
Since
$\E[|\mean{\FT{f},\FT{a}_0}|^2] =
\int_{\Lambda^*} \rmd k\, |\FT{f}(k)|^2 W^{\lambda}_\Lambda(k)$,
the term $\E[|\mean{\FT{f},\FT{a}_0}|^2]$ remains uniformly bounded. 
Thus
$\limsup\limits_{\Lambda\to\infty}
\E[|\mean{\FT{f},\FT{a}_0}|^2] t/\vep\le
 \lambda^{-2} t c'_0 \norm{f}_2^2$, and
we need to aim at estimates for
$\sup\limits_{0\le s\le t \lambda^{-2}}\E\bigl[|F_s[\FT{a}_s]|^2\bigr]$
which decay faster than $\lambda^{4}$ in order to get a vanishing bound.

Although the Gibbs measure is not stationary with respect to $\FT{a}_t$, it
{\em is} stationary with respect to $\FT{\psi}_t$.  The
non-stationarity manifests itself only via an additional phase factor:
\begin{align}\label{eq:anonstat}
& \E\Bigl[\prod_{i\in I} \FT{a}_t(k_i,\sigma_i)\Bigr]
= \prod_{i\in I} \rme^{\ci t \sigma_i \omla(k_i)}
 \E\Bigl[\prod_{i\in I} \FT{a}_0(k_i,\sigma_i)\Bigr]
\nonumber \\ & \quad
= \rme^{\ci t \lambda R_0 \sum_i\sigma_i}
\prod_{i\in I} \rme^{\ci t \sigma_i  \omega(k_i)}
 \E\Bigl[\prod_{i\in I} \FT{a}_0(k_i,\sigma_i)\Bigr]\, .
\end{align}
The extra phase factor can always expressed in terms of the previously used
$\Omega$-factors, employing Lemma \ref{th:omOmconv}.
Applying the Lemma for $m=0$
implies that the phase factor generated by the non-stationarity of
$\FT{a}$ can be resolved by employing
\begin{align}\label{eq:s0phase}
& \prod_{j=1}^{m_0+2 n} \FT{a}_s(k_{0,j},\sigma_{0,j})
 = \rme^{\ci  s \sigma_{n,1} \omla(k_{n,1})}
 \prod_{i=1}^{n} \rme^{\ci s \Omega_{i-1;\ell_i}(k,\sigma)}
\prod_{j=1}^{m_0+2 n} \FT{\psi}_s(k_{0,j},\sigma_{0,j})\, ,
\end{align}
which will always hold inside the relevant integrals.

The following lemma gives a recipe
how the two simplex time-integrations resulting from the Schwarz inequality
can be represented in terms of a single simplex time-integration.
We begin by introducing the concept of interlacing of two sequences.
\begin{definition}
Let $n,n'\ge 0$ be integers.
A map $J:I_{n+n'}\to \set{\pm 1}$ \defem{interlaces} $(n,n')$,
if $|J^{\gets}(\set{+1})|=n$ and $|J^{\gets}(\set{-1})|=n'$.  For any such
$J$, we define further the two maps $J_\pm:I_{0,n+n'}\to \N_0$ by setting for
$\sigma\in\set{\pm 1}$, $i\in I_{0,n+n'}$,
\begin{align} 
  J_\sigma(i;J) = \sum_{j=1}^i \1(J(j)=\sigma)\, .
\end{align}
\end{definition}
Thus $J_\sigma(0;J)=0$ and else 
$J_\sigma(i;J)= |J^{\gets}(\set{\sigma})\cap I_i|$.
In addition, as $J$ interlaces $(n',n)$,
clearly $J_+:I_{0,n+n'}\to I_{0,n}$ and
$J_-:I_{0,n+n'}\to I_{0,n'}$ and both maps are increasing and onto.
We claim that with these definitions the following representation
Lemma holds, saving the proof of the Lemma until the end of this section.
\begin{lemma}\label{th:recombinationlemma}
Let $t>0$, $n,n'\ge 0$, and suppose $\gamma^+_i,\gamma^-_j\in\C$
are given for $i\in I_{0,n}$ and $j\in I_{0,n'}$. Then
\begin{align}\label{eq:recombinationlemma}
& \int_{(\R_+)^{I_{0,n}}}\!\rmd s \,  \delta\Bigl(t-\sum_{i=0}^{n} s_i\Bigr)
 \prod_{i=0}^n \rme^{-\ci s_i \gamma^+_i}
\times
\int_{(\R_+)^{I_{0,n'}}}\!\rmd s' \,  \delta\Bigl(t-\sum_{i=0}^{n'} s'_i\Bigr)
 \prod_{i=0}^{n'} \rme^{-\ci s'_i \gamma^-_i}
 \nonumber \\ & \quad
= \sum_{J\text{ interlaces }(n,n')}
 \int_{(\R_+)^{I_{0,n+n'}}}\!\rmd r \,
 \delta\Bigl(t-\sum_{i=0}^{n+n'} r_i\Bigr)
 \prod_{i=0}^{n+n'} \rme^{-\ci r_i (\gamma^+_{J_+(i;J)}+\gamma^-_{J_-(i;J)})} .
\end{align}
\end{lemma}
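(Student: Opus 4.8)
The plan is to prove (\ref{eq:recombinationlemma}) by passing from the slice-length variables to cumulative times, and then splitting the resulting integral over a \emph{product} of two simplices into the ``shuffle'' pieces indexed by the interlacings $J$. Throughout I use $t>0$ as given.

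First I would rewrite a single simplex integral. Substituting $u_j=\sum_{i=0}^{j-1}s_i$ for $j=1,\dots,n$, together with the conventions $u_0:=0$ and $u_{n+1}:=t$, yields a measure-preserving bijection (Jacobian $1$) from $\set{s\in(\R_+)^{I_{0,n}}:\sum_i s_i=t}$ onto the ordered simplex $\set{0\le u_1\le\cdots\le u_n\le t}$, under which $s_i=u_{i+1}-u_i$ and $\rmd s\,\delta(t-\sum_i s_i)$ pushes forward to $\rmd u_1\cdots\rmd u_n$. Hence
\begin{align*}
 \int_{(\R_+)^{I_{0,n}}}\!\rmd s \,  \delta\Bigl(t-\sum_{i=0}^{n} s_i\Bigr)
 \prod_{i=0}^n \rme^{-\ci s_i \gamma^+_i}
 = \int_{0\le u_1\le\cdots\le u_n\le t}\!\rmd u\,
 \rme^{-\ci \sum_{i=0}^n (u_{i+1}-u_i)\gamma^+_i}\, ,
\end{align*}
and similarly for the $n'$-fold integral. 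The key observation is that the exponent is the integral of a step function: $\sum_{i=0}^n(u_{i+1}-u_i)\gamma^+_i=\int_0^t\rmd\tau\,\gamma^+_{N^u(\tau)}$, where $N^u(\tau):=|\set{j\in\set{1,\dots,n}:u_j\le\tau}|$ equals $i$ on $[u_i,u_{i+1})$; write $N^v$ for the analogous function of the $v$-variables.

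Next I would multiply the two rewritten integrals, obtaining an integral over $\set{0\le u_1\le\cdots\le u_n\le t}\times\set{0\le v_1\le\cdots\le v_{n'}\le t}$ of $\exp\bigl(-\ci\int_0^t\rmd\tau\,(\gamma^+_{N^u(\tau)}+\gamma^-_{N^v(\tau)})\bigr)$. Away from the Lebesgue-null set where some $u_j$ coincides with some $v_k$, the $n+n'$ numbers are pairwise distinct; for each $J$ interlacing $(n,n')$ let $D_J$ be the region where the $i$-th smallest entry of the combined list belongs to the $u$-family iff $J(i)=+1$. On $D_J$, writing $w_1<\cdots<w_{n+n'}$ for the sorted combined list and setting $w_0:=0$, $w_{n+n'+1}:=t$, the map $(u,v)\mapsto w$ is a fixed permutation of coordinates, hence measure-preserving, and carries $D_J$ bijectively (up to a null set) onto $\set{0\le w_1\le\cdots\le w_{n+n'}\le t}$; moreover the $D_J$ partition the product domain and each copy of the big simplex is hit exactly once per $J$, so no normalization factor appears. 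On $[w_i,w_{i+1})$ exactly $i$ of the $w_j$ lie below $\tau$, and of these precisely $|\set{j\le i:J(j)=+1}|=J_+(i;J)$ come from the $u$-family, so $N^u(\tau)=J_+(i;J)$ and $N^v(\tau)=J_-(i;J)$ there. Thus the exponent becomes $-\ci\sum_{i=0}^{n+n'}(w_{i+1}-w_i)(\gamma^+_{J_+(i;J)}+\gamma^-_{J_-(i;J)})$, and undoing the cumulative-time substitution with $r_i=w_{i+1}-w_i$ turns the integral over $D_J$ into exactly the simplex integral on the right-hand side of (\ref{eq:recombinationlemma}); summing over the finitely many $J$ finishes the proof.

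The only genuinely delicate point is the index bookkeeping in the last step: one must verify that the step-function values $N^u,N^v$ on the window $[w_i,w_{i+1})$ equal $J_+(i;J),J_-(i;J)$ exactly as defined, including the boundary conventions $u_0=w_0=0$ and $u_{n+1}=w_{n+n'+1}=t$, which is precisely where an off-by-one slip would enter. Everything else is a routine measure-preserving change of variables, and I would sanity-check the indexing on $(n,n')=(1,1)$ before committing to the general argument.
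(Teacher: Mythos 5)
Your proof is correct, and it takes a genuinely different route from the paper's. You pass to cumulative times, recognize the exponent as $\int_0^t\rmd\tau\,\gamma^+_{N^u(\tau)}$ for a counting step function, and then decompose the product of the two ordered simplices into the shuffle regions $D_J$ indexed by interlacings, with the merged-and-sorted variables mapping each $D_J$ measure-preservingly onto the big simplex; the identification $N^u=J_+(i;J)$, $N^v=J_-(i;J)$ on $[w_i,w_{i+1})$ is exactly right given the paper's definition of $J_\pm$, and the tie sets are Lebesgue-null so the decomposition loses nothing. The paper instead argues by induction in $n$: it first assumes $\gamma^+_{N+1}\ne\gamma^+_N$ (removable by continuity), integrates out the last slice to produce a difference of exponentials divided by $\gamma^+_{N+1}-\gamma^+_N$, applies the induction hypothesis to each piece, and then rebuilds the interlacings of $(N+1,n')$ from pairs $(\ell,J)$ with $J$ interlacing $(N,n')$ via an explicit change of variables with unit Jacobian. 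Your argument buys directness and transparency — it is the classical shuffle identity for simplex integrals, needs no genericity assumption on the $\gamma$'s and no combinatorial bijection between interlacing sets — while the paper's inductive computation stays entirely within the slice-length variables and the $\delta$-function conventions it has set up, at the cost of more bookkeeping. Your own flagged worry, the off-by-one bookkeeping at the window boundaries, is handled correctly by your conventions $u_0=w_0=0$, $u_{n+1}=w_{n+n'+1}=t$, so no gap remains.
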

The Lemma might appear complicated, but it can be understood in terms of
interlacing of the two sets of time slices.  The symbolic representation in
Fig.~\ref{fig:interlacing} illustrates this point and serves as an example
of the above definitions. 
\begin{figure}
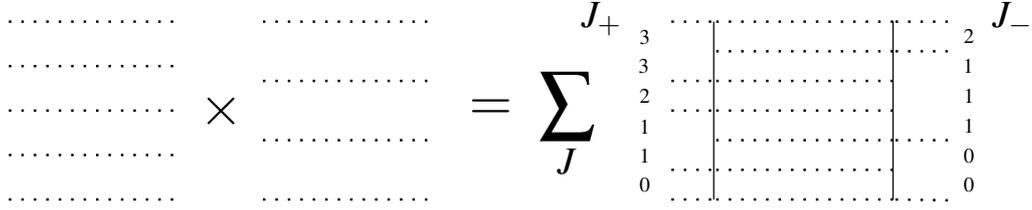

  \centering
\medskip
  \myfigure{width=0.9\textwidth}{Interlacing}
\medskip
\caption{A symbolic representation of Lemma \ref{th:recombinationlemma} for
  $n=3$, $n'=2$, in terms of time slices. 
On the right hand side, only one example of an interlacing
($J=(+1,-1,+1,+1,-1)$) is shown. 
\label{fig:interlacing}}
\end{figure}

We recall that the $\delta$-functions in the above formula are a shorthand
notation for restricting the integration to the standard simplex of size $t$. 
The exact definition is obtained by choosing an arbitrary index $i$ and
``integrating out'' the delta function with respect to $s_i$. It is an easy
exercise to show that for the above exponentially bounded functions, an
equivalent definition is obtained by replacing the $\delta$-function by a
Gaussian approximation and then taking the variance of the Gaussian
distribution to zero.  (The latter property combined with Fubini's theorem
allows for free manipulation of the order of integration.)

Using the above observations, we can derive diagrammatic representations of the
expectation values $\E\bigl[|F_s[\FT{a}_s]|^2\bigr]$ very similar to what was
described in Section
\ref{sec:maindiag}.
We consider only the case of $\mathcal{A}_n$ in detail.  The treatment of the
remaining error terms
is very similar, and we merely quote the results in the forthcoming sections.

Let $\mathcal{I}'_{n} =\mathcal{I}_{n} \cup \set{(n,1)}=
\defset{(i,j)}{0\le i\le n, 1\le j \le m_{n-i}}$.
Then by (\ref{eq:s0phase}) and (\ref{eq:phase2}) we can write
\begin{align}\label{eq:gAnprod}
& \mean{\FT{g},\mathcal{A}_n(r,t,\cdot,1,\kappa)[\FT{a}_r]}
 = \sum_{\ell \in G_n}
\sum_{\sigma\in \set{\pm 1}^{\mathcal{I}'_{n}}}
\int_{(\Lambda^*)^{\mathcal{I}'_{n}}} \!\rmd k\,
\Delta_{n,\ell}(k,\sigma;\Lambda)
\1(\sigma_{n,1}=1) 
 \nonumber \\ & \quad \times
 \FT{g}(k_{n,1})^* \rme^{\ci r \omla(k_{n,1})} \prod_{j=1}^{m_0+2 n}
\FT{\psi}_r(k_{0,j},\sigma_{0,j})
\prod_{i=1}^{n} \Bigl[-\ci \lambda \sigma_{i,\ell_{i}}
 \PFone(k_{i-1;\ell_i})
 \Bigr]
 \nonumber \\ & \quad \times
\int_{(\R_+)^{I_{n}}}\!\rmd s \,  \delta\Bigl(t-r-\sum_{m=1}^{n} s_m\Bigr)
\prod_{m=1}^{n} \rme^{-\ci s_m \gamma^+_m}
 \, ,
\end{align}
where $\gamma^+_n=-\ci \kappa_0$ and for $1\le m\le n-1$,
\begin{align}
\gamma^+_m = \sum_{i=m+1}^n \Omega_i -\ci \kappa_{n-m}\, .
\end{align}
Now we can apply Lemma \ref{th:recombinationlemma} to study the
expectation of the square.
However, before taking the expectation value, we make a change of variables
$\sigma'_{i,j}=-\sigma_{i,2 (n-i+1)-j}$,
$k'_{i,j}=-k_{i,2 (n-i+1)-j}$, and $\ell'_i=2(n-i+1)-\ell_i$
in the complex conjugate (\itie , we swap the signs and invert the order on each
time slice).
We also define $I=I_{2 m_n}=I_{2(2 n+1)}$ to give labels to the fields
$\FT{\psi}_r$:
we denote $K=(k'_{0,\cdot},k_{0,\cdot})\in (\T^d)^{I}$ and
$o=(\sigma'_{0,\cdot},\sigma_{0,\cdot})\in \set{\pm 1}^{I}$, and thus, for
instance, $K_{m_n+1}=k_{0,1}$. Applying Lemma \ref{th:recombinationlemma},
Proposition \ref{th:PFcorr}, and the stationarity of $\FT{\psi}_s$, we obtain
\begin{align}\label{eq:def Aampl}
 & \E\Bigl[|\mean{\FT{g},
\mathcal{A}_n(s,t/\vep,\cdot,1,\kappa)[\FT{a}_s]}|^2\Bigr]
\nonumber \\ & \quad
= \sum_{J\text{ interlaces }(n-1,n-1)}
 \sum_{\ell,\ell' \in G_n}  \sum_{S\in \pi(I)}
\mathcal{A}_n^{\rm ampl}(S,J,\ell,\ell',t/\vep-s,\kappa)\, ,
\end{align}
where the amplitudes are explicitly
\begin{align} 
& \mathcal{A}_n^{\rm ampl}
=  (-\lambda^{2})^{n} \sum_{\sigma,\sigma'\in \set{\pm 1}^{\mathcal{I}'_{n}} }
\int_{(\Lambda^*)^{\mathcal{I}'_{n}}}  \!\rmd k\,
\int_{(\Lambda^*)^{\mathcal{I}'_{n}}}  \!\rmd k'\,
\Delta_{n,\ell}(k,\sigma;\Lambda) \Delta_{n,\ell'}(k',\sigma';\Lambda)
 \nonumber \\ & \quad \times
\prod_{A\in S}\Bigl[ \delta_\Lambda\!\Bigl(\sum_{i\in A} K_i\Bigr)
  C_{|A|}(K_A,o_A;\lambda,\Lambda) \Bigr]
\prod_{i=1}^{n} \Bigl[ \sigma_{i,\ell_{i}}
 \PFone(k_{i-1;\ell_i}) \sigma'_{i,\ell'_{i}} \PFone(-k'_{i-1;\ell'_i})
 \Bigr]
 \nonumber \\ & \quad \times
\1(\sigma_{n,1}=1)\1(\sigma'_{n,1}=-1) \FT{g}(k_{n,1})^* \FT{g}(-k'_{n,1})
\rme^{\ci s (\omega(k_{n,1})-\omega(k'_{n,1}))}
 \nonumber \\ & \quad \times
 \int_{(\R_+)^{I_{0,2n-2}}}\!\rmd r \,
 \delta\Bigl(\frac{t}{\vep}-s-\sum_{i=0}^{2 n-2} r'_i\Bigr)
 \prod_{i=0}^{2n-2} \rme^{-\ci r'_i
   \left(\gamma^+_{J_+(i;J)+1}+\gamma^-_{J_-(i;J)+1}\right)} .
\end{align}
In this formula, $\gamma^+_m$ are defined as before, and we set
\begin{align} 
\gamma^-_m & = \sum_{i=m+1}^{n}
\Omega_{i-1;\ell'_i}(k',\sigma')-\ci \kappa_{n-m}
\, ,
\end{align}
which can be checked to yield the correct factors by using
$\Omega(-(k_3,k_2,k_1),-\sigma)=-\Omega((k_1,k_2,k_3),\sigma)$.

The cluster $\delta$-functions imply that $\sum_{i\in I} K_i = 0$.
Applying the interaction $\delta$-func\-tions iteratively in the
direction of time then shows that the integrand is zero unless
$k_{n,1}+k'_{n,1}=0$ (modulo 1).  Therefore, $\omega(k_{n,1})=\omega(k'_{n,1})$,
and the amplitude
depends on $s$ and $t/\vep$ only via their difference $t/\vep-s$,
as implied by the notation in (\ref{eq:def Aampl}).  The final, somewhat
simplified expression, for the amplitude function is thus
\begin{align}\label{eq:Aamplsimp}
& \mathcal{A}_n^{\rm ampl}(S,J,\ell,\ell',s,\kappa)
=  (-\lambda^{2})^{n} \sum_{\sigma,\sigma'\in \set{\pm 1}^{\mathcal{I}'_{n}} }
\int_{(\Lambda^*)^{\mathcal{I}'_{n}}}  \!\rmd k\,
\int_{(\Lambda^*)^{\mathcal{I}'_{n}}}  \!\rmd k'\,
 \nonumber \\ & \quad \times
\Delta_{n,\ell}(k,\sigma;\Lambda) \Delta_{n,\ell'}(k',\sigma';\Lambda)
\prod_{A\in S}\Bigl[ \delta_\Lambda\!\Bigl(\sum_{i\in A} K_i\Bigr)
  C_{|A|}(o_A,K_A;\lambda,\Lambda) \Bigr]
 \nonumber \\ & \quad \times
\prod_{i=1}^{n} \Bigl[ \sigma_{i,\ell_{i}}
 \PFone(k_{i-1;\ell_i}) \sigma'_{i,\ell'_{i}} \PFone(-k'_{i-1;\ell'_i})
 \Bigr]
\1(\sigma_{n,1}=1)\1(\sigma'_{n,1}=-1) |\FT{g}(k_{n,1})|^2
 \nonumber \\ & \quad \times
 \int_{(\R_+)^{I_{2,2n}}}\!\rmd r \,
\delta\Bigl(s-\sum_{i=2}^{2 n} r_i\Bigr)
 \prod_{i=2}^{2n} \rme^{-\ci r_i \gamma({i;J})}\, ,
\end{align}
where, for $i=2,3,\ldots,2n$, we have
\begin{align}\label{eq:giJ}
& \gamma({i;J})
= \sum_{j=m+1}^{n} \Omega_{j-1;\ell_j}(k,\sigma)
+ \sum_{j=m'+1}^{n} \Omega_{j-1;\ell'_j}(k',\sigma')
-\ci (\kappa_{n-m}+\kappa_{n-m'})
\nonumber \\ &
=  \sum_{j=1}^{2(n-m)+1}\!\! \sigma_{m,j} \omega(k_{m,j})
+ \sum_{j=1}^{2(n-m')+1}\!\! \sigma'_{m',j} \omega(k'_{m',j})
-\ci (\kappa_{n-m}+\kappa_{n-m'})\, ,
\end{align}
with $m=m(i)=J_+(i-2;J)+1$ and $m'=m'(i)=J_-(i-2;J)+1$.
In particular, $\gamma(2 n;J)=\gamma^+_n + \gamma^-_n =-\ci 2\kappa_0$.

We can now describe the integral (\ref{eq:Aamplsimp}) using the earlier
defined diagrammatic scheme.
To make the identification more direct, we have shifted the time-indices
upwards by two:
the idea is that the first two time slices have zero length, \itie , they are
\defem{amputated}.
Formally, we could write the time-integral as
\begin{align}
  \int_{(\R_+)^{I_{0,2n}}}\!\rmd r \,
\delta\Bigl(s-\sum_{i=0}^{2 n} r_i\Bigr)
\delta(r_0) \delta(r_1)
 \prod_{i=0}^{2n} \rme^{-\ci r_i \gamma(i;J)}\, .
\end{align}
Clearly, the result is independent of how we define $\gamma(0;J)$ and
$\gamma(1;J)$.
To make the identification between an amputated amplitude
and the diagram unique, we arbitrarily require that in an amputated diagram
the first fusion always happens in the minus tree and the second fusion in the
plus tree.
The construction of the phase factor of the
time-integrand is then done using the same rules as before: for each time
slice $i$,
we collect all edges which go through the time slice
and for each edge $e$ add a factor $\rme^{-\ci r_i \sigma_e \omega(k_e)}$.
Under the above amputation condition, we arrive this way to the integrand in
(\ref{eq:Aamplsimp}).
Compared to the Feynman rules explained for the main term, we have only one
additional rule here:
in the \defem{minus} tree, the sign inside the cutoff-function is swapped,
\itie , there we use a factor $-\ci \lambda \sigma' \PFone(-k')$. 
Otherwise, the Feynman rules are identical, apart from the overall
testfunction factor
which is $|\FT{g}(k_{n,1})|^2$ here.
We have illustrated these definitions in Fig.~\ref{fig:ampdiag}.

\begin{figure}
  \centering
  \myfigure{height=0.27\textheight}{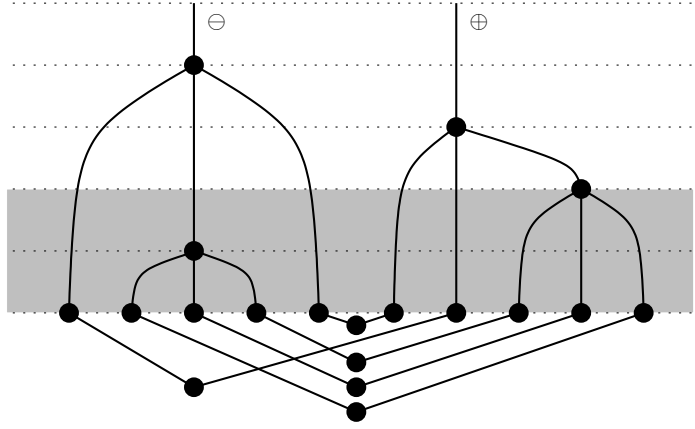}
\caption{An amputated diagram representing a nonzero
$\mathcal{A}_2^{\rm ampl}(S,J,\ell,\ell',s,\kappa)$
with $\ell'=(2,1)$, $\ell=(3,1)$, $J=(+1,-1)$, and using a pairing
$S=\set{\set{1,7},\set{2,10},\set{3,9},\set{4,8},\set{5,6}}$.
The shading on the first two time slices is used to denote the fact that these
time slices have zero length,
as explained in the text.
As before, the symbols ``$\oplus$'' and  ``$\ominus$'' denote the roots of the
plus and minus trees, respectively.\label{fig:ampdiag}}
\end{figure}

To complete the above derivation, we still need to prove the time-simplex
Lemma. 
\begin{proofof}{Lemma \ref{th:recombinationlemma}}
The Lemma is based on rearrangement of the time-integrations
by iteratively splitting one of them into two independent
parts.
The splitting will depend on the relative order of the times accumulated
from $s$ and $s'$,
which is captured by the sum over $J$ on the right hand side.
The value of $J_+(i;J)$ yields the index for the phase factor $\gamma^+$
which is ``active'' at the new time slice obtained after the splitting.
The proof below will be given mainly to show that the above definitions yield
a correct description of the result.

Suppose first that $n=0$.  Then the first factor is $\rme^{-\ci t \gamma^+_0}$. 
On the other hand,
the only admissible $J$ is then $J(i)=-1$ for all $i\in I_{n'}$, and thus
$J_+(i;J)=0$ and $J_-(i;J)=i$ for all $i\in I_{0,n'}$.  Therefore,
(\ref{eq:recombinationlemma}) holds by inspection.  A symmetrical argument
applies
to the case $n'=0$.

\enlargethispage*{1em}
Assume thus that $n'\ge 1$, and we will prove the rest by induction in $n$.
The initial case $n=0$ was checked to hold above.  Assume then that
(\ref{eq:recombinationlemma}) holds for all $n\le N$, with $N\ge 0$,
and consider
the case $n=N+1\ge 1$.
It is clear that both sides of
(\ref{eq:recombinationlemma}) are continuous in $\gamma^+_{N+1}$, and thus it
suffices to prove it assuming $\gamma^+_{N+1}\ne \gamma^+_{N}$.
Let us first concentrate on the first factor.
We change the integration variable from $s_{N}$ to $u=s_{N}+s_{N+1}$.
This shows that
\begin{align} 
& \int_{0}^\infty\! \rmd s_{N}\, \int_{0}^\infty\! \rmd s_{N+1}\,
 \delta\Bigl(t-\sum_{i=0}^{N+1} s_i\Bigr)
 \rme^{-\ci s_N \gamma^+_N -\ci s_{N+1} \gamma^+_{N+1}}
 \nonumber \\ & \quad
 = \int_{0}^\infty\! \rmd s_{N+1}\, \int_{s_{N+1}}^\infty\! \rmd u\,
 \delta\Bigl(t-u-\sum_{i=0}^{N-1} s_i\Bigr)
 \rme^{-\ci (u-s_{N+1}) \gamma^+_N -\ci s_{N+1} \gamma^+_{N+1}}
 \nonumber \\ & \quad
 = \int_{0}^\infty\! \rmd u\,
 \delta\Bigl(t-u-\sum_{i=0}^{N-1} s_i\Bigr)  \rme^{-\ci u \gamma^+_{N}}
 \int_{0}^u\! \rmd s_{N+1}\, \rme^{-\ci s_{N+1} ( \gamma^+_{N+1}- \gamma^+_N )
 }
 \nonumber \\ & \quad
 =  \frac{\ci}{\gamma^+_{N+1}- \gamma^+_N}
 \int_{0}^\infty\! \rmd u\,
 \delta\Bigl(t-u-\sum_{i=0}^{N-1} s_i\Bigr)
\Bigl(\rme^{-\ci u \gamma^+_{N+1}}
-\rme^{-\ci u \gamma^+_{N}} \Bigr) .
\end{align}
The induction assumption can be applied to both terms separately, which
proves that (\ref{eq:recombinationlemma}) is equal to
\begin{align} 
& \frac{\ci}{\gamma^+_{N+1}- \gamma^+_N}
 \sum_{J\text{ interlaces }(N,n')}
 \int_{(\R_+)^{I_{0,N+n'}}}\!\rmd r \,
 \delta\Bigl(t-\sum_{i=0}^{N+n'} r_i\Bigr)
 \nonumber \\ & \qquad \times
 \Bigl( \prod_{i=0}^{N+n'}
\left. \rme^{-\ci r_i (\gamma^+_{J_+(i;J)}+\gamma^-_{J_-(i;J)})}
 \right|_{\gamma^+_{N}\to\gamma^+_{N+1}} -
 \prod_{i=0}^{N+n'} \rme^{-\ci r_i
   (\gamma^+_{J_+(i;J)}+\gamma^-_{J_-(i;J)})}
 \Bigr) .
\end{align}
For a fixed $J$, let $j_0=\min \defset{i\in I_{0,N+n'}}{J_+(i;J)=N}$, \itie ,
$j_0$ denotes the last appearance of $+1$ in $J$.
Then $N\le j_0\le N+n'$.
The difference in the brackets can then be expressed as
\begin{align} 
& \prod_{i=0}^{N+n'} \rme^{-\ci r_i\gamma^-_{J_-(i;J)}}
 \prod_{i=0}^{j_0-1} \rme^{-\ci r_i \gamma^+_{J_+(i;J)}}
 \Bigl( \rme^{-\ci  \gamma^+_{N+1} \sum_{i=j_0}^{N+n'} r_i} -
\rme^{-\ci   \gamma^+_{N}\sum_{i=j_0}^{N+n'} r_i} \Bigr)
 \nonumber \\ & \quad
= (-i) (\gamma^+_{N+1}- \gamma^+_N)
 \prod_{i=0}^{N+n'} \rme^{-\ci r_i\gamma^-_{J_-(i;J)}}
 \prod_{i=0}^{j_0-1} \rme^{-\ci r_i \gamma^+_{J_+(i;J)}}
 \nonumber \\ & \qquad \times
\left. \int_0^{u}\! \rmd s\,
 \rme^{-\ci s \gamma^+_{N}}
 \rme^{-\ci (u-s)\gamma^+_{N+1}}\right|_{u=\sum_{i=j_0}^{N+n'}r_i}\, .
\end{align}
Set $S_\ell =\sum_{i=j_0}^{\ell-1}r_i$.
The final integral is split according to the position of $s$ in
the sequence $(S_\ell)_{\ell=j_0,\ldots,N+n'+1}$.  This yields
\begin{align} 
\sum_{\ell =j_0}^{N+n'} \int_0^{\infty}\! \rmd s\,
\1(S_{\ell}\le s \le S_{\ell+1})
 \rme^{-\ci s \gamma^+_{N}}  \rme^{-\ci (S_{N+n'+1}-s)\gamma^+_{N+1}}\, .
\end{align}

Given a map $J$ and $\ell\in \set{j_0(J),N+n'}$, we define a map
$J'=J'_{\ell,J}:I_{N+1+n'}\to \set{\pm 1}$ by the rule
\begin{align} 
 J'(i) = \begin{cases}
 J(i),& \text{if }i\le \ell,\\
 +1,& \text{if }i=\ell+1,\\
 -1,& \text{if }i> \ell+1.
\end{cases}
\end{align}
Obviously, $J'$ interlaces $(N+1,n')$, and
the maps $J_\pm(\cdot;J')$
then satisfy $J_\pm (i;J')=J_\pm(i;J)$ for $i\le \ell$
and $J_+ (i;J')=N+1$, $J_- (i;J')=J_-(i-1;J)$ for
$i> \ell$.  Conversely, if $J''$ is an
arbitrary map interlacing $(N+1,n')$ then there are unique $\ell$ and $J$
such that $J''=J'_{\ell,J}$, determined by the choices $\ell=j_0(J'')$,
and $J$ obtained from $J''$ by canceling $\ell$.
Therefore,
\begin{align} 
 \sum_{J'\text{ interlaces }(N+1,n')} F(J')
 = \sum_{J\text{ interlaces }(N,n')} \sum_{\ell=j_0(J)}^{N+n'} F(J'_{\ell,J})\, .
\end{align}

Thus we only need to prove that the remaining integrals are equal, \itie ,
that the integral on the right hand side of (\ref{eq:recombinationlemma}) for
$J\to J'=J'_{\ell,J}$ and $n\to N+1$, is equal to
\begin{align}\label{eq:nearlythere}
& \int_{(\R_+)^{I_{0,N+n'}}}\!\rmd r \,
\delta\Bigl(t-\sum_{i=0}^{N+n'} r_i\Bigr)
 \prod_{i=0}^{N+n'} \rme^{-\ci r_i\gamma^-_{J_-(i;J)}}
 \prod_{i=0}^{j_0-1} \rme^{-\ci r_i \gamma^+_{J_+(i;J)}}
\nonumber \\ & \quad  \times
 \int_0^{\infty}\! \rmd s\,
\1(S_{\ell}\le s \le S_{\ell+1})
 \rme^{-\ci s \gamma^+_{N}}  \rme^{-\ci (S_{N+n'+1}-s)\gamma^+_{N+1}}\, .
\end{align}
To see this, let us change the integration variables
$(r_i,s)_i$ to $(r'_j)_j$ by using  $r'_j=r_j$ for $j<\ell$,
$r'_j=r_{j-1}$ for $\ell<j\le N+1+n'$, and $r'_\ell = s - S_{\ell}$,
$r'_{\ell+1}= S_{\ell} + r_\ell -s$.  Since $S_{\ell}$ does not depend on
$r_\ell$, the Jacobian can straightforwardly be checked to be equal to one,
and the effect of $\1(S_{\ell}\le s \le S_{\ell+1})$ is simply to restrict the
integration region to $r'\in (\R_+)^{I_{0,N+1+n'}}$.
On the other hand, $r_\ell= r'_\ell+r'_{\ell+1}$ and thus
\begin{align} 
\sum_{i=0}^{N+n'} r_i \gamma^-_{J_-(i;J)}
= \sum_{i=0}^{N+1+n'} r'_i \gamma^-_{J_-(i;J')}
\end{align}
and, as $s=\sum_{i=j_0}^{\ell} r'_i$, then also
\begin{align} 
s \gamma^+_N + (S_{N+n'+1}-s)\gamma^+_{N+1}
= \sum_{i=j_0}^{\ell} r'_i \gamma^+_{N}
+ \sum_{i=\ell+1}^{N+1+n'} r'_i \gamma^+_{N+1}
= \sum_{i=j_0}^{N+1+n'} r'_i \gamma^+_{J_+(i;J')} \,  .
\end{align}
Thus
relabeling of the integration variables now proves that (\ref{eq:nearlythere})
is equal to the
integral on the right hand side of (\ref{eq:recombinationlemma}).  This
finishes the proof of the Lemma.\qed
\end{proofof}

\section{Resolution of the momentum constraints}
\label{sec:momdeltas}

One important element for our estimates is disentangling the
complicated momentum dependencies into more manageable form which allows
iteration of a finite collection of bounds.  For this we have to carefully
assign which of the momenta are freely integrated over, and which are used to
integrate out the $\delta$-functions and thus attain a linear dependence on
the free integration variables.  We begin from a diagram as described in the
previous section, which can represent either a main term or an amputated
amplitude.  To make full use of graph invariants,
we then add one more $\delta$-function to the integrand:
we multiply it by a factor
\begin{align} 
1=\int_{\Lambda^*}\! \rmd k_{e_0}\,
 \delta_{\Lambda}(k_{e_0}-k-k')  \, ,
\end{align}
where $k$ is the outgoing momentum at the root of the plus tree, and $k'$ at
the root of the minus tree.
(Using the notations of the previous section, we thus have $k=k_{n,1}$,
$k'=k_{0,0}$ for a main term,
and $k=k_{n,1}$, $k'=k'_{n,1}$ for an amputated term.)
The factor will facilitate the analysis of the momentum constraints, as
without it there would be one free integration variable which is not
associated with a loop in the corresponding graph.  This would lead to
unnecessary repetition in the oncoming proofs in form of spurious ``special
cases'', which can now be avoided at the cost of introducing a ``spurious edge''
into the graph.  The additional $\delta$-function can then be accounted for by
introducing two additional vertices and one extra edge to the graph: one
``fusion vertex'' which connects the two edges related to $k$ and $k'$, and
one vertex to the top of the graph, so that $e_0$ is the edge connecting the
two new vertices.
(See Fig.~\ref{fig:graphex1} for an illustration.)

\begin{figure}[t]
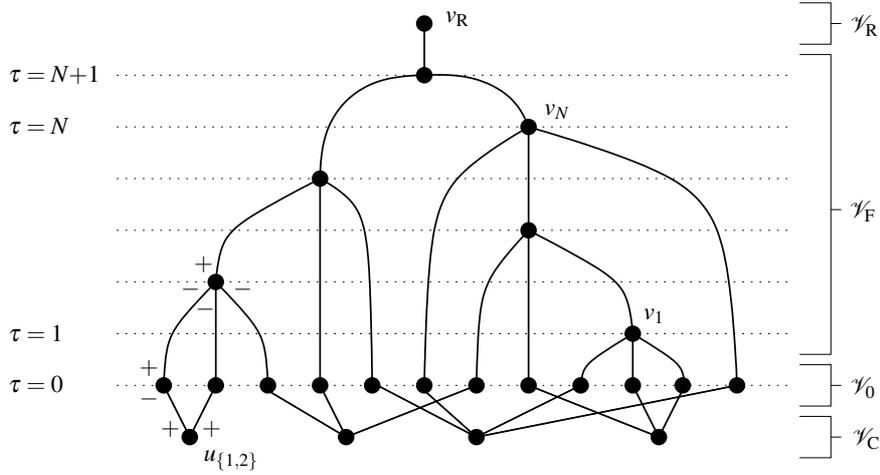

  \centering
  \myfigure{height=0.3\textheight}{Graph1}
  \caption{An example of a momentum graph $\mathcal{G}$ constructed by the
algorithm
  described in Section \ref{sec:momdeltas}.  The vertices of the graph are
  denoted by ``bullets'', connected by the edges, the bending of which serves
  only an illustrative purpose.   The graph corresponds to a case
  with $n=3$, $n'=2$ interactions,
  and  a cluster decomposition
  $S=\set{\set{1,2},\set{3,4,7},\set{5,6,9,12},\set{8,10,11}}$,
  using an enumeration of the vertices in $\Dverts$ from left to right
  in the figure.
  Examples of the notations used for the different vertices
  are also given, in particular, we have denoted the division of
  the vertices into the four disjoint sets on the right.  The horizontal
  dotted lines divide the graph into the time slices, and the labels on the
left show some values of the
  function $\tau$.   Finally, the signs near the vertices on the bottom left
  half of the graph denote how the edges at the vertex
  are divided between the sets $\edges_\pm(v)$.
  \label{fig:graphex1}}
\end{figure}

The diagram obtained this way is called the \defem{momentum graph}
associated with the original amplitude,
and it consist of $\mathcal{G}=(\verts,\edges)$, where $\verts$ collects the
vertices and $\edges$ the edges of the graph.  There is also additional
structure arising from the construction of the graph and related to the
different roles the vertices play.  In particular, the fusion vertices have a
natural time-ordering determined by $\ell$, $\ell'$ and $J$, and encoded in
the way we have drawn the diagrams.

Let us first summarize the construction of the momentum graph, and introduce
related notation for later use.  The total time $t$
and the partial time-integration variables $\kappa_i$ are parameters which do
not affect the momentum structure, and we assume them to be fixed to some
allowed value in the following.
The amplitude depends on
$S$, the cluster partitioning of the initial fields, and
the number of interactions in the plus and minus trees, $n$ and $n'$,
as well as the related collisions histories determined by $\ell$, $\ell'$ and
$J$.
(For a main term graph $n'=0$, and $\ell'$ and $J$ are then not relevant.)
We let $N=n+n'$ denote the total number of interactions, and consider here only
the non-trivial case $N\ge 1$.
Given these parameters, we can construct the momentum graph $\mathcal{G}$
by the following iteration procedure.

At each iteration step,
for the given previous graph $\mathcal{G}=(\verts ,\edges )$ with
$\verts\ne \emptyset$
we construct a new graph $\mathcal{G}'=(\verts' ,\edges' )$
by either ``attaching a new edge'' to some given
$v\in\verts$, or by ``joining the vertices'' $v,v'\in\verts$, $v'\ne v$.
Explicitly, in the first case when a new edge is attached to $v\in\verts$, we
choose a new vertex 
label $u\not \in \verts$, and define $\verts'=\verts\cup \set{u}$
and $\edges'=\edges\cup\set{\set{v,u}}$.   In our iteration scheme,
the new vertex label $u$ will be a dummy variable, which we can choose to
relabel later.
In the second case, when two existing vertices $v,v'$ are joined, we define
$\verts'=\verts$ and $\edges'=\edges\cup\set{\set{v,v'}}$.
The iterative construction will thus imply a natural order for the edges of
the graph:
\defem{we will say in the following that $e< f$ if the edge $e$ is created
before $f$.}
This defines then a complete order $e\le f$ on $\edges$.  We will use the
creation
order also to label the edges: $e_i$ is the edge which is created in the $i$:th
iteration step.

We begin with $\mathcal{G}\upn{0}=(\verts\upn{0} ,\edges\upn{0} )$
where $\verts\upn{0}=\set{\rootv,v_{N+1}}$
and $\edges\upn{0}=\set{e_0}$, $e_0=\set{\rootv,v_{N+1}}$.
We next go through the list
$i=1,2,\ldots,N_k$, where $N_k$ is the total number of $k,k'$-integrals.
At each iteration step, we add the corresponding edge to $\edges$.
In the first two iteration steps we attach two new
edges, labeled $e_1$ and $e_2$, to $v_{N+1}$.
The edge $e_1$ begins the minus tree associated with the $k'$-integrals,
and the edge $e_2$ begins the plus tree associated with the $k$-integrals.
If the last interaction (as determined by $J$) is in the minus tree
we next choose $e_1$, and otherwise choose $e_2$, and relabel
the unlabeled vertex in it by $v_N$.

The next three iteration steps are to attach three new edges to $v_N$,
left to right in the picture (and thus having parities $-1$, $\sigma$, $1$).
The interaction history is determined by $\ell,\ell',J$, and is used
(backwards in time) for
choosing a unique edge with an unlabeled vertex in the following steps.
We pick the appropriate edge and label the (unique) unlabeled
vertex in this edge as $v_{N-1}$, and the next three iteration steps consist of
attaching three new
edges to $v_{N-1}$, left to right.  This procedure of attaching triplets of
edges is
iterated altogether $N$ times and results in a tree
starting from $\rootv$.

The resulting vertex set is composed of $N+2$ labeled and of $2 N+2$
unlabeled vertices.  The labeled vertices belong either to
$\Rverts=\set{\rootv}$ or to $\Fverts=\set{v_j}_{j=1,\ldots,N+1}$,
which we call the {\em root\/} and the {\em fusion
vertex set,\/} respectively.  The term \defem{interaction vertex} refers to an 
interaction vertex in the original diagram.  The set of interaction vertices
is thus $\Iverts=\Fverts\setminus\set{v_{N+1}}$.
We collect the remaining vertices
to $\Dverts$, and call this the \defem{initial time vertex} set.
Each $v\in \Dverts$ is associated with a definite
$\psi$-factor in the initial time expectation
value, and $S$ can thus be identified with a unique partition of
$\Dverts$ into clusters.  For every cluster $A$ in $S$,
we associate an independent label $u_A$. The set
$\Cverts=\set{u_A}_{A\in S}$ is called the \defem{cluster vertex} set.
The final graph $\graph$ is defined to have a vertex set
$\verts = \Rverts\cup \Fverts \cup \Dverts \cup \Cverts$.
In the final iteration steps, we add edges by going through the initial time
vertices, left to right,
and for each vertex $v$ joining it to $u_{A(v)}$ where
$A(v)\in S$ is the unique cluster containing $v$.

This yields an unoriented graph 
$\graph=\graph(S,J,n,\ell,n',\ell')=(\verts,\edges)$
representing the corresponding amplitude.
The vertices have a natural time-order given by
$\tau:\verts \to [0,N+2]$, which we define by setting for $v\in \verts$
\begin{align} 
  \tau(v) = \begin{cases} N+2,& \text{if } v\in \Rverts, \\
    j,& \text{if there is } j\in\set{1,\ldots,N+1} \text{ such that } v=v_j, \\
    0,& \text{if } v\in \Dverts \cup \Cverts .
\end{cases}
\end{align}
We extend the time-ordering to the edges by defining
$\taup(e)=\max\defset{\tau(v)}{v\in e}$ for $e\in \edges$.  It is obvious from
our construction
that $e\le f$ implies $\taup(e)\ge \taup(f)$.

For any $v\in \verts$, let
$\edges(v)=\defset{e\in\edges}{v\in e}$ denote the set of edges attached to
$v$. To each edge $e\in\edges$ we have associated an integration over a
variable $k_e$.  These variables are not independent since, apart from the
root vertex,
each vertex has a $\delta$-function associated to it.
Explicitly, for $v\not\in \Rverts$, there is a factor (Kirchhoff rule)
\begin{align} 
  \delta_\Lambda\Big(\sum_{e\in\edges_+(v)} k_e-\sum_{e\in\edges_-(v)} k_e\Big),
\end{align}
with $\edges(v)=\edges_+(v)\cup \edges_-(v)$.  How the edges are split between
the two sets depends on the type of vertex.  If $v\in \Cverts$, then
$\edges_-=\emptyset$ and $\edges_+=\edges(v)$.  Otherwise,
$\edges_+(v)=\set{e}$, where $e$ is the first edge attached to $v$,
and $\edges_-(v)=\edges(v)\setminus\set{e}\ne \emptyset$.
We have illustrated these definitions in the example graph in
Fig.~\ref{fig:graphex1} (The graph has $n',n>0$ with $n'\ne n$, and as such is
not related to any of the present  amplitudes.
However, we use the more general graph to show that the scheme does not depend
on the special relation between $n$ and $n'$.)

Our aim is next to ``integrate out'' all the constraint $\delta$-functions.
We do this by associating with every vertex a unique edge attached to it which
we use for the integration.
As long as we use each edge not more than once, this results in a complete
resolution of the momentum constraints.
The edges used in the integration of the $\delta$-functions are called
\defem{integrated},
and the remaining edges are called \defem{free}.  We use the notation
$\edges'$ for the collection of integrated edges and $\fedges$ for the free
edges. The following theorem shows that there is a way of achieving such a
division of edges which respects their natural time-ordering.
\begin{theorem}\label{th:intdeltas}
Consider a momentum graph $\mathcal{G}$.
There exists a complete integration of the
momentum constraints, determined by a certain unique spanning tree of the
graph, such that for any free edge $f$ all $k_e$ with
$e<f$ are independent of $k_f$.
In addition, all free edges \defem{end} at a fusion vertex:
if $f$ is free,
there is a fusion vertex $v\in \Fverts$ and $v'\in \verts$ such that
$\tau(v)>\tau(v')$ and $f=\set{v,v'}$.
\end{theorem}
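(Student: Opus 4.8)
The plan is to produce the integrated edges explicitly as a particular spanning tree of the momentum graph $\graph$, obtained by a Kruskal-type greedy procedure that uses the creation order $<$ on $\edges$ as a strict edge weighting, and then to extract both stated properties from the combinatorics of the construction of $\graph$. Throughout, write $T_0$ for the set of all edges created before the final ``cluster'' phase, i.e.\ $e_0$ together with the edges attached during the $N$ rounds of triplet attachment; the cluster edges are exactly $\edges\setminus T_0$ and are the \emph{last} edges created, while $e_0$ is the \emph{first} one.

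First I would set up the standard linear-algebra dictionary. The graph $\graph$ is connected, since $T_0$ already spans $\Rverts\cup\Fverts\cup\Dverts$ and each $u_A\in\Cverts$ is joined to some $w\in A\subseteq\Dverts$. Fix any spanning tree $T\subseteq\edges$ and root it at $\rootv$; then $v\mapsto p(v)$, the parent edge of a non-root vertex $v$, is a bijection from $\verts\setminus\set{\rootv}$ onto $T$, and solving the Kirchhoff constraint at each $v$ for $k_{p(v)}$ and back-substituting from the leaves towards $\rootv$ expresses every $k_e$ with $e\in T$ as a signed sum $\sum_{f\notin T}c_{e,f}k_f$ with $c_{e,f}\in\set{-1,0,1}$. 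The usual cycle-space argument then shows $c_{e,f}\ne 0$ iff $e$ lies on the fundamental cycle of the free edge $f$ relative to $T$. Hence $k_e$ can depend on $k_f$ only when $e$ is on the fundamental cycle of $f$, so controlling momentum dependencies reduces to controlling which tree edges occur on fundamental cycles of non-tree edges.

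The main step is the choice of $T$: run Kruskal's algorithm with the edges processed in \emph{decreasing} creation order $e_{N_k}>\dots>e_1>e_0$, declaring an edge \emph{integrated} if it joins two distinct components of the forest built so far and \emph{free} otherwise; since $<$ is a strict total order the resulting spanning tree is unique. Property (a) is then immediate: if a free edge $f=e_i$ was rejected, its endpoints were already joined by accepted edges $e_j$ with $j>i$, so---$T$ being a tree---the fundamental path of $f$ consists entirely of edges $e>f$; hence $c_{e,f}=0$ for every $e<f$, and for a free $e<f$ the momentum $k_e$ is itself an integration variable. For property (b) one observes that the cluster edges, created last, are processed first and form a disjoint union of stars on the fresh vertices $\Cverts\cup\Dverts$; processing a union of stars never closes a cycle, so all cluster edges are integrated, and likewise $e_0=\set{\rootv,v_{N+1}}$, processed last, cannot close a cycle since $\rootv$ has degree one in $\graph$, so $e_0$ is integrated. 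Consequently every free edge lies in $T_0\setminus\set{e_0}$ and is therefore either an edge between two fusion vertices or an edge from a fusion vertex to an initial-time vertex. In both cases at least one endpoint lies in $\Fverts$, and the two endpoints have distinct values of $\tau$ (two fusion vertices carry distinct indices, while an initial-time vertex has $\tau=0$ and any fusion vertex has $\tau\ge 1$); choosing $v$ to be an endpoint in $\Fverts$ with the larger value of $\tau$---forced in the second case and always possible in the first, where both endpoints are in $\Fverts$---yields the form $\set{v,v'}$ with $v\in\Fverts$ and $\tau(v)>\tau(v')$ demanded by property (b).

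I expect the effort to be bookkeeping rather than conceptual. The delicate points are checking that the ``greedy from the latest edge'' rule is genuinely forced, so that $T$ is well defined and unique; verifying the cycle-space fact that back-substitution produces coefficients supported exactly on fundamental cycles; and---the bulk of the work---confirming that the edge inventory of $\graph$ is exhausted by the three cases used above (cluster edges created last and forming stars; $e_0$ created first with a degree-one root; all remaining $T_0$-edges of the two benign ``fusion-vertex'' types), using only the iterative construction of Section~\ref{sec:momdeltas} and in particular no special relation between $n$ and $n'$.
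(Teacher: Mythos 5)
Your proposal is correct and follows essentially the same route as the paper: the paper's spanning tree is exactly your Kruskal-type greedy tree built by processing edges in decreasing creation order (skipping loop-creating edges), its key observation is likewise that the fundamental cycle of a rejected edge $f$ consists only of edges $e\ge f$, and its argument that cluster edges and $e_0$ are always integrated while all remaining edges run from a fusion vertex downward matches yours. The only difference is presentational: where you invoke the standard cycle-space fact that back-substitution yields coefficients supported on fundamental cycles, the paper verifies this explicitly by the inductive formula (\ref{eq:kesol}), which it needs again later for Lemmas \ref{th:kesol3}--\ref{th:nokkdiff}.
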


From now on, we assume that the momentum constraints are integrated out using
the unique construction in Theorem \ref{th:intdeltas}.  For any fusion vertex
$v\in \Fverts$, we call the
number of free edges in $\edges_-(v)$
the \defem{degree} of the fusion vertex, and denote this by $\deg v$.
The following theorem summarizes how the integrated edges
ending at an interaction vertex depend on its free momenta.
\begin{proposition}\label{th:momatintv}
The degree of a fusion vertex belongs to $\set{0,1,2}$.
If $v\in\Iverts$ is a degree one interaction vertex, then
$\edges_-(v)=\set{f,e,e'}$
where $f$ is a free edge, and $k_e=-k_f+G$, $k_{e'}=G'$, where $G$ and $G'$
are independent of $k_f$.
If $v$ is a degree two interaction vertex, then $\edges_-(v)=\set{f,f',e}$
where $f,f'$ are free edges, and $k_e=-k_f-k_{f'}+G$, where $G$ is
independent of $k_f$ and $k_{f'}$.
\end{proposition}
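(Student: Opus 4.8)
The plan is to obtain all three statements from the single Kirchhoff identity carried by the fusion vertex $v$, together with the two properties furnished by Theorem~\ref{th:intdeltas}. Recall the structure of the momentum graph: for $v\in\Iverts$ the set $\edges_-(v)$ consists of the three child edges attached at $v$, which I denote $c_1<c_2<c_3$ in order of creation, while $\edges_+(v)=\{e_+\}$ is the single parent edge, created before all of them, so that $e_+<c_1$; for $v=v_{N+1}$ the set $\edges_-(v)$ has two elements. On the support of the $\delta$-function attached to $v$ one has the Kirchhoff identity
\[
  k_{e_+}=\sum_{e\in\edges_-(v)}k_e ,
\]
and this continues to hold once the constraints are resolved as in Theorem~\ref{th:intdeltas}. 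From that theorem I use: (I) the integrated edges form a spanning tree; and (II) for every free edge $g$ and every edge $e<g$, the resolved momentum $k_e$ is independent of $k_g$. I also use the elementary description of the resolution that goes with (I): for an integrated edge $e$, the momentum $k_e$ involves the free variable $k_g$ exactly when $e$ lies on the fundamental cycle $C_g$ (the unique cycle formed by $g$ with the spanning tree); combined with (II) this says $C_g$ consists only of edges $\ge g$, and in particular never contains an edge created before $g$.

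For the degree bound, $\deg v\le|\edges_-(v)|$, which is $\le2$ already for $v=v_{N+1}$. If $v\in\Iverts$ had all three children free, the Kirchhoff identity would read $k_{e_+}=k_{c_1}+k_{c_2}+k_{c_3}$, a sum of three independent free variables, so $k_{e_+}$ would depend on $k_{c_3}$ — impossible by (II), since $e_+<c_3$ and $c_3$ is free. Hence $\deg v\le2$. Next, for $v\in\Iverts$ of degree two, write $\edges_-(v)=\{f,f',e\}$ with $f,f'$ free and $e$ integrated. Solving the Kirchhoff identity for $k_e$ gives $k_e=k_{e_+}-k_f-k_{f'}$; since $e_+<c_1\le f,f'$ and $f,f'$ are free, (II) shows $k_{e_+}$ is independent of $k_f$ and of $k_{f'}$, so the claim holds with $G:=k_{e_+}$.

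For $v\in\Iverts$ of degree one, write $\edges_-(v)=\{f,e,e'\}$ with $f$ free and $e,e'$ the two integrated children. The fundamental cycle $C_f$ contains the edge $f$, hence passes through $v$, and therefore contains exactly two edges incident to $v$; one is $f$ and the other is some integrated edge $g$. By (II) this $g$ cannot be $e_+$ (which precedes $f$), so $g\in\{e,e'\}$; relabel the two integrated children so that $g=e$. Then $e'\notin C_f$, so $k_{e'}$ is independent of $k_f$, and we set $G':=k_{e'}$. Solving the Kirchhoff identity for $k_e$ now gives $k_e=k_{e_+}-k_f-k_{e'}=-k_f+(k_{e_+}-k_{e'})$; since both $k_{e_+}$ and $k_{e'}$ are independent of $k_f$, the correction $G:=k_{e_+}-k_{e'}$ is too, which is exactly the degree-one claim.

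The step I expect to require the most care is the treatment of the second integrated child $e'$ in the degree-one case: the argument hinges on translating property (II) of Theorem~\ref{th:intdeltas} into the statement that $C_f$ avoids $e_+$, so that — by the elementary fact that a cycle meets each of its vertices in exactly two edges — $C_f$ singles out a \emph{unique} integrated child of $v$, leaving the other one with no $k_f$-dependence. Everything else is bookkeeping with the Kirchhoff identity and (II); along the way one should record why that identity survives the resolution and why solving it for a child momentum produces the coefficient $-1$ (rather than $+1$) in front of $k_f$, both of which are routine.
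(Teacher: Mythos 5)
Your argument is correct, but it gets to the Proposition by a recognizably different route than the paper. The paper goes back inside the resolution itself: the degree bound is read off from the order in which the spanning tree is built (the last-created edge at a vertex can never close a loop, so at least one member of $\edges_-(v)$ is integrated), and the dependence statements come from the explicit expansion (\ref{eq:kesol}) together with the sign computation of Lemma \ref{th:mesigns}, by following the two oriented path segments that form the loop of the free edge and identifying which edge of $\edges_-(v)$ they enter. You instead use only the statement of Theorem \ref{th:intdeltas} — (I) the integrated edges form a spanning tree, (II) $k_e$ is independent of $k_f$ whenever $e<f$ and $f$ is free — plus the fundamental-cycle description of a tree-based resolution (a tree edge's momentum involves a chord variable, with coefficient $\pm 1$, exactly when the edge lies on that chord's fundamental cycle). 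That description is not part of the bare statement of Theorem \ref{th:intdeltas}; it is exactly what the paper establishes via (\ref{eq:kesol}) and Lemma \ref{th:kesol3}, so it is available, but if your proof were to stand alone you should record it as a lemma rather than invoke it as folklore. Granting it, your derivation is sound and in places cleaner: the degree bound follows by contradiction from the Kirchhoff identity at $v$ plus (II) (rather than from the construction order); the degree-two formula is immediate because $\edges_+(v)$ precedes both free children; and in the degree-one case the fundamental cycle of $f$ meets $v$ in exactly one further edge, which must be one of the two integrated children (if it were $\edges_+(v)$, or if neither integrated child lay on the cycle, Kirchhoff would force the earlier edge $\edges_+(v)$ to depend on $k_f$, contradicting (II)), after which Kirchhoff hands you the coefficient $-1$ with no orientation bookkeeping. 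The two small points you flag yourself — that the Kirchhoff identities hold identically in the free momenta after resolution (each vertex $\delta$-function is solved exactly once), and the origin of the $-1$ — are indeed routine, so there is no genuine gap; the trade-off is that the paper's proof is self-contained in its sign conventions, while yours is more local and avoids them at the cost of quoting the cycle-support fact.
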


We will need other similar properties of the integrated momenta, to be given
later in this section.
However, let us first explain how the constraints are removed.
\begin{proofof}{Theorem \ref{th:intdeltas} and Proposition \ref{th:momatintv}}
We construct a spanning tree for $\graph$ which provides a
recipe for integration of the vertex $\delta$-constraints and leads to the
properties stated in Theorem \ref{th:intdeltas}.
We first construct an unoriented tree $\Ttree=(\Tverts,\Tedges)$
from $\mathcal{G}$,
and then define an oriented tree $\oTtree=(\Tverts,\oTedges)$
by assigning an orientation to each of the edges in $\Tedges$.

Let $\Ttree\upn{0}=(\Tverts\upn{0},\Tedges\upn{0})$,
with $\Tverts\upn{0}=\emptyset=\Tedges\upn{0}$.
We go through all edges in $\edges$ in the opposite order they were
created, \itie ,
decreasing with respect to their order.
At the iteration step $l$, let $e$ denote the corresponding edge,
and consider the previous graph $\Ttree\upn{l-1}$.
If adding the edge $e$ to $\Ttree\upn{l-1}$ would
create a loop, we define
$\Ttree\upn{l}=\Ttree\upn{l-1}$.  Otherwise, we define $\Ttree\upn{l}$ as the
graph
resulting from this addition, \itie ,  we define
$\Tverts\upn{l}=\Tverts\upn{l-1}\cup e$, and
$\Tedges\upn{l}=\Tedges\upn{l-1}\cup \set{e}$.
Since in the first case necessarily
$e\subset \Tverts^{\ell-1}$, we will always have $e\subset \Tverts^{\ell}$,
and thus no vertex in $e$ can be lost in the iteration step.

Let $\Ttree=(\Tverts,\Tedges)$ denote the graph obtained after the final
iteration step.
By construction, at each step $\Ttree\upn{\ell}$ is a forest, and thus so is
$\Ttree$. Moreover, since $\graph$ is connected, $\Ttree$ is actually a tree.
Since every vertex in $\verts$ is contained in some edge,
we also have $\Tverts=\verts$.
In addition, $\Tedges\subset \edges$, and every
$e\in \edges\setminus\Tedges$ has the following property:
adding it to $\Tedges$ would make a unique
loop {\em composed out of edges $\set{e'}$
each of  which satisfies $e'\ge e$,\/}  and thus also
$\taup (e')\le \taup (e)$. (The loop is
unique since $\Ttree$ itself has no loops.)

Next we create $\oTtree$ by
assigning an orientation to the edges of $\Ttree$.  We root the tree
at $\rootv$.  This is achieved by the following
algorithm: we first note that for any vertex $v$ there is
a unique path connecting it to $\rootv$.  We orient the edges of the path
so that it starts from $v$ and ends in $\rootv$.
This is iterated for all vertices in the tree.  Although
it is possible that two different vertices share edges along the path, these
edges are assigned the same orientation at all steps of the algorithm.
(If two such paths share any vertex, then the paths
must coincide past this vertex; otherwise there would be a
loop in the graph.)
This results in an oriented graph in which for every
$v\in\verts\setminus\Rverts$ there is a {\em unique\/} edge 
$E(v)\in \edges(v)$ pointing {\em out\/}
of the vertex.  In addition, the map 
$E:\verts\setminus\Rverts\to \edges$
is one-to-one. Thus we can integrate all the momentum $\delta$-functions,
by using the variable $k_{E(v)}$ for the $\delta$-function at the
vertex $v\in\verts\setminus\Rverts$.
We have depicted the oriented tree resulting from the graph of
Fig.~\ref{fig:graphex1} in Fig.~\ref{fig:graphex2}.

\begin{figure}
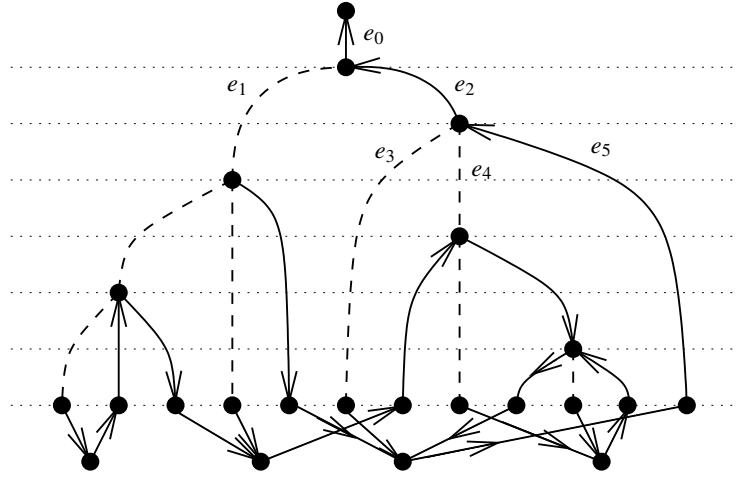

  \centering
  \myfigure{height=0.3\textheight}{Graph2}
  \caption{The oriented spanning tree $\oTtree$
    corresponding to the  graph $\mathcal{G}$ given in Fig.~\ref{fig:graphex1}.
    The edges in the complement of the tree
    have also been depicted by dashed lines.
    The enumeration $(e_\ell)$ of the edges corresponds to the one explained
    in the text; the spanning tree is constructed by adding the edges in
    the graph in {\em decreasing\/} order.
  \label{fig:graphex2}}
\end{figure}

After the above integration steps,
all the constraints have been resolved, and the set of remaining
integration variables will consist of $k_e$ with 
$e\in \edges\setminus \Tedges$.
These are all free integration variables, and thus
$\fedges:=\edges\setminus \Tedges$ is the set of free edges,
and $\edges':=\Tedges$ is the set of integrated edges.
Obviously, one has to add at least all edges attached to a cluster vertex
before a loop can be created, and thus no such edge is free.  Also, the
addition of the last edge $e_0$
never creates a loop.  All remaining edges end at a fusion vertex, and thus
this is true also of all free edges.

In order to conclude the proof of Theorem \ref{th:intdeltas},
we need to find out how the integrated momenta depend on the free ones.
For later use, let us spell  out also this fairly standard part in detail.
For any $v\in \verts$, let $\fedges(v)$ collect
the free edges attached to $v$,
$\fedges(v)=\edges(v)\cap \fedges$.
Let us also associate for any
$v\in \verts\setminus\Rverts$ an ``edge parity'' mapping
$\sigma_v:\edges(v)\to \set{-1,+1}$ defined by
\begin{align} 
  \sigma_v(e) = \begin{cases}
    +1, & \text{if } e\in \edges_+(v),\\
    -1, & \text{if } e\in \edges_-(v).
  \end{cases}
\end{align}
\begin{lemma}\label{th:mesigns}
If $e=\set{v,v'}\in \edges$ does not intersect $\Rverts$,
then $m(e)=-\sigma_v(e)\sigma_{v'}(e)=1$.
\end{lemma}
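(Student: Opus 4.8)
The plan is to read off the signs $\sigma_v(e)$ directly from the iterative construction of the momentum graph $\graph$ given in Section~\ref{sec:momdeltas} (and used in the proof of Theorem~\ref{th:intdeltas}), by keeping track, for each vertex, of which of its incident edges was created first. Recall that for a non-root vertex $w$ one has $\sigma_w(f)=+1$ exactly when $f$ is the \emph{earliest-created} edge incident to $w$: this is the definition $\edges_+(w)=\set{f_0}$ with $f_0$ the first edge attached to $w$ when $w$ is a fusion vertex or an initial time vertex, and it holds trivially in one direction for a cluster vertex $w$, where $\edges_+(w)=\edges(w)$ so that $\sigma_w\equiv+1$. Hence the lemma is equivalent to the assertion that, for an edge $e=\set{v,v'}$ avoiding $\Rverts$, the edge $e$ is the earliest-created edge at exactly one of its two endpoints.

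First I would classify every edge of $\graph$ by how it is born during the construction: it is the initial edge $e_0=\set{\rootv,v_{N+1}}\in\edges\upn{0}$; or it is created in the ``tree phase'' by \emph{attaching a new edge} to an already-present vertex (which, by inspection of the construction, is always a fusion vertex already carrying a previously created edge --- either $e_0$, when the vertex is $v_{N+1}$, or the edge present when it was relabelled $v_j$, $j\le N$); or it is created in the final phase by \emph{joining} an initial time vertex $v\in\Dverts$ to the cluster vertex $u_{A(v)}$. The edges of the last type are precisely those incident to a cluster vertex, and $e_0$ is the only edge incident to $\rootv$. Therefore any edge that avoids $\Rverts$ is either a cluster edge or a tree-phase ``attach''-edge, and I would argue these two cases separately.

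For a tree-phase attach-edge $e=\set{v,v'}$, attached to $v'$ while creating the new vertex $v$ (so $\tau(v')>\tau(v)$, and $v$ is either later relabelled to a fusion vertex or remains in $\Dverts$): since $v$ is born together with $e$, the edge $e$ is the earliest edge incident to $v$, so $\sigma_v(e)=+1$; and since $v'$ already carried an edge before $e$ was attached, $e$ is not the earliest edge at $v'$, so $e\notin\edges_+(v')$ and $\sigma_{v'}(e)=-1$. For a cluster edge $e=\set{v,u_A}$ with $v\in\Dverts$ and $u_A=u_{A(v)}\in\Cverts$: the cluster vertex gives $\sigma_{u_A}(e)=+1$, while $v\in\Dverts$ already carried its unique upward tree-phase edge (created strictly before the final phase), so $e$ is not the earliest edge at $v$ and $\sigma_v(e)=-1$. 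In every case the two signs are opposite, hence $m(e)=-\sigma_v(e)\sigma_{v'}(e)=1$, as claimed.

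I do not expect a genuine obstacle here --- the statement is pure bookkeeping about the construction. The two points requiring a little care are: (i) in the cluster-edge case one must not forget the atypical convention $\edges_+(w)=\edges(w)$ at a cluster vertex, and instead extract the required ``$-1$'' from the $\Dverts$-endpoint via its earlier tree-phase edge; and (ii) one should verify, as indicated above, that every vertex to which the tree phase ``attaches'' a new edge already carries an edge, so that the newly attached edge is genuinely never the earliest edge at its attachment endpoint.
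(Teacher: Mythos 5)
Your proof is correct and follows essentially the same route as the paper's: in both cases one reads off $\sigma_v(e)$ from the creation order of edges at each endpoint (with the special convention at cluster vertices) and checks, case by case, that $e$ is the first edge at exactly one of its two endpoints. Your uniform treatment of the tree-phase ``attach'' edges even spells out the fusion--fusion subcase that the paper dispatches with ``it follows from the construction'', so there is no gap.
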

\begin{proof}
Assume first that $v\in \Cverts$.  Then $\sigma_v(e)=1$ and $v'\in \Dverts$.
Since $\edges(v')$ then contains only two elements, of which $e$ is created
later,
we have $\sigma_{v'}(e)=-1$, and thus $m(e)=1$.
Assume then $e\cap \Cverts = \emptyset$. If
$v\in \Dverts$, then $v'\in \Fverts$, and thus
$e$ is the earlier of the two edges attached to $v$ and $\sigma_v(e)=1$.
However, then it is one of the three later edges attached to $v'$,
and thus $\sigma_{v'}(e)=-1$.  This implies $m(e)=1$.
Since $m(e)$ is symmetric under the exchange of $v$ and $v'$,
we can now assume that $e\cap \Cverts \cap\Dverts = \emptyset$.
Then both $v,v'\in \Fverts$ and it follows from the construction
that $\sigma_v(e) \sigma_{v'}(e) = -1$.  This proves that for any edge $e$
with $e\cap \Rverts = \emptyset$, $m(e)=1$.
\qed \end{proof}

Consider then an  integrated variable $k_e$, with $e\in \edges'$.
The edge $e$ has been assigned an
orientation, say $e=(v_1,v_2)$, going from the vertex $v_1$ to the vertex
$v_2$. Let $\mathcal{P}(v)$, $v\in \verts$, denote the collection of the
vertices $v'$
for which there exists a path from $v'$ to $v$ in the
{\em oriented\/} tree $\oTtree$.  In particular, we include here the trivial
case $v'=v$. We claim that then
\begin{align}\label{eq:kesol}
  k_e = \sum_{v\in \mathcal{P}(v_1)}  \sum_{f\in \fedges (v)}
 \left( - \sigma_{v_1}(e) \sigma_{v}(f) \right) k_{f} .
\end{align}

This can be proven by induction in a degree $j$ associated with an
oriented edge $e=(v_1,v_2)\in \oTedges$: $j$ is defined as the maximum of the
number of
vertices in an oriented path
from any leaf to $v_1$  (note that such paths always
exist).  For $j=1$, $v_1$ is itself a leaf, and thus
$\mathcal{P}(v_1)=\set{v_1}$ and
$\fedges (v_1)=\edges(v_1)\setminus\set{e}$.
Also $v_1\not\in \Rverts$, since the edge $e_0$ is always oriented as
$(v_{N+1},\rootv)$.  Thus there is a
$\delta$-function associated with $v_1$, and it enforces
\begin{align} 
  \sum_{e'\in \edges(v_1)} \sigma_{v_1\!}(e') k_{e'} =0.
\end{align}
The designated integration of this $\delta$-function yields, with $v=v_1$,
\begin{align} 
 k_e = -\sigma_v(e) \sum_{e'\in \edges(v)\setminus\set{e}} \sigma_{v}(e')
 k_{e'} = \sum_{e'\in \fedges (v)}  ( -\sigma_v(e) \sigma_{v}(e')) k_{e'} ,
\end{align}
and therefore (\ref{eq:kesol}) holds for $j=1$.
Assume then that (\ref{eq:kesol}) holds for any edge up to degree $j\ge 1$,
and  suppose $e=(v_1,v_2)$ is an edge with a degree $j+1$.
Again $v_1\not\in \Rverts$, and the corresponding $\delta$-function implies
that, with $v=v_1$,
\begin{align} 
 k_e = \sum_{e'\in \edges(v)\setminus\set{e}} (-\sigma_{v}(e)
 \sigma_{v}(e')) k_{e'}.
\end{align}
In the sum, an edge $e'$ is either free or it
must have a degree of at most $j$, as otherwise $e$
would have a degree of at least $j+2$.  Thus by the induction assumption,
\begin{align}\label{eq:kesol2}
& k_e =
\sum_{f\in \fedges (v)} (-\sigma_v(e) \sigma_{v}(f)) k_{f}
+ \sum_{e'\in \edges(v)\setminus\set{e}\setminus \fedges (v)}
(-\sigma_v(e) \sigma_{v}(e'))
\nonumber \\ & \qquad
\times
\sum_{v'\in \mathcal{P}(V_1(e'))}  \sum_{f\in \fedges (v')}
 \left( -\sigma_{V_1(e')}(e') \sigma_{v'}(f)  \right) k_{f},
\end{align}
where $e'=(V_1(e'),v)$ and $V_1(e)$ denotes the first vertex of an oriented
edge $e$. Therefore,
\begin{align} 
& (-\sigma_v(e) \sigma_{v}(e'))
 \left( -\sigma_{V_1(e')}(e') \sigma_{v'}(f)  \right)
 = -\sigma_v(e) m(e') \sigma_{v'}(f)
 = -\sigma_v(e) \sigma_{v'}(f) \, .
\end{align}
Now (\ref{eq:kesol2})  can be checked to coincide with (\ref{eq:kesol}).
This completes the induction step, and thus proves  (\ref{eq:kesol}).

Consider then a free integration variable corresponding to
$f_0=\set{u,u'}\in\fedges$, where we can choose $\tau(u)>\tau(u')$.
Then $f_0\cap \Rverts =\emptyset$, $\taup (f_0)=\tau(u)$, and
$\sigma_u(f_0)=-1$, $\sigma_{u'}(f_0) = +1$.
The unique oriented paths from $u$ and $u'$ to
the root of the tree must coincide starting from a unique vertex $v_0$, which
can {\em a priori\/}
also be either $u$ or $u'$.  In addition, the paths before $v_0$ cannot
have any common vertices.  Suppose $e=(v_1,v_2)$ belongs to
the path from $u$ to $v_0$ in $\oTtree$.  Then $k_e$ depends on
$k_{f_0}$, and using (\ref{eq:kesol}) we find that
$k_e=\sigma_{v_1}(e) k_{f_0} + \cdots$.  Similarly, if $e$  belongs to
the path from $u'$ to $v_0$, then $k_e=-\sigma_{v_1}(e) k_{f_0} + \cdots$.
For any $e$ which comes after $v_0$ in the path, both terms will be present,
and they cancel each other.  Resorting to (\ref{eq:kesol}) thus proves
that only those $k_e$ whose edges are contained in either of the
paths $u\to v_0$ and $u'\to v_0$ depend on the free variable $k_{f_0}$.
However, as these edges, together with $f_0$, would
form a loop in $\mathcal{G}$, it follows from the
construction of $\Ttree$ that for any such edge $e$ we have
$e\ge f_0$.  This proves that  if $f\in \fedges$ and $e\in\edges$ with $e<f$,
$k_e$ is either free (and thus independent of $k_f$) or by the above result
does not depend on $k_f$.  This completes the proof of
Theorem \ref{th:intdeltas}.

Proposition  \ref{th:momatintv} is now a corollary of the above results.
The degree of the top fusion vertex is obviously less than two,
and since for any vertex adding its first edge cannot create a loop,
also the degree of all interaction vertices is less than or equal to two.
To prove the second claim,
let us assume $u$ is an interaction vertex and consider the three edges
belonging to
$\edges_-(u)$, one of which is $f_0$.
If $v_0\ne u$, then there is a non-trivial path from $u$ to
$v_0$, and we can assume that $e=(u,v_2)$
is the first edge along this path.  Since then $e>f_0$, we have
$e\in \edges_-(u)$,and thus $k_e=-k_{f_0}+\cdots$.
If $v_0= u$, there is a non-trivial path from $u'$ to $u$, and let $e=(v_1,u)$
be the last edge in that path.  By $e>f_0$,
again we then have $e\in \edges_-(u)$.  Since $m(e)=1$, we find
$k_e=-\sigma_{v_1}(e) k_{f_0} + \cdots=\sigma_{u}(e) k_{f_0} + \cdots=
-k_{f_0} + \cdots$.
Finally, consider the third edge $e'\in \edges_-(u)$.
This cannot belong to either of the paths from $u\to v_0$ and $u'\to v_0$,
and thus $k_{e'}$ is always independent of $k_{f_0}$.
If $e'$ is integrated, the degree of $u$ is one, and we have proved the
statement made in the Proposition.
If $e'$ is a free edge, the degree of $u$ is two.  If we then apply the above
result to
$e'$ instead of $f_0$, we can conclude
that $k_e=-k_{e'}-k_{f_0}+\cdots$, where the remainder is independent of both
$k_{e'}$ and $k_{f_0}$.
(Note that $e$ is then the only integrated edge in $\edges_-(u)$, and must
therefore contain both of the free variables.)
This completes also the proof of 
Proposition \ref{th:momatintv}.\qed
\end{proofof}

In the following,
the term {\em oriented path} refers to a path in $\oTtree$.
Without this clarifier,  a path always refers to an
unoriented path in a subgraph of $\mathcal{G}$.
The following Lemma improves on (\ref{eq:kesol}) and
yields the exact dependence of integrated momenta on the free ones.
\begin{lemma}\label{th:kesol3}
For any integrated edge $e=(v_1,v_2)\in \edges'$,
let $P=\mathcal{P}(v_1)$ denote the collection of vertices
such that there is an oriented path from the vertex to
$v_1$.  Then
\begin{align}\label{eq:kesol3}
&  k_e = \sum_{v\in P}  \sum_{f=\set{v,v_f}\in \fedges (v)}
   \1(v_f\not\in P)
 \left( - \sigma_{v_1}(e) \sigma_{v}(f) \right) k_{f} .
\nonumber \\ & \quad
 = - \sigma_{v_1}(e) \sum_{f\in \fedges}
   \1(\exists v\in f\cap P\text{ and }f\cap P^c\ne \emptyset )
   \sigma_{v}(f) k_{f} .
\end{align}
In addition, any $f=\set{v,v'}\in\edges$, such that $f\ne e$,
$v\in P$, and $v'\not \in P$, is free.
\end{lemma}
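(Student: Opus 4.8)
The plan is to derive both displayed identities directly from the already-established formula (\ref{eq:kesol}) together with the sign relation of Lemma \ref{th:mesigns}, and to obtain the final ``in addition'' clause from the cut structure of the spanning tree $\Ttree$. First I would regroup the sum in (\ref{eq:kesol}) according to free edges rather than according to endpoints.

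In (\ref{eq:kesol}) a free edge $f=\set{v,v_f}$ \emph{both} of whose endpoints lie in $P=\mathcal{P}(v_1)$ is counted twice: once through the term indexed by $v$, with coefficient $-\sigma_{v_1}(e)\sigma_v(f)$, and once through the term indexed by $v_f$, with coefficient $-\sigma_{v_1}(e)\sigma_{v_f}(f)$. Every free edge satisfies $f\cap\Rverts=\emptyset$, since the root $\rootv$ is incident only to the tree edge $e_0$; hence Lemma \ref{th:mesigns} applies and gives $\sigma_v(f)\sigma_{v_f}(f)=-1$, so the two contributions cancel. Consequently only free edges with exactly one endpoint in $P$ contribute, each with the coefficient $-\sigma_{v_1}(e)\sigma_v(f)$ carried by its endpoint $v\in P$; this is precisely the first displayed line of (\ref{eq:kesol3}). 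The second line is then a mere rewriting: a free edge straddling the partition $(P,P^c)$ has exactly one endpoint in $P$, so the index $v$ in $\sigma_v(f)$ is unambiguous, and the conditions $\1(v_f\notin P)$ and $\1(\exists v\in f\cap P\text{ and }f\cap P^c\ne\emptyset)$ single out the same collection of edges.

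For the ``in addition'' claim I would argue via the cut induced by $e$. Because $\oTtree$ orients every edge of $\Ttree$ toward the root $\rootv$, the set $P=\mathcal{P}(v_1)$ consists of $v_1$ together with all its descendants in $\oTtree$, and the orientation of the integrated edge $e=(v_1,v_2)$ forces $v_2$ to be the parent of $v_1$; thus $e$ is a tree edge joining $P$ to $\verts\setminus P$. Deleting $e$ from $\Ttree$ leaves two connected subtrees, spanned by $P$ and by $\verts\setminus P$, which together contain $|\verts|-2$ of the $|\verts|-1$ edges of $\Ttree$; hence $e$ is the unique edge of $\Ttree$ crossing the partition $(P,\verts\setminus P)$. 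Therefore any $f=\set{v,v'}\in\edges$ with $v\in P$, $v'\notin P$ and $f\ne e$ cannot belong to $\Tedges$, and so is free.

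The computation is short. The only delicate point is the sign bookkeeping in the cancellation step --- in particular verifying that $f\cap\Rverts=\emptyset$ for every free edge so that Lemma \ref{th:mesigns} is applicable, and checking that the two indicator conditions in (\ref{eq:kesol3}) describe the same edge set --- and I do not anticipate any further obstacle.
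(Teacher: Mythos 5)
Your proof is correct and follows essentially the same route as the paper: the displayed identity is obtained exactly as in the text, by regrouping (\ref{eq:kesol}) over free edges and cancelling the two opposite-sign contributions of any free edge with both endpoints in $P$ via Lemma \ref{th:mesigns}, whose applicability you rightly justify by noting that $\rootv$ is incident only to the integrated edge $e_0$. For the final claim you use that $e$ is the unique edge of the spanning tree crossing the cut $(P,P^{\rm c})$ determined by the subtree rooted at $v_1$, rather than the paper's case analysis on the two possible orientations of a hypothetical tree edge $f$; this is the same elementary fact about the rooted spanning tree, and your version is equally valid.
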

\begin{proof}
By (\ref{eq:kesol}) the result in (\ref{eq:kesol3})
holds without the characteristic function
$\1(v_f\not\in \mathcal{P}(v_1))$.  Consider thus
$v,v'\in \mathcal{P}(v_1)$, $v'\ne v$, such that
$f=\set{v,v'}$ is free.  Then $\sigma_{v}(f)=-\sigma_{v'}(f)$, and thus
$-\sigma_{v_1}(e) \sigma_{v}(f)-\sigma_{v_1}(e) \sigma_{v'}(f)=0$.
Now sums over edges in $\fedges (v)$ and $\fedges (v')$ appear in
(\ref{eq:kesol}), and thus the terms proportional to $k_f$ in these sums
cancel each other.  This proves (\ref{eq:kesol3}).

To prove the last statement let us assume the converse.
We suppose $f$ is not free, which implies
that $f$ has a representative in $\oTtree$.
Suppose first that it is $(v,v')$.
There is a unique oriented path from $v'$ to the
root of the corresponding oriented tree.  Since $v'\not\in P$,
the path does not contain $v_1$.  This however is not possible
because there is also an oriented path from $v$ to $v_1$ to the root
(otherwise $\Ttree$ contains a loop).

Therefore, we only need to consider
$f=(v',v)$. Then there is an oriented path from
$v$ to $v_1$, and thus also an oriented path from $v'$ to $v_1$.
This contradicts $v'\not\in P$, and thus we can conclude that $f$ must be
free. \qed 
\end{proof}

\begin{corollary}\label{th:zerok}
For any edge $e\in \edges$, there is a unique collection of free edges
$\fedges_e$, and of $\sigma_{e,f}\in \set{\pm 1}$, $f\in\fedges_e$,
such that
\begin{align} 
k_e =\sum_{f\in \fedges_e} \sigma_{e,f} k_f\, .
\end{align}
In addition, $k_e$ is independent of all free momenta if and only if
$k_e=0$.  This is equivalent to $\fedges_e=\emptyset$, which occurs if
and only if the number of connected components increases by one
when the edge $e$ is removed from $\mathcal{G}$.
\end{corollary}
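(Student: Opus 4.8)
The plan is to reduce the whole statement to Lemma \ref{th:kesol3} together with the elementary relationship between a spanning tree and the cut/cycle structure of a connected graph. First I would establish the representation of $k_e$. If $e$ is free, take $\fedges_e=\set{e}$ and $\sigma_{e,e}=1$. If $e$ is integrated, say $e=(v_1,v_2)$ in $\oTtree$, then the second form in (\ref{eq:kesol3}) reads $k_e = -\sigma_{v_1}(e)\sum_{f\in\fedges}\1(f\text{ has exactly one endpoint in }P)\,\sigma_v(f)\,k_f$ with $P=\mathcal{P}(v_1)$ and $v$ the endpoint of $f$ in $P$; so one sets $\fedges_e$ equal to the set of such ``cut--crossing'' free edges and reads off the coefficients $\sigma_{e,f}=-\sigma_{v_1}(e)\sigma_v(f)\in\set{\pm 1}$. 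Note that each free edge contributes at most one term to this sum (the characteristic function $\1(v_f\notin P)$ in the first form of (\ref{eq:kesol3}) suppresses free edges internal to $P$), so no free momentum can appear with a coefficient other than $\pm 1$. Uniqueness of the pair $(\fedges_e,\sigma_{e,f})$ is then immediate: after integrating out all vertex $\delta$-functions via Theorem \ref{th:intdeltas}, the free momenta $\set{k_f}_{f\in\fedges}$ are genuinely independent integration variables, so the expression of $k_e$ as a linear combination of them is unique.

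Next I would treat the equivalence ``$k_e$ independent of all free momenta $\iff k_e=0\iff\fedges_e=\emptyset$''. Since every constraint $\delta$-function is homogeneous and linear in the edge momenta, each $k_e$ is a homogeneous linear function of the free momenta, namely the one just exhibited. Hence $k_e$ depends on no free momentum exactly when this linear function vanishes identically, i.e.\ $k_e=0$, which by the uniqueness above is equivalent to $\fedges_e=\emptyset$.

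Finally I would identify $\fedges_e=\emptyset$ with the statement that deleting $e$ from $\mathcal{G}$ increases the number of connected components by one, i.e.\ that $e$ is a bridge of the connected graph $\mathcal{G}$. If $e$ is free then $\fedges_e=\set{e}\ne\emptyset$, and $\mathcal{G}\setminus e$ stays connected because the spanning tree $\Ttree\subset\mathcal{G}\setminus e$ survives. If $e=(v_1,v_2)$ is integrated, then $P=\mathcal{P}(v_1)$ is precisely the vertex set of the component of $\Ttree\setminus e$ containing $v_1$, so $P$ and $\verts\setminus P$ are the two sides of the cut produced by removing $e$ from $\Ttree$; within $\mathcal{G}\setminus e$ each of $P$ and $\verts\setminus P$ is internally connected by the surviving tree edges, hence $\mathcal{G}\setminus e$ is connected iff some edge of $\mathcal{G}$ other than $e$ joins $P$ to $\verts\setminus P$. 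By the last assertion of Lemma \ref{th:kesol3} every such edge is free, so $\mathcal{G}\setminus e$ is connected iff $\fedges_e\ne\emptyset$; since removing a single edge changes the number of components by at most one, this gives exactly $\fedges_e=\emptyset\iff$ the number of components increases by one.

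There is no genuinely hard step here: the argument is bookkeeping on top of Lemma \ref{th:kesol3}. The only points that require a little care are the uniqueness claim, which rests on the independence of the free integration variables furnished by Theorem \ref{th:intdeltas}, and checking that each free edge enters (\ref{eq:kesol3}) with coefficient exactly $\pm 1$ rather than, say, $\pm 2$ or $0$, which is what makes the cut--crossing set $\fedges_e$ a clean combinatorial invariant of $\mathcal{G}$.
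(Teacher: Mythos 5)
Your proposal is correct and follows essentially the same route as the paper: existence of the representation and the identification of $\fedges_e$ with the cut-crossing free edges come from Lemma \ref{th:kesol3} (including its last assertion), uniqueness rests on the independence of the free integration variables, and the bridge characterization uses the tree cut $P=\mathcal{P}(v_1)$ versus its complement exactly as in the paper's proof. The only difference is presentational: the paper phrases uniqueness via comparing two hypothetical expansions and constructs the $v_1$-to-$v_2$ paths explicitly, whereas you package the same facts as "independent variables" and "both sides of the cut stay internally connected", which is equivalent.
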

\begin{proof}
If $e\in \fedges$, we choose $\fedges_e=\set{e}$ and $\sigma_{e,e}=1$.
Otherwise, the existence part follows from the Lemma.
Suppose there are two such expansions given by $\fedges_e$,
$\sigma_{\cdot,e}$ and $\fedges'_e$, $\sigma'_{\cdot,e}$.
If $\fedges'_e\ne \fedges_e$, the difference of the expansions would
contain some free momenta with coefficients $\pm 1$, and
if $\fedges'_e= \fedges_e$ but $\sigma'_{\cdot,e}\ne \sigma_{\cdot,e}$,
some free momenta would appear in the difference with coefficients
$\pm 2$.  This proves that the expansion is unique.

Obviously, $k_e$ is a constant if and only if $\fedges_e=\emptyset$,
when $k_e=0$.
If $e$ is free, then $\fedges_e$ is not empty. Since
the spanning tree is then not affected by removal of $e$,
the number of connected components remains unchanged by the removal.
Therefore, the Corollary holds in this case.

Else $e=(v_1,v_2)$ is an integrated edge and we can apply Lemma
\ref{th:kesol3}.  Denote $P=\mathcal{P}(v_1)$, and
suppose there is a path from $v_1$ to $v_2$ which does not contain
$e$.  In this case, removing $e$ from $\graph$ does not create any new
components.
Along this path there is an edge $f=\set{v,v'}$
such that $v\in P$ but $v'\not\in P$.  Since $f\ne e$, by
Lemma \ref{th:zerok}, $f\in \fedges$.  This implies that $k_e$
depends on $k_f$, and is not uniformly zero, in accordance with the
Corollary.

Finally, assume that every path from $v_1$ to $v_2$ contains $e$.
This implies that $v_1$ and $v_2$ belong to different components if
$e$ is removed from  $\graph$.  However, then there still
must be a path from any vertex to either $v_1$ or $v_2$, and the number of
components is thus exactly two.  Suppose $f=\set{v,v'}\in \fedges_e$, and
choose $v\in P$, $v'\not \in P$.  Following the oriented paths in the opposite
direction, we obtain paths $v_1\to v$, $\rootv\to v_2$ which do not contain
$e$.  Since the oriented path $v'\to \rootv$ does not
contain $e$, we can join the three segments with $f$ into a path $v_1\to v_2$
which avoids $e$. 
This contradicts the assumption, and thus now $\fedges_e=\emptyset$.
This completes the proof of the Corollary.
\qed \end{proof}

Since removing $e_0$ from $\graph$ isolates $\rootv$,
the Corollary implies that always $k_{e_0}=0$, \itie , the sum of the top momenta
of plus and minus trees is zero.  We have already used this property in
the derivation of the amplitudes.

\begin{lemma}\label{th:nokkdiff}
Suppose $f,f'$ are the two free edges ending at
a degree two interaction vertex
$v_0\in \Iverts$.  Let $e=(v_1,v_2)\in \edges'$ be an integrated edge.
Then $k_e=F_e(k_f,k_{f'})+G_e$
where $G_e$ is independent of $k_f,k_{f'}$ and
$(k,k')\mapsto F_e(k,k')$ is one of the following seven functions:
$0$, $\pm k$, $\pm k'$, $\pm (k+k')$.

Let $v\in \Iverts$, and suppose $e,e'\in \edges_-(v)$, $e\ne e'$.
Then $k_e+k_{e'}=F(k_f,k_{f'})+G$
where $G$ is independent of $k_f,k_{f'}$ and
$(k,k')\mapsto F(k,k')$ is also one of the above seven functions.
If $v=v_0$, the choice is reduced to one of the functions
$-k$, $-k'$, and $k+k'$.
\end{lemma}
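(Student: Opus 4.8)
The plan is to read off everything from the explicit resolution of the momentum $\delta$-constraints in Lemma~\ref{th:kesol3} and Corollary~\ref{th:zerok}, combined with the Kirchhoff rule at the interaction vertex. Throughout I use the reformulation of Corollary~\ref{th:zerok} as a fundamental‑cycle decomposition: orient every edge of $\graph$ from its ``$\edges_-$-endpoint'' to its ``$\edges_+$-endpoint'' (this is well defined on edges not meeting $\rootv$ by Lemma~\ref{th:mesigns}), and note that, since $k_{e_0}=0$, each integrated $k_{\tilde e}$ equals $\sum_{g\in\fedges}\phi_g(\tilde e)\,k_g$ where $\phi_g(\tilde e)\in\set{0,\pm1}$ is nonzero exactly when $\tilde e$ lies on the fundamental cycle $C_g$ of the free edge $g$ in the spanning tree $\Ttree$, the sign being dictated by a cyclic orientation of $C_g$ (consistently with the $\sigma$-signs of Lemma~\ref{th:kesol3}). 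For the \emph{first claim}, fix an integrated $e=(v_1,v_2)$ and $P=\mathcal{P}(v_1)$. Only the free edges $f,f'$ themselves can carry $k_f$ or $k_{f'}$ into $k_e$, so $k_e=\epsilon k_f+\epsilon' k_{f'}+G_e$ with $\epsilon,\epsilon'\in\set{0,\pm1}$ and $G_e$ independent of $k_f,k_{f'}$. By Lemma~\ref{th:kesol3}, $\epsilon\ne 0$ precisely when $f=\set{v_0,w_f}$ has exactly one endpoint in $P$, in which case $\epsilon=-\sigma_{v_1}(e)\,\sigma_w(f)$ with $w$ that endpoint; since $f\in\edges_-(v_0)$ we have $\sigma_{v_0}(f)=-1$, and as $f$ does not meet $\rootv$, Lemma~\ref{th:mesigns} gives $\sigma_{w_f}(f)=+1$. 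Hence $\epsilon=\sigma_{v_1}(e)$ if $v_0\in P$ and $\epsilon=-\sigma_{v_1}(e)$ if $v_0\notin P$, and the identical alternative holds for $\epsilon'$ via $f'$; thus $\epsilon$ and $\epsilon'$ coincide whenever both are nonzero, which forces $F_e\in\set{0,\pm k_f,\pm k_{f'},\pm(k_f+k_{f'})}$.

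For the \emph{second claim} with $v=v_0$: Proposition~\ref{th:momatintv} gives $\edges_-(v_0)=\set{f,f',e^*}$ with $e^*$ integrated and $k_{e^*}=-k_f-k_{f'}+G$, so the three unordered pairs yield $k_f+k_{f'}$, $-k_{f'}+G$, $-k_f+G$, i.e.\ exactly $k+k'$, $-k'$, $-k$. For $v\in\Iverts$, $v\ne v_0$, write $\edges_-(v)=\set{e,e',e_3}$ and $\edges_+(v)=\set{e_{\rm up}}$; the Kirchhoff rule at $v$ gives $k_e+k_{e'}=k_{e_{\rm up}}-k_{e_3}$, and by the first claim each of $k_{e_{\rm up}},k_{e_3}$ is one of the seven functions plus something independent of $k_f,k_{f'}$. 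It remains to show that in $k_e+k_{e'}$ the coefficients of $k_f$ and of $k_{f'}$ lie in $\set{0,\pm1}$ and, when both are nonzero, agree. In the chosen orientation $e_{\rm up}$ points into $v$ while $e,e',e_3$ point out of $v$, so a fundamental cycle meets $v$ either in no edge, or in a pair $\set{e_{\rm up},x}$ with $x$ an out-edge (and then the two edges carry equal coefficient of $k_f$, resp.\ $k_{f'}$), or in a pair of out-edges (and then opposite coefficients). Running this case analysis for $C_f$ and then $C_{f'}$ against $k_e+k_{e'}=k_{e_{\rm up}}-k_{e_3}$ shows the coefficient of $k_f$ (resp.\ $k_{f'}$) in $k_e+k_{e'}$ is always $0$ or $\pm1$; moreover if $C_f$ and $C_{f'}$ use the \emph{same} one of $\set{e,e'}$ at $v$, the first claim already forces the two coefficients to be equal.

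The remaining point — and the one I expect to be the main obstacle — is the sign in the case where, after relabelling, $C_f$ uses $e$ and $C_{f'}$ uses $e'$ at $v$ (so the coefficient of $k_f$ in $k_e+k_{e'}$ equals $\phi_f(e)=\pm1$ and that of $k_{f'}$ equals $\phi_{f'}(e')=\pm1$, and we must see they coincide). The structural fact that makes this work is that, at $v_0$, the cycle $C_f$ must leave $v_0$ along the unique integrated down-edge $e^*$: the only tree edges at $v_0$ are among $\set{e^*,e_{\rm up}(v_0)}$, and it cannot be $e_{\rm up}(v_0)$ because $e_{\rm up}(v_0)$ is created before $f$, so $k_{e_{\rm up}(v_0)}$ is independent of $k_f$ by Theorem~\ref{th:intdeltas}, whereas $e_{\rm up}(v_0)\in C_f$ would make it depend on $k_f$. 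The same holds for $C_{f'}$, whence $e^*\in C_f\cap C_{f'}$, and the symmetric difference $C_f\triangle C_{f'}$ is the single cycle $D=\set{f,f'}\cup\pi(w_f,w_{f'})$ (here $\pi(a,b)$ denotes the path in $\Ttree$ between $a$ and $b$), which passes through $v$ using exactly $\set{e,e'}$. Orient $D$ so that it traverses $f$ with, hence $f'$ against, their orientations. With $u^*$ the child-end of $e^*$ and $c$ the tree-median of $w_f,w_{f'},u^*$, one checks $D\cap C_f=\set{f}\cup\pi(w_f,c)$ and $D\cap C_{f'}=\set{f'}\cup\pi(w_{f'},c)$, that $D$ and $C_f$ induce the same cyclic order on $D\cap C_f$ while $D$ and $C_{f'}$ induce opposite orders on $D\cap C_{f'}$. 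Reading off signs, $\phi_f(e)=+1$ iff $D$ exits $v$ along $e$, while $\phi_{f'}(e')=+1$ iff $D$ enters $v$ along $e'$, i.e.\ again iff $D$ exits $v$ along $e$; since $D$ exits $v$ along exactly one of $e,e'$, the two coefficients carry the same sign, so $k_e+k_{e'}=\pm(k_f+k_{f'})+G$.

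Assembling the pieces: the first claim is immediate from Lemmata~\ref{th:kesol3} and~\ref{th:mesigns}; the case $v=v_0$ of the second claim is a three-line enumeration using Proposition~\ref{th:momatintv}; and the case $v\ne v_0$ reduces by the Kirchhoff identity $k_e+k_{e'}=k_{e_{\rm up}}-k_{e_3}$ and the first claim to the coefficient/sign bookkeeping above, whose only delicate input is the ``$C_f$ exits $v_0$ through $e^*$'' observation. The expected difficulty is precisely in making the last orientation/sign argument rigorous; once that is in place, $F=F_{e}+F_{e'}$ is seen to be one of the seven functions, and the sharper list $\set{-k,-k',k+k'}$ for $v=v_0$ follows from the explicit enumeration.
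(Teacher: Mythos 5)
Your argument is correct, and for its main part it takes a genuinely different route from the paper. The first claim and the case $v=v_0$ are essentially the paper's own: the coefficient bookkeeping from Lemma~\ref{th:kesol3} together with $\sigma_{v_0}(f)=-1$, $\sigma_{w_f}(f)=+1$ (Lemma~\ref{th:mesigns}), and the explicit structure of Proposition~\ref{th:momatintv}. For $v\ne v_0$ the paper stays inside the oriented spanning tree and computes $F_e+F_{e'}$ directly from (\ref{eq:kesol}), splitting into the cases that $e,e'$ are free or integrated and point into or out of $v$; you instead pass to the circulation picture, in which every integrated momentum is a sum of fundamental-cycle flows, and control the pair of coefficients of $(k_f,k_{f'})$ in $k_e+k_{e'}$ through how $C_f$, $C_{f'}$ and the symmetric difference $D=C_f\triangle C_{f'}$ meet $v$. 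The two structural inputs you rely on are sound: both cycles must leave $v_0$ through the unique integrated edge of $\edges_-(v_0)$ (your creation-order argument works, and Proposition~\ref{th:momatintv} gives it immediately), hence $D$ is a single simple cycle, so in the only dangerous case it meets $v$ in exactly $\set{e,e'}$ and the two cycles share their second edge at $v$; your exit/enter sign criterion then checks out (in that case $v$ is automatically the branch point of the two tree paths, $D$ oriented along $f$ always enters $v$ through the $C_f$-edge and exits through the $C_{f'}$-edge, so both coefficients equal $-1$ and one lands on $-(k+k')$, consistent with the seven-function list). What each approach buys: the paper's computation is shorter and purely mechanical, while yours isolates the graph-theoretic mechanism that excludes $\pm(k-k')$; the price is that you assert rather than prove the identification of the paper's resolution with the cycle flows (it follows from uniqueness of the solution of the Kirchhoff constraints for given free momenta, plus the observation that a cycle through a cluster vertex traverses its two incident edges with opposite signs) and the orientation claims about $D\cap C_f$ and $D\cap C_{f'}$, which are correct but should be written out. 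Two trivial touch-ups: $e_{\rm up}$ or $e_3$ may be free rather than integrated, so the appeal to the first claim there needs the obvious free-edge remark, and the identity $k_e+k_{e'}=k_{e_{\rm up}}-k_{e_3}$ is not actually needed once you analyze the cycle coefficients of $k_e+k_{e'}$ directly.
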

\begin{proof}
There are $w,w'\in \Iverts\cup\Cverts$ such that $k_f=\set{v,w}$,
$k_{f'}=\set{v,w'}$.
Then $\sigma_w(f)=1=\sigma_{w'}(f')$ and
$\sigma_v(f)=-1=\sigma_{v}(f')$.  We express $k_e$ using (\ref{eq:kesol}),
which shows that $k_e=F_e+G_e$ where
$F_e = -\sigma_{v_1}(e)(-o_v (k_f+k_{f'})+o_{w} k_f +o_{w'} k_{f'})$, and
$o_x$ is one, if there is an oriented path from the vertex
$x$ to $v_1$, and zero
otherwise.  Checking all combinations produces the list of seven functions
stated in the Lemma.

Let us then consider the second statement.
If $v=v_0$, then either $e$ or $e'$ is the unique integrated edge in 
$\edges_-(v)$, or $e,e'$ are equal to the free momenta $f,f'$.
In the first case, we can apply Proposition \ref{th:momatintv} which shows that
either $F=-k$ or $F=-k'$ will work, and in the second case we can
choose $F=k+k'$.  We can thus assume $v\ne v_0$.
If both $e,e'$ are free, then $F=0$ works.  If only one of the edges is free,
then the previous result implies the existence of the decomposition.

Thus we can assume that both
edges are integrated, and $e=\set{v,w_1}$, $e'=\set{v,w'_1}$,
where $w_1\ne w'_1$.
Suppose first that $e$ points out
from $v$, \itie , $e=(v,w_1)$.  Then $e'$ points in, $e'=(w'_1,v)$,
and any oriented path to $w'_1$ extends into an
oriented path to $v$.  Since $\sigma_{w'_1}(e')=1$ and
$\sigma_{v}(e)=-1$, we have
$k_e+k_{e'}=F_e + F_{e'} +G_e + G_{e'}$
with $F_e + F_{e'}=-(-o_v (k_f+k_{f'})+o_{w} k_f +o_{w'} k_{f'})
-o_v (k_f+k_{f'})+o_{w} k_f +o_{w'} k_{f'}
-o'_v (k_f+k_{f'})+o'_{w} k_f +o'_{w'} k_{f'}$, where $o_x$ is one, if there
is an oriented path from $x$ to $w'_1$, and $o'_x$ is one if there
is an oriented path from $x$ to $v$ which does not go via $w'_1$.
Thus $F_e + F_{e'}=-o'_v (k_f+k_{f'})+o'_{w} k_f +o'_{w'} k_{f'}$
is also of the stated form.

In the remaining case, we can assume that both $e$ and $e'$ point into $v$:
$e=(w_1,v)$, $e'=(w'_1,v)$.  If there is an oriented path
from a vertex $x$ to $w_1$, there cannot be an oriented path from
$x$ to $w'_1$, and vice versa.  Since, in addition,
$\sigma_{w'_1}(e')=1=\sigma_{w_1}(e)$, we have
$k_e+k_{e'}=F_e + F_{e'} +G_e + G_{e'}$
with $F_e + F_{e'}=-(-o_v (k_f+k_{f'})+o_{w} k_f +o_{w'} k_{f'})$
where $o_x$ is one if there is an oriented path from $x$ to either
$w_1$ or $w'_1$, and zero otherwise.  This completes the proof of the
Lemma.
\qed \end{proof}

\begin{proposition}\label{th:pairmom}
Suppose $e, e'\in \edges$, $e\ne e'$, are such that $\fedges_e=\fedges_{e'}$.
Then either $k_e=0=k_{e'}$ or
removal of $e$ and $e'$ from $\graph$ splits it into exactly two
components.
\end{proposition}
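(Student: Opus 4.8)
The plan is to reduce the statement to the elementary cut structure of the spanning tree $\Ttree$ furnished by Theorem \ref{th:intdeltas}. The key preliminary observation is a reformulation of $\fedges_g$ for a single edge $g$: if $g$ is free then $\fedges_g=\set{g}$; if $g=(v_1,v_2)\in\Tedges$ is an integrated edge, then by the second form of Lemma \ref{th:kesol3} the set $\fedges_g$ consists exactly of the free edges that cross the cut of $\Ttree$ induced by deleting $g$, i.e.\ those $f\in\fedges$ with one endpoint in $\mathcal{P}(v_1)$ and the other in $\mathcal{P}(v_1)^c$. I would record this together with Corollary \ref{th:zerok} (which says $\fedges_g=\emptyset$ iff $k_g=0$ iff removing $g$ from $\graph$ increases the number of connected components), and then split into cases according to the common value $\fedges_e=\fedges_{e'}$.

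If $\fedges_e=\fedges_{e'}=\emptyset$, Corollary \ref{th:zerok} immediately yields $k_e=0=k_{e'}$, which is the first alternative. So assume $\fedges_e=\fedges_{e'}\ne\emptyset$. Two distinct free edges have disjoint singleton sets $\fedges$, so $e$ and $e'$ cannot both be free; hence at least one is integrated. If exactly one, say $e$, is free and $e'\in\Tedges$, then $\set{e}=\fedges_e=\fedges_{e'}$, so $e$ is the \emph{only} free edge crossing the cut that $e'$ determines in $\Ttree$. Removing $e$ and $e'$ from $\graph$ then leaves $\Ttree\setminus\set{e'}$, which has two components, together with all free edges except $e$, none of which bridges those two parts; so $\graph\setminus\set{e,e'}$ has exactly two components.

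The main case, and the one requiring the most care, is $e,e'\in\Tedges$. Deleting the two distinct tree edges from $\Ttree$ leaves exactly three pieces $B_1,B_2,B_3$, and I would first check that, after relabelling, $B_2$ can be taken as the ``middle'' piece in the sense that $e$ joins $B_1$ to $B_2$ and $e'$ joins $B_2$ to $B_3$ (this is forced: $e'$ lies on one side of the cut defined by $e$, and the endpoint of $e$ on that side belongs to exactly one of the two pieces that deleting $e'$ creates there). A short bookkeeping of the three possible cross-piece types of a free edge then shows that a free edge of type $(B_1,B_2)$ lies in $\fedges_e\setminus\fedges_{e'}$, one of type $(B_2,B_3)$ lies in $\fedges_{e'}\setminus\fedges_e$, and one of type $(B_1,B_3)$ lies in $\fedges_e\cap\fedges_{e'}$. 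Since $\fedges_e=\fedges_{e'}$, there are no free edges of the first two types, so no free edge attaches $B_2$ to the rest, while nonemptiness of $\fedges_e=\fedges_{e'}$ gives at least one free edge joining $B_1$ to $B_3$; hence in $\graph\setminus\set{e,e'}$ the set $B_2$ is one component and $B_1\cup B_3$ is the other, for a total of exactly two. The only genuinely fiddly step is this ``middle piece'' identification and the attendant three-type classification of free edges relative to the two cuts; everything else follows directly from Corollary \ref{th:zerok} and Lemma \ref{th:kesol3}, so I expect that classification to be the main obstacle.
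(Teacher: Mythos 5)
Your proof is correct, and it rests on the same ingredients as the paper's: the time-ordered spanning tree of Theorem \ref{th:intdeltas}, the representation in Lemma \ref{th:kesol3}, and Corollary \ref{th:zerok} (whose uniqueness statement is what justifies your opening identification of $\fedges_g$, for an integrated edge $g=(v_1,v_2)$, with the set of free edges crossing the cut $(\mathcal{P}(v_1),\mathcal{P}(v_1)^{\rm c})$ of $\Ttree$). The empty case and the mixed case (one edge free, one integrated) coincide with the paper's argument. Where you genuinely diverge is the case of two integrated edges: the paper works directly with $P=\mathcal{P}(v_1)$ and $P'=\mathcal{P}(v'_1)$ and splits into two subcases according to whether one oriented path to the root contains the other ($P\subset P'$) or the sets are disjoint, identifying the two surviving components separately in each subcase; you instead pass to the forest $\Ttree\setminus\set{e,e'}$, note it has exactly three pieces with a forced ``middle'' piece $B_2$ (your bracketed justification of this is right: $e'$ lies inside one component of $\Ttree\setminus\set{e}$ and splits it in two, one half containing the endpoint of $e$), and classify free edges by the pair of pieces they join. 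The hypothesis $\fedges_e=\fedges_{e'}$ then excludes the $(B_1,B_2)$ and $(B_2,B_3)$ types, nonemptiness supplies a $(B_1,B_3)$ edge, and the components are $B_2$ and $B_1\cup B_3$, uniformly. This buys a single argument in place of the paper's two subcases, which correspond exactly to the two positions of the middle piece ($B_2=P'\setminus P$ in the nested subcase, $B_2=(P\cup P')^{\rm c}$ in the disjoint one); the paper's formulation, in exchange, stays with the $\mathcal{P}$-sets and oriented paths that are reused verbatim in the neighbouring results such as Lemma \ref{th:pairmom2} and Corollary \ref{th:kkpl}.
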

\begin{proof}
If $\fedges_e=\fedges_{e'}=\emptyset$, then by Corollary \ref{th:zerok}
$k_e=0=k_{e'}$, and the first alternative holds.  If both $e$ and $e'$ are
free, then $\fedges_e=\fedges_{e'}$ implies $e'=e$, which was not allowed.
We can thus assume that $k_e,k_{e'}\ne 0$, and that at least one of $e,e'$
is integrated.
Let $\graph''$ denote the graph obtained by removing $e$ and $e'$
from $\graph$.

Suppose first that one edge is free, but the other is not.
By symmetry, we can choose $e'$ to be the free one,
and assume $e=(v_1,v_2)$.
Let $P=\mathcal{P}(v_1)$.  Since $\fedges_e=\fedges_{e'} =\set{e'}$, 
there are $v\in P$ and $v'\not\in P$ such that $e'=\set{v,v'}$.
Removal of $e'$ does not change the number of components,
since the spanning tree is not affected by it.
We thus need to show that the subsequent removal of $e$ will split the
component.
Suppose $f=\set{w,w'}$ is an edge such that $w\in P$ and $w'\not\in P$.
If $f\ne e$, $f$ is free by Lemma \ref{th:kesol3}, and thus $f\in \fedges_e$
which implies $f=e'$.
Thus $e,e'$ are the only edges with this property, and every path
from the component containing $P$ to the one containing $P^c$ must use either
$e$ or $e'$.  On the other hand, the vertices in $P$ (respectively $P^c$) can
be connected without using $e$ or $e'$, and thus $\graph''$ has
exactly two components.

Thus we can assume that neither $e$ nor $e'$ is free.
We identify $e'=(v'_1,v'_2)$ and $e=(v_1,v_2)$ in
$\oTedges$. Let $P=\mathcal{P}(v_1)$,
and $P'=\mathcal{P}(v'_1)$.
Suppose first that the oriented path from $v_1$ to root
contains $v_1'$.  Then $P\subset P'$, and
we claim that $P\cup (P')^c$ and
$P'\setminus P$ span independent connected components in $\mathcal{G}''$.
Suppose
$f=\set{w,w'}$ is free and $w\in P' \setminus P$.
If $w'\in P$, then $f\in \fedges_e\setminus \fedges_{e'}$,
but this is empty by assumption.
Similarly, if $w'\not\in P'$, then
$f\in \fedges_{e'}\setminus \fedges_{e}=\emptyset$.
Thus necessarily $w'\in  P' \setminus P$.
This implies that cutting $e$ and $e'$ isolates
$P' \setminus P$ from both $P$ and $(P')^c$. $P' \setminus P$
is not empty since it contains at least $v_1'$, and
for any
$w\in P'\setminus P$ there is an oriented path from $w$ to $v'_1$ which, by
the above results, cannot contain $e$ nor (obviously) $e'$.  Thus
$P' \setminus P$ spans
a connected component in $\mathcal{G}''$.  Both $P$ and $(P')^c$
are connected in  $\mathcal{G}''$
(note that  every vertex in $(P')^c$
has a path to the root which does not go via $v_1'$).
On the other hand, there must be a free edge connecting $P$ and $(P')^c$,
which thus is different from $e$ or $e'$,
as else $k_{e}=0$.  Therefore, $P\cup (P')^c$ spans the second, and last,
component in $\mathcal{G}''$.
We have proven the result for the case
the oriented path from
$v_1$ to root contains $v'_1$ and obviously this also proves the result in
the case if the path from $v'_1$ to root contains $v_1$.

Thus we can assume that the oriented paths from $v_1$ and $v_1'$ to root are
not contained in each other.  This implies that $P\cap P'=\emptyset$.
If there exists a path from $P$ to $(P\cup P')^c$ avoiding $e$,
there is a free edge between these sets which thus belongs to
$\fedges_{e}\setminus \fedges_{e'}=\emptyset$.
Similarly, there cannot be any path from
$P' $ to $(P\cup P')^c$ avoiding $e'$.  Since $k_e\ne 0$, there must thus be
a free edge from $P$ to $P'$.  Therefore, $P\cup P'$
and $(P\cup P')^c$ span disjoint connected components in $\mathcal{G}''$.
This completes the proof of the Proposition.
\qed \end{proof}

\begin{lemma}\label{th:pairmom2}
Suppose $e, e'\in \edges$, $e\ne e'$.  Then $\fedges_e=\fedges_{e'}$
if and only if there is $\sigma\in \set{\pm 1}$ such that
$k_e=\sigma k_{e'}$ independently of the free momenta.
\end{lemma}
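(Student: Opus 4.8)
The plan is to prove the two implications of Lemma~\ref{th:pairmom2} separately, the easy one from the uniqueness clause of Corollary~\ref{th:zerok} and the substantive one by combining Proposition~\ref{th:pairmom} with a single summation of Kirchhoff's rules over one side of a two-edge cut. For the easy direction, I would assume that $k_e=\sigma k_{e'}$ holds identically in the free momenta for some $\sigma\in\set{\pm 1}$. Writing $k_{e'}=\sum_{f\in\fedges_{e'}}\sigma_{e',f}k_f$ as in Corollary~\ref{th:zerok} gives $k_e=\sum_{f\in\fedges_{e'}}(\sigma\sigma_{e',f})k_f$, and since the free momenta are algebraically independent, the uniqueness of the expansion in Corollary~\ref{th:zerok} forces $\fedges_e=\fedges_{e'}$ (and $\sigma_{e,f}=\sigma\sigma_{e',f}$).

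For the main direction I would assume $\fedges_e=\fedges_{e'}$ and first dispose of degeneracies. If $\fedges_e=\emptyset$, then $k_e=0=k_{e'}$ by Corollary~\ref{th:zerok}, so $\sigma=1$ works; moreover, if $e=e_0$ or $e'=e_0$ then $k_{e_0}=0$ always, hence both $\fedges$-sets are empty and we are in this case. So assume $\fedges_e=\fedges_{e'}\ne\emptyset$ and $e,e'\ne e_0$. Then $k_e,k_{e'}\ne 0$, so by Corollary~\ref{th:zerok} both graphs obtained by removing a single one of $e$, $e'$ from $\graph$ are connected, while by Proposition~\ref{th:pairmom} removal of both $e$ and $e'$ splits $\graph$ into exactly two components, spanned by vertex sets $P$ and $Q$. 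Since $e$ is then a bridge of the connected graph $\graph\setminus\set{e'}$, its two endpoints lie in different components, i.e. one in $P$ and one in $Q$; the same holds for $e'$, and every remaining edge has both endpoints in a single component, so $e$ and $e'$ are precisely the edges crossing the $P$--$Q$ cut. Because $\rootv$ is incident only to $e_0\notin\set{e,e'}$, it lies in the interior of one of the components; I relabel so that $\rootv\in Q$. Then every $v\in P$ carries a constraint $\sum_{g\in\edges(v)}\sigma_v(g)k_g=0$, and no edge incident to a vertex of $P$ touches $\rootv$.

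I would then sum these constraints over all $v\in P$. An edge $g$ internal to $P$ has both endpoints $v,v'\in P$ and does not meet $\rootv$, so Lemma~\ref{th:mesigns} gives $\sigma_v(g)=-\sigma_{v'}(g)$ and its two contributions cancel; the only edges with exactly one endpoint in $P$ are $e$ and $e'$. Hence the sum collapses to $\sigma_{v_1}(e)\,k_e+\sigma_{v_1'}(e')\,k_{e'}=0$, where $v_1$ and $v_1'$ are the endpoints of $e$ and $e'$ lying in $P$. Since $\sigma_{v_1}(e),\sigma_{v_1'}(e')\in\set{\pm 1}$, this gives $k_e=\sigma k_{e'}$ with $\sigma=-\sigma_{v_1}(e)\sigma_{v_1'}(e')\in\set{\pm 1}$, holding for all values of the free momenta, which completes the argument.

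The main obstacle I anticipate is the bookkeeping in the second paragraph: one must confirm that the abstract two-component decomposition supplied by Proposition~\ref{th:pairmom} is compatible with the edge-incidence structure, namely that $e$ and $e'$ are exactly the cut edges and that $\rootv$ sits strictly inside one side, so that the cancellation via Lemma~\ref{th:mesigns} applies to every internal edge of $P$. These facts all reduce to Corollary~\ref{th:zerok} and the standing property that $e_0$ is never free, but they have to be stated carefully for the final summation to go through.
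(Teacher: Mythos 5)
Your proof is correct, and it takes a genuinely different route from the paper. The paper proves the nontrivial direction from the explicit expansion (\ref{eq:kesol3}): after disposing of the cases where $e$ or $e'$ is free, it sets $P=\mathcal{P}(v_1)$, $P'=\mathcal{P}(v'_1)$ for the oriented spanning tree and runs a three-way case analysis ($P\subset P'$, $P'\subset P$, $P\cap P'=\emptyset$), reading off the sign $\sigma=\pm\sigma_{v'_1}(e')\sigma_{v_1}(e)$ from the coefficients $\sigma_v(f)$ in the representation, with Lemma \ref{th:mesigns} used in the disjoint case. You instead invoke Corollary \ref{th:zerok} to see that neither single removal disconnects $\graph$, Proposition \ref{th:pairmom} to get the exact two-component split after removing both edges, and then obtain the relation $\sigma_{v_1}(e)k_e+\sigma_{v'_1}(e')k_{e'}=0$ in one stroke by summing the Kirchhoff constraints over the component not containing $\rootv$, with Lemma \ref{th:mesigns} cancelling every internal edge; this is legitimate because every vertex of that component is distinct from $\rootv$ (as $\edges(\rootv)=\set{e_0}$ and $e,e'\ne e_0$ in the non-degenerate case) and because, after the resolution of Theorem \ref{th:intdeltas}, each constraint at a non-root vertex holds identically in the free momenta. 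What your approach buys is uniformity and brevity: the free/integrated distinction and the paper's case analysis disappear, and the sign comes out of a single cut-space identity. What it costs is that the real combinatorial work is now carried by Proposition \ref{th:pairmom} (itself proved by a case analysis of the same flavor), so the total machinery is comparable; but as a proof of this particular lemma your argument is cleaner, and the easy direction via the uniqueness clause of Corollary \ref{th:zerok} coincides with the paper's.
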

\begin{proof}
If $k_e=\sigma k_{e'}$, then by
uniqueness of the representation in Corollary \ref{th:zerok}
$\fedges_e=\fedges_{e'}$.
If $e$ and $e'$ are both free, then
$\fedges_e=\fedges_{e'}$ implies $e'=e$ which is not allowed.
If one of them is free, say $e'$,
then $\fedges_e=\fedges_{e'}$ and (\ref{eq:kesol3})
imply $k_e= \pm k_{e'}$.

Thus we can assume that $e$ and $e'$ are not
free, and set $e=(v_1,v_2)$, $e'=(v'_1,v'_2)$.
We also denote $P=\mathcal{P}(v_1)$, $P'=\mathcal{P}(v'_1)$.
Suppose first that $P\subset P'$.  Then for any
$f\in \fedges_e$ there are $v\in P$, $v'\not\in P$
such that $f=\set{v,v'}$.  Since also
$f\in \fedges_{e'}$ then necessarily $v\in P'$, $v'\not\in P'$,
and the factor $\sigma_v(f)$ is the same in the representation
(\ref{eq:kesol3})
both of $k_e$ and of $k_{e'}$.  This implies that
$k_{e'}=\sigma_{v'_1}(e')\sigma_{v_1}(e)k_e$, in accordance with the Corollary.
By symmetry, the same results also holds if $P'\subset P$.

If the oriented path from $v_1$ to the root is contained in the oriented path
from $v'_1$ to the root, then $P\subset P'$, and if the path from $v'_1$ is
contained in the path from $v_1$, then $P'\subset P$.
Thus we can assume the converse, which clearly implies $P\cap P'=\emptyset$.
Then if $f\in \fedges_e=\fedges_{e'}$, we have $f=\set{v,v'}$ where
$v\in P$, $v'\not \in P$ and  $v\not \in P'$, $v'\in P'$.
Thus by
Lemma \ref{th:mesigns} $\sigma_v(f)=-\sigma_{v'}(f)$, and we can conclude from
(\ref{eq:kesol3}) that $k_{e'}=-\sigma_{v'_1}(e')\sigma_{v_1}(e)k_e$.  This
concludes the proof of the Lemma.
\qed \end{proof}

\begin{definition}
  Consider $V_1, V_2 \subset \Dverts$.  We say that there is a
  \defem{connection} between them  
  if there is a cluster which connects them, \itie , if there is $A\in S$ such
  that 
 $A\cap V_1\ne\emptyset$ and $A\cap V_2\ne\emptyset$.  If there is no
 connection between $V_1$ and $V_2$, we say that  
 $V_1$ is \defem{isolated} from $V_2$. 
\end{definition}
If $V_1$ is isolated from $V_2$, then obviously also $V_2$ is isolated from
$V_1$.  Clearly, $V_1$ is isolated from $V_2$ if and only if there is no path
connecting them in the graph which includes only edges intersecting
$\Cverts$. 

\begin{corollary} \label{th:kkpl}
Let $v\in \Iverts$, and suppose $e,e'\in \edges_-(v)$, $e\ne e'$.
Then $k_e+k_{e'}$ is independent of all free momenta if and only if
the initial time vertices at the bottom of the interaction trees starting from
$e$ and $e'$ are
isolated from the rest of the initial time vertices.
In this case, $k_e+k_{e'}=0$.
\end{corollary}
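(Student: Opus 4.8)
The plan is to reduce the whole statement to one linear identity that expresses $k_e+k_{e'}$ entirely in terms of the cluster-edge momenta attached to the initial time vertices lying below $e$ and $e'$, and then to read off both directions from that identity together with the cluster $\delta$-constraints.

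First I would fix notation. For a leg $e\in\edges_-(v)$ of the interaction vertex $v\in\Iverts$, let $D_e\subset\Dverts$ be the set of initial time vertices in the interaction subtree hanging below $e$ (rooted at the endpoint $v_e$ of $e$ with $\tau(v_e)<\tau(v)$), and put $D=D_e\cup D_{e'}$; since $e\ne e'$ are distinct legs of the same fusion vertex, the two subtrees are vertex-disjoint and the union is disjoint. For $w\in\Dverts$ write $k_w$ for the momentum of the unique cluster edge $\set{w,u_{A(w)}}$ incident to $w$. Iterating Kirchhoff's rule downward through the subtree below $e$ — at each fusion vertex the outgoing momentum is the sum of the three incoming ones, and at each initial time vertex $w$ the outgoing (upward) momentum equals $k_w$, because $\edges_+(w)$ is the upward edge and $\edges_-(w)$ the cluster edge — one gets $k_e=\sum_{w\in D_e}k_w$, all signs being $+1$ (this is exactly the bookkeeping of Lemma \ref{th:mesigns}). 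Hence
\begin{align}\label{eq:kkplident}
 k_e+k_{e'}=\sum_{w\in D}k_w=\sum_{A\in S}\ \sum_{w\in A\cap D}k_w\,,
\end{align}
while the Kirchhoff rule at each cluster vertex $u_A$ (for which $\edges_-(u_A)=\emptyset$) reads $\sum_{w\in A}k_w=0$.

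The direction ``$D$ isolated from $\Dverts\setminus D$ $\Rightarrow k_e+k_{e'}$ independent of the free momenta'' is then immediate: if $D$ is isolated, every $A\in S$ satisfies $A\subset D$ or $A\cap D=\emptyset$, so only clusters with $A\subset D$ contribute to \eqref{eq:kkplident}, and each such contributes $\sum_{w\in A}k_w=0$; hence $k_e+k_{e'}=0$, which is certainly independent of the free momenta and also gives the last assertion of the Corollary. For the converse I would argue by contraposition: assume $D$ is not isolated, so there is $A_0\in S$ with $w_0\in A_0\cap D$ and $w_1\in A_0\cap D^c$. The auxiliary input is that the only linear relations the vertex $\delta$-constraints of $\graph$ impose on the family $(k_w)_{w\in\Dverts}$ are the cluster relations $\sum_{w\in A}k_w=0$, $A\in S$: any assignment $(k_w)_{w\in\Dverts}$ obeying these extends to a genuine solution of all the $\delta$-constraints by propagating momenta upward through the plus and minus interaction trees, the only remaining constraint $k_{e_0}=0$ (the top edge isolates $\rootv$, cf.\ Corollary \ref{th:zerok}) being $\sum_{w\in\Dverts}k_w=0$, already a consequence of the cluster relations. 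Granting this, fix any $\chi\ne0$ and take the assignment $k_{w_0}=\chi$, $k_{w_1}=-\chi$, $k_w=0$ otherwise; it obeys every cluster relation (the $A_0$ contribution is $\chi-\chi=0$), hence lies on the constraint variety, and there \eqref{eq:kkplident} gives $k_e+k_{e'}=\chi\ne0$, whereas the all-zero assignment is also on the variety and gives $k_e+k_{e'}=0$. Thus $k_e+k_{e'}$ is not constant on the constraint variety, i.e.\ it depends on the free momenta, which is the contrapositive.

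The only genuinely delicate step is the auxiliary fact used in the converse: one must verify that propagating momenta upward through the two trees really yields a solution of \emph{every} vertex $\delta$-constraint, in particular the one at $v_{N+1}$ tying together the tops of the plus and minus trees, and that this produces no further relation among the $(k_w)_{w\in\Dverts}$. Everything else is the routine Kirchhoff bookkeeping already assembled in Section \ref{sec:momdeltas}.
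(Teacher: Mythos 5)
Your proof is correct, but it follows a genuinely different route from the paper's. The paper stays inside the spanning-tree formalism: independence of $k_e+k_{e'}$ forces $k_{e'}=-k_e$ by the homogeneity and uniqueness of the expansion in Corollary \ref{th:zerok}, hence $\fedges_e=\fedges_{e'}$, and Proposition \ref{th:pairmom} then gives the dichotomy ``both momenta vanish or removing $e,e'$ splits the graph into exactly two components'', which is converted into the isolation statement by connectivity arguments; the converse is handled by observing that with isolated clustering either both edges carry zero momentum, or one of them is free, the other integrated, and they cancel. You instead work directly on the constraint variety: Kirchhoff's rules give the identity $k_e+k_{e'}=\sum_{w\in D_e\cup D_{e'}}k_w$ in terms of the cluster-edge momenta at initial time, so isolation makes the sum telescope over full clusters and vanish (giving $k_e+k_{e'}=0$ as a byproduct), while failure of isolation lets you exhibit an explicit cluster-compatible assignment (a $\pm\chi$ dipole across a violating cluster) on which $k_e+k_{e'}=\chi\neq 0$, using that every assignment of initial momenta obeying the cluster relations extends to a solution of all vertex constraints. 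That extension fact is the one point you flag, and it does hold: within each interaction tree the constraints simply define the momenta bottom-up, the cluster constraints are exactly what you imposed, and the $\delta$-function at $v_{N+1}$ only fixes the auxiliary variable $k_{e_0}$ (your aside that the remaining constraint is $k_{e_0}=0$ is slightly off, since $k_{e_0}=0$ is a derived consequence rather than a constraint, but this is harmless because $\sum_{w\in\Dverts}k_w=0$ indeed follows from the cluster relations). Your argument is more self-contained and makes the mechanism transparent; the paper's argument buys economy by reusing Corollary \ref{th:zerok} and Proposition \ref{th:pairmom}, which are needed elsewhere, and keeps the statement in the connectivity language used throughout Section \ref{sec:momdeltas}.
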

\begin{proof}
Suppose $k_e+k_{e'}$ is independent of all free momenta.
By Corollary \ref{th:zerok} this is possible only if, in fact,
$k_{e'}=-k_e$. In particular, then $\fedges_e=\fedges_{e'}$ and by Proposition
\ref{th:pairmom} either $k_e=0=k_{e'}$,
or removing $e,e'$ splits a connected component.

Denote the set of initial time vertices
at the bottom of the interaction subtree starting from $e$ ($e'$) by  $D_e$
($D_{e'}$). If $k_e=0$, then
$k_e=0=k_{e'}$ which implies that $D_e$ and $D_{e'}$ are separately
isolated from the rest of the initial time vertices, and the theorem holds.
Otherwise,
we can assume that removing $e,e'$ splits the graph into two components.
Thus there can be no connection from $D_e\cup D_{e'}$ to its complement in
$\Dverts$. This proves the ``only if'' part of the theorem.

For the converse, suppose $D_e\cup D_{e'}$ is isolated from the rest of the
initial time vertices.  If there is no connection between $D_e$ and $D_{e'}$
then any path from $D_e$ to the root
must go via $e$, which implies that $k_e=0$.
Similarly, then $k_{e'}=0$, and thus also $k_e+k_{e'}=0$.
If there is a connection between $D_e$ and $D_{e'}$, then the larger of the
edges $e$, $e'$ is integrated, the other is free, and they sum to zero.
This completes the proof of the theorem.
\qed \end{proof}

\begin{corollary}\label{th:zeroisirr}
If the momentum graph has an edge $e\ne e_0$ such that
$k_e=0$ identically, $S$ contains a cluster with odd number of elements.
\end{corollary}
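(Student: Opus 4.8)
The plan is to deduce the claim from the cut‑edge characterisation in Corollary~\ref{th:zerok}. Suppose $e\neq e_0$ and $k_e=0$ identically; then Corollary~\ref{th:zerok} says that removing $e$ increases the number of connected components of $\graph$ by one, so $\graph\setminus\set{e}$ has exactly two components, and since the only edge incident to $\rootv$ is $e_0\neq e$, the vertex $\rootv$ lies in one of them, call it $\mathcal{K}$, while the other, $\mathcal{H}$, is non‑empty. I would then split according to the type of the edge $e$. If $e=\set{v,u_A}$ is a cluster edge, with $v\in\Dverts$ and $u_A\in\Cverts$, then the other edge at $v$ (its interaction‑tree edge) joins $v$ to an interaction vertex, hence to $\rootv$ through the interaction tree, so $v\in\mathcal{K}$; if moreover $|A|\ge 2$, then $u_A$ is joined to some other member of $A$ and thus also to $\rootv$, so that $\graph\setminus\set{e}$ would be connected, contradicting Corollary~\ref{th:zerok}. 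Hence $|A|=1$, an odd cluster, and this case is done.

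It remains to treat the case in which $e$ is an edge of the interaction tree, $e\neq e_0$. First I would record the elementary counting fact: deleting $e$ detaches the subtree $T$ hanging below $e$; this subtree does not contain the top vertex $v_{N+1}$ (which sits between $\rootv$ and both trees), so each non‑leaf vertex of $T$ is an interaction vertex carrying exactly three edges towards earlier time slices, while initial‑time vertices are leaves. A leaf count in a tree all of whose internal vertices have out‑degree three then gives that $T$ has exactly $2m+1$ initial‑time vertices if it has $m$ interaction vertices; write $D_e:=\Dverts\cap T$, so $|D_e|=2m+1$ is odd (the degenerate case $m=0$ giving $D_e$ a single vertex, and $e=e_1$ or $e=e_2$ giving $T$ an entire plus‑ or minus‑tree). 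The substantive step is to show that $\Dverts\cap\mathcal{H}=D_e$ and that every cluster $A\in S$ meeting $D_e$ is contained in $D_e$. For the first point, every initial‑time or interaction vertex lying in the complementary interaction subtree is connected to $\rootv$ without using $e$, hence lies in $\mathcal{K}$; so all initial‑time and interaction vertices of $\mathcal{H}$ belong to $T$, and since $T$ is connected it lies wholly in $\mathcal{H}$ (it cannot lie in $\mathcal{K}$, for then $\mathcal{H}$ would consist of cluster vertices only, impossible since each cluster vertex is adjacent only to initial‑time vertices, which would then lie outside $\mathcal{H}$). For the second point, if some cluster $A$ met both $D_e$ and $\Dverts\setminus D_e$, then $u_A$ would be adjacent to a vertex of $\mathcal{H}$ and to a vertex of $\mathcal{K}$, giving a path joining the two components of $\graph\setminus\set{e}$, a contradiction. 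Therefore $D_e$ is a disjoint union of clusters of $S$, and so $2m+1=|D_e|=\sum_{A\subseteq D_e}|A|$; the left‑hand side being odd, at least one $|A|$ is odd, which is the assertion.

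The only delicate points in carrying this out are the two claims just made---that $T\subseteq\mathcal{H}$ and that no cluster straddles $e$---both of which rest on the fact that $e$ is the unique edge whose removal disconnects $\graph$; once this is invoked, the remainder is bookkeeping about the graph constructed in Section~\ref{sec:momdeltas}, in particular the three‑edges‑down structure of the interaction vertices. No analytic input is needed here: the statement is purely combinatorial and follows directly from Corollary~\ref{th:zerok} together with that structure.
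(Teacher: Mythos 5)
Your proof is correct and takes essentially the same route as the paper's: both split into the cluster-edge and interaction-tree-edge cases, rest on the cut-edge characterization of Corollary~\ref{th:zerok}, and conclude from the odd number $2m+1$ of initial-time vertices below $e$ together with the impossibility of a straddling cluster. The paper merely compresses your second case by citing the isolation argument from the proof of Corollary~\ref{th:kkpl} rather than spelling out the component count as you do.
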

\begin{proof}
Suppose there is an edge $e$ such that $k_e=0$ identically.
If $e\in\edges_-(v)$ for some fusion vertex $v$, then the argument used in
the proof of Corollary \ref{th:kkpl} shows that the subtree spanned by $e$
must have isolated clustering.  This implies that the size of at least 
one of the clusters is odd. 
Since $e\ne e_0$, we can then assume that $e$ contains a cluster vertex.
However, since every cluster
has a size of at least two, removal of one such edge cannot split the
graph.  This contradicts $k_e=0$.
\qed \end{proof}

The following theorem proves that the number of free momenta is independent of
the choice of the spanning tree.  It is a standard result and included here
mainly for the sake of completeness.
\begin{proposition}\label{th:loopsinv}
Let $\Ttree_1=(\verts,\edges_1)$ and $\Ttree_2=(\verts,\edges_2)$ be spanning
forests of a graph $\mathcal{G}=(\verts,\edges)$.  Then
$|\edges_2|=|\edges_1|$.
\end{proposition}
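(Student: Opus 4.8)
The plan is to reduce the statement to the classical edge-count of a forest, so that $|\edges_1|$ and $|\edges_2|$ are both seen to equal a quantity depending only on $\mathcal{G}$. First I would pin down the notion of spanning forest used here: a subgraph $\Ttree_i=(\verts,\edges_i)$ on the full vertex set $\verts$ which is acyclic and maximal with this property; equivalently, on each connected component $C$ of $\mathcal{G}$ it restricts to a spanning tree of $C$, and in particular $\Ttree_i$ has exactly the same partition of $\verts$ into connected components as $\mathcal{G}$. (The iterative construction of $\Ttree$ in the proof of Theorem \ref{th:intdeltas} furnishes exactly such an object, which is a genuine spanning tree when $\mathcal{G}$ is connected.) Let $c$ be the number of connected components of $\mathcal{G}$, write them as $C_1,\dots,C_c$, and set $n_j=|\verts\cap C_j|$, so $\sum_{j=1}^c n_j=|\verts|$.

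The only lemma I would need is the elementary fact that a finite tree with $m\ge 1$ vertices has exactly $m-1$ edges. I would prove this by induction on $m$: for $m=1$ there are no edges, and for $m\ge 2$ a finite tree has a leaf (take an endpoint of a longest path; it must have degree one, else the path could be extended or a cycle would appear), and deleting that leaf together with its unique incident edge produces a tree on $m-1$ vertices, to which the induction hypothesis applies. Applying this to the spanning tree that $\edges_i$ induces on each $C_j$, the number of edges of $\edges_i$ lying inside $C_j$ is $n_j-1$; since every edge of $\mathcal{G}$ has both endpoints in the same component, these contributions are disjoint and exhaust $\edges_i$, whence $|\edges_i|=\sum_{j=1}^c(n_j-1)=|\verts|-c$.

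The conclusion is then immediate: the right-hand side $|\verts|-c$ involves only $\mathcal{G}$ and not the choice of spanning forest, so $|\edges_1|=|\verts|-c=|\edges_2|$. I do not expect any real obstacle here; the statement is standard. The only points deserving a sentence each are the existence of a leaf (used in the induction) and the remark that a maximal acyclic subgraph on $\verts$ has the same components as $\mathcal{G}$, so that the per-component bookkeeping is legitimate.
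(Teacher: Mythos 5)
Your proof is correct, but it follows a genuinely different route from the paper's. You compute the invariant explicitly: interpreting a spanning forest as a maximal acyclic subgraph on $\verts$ (so that it restricts to a spanning tree on each connected component of $\mathcal{G}$ — the one-sentence maximality remark you flag is indeed needed, since otherwise a forest could have more components than $\mathcal{G}$ and the count would fail), you reduce to the classical fact that a tree on $m$ vertices has $m-1$ edges and conclude $|\edges_1|=|\verts|-c=|\edges_2|$, where $c$ is the number of components of $\mathcal{G}$. The paper instead never counts vertices or components: it argues by induction on $|\edges_2\setminus\edges_1|$, picking $f_0\in\edges_2\setminus\edges_1$, locating an edge $f'\in\edges_1\setminus\edges_2$ on the unique cycle that $f_0$ creates in $\Ttree_1$ (chosen so that $f_0$ lies on the cycle $f'$ creates in $\Ttree_2$), and exchanging $f_0$ for $f'$ to produce a new spanning forest $\Ttree_3$ with the same cardinality as $\Ttree_2$ but smaller symmetric difference with $\Ttree_1$ — in effect the matroid basis-exchange argument. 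Your approach is more elementary and has the added benefit of identifying the common value $|\verts|-c$; the paper's exchange argument works directly from the "adding any omitted edge creates a unique loop" property that its spanning-tree construction provides, and so proves invariance without ever invoking the component-by-component decomposition. Both are complete proofs of the Proposition.
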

\begin{proof}
We make the proof by
induction in $|\edges_2\setminus \edges_1|$.  If this number is zero, then
$\edges_2\subset\edges_1$, and as $\edges_1$ cannot contain any loops,
we have $\edges_2=\edges_1$, and the theorem holds.

Make the induction assumption that the theorem holds up to $N\ge 0$.
Consider $\edges_2$ such that $|\edges_2\setminus \edges_1|=N+1$.
Then there is $f_0\in \edges_2\setminus \edges_1$.
Adding $f_0$ to $\Ttree_1$ creates a unique loop. Let
$f'_i$, $i=1,\ldots,n$,
count the momenta along this loop which do not belong to $\edges_2$,
\itie , which belong to $\edges_1\setminus \edges_2$. Then $n\ge 1$,
as otherwise $\Ttree_2$ would contain a loop.  On the other hand,
adding one of $f'_i$ to $\Ttree_2$ also creates a unique loop.
If none of these new loops contains $f_0$, then $\edges_2$ has a loop:
if $f_0=\set{a,b}$, one can start from $a$, follow the first loop,
and go around each $f'_i$ along the new loops, arriving finally to $b$.
Thus we can assume that $f'\in \edges_1\setminus \edges_2$
is such that it belongs to the loop created by $f_0$ in $\Ttree_1$,
and $f_0$ belongs to the loop created by $f'$ in $\Ttree_2$.

We then set $\edges_3=(\edges_2\cup \set{f'})\setminus \set{f_0}$,
and consider $\Ttree_3=(\verts,\edges_3)$.  Since the removal of $f_0$ cuts
the unique loop generated by $f'$ in $\edges_2$, $\Ttree_3$ has no loops,
\itie , it is also a forest.  If $g\in \edges_3^c$, then $g\ne f'$ and
either $g\in \edges_2^c$ or $g=f_0$.  Adding $f_0$ to $\Ttree_3$ creates a
loop by construction.  Else $g\in \edges_2^c$, and adding it to $\Ttree_2$
creates a unique loop.  If $f_0$ is not along this loop, it is composed of
edges in $\edges_3$, and adding $g$ to $\Ttree_3$ creates a loop.
If $f_0$ is along this loop, we can avoid it by using the loop created
by the addition of $f'$, and construct a loop out of edges in
$\set{g}\cup\edges_3$.
Thus $\Ttree_3$ is a spanning forest.  Since
$|\edges_3\setminus\edges_1|=N$, we can apply the induction assumption to it,
which shows that
$|\edges_2|=|\edges_3|=|\edges_1|$. This
completes the induction step.
\qed \end{proof}

\begin{proposition}\label{th:noffreemom}
A momentum graph has exactly $2 N+2-|S|$ free momenta, where $N=n+n'$.
\end{proposition}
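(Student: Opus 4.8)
The plan is to prove the statement by a direct count of vertices and edges of the momentum graph $\graph = \graph(S,J,n,\ell,n',\ell')$, combined with the cyclomatic identity for a connected graph. By Theorem \ref{th:intdeltas} the integrated edges $\edges' = \Tedges$ form a spanning tree $\Ttree$ of $\graph$, and by construction $\fedges = \edges \setminus \edges'$; moreover $\graph$ is connected, as noted in the proof of Theorem \ref{th:intdeltas}, so $|\Tedges| = |\verts| - 1$. Hence it suffices to compute $|\verts|$ and $|\edges|$ from the iterative construction of the momentum graph in Section \ref{sec:momdeltas} and then set $|\fedges| = |\edges| - |\verts| + 1$.

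First I would count the vertices. The vertex set is the disjoint union $\verts = \Rverts \cup \Fverts \cup \Dverts \cup \Cverts$, with $|\Rverts| = 1$, $|\Fverts| = N+1$, and $|\Cverts| = |S|$. The construction produces $N+2$ labeled vertices (namely $\Rverts \cup \Fverts$) and $2N+2$ unlabeled ones (namely $\Dverts$). The value $|\Dverts| = 2N+2$ is the only point requiring a moment of care, and it follows either by tracking the unlabeled vertices through the attachment of $e_0$, of $e_1,e_2$, and of the $N$ triplets of new edges, or simply by recalling that $\Dverts$ is in bijection with the initial-time fields, of which there are $m_0 + 2n = 2n+1$ for a main-term graph (where $n'=0$, $N=n$) and twice that for a squared-amplitude graph (where $n'=n$, $N=2n$), giving $2N+2$ in both cases. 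Thus $|\verts| = (N+2) + (2N+2) + |S| = 3N+4+|S|$.

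Next I would count the edges. The iteration adds $e_0$, then the two top edges $e_1, e_2$ attached to $v_{N+1}$, then a triplet of edges below each of the $N$ interaction vertices of $\Iverts$, and finally one edge joining each of the $2N+2$ vertices of $\Dverts$ to its cluster vertex in $\Cverts$. This gives $|\edges| = 1 + 2 + 3N + (2N+2) = 5N+5$. Combining the two counts, $|\fedges| = |\edges| - |\verts| + 1 = (5N+5) - (3N+4+|S|) + 1 = 2N+2-|S|$, which is the claim. I expect no real obstacle: the argument is essentially bookkeeping, the only slightly delicate point being the count $|\Dverts| = 2N+2$, and if one prefers, the appeal to Theorem \ref{th:intdeltas} can be replaced by Proposition \ref{th:loopsinv}, which shows directly that $|\fedges|$ is independent of the chosen spanning forest.
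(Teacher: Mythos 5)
Your proof is correct, but it takes a genuinely different route from the paper's. You compute the cyclomatic number directly: since the integrated edges form a spanning tree of the connected graph $\graph$, the number of free edges is $|\edges|-|\verts|+1$, and you obtain $|\verts|=3N+4+|S|$ and $|\edges|=5N+5$ by bookkeeping of the iterative construction. The paper instead exhibits a \emph{second} spanning tree, built in the opposite order (all interaction-tree edges first, then one edge per cluster), observes that its complement consists of exactly $\sum_{A\in S}(|A|-1)=2N+2-|S|$ edges attached to cluster vertices, and transfers the count to the time-ordered spanning tree via Proposition \ref{th:loopsinv}. Your approach buys a self-contained Euler-formula count that needs no auxiliary tree (though it does use connectedness of $\graph$, which you correctly take from the proof of Theorem \ref{th:intdeltas}); the paper's approach buys a concrete identification of where the free edges sit (all at cluster vertices) and leans only on the invariance result it had already established. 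Both rest on the same input $|\Dverts|=2N+2$, which the paper states explicitly. One small blemish: your secondary justification of $|\Dverts|=2N+2$ via ``bijection with the initial-time fields'' gives $2n+1$ for a main-term graph, which is off by one --- you omit the extra field $\FT{\psi}_0(k',-1)$ carried by the $(0,0)$ index, i.e.\ the single leaf of the trivial minus tree, which brings the count to $2n+2=2N+2$. Your primary count, tracking the unlabeled vertices through the attachment of $e_0$, of $e_1,e_2$, and of the $N$ triplets, is correct, so the proof stands as written.
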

\begin{proof}
We construct a second spanning tree by first going from top to bottom, then
adding
the edges containing the cluster vertices, going from left to
right (this is exactly the opposite order in which the spanning tree was
constructed before).
Clearly, the spanning tree then contains all edges in the interaction tree,
and exactly one edge per cluster in $S$ (the first edge connects
the cluster vertex to the tree, but every further edge would create a loop).
Thus there are altogether
$\sum_{A\in S} (|A|-1)=2 N+2-|S|$ free edges attached to the cluster vertices.
By Proposition \ref{th:loopsinv}, the number of free momenta
is independent of the choice of the spanning tree, and thus the result holds
also for the first spanning tree.
\qed \end{proof}

\section{Expansion parameters and classification of graphs}
\label{sec:classification}

\begin{definition}[Expansion parameters]\label{th:defkappaetc}
Let $\delta$ be a constant for which the
dispersion relation $\omega$
satisfies the dispersion bound (DR3), and
$\gamma$ be a constant for which the dispersion relation
satisfies the crossing bounds in (DR4).  We define
\begin{align}\label{eq:defgammap}
b= \frac{3}{4}, \quad \gamma' = \min (\frac{1}{4},2 \gamma,2\delta),\quad
a_0  = \frac{\gamma'}{24},\quad\text{and}\quad
b_0 =16 \Bigl( 3 + \frac{1}{a_0}\Bigr).
\end{align}
For any $\lambda>0$ let us then define
\begin{align}\label{eq:chooseN0}
\vep=\lambda^2\quad\text{and}\quad
  N_0(\lambda)
  = \max\Bigl(1,\Bigl\lfloor\,\frac{a_0 \, |\ln \lambda|}{\ln
    \sabs{\ln \lambda}}\,
  \Bigr\rfloor\Bigr)
\, ,
\end{align}
where $\lfloor x\rfloor$ denotes the integer part of $x\ge 0$.
Let also, with $N_0=N_0(\lambda)$,
\begin{align}\label{eq:defkappa}
\kappa'(\lambda)=\lambda^2 N_0^{b_0}\qand
\kappa_n(\lambda) = \begin{cases}
0,& 0\le n< N_0/2\, ,\\
\kappa'(\lambda), & N_0/2 \le n\le N_0\, .
\end{cases}
\end{align}
\end{definition}
The definition of $b$, associated with the removal of the singular manifold,
coincides with the one given earlier in Section \ref{sec:graphs}.

With this choice of parameters, 
we have $N_0\to \infty$, $\max_n\kappa_n\to 0$ for $\lambda\to 0^+$, and
\begin{align}\label{eq:N0limit1}
\frac{N_0(\lambda) \ln \sabs{\ln\lambda}}{|\ln \lambda|} \to a_0\qand
\frac{N_0(\lambda) \ln N_0(\lambda)}{|\ln \lambda|} \to a_0  \, .
\end{align}
If $c,c'>0$, $n_1,n_2,n_3\in \N_+$, and $p_1,p_2\in \R$
are some fixed given constants, then using $n!\le n^n$ we easily find that
\begin{align}\label{eq:N0limit2}
  c^{N_0}  \lambda^{p_1} N_0^{p_2 N_0+c'}
 ((n_1 N_0)!)^{n_2} \sabs{\ln \lambda}^{n_3 N_0+c'} \to 0\, ,
\end{align}
as soon as the inequality $p_1- a_0( p_2+ n_1 n_2 +n_3)>0$ is satisfied.
The decay is then actually powerlaw in $\lambda$, with the supremum of the
power determined by the above difference.
For instance, with our choices of $a_0,b_0$,
we have up to a powerlaw $\lambda^2$ decay in
\begin{align}\label{eq:N0lima}
  c^{N_0}  \lambda^{-2} N_0^{-b_0\frac{1}{4} N_0+4 N_0+c'}
 (4 N_0)! \sabs{\ln \lambda}^{4 N_0+c'} \to 0\, ,
\end{align}
and up to a powerlaw $\lambda^{\gamma' \! /2}$ decay in
\begin{align}\label{eq:N0limb}
  c^{N_0}  \lambda^{\gamma'} N_0^{4 N_0+c'}
 (4 N_0)! \sabs{\ln \lambda}^{4 N_0+c'} \to 0\, .
\end{align}

Consider a generic momentum graph, defined using parameters
$(S,J,n,\ell,n',\ell')$.
We integrate out all the momentum constraints using the spanning
tree which respects the time-ordering, as explained in the previous section.
We recall also the definition of a degree of an interaction vertex (we stress
here that this concept is not a graph invariant, and thus depends on the way
we have constructed the spanning tree).
By Proposition \ref{th:momatintv}, the degree counts the number of free
momenta ending at the vertex, and it belongs to $\set{0,1,2}$.
The following terminology will be used from now on:
\begin{definition}
Consider a time slice $i\in I_{0,n+n'}$ in a momentum graph.
If it has exactly zero length, it is called \defem{amputated}.
If it ends in an interaction vertex of degree $1$ or $2$ it is called
{\em short\/}.  Otherwise, it is called {\em long\/}.
\end{definition}
By this definition, the time slice $i=n+n'$ is always long.

The graph is called
\begin{description}
 \setlength{\itemsep}{0pt}
 \setlength{\parsep}{0pt}  %
\item[{\em irrelevant,}\/] if the amplitude corresponding to the graph is
  identically zero. Otherwise it is {\em relevant.}
\item[{\em pairing,}\/] if for every $A\in S$ we have $|A|=2$.
Otherwise it is {\em non-pairing.}
\item[{\em higher order,}\/] if it is a relevant non-pairing graph.
\item[{\em fully paired,}\/] if it is a pairing graph
and has no interaction vertices of degree one.
A pairing graph which is not fully paired is called {\em partially paired.}
\end{description}
Clearly, if there is $A\in S$ such that $|A|$ is odd, the graph is
irrelevant.

\begin{figure}
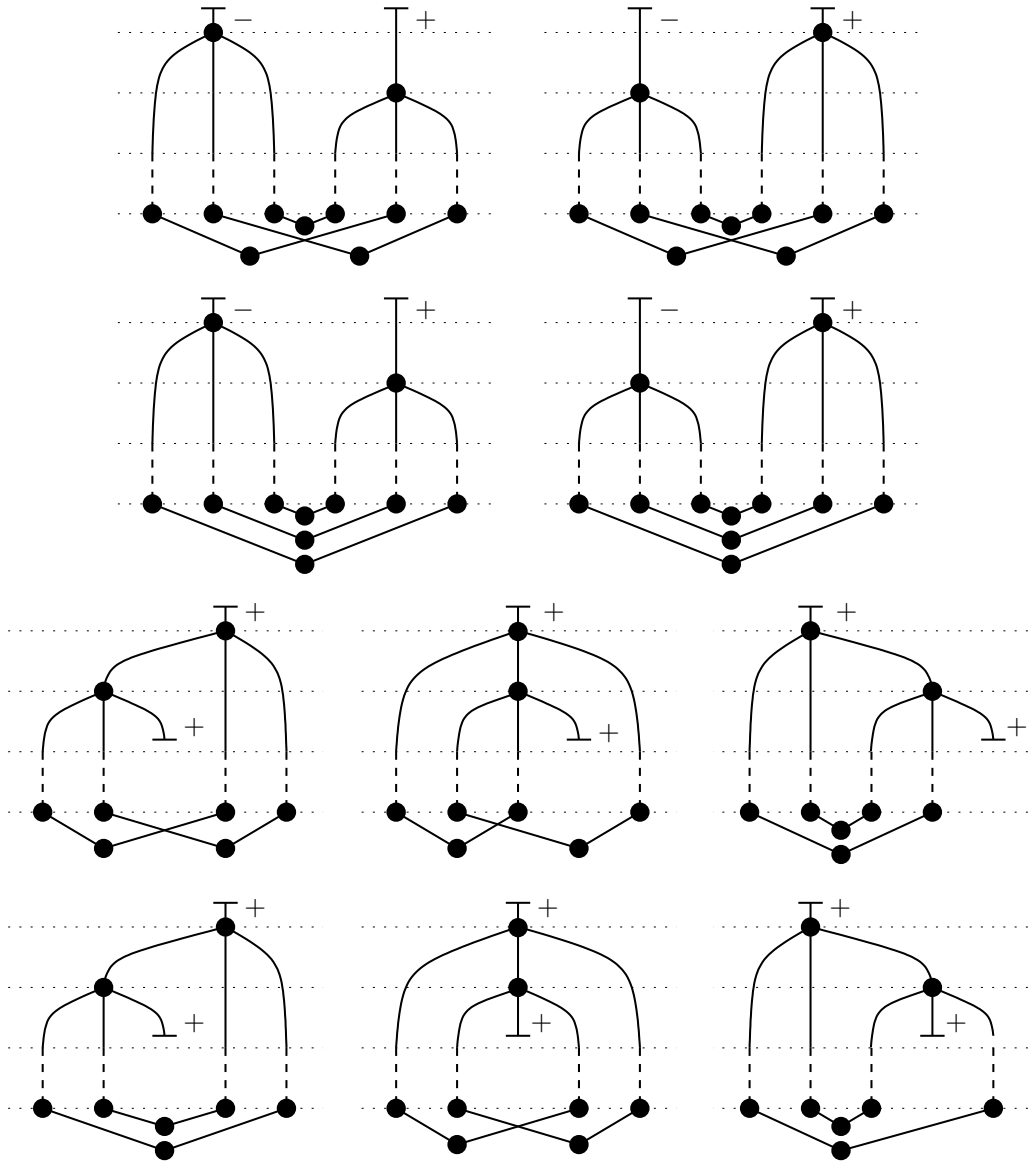

  \centering
  \medskip
  \myfigure{width=0.9\textwidth}{Leadmotiv}
  \medskip
\caption{Half of the leading motives: the first four depict ``gain motives'',
and the lower six ``loss motives''.  The ``truncated'' lines denote the places 
where the motive is attached to a graph with the appropriate parities shown
next to the line.
The dashed lines will always extend to the initial time vertices, \itie , these
parts of the edges will stretch over several time slices when more vertices
are added below the motive.
The remaining leading motives can be obtained from the above by inverting the
parities of all edges, and then inverting the order of the edges below
the interaction vertices.\label{fig:leadmot}}
\end{figure}

\begin{figure}
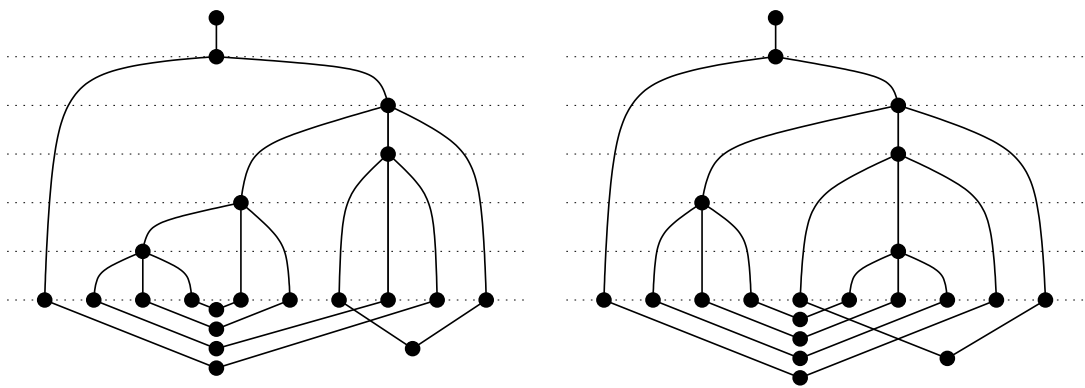

  \centering
  \myfigure{width=0.95\textwidth}{LeadIter}
  \medskip
\caption{Two examples of leading graphs, from the second level of iteration of
  leading motives. Both are obtained starting from the left graph in
  Fig.~\ref{fig:maingraph}.  In the first graph we have added the inverse of
  the third loss motive to the second line (the line has negative parity).  In
  the second graph we have split the corresponding pairing using the third
  gain motive.\label{fig:leaditer}} 
\end{figure}

Every fully  paired graph has thus
interaction vertices only of degree $0$ and $2$.
Such graphs can be obtained for instance by iteration of
the graph ``motives'' depicted in Fig.~\ref{fig:leadmot}.
These motives are called {\em leading motives\/}
or {\em immediate recollisions\/}.  The latter name comes from the fact that
the motive does not change the ``incoming'' momentum.
The motives can be iteratively attached to a graph in two ways: the gain
motives can replace any pairing cluster
(the order of parities of the cluster determines which four of the total eight
gain motives can be used: the first four in Fig.~\ref{fig:leadmot} are used
for replacing $(-1,1)$-pairings),
and the loss motives can be attached to any line with the correct parity
(the bottom six in Fig.~\ref{fig:leadmot} can be attached to a line of
parity $1$).
Any graph which is obtained by such an iteration starting from the simple
graph corresponding to $n=0=n'$ (a single loop) is called a leading term
graph. 

A straightforward induction shows that a leading term graph is fully paired.  
Furthermore,
the fully paired graphs are classified into three categories,
depending on properties of the phase factors of long time slices:
a fully paired graph is called
\begin{description}
 \setlength{\parsep}{0pt}  
 \setlength{\itemsep}{0pt}
\item[{\em leading,}\/] if it is formed by iteration of leading motives.
\item[{\em crossing or nested,}\/] otherwise.
\end{description}
The precise classifications require technical definitions, to be given in
Section \ref{sec:fullypaired}.  We only mention here that
in a nested graph
the first short time slices (at the bottom of the graph)
consist of leading motives ``nested'' inside another leading motive.
(This explains the name, already used in \cite{erdyau99} for a similar
construction, although
it would be more precise to call our nested graphs as ones which
{\em begin\/} with a nest.)

\subsection{Iterative cluster scheme}\label{sec:iterclsuters}

An important estimate of the magnitude of the amplitude associated with a
momentum graph,
is the total number of interaction vertices of the various degrees.
We denote by $n_i$ the number of interaction vertices of degree $i$.
Then for instance $n_0+n_1+n_2=n+n'$, and the following Lemma captures other
basic relations between these numbers for relevant graphs.
\begin{lemma}\label{th:ivdegrees}
Consider a relevant graph, and let $r=N+1-|S|$, $N=n+n'$.  Then $\Nnp\le r \le
N$, where
$\Nnp:=|\defset{A\in S}{|A|>2}|$ is the number of clusters which are not
pairs,  and $r=0$ if and only
if $S$ is a pairing.  In addition,
$n_2-n_0=r$, $n_0=\frac{1}{2}(N-r-n_1)$,  and $0\le n_1\le N$,
$0\le n_0\le \lfloor\frac{N-r}{2}\rfloor$.
\end{lemma}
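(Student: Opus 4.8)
The plan is to combine gauge invariance of the initial measure (which fixes the parity constraint inside every cluster, hence forces each cluster to have even size) with two independent counts of the free edges of the momentum graph, one via Proposition~\ref{th:noffreemom} and one via the degrees of the fusion vertices.

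Since the graph is relevant, every cluster factor $C_{|A|}$ is nonzero, so by gauge invariance $\sum_{i\in A}\sigma_i=0$ for each $A\in S$; in particular $|A|$ is even, hence $|A|\ge 2$. As $S$ partitions $\Dverts$ and $|\Dverts|=2N+2$, this gives $2|S|\le\sum_{A\in S}|A|=2N+2$, so $|S|\le N+1$ and $r=N+1-|S|\ge 0$; and $|S|\ge 1$ gives $r\le N$. Equality $|S|=N+1$ holds exactly when every $|A|=2$, which is the case $r=0$, exactly when $S$ is a pairing. For the bound on $\Nnp$, a non-pair cluster has even size $\ge 4$, so
\begin{align}
2N+2=\sum_{A\in S}|A|\ge 2(|S|-\Nnp)+4\Nnp=2|S|+2\Nnp,
\end{align}
whence $\Nnp\le N+1-|S|=r$.

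For the identity $n_2-n_0=r$ I count the free edges twice. Proposition~\ref{th:noffreemom} gives $2N+2-|S|=N+1+r$ free edges. On the other hand, by Theorem~\ref{th:intdeltas} every free edge $f$ has the form $f=\{v,v'\}$ with $v\in\Fverts$ and $\tau(v)>\tau(v')$, so $f\in\edges_-(v)$; moreover $f$ lies in $\edges_-$ of no other fusion vertex, since at the lower endpoint $v'$ (when it is a fusion vertex) $f$ is the upward edge, hence in $\edges_+(v')$. Thus the free edges split according to the fusion vertex whose $\edges_-$ contains them, and
\begin{align}
N+1+r=\sum_{v\in\Fverts}\deg v=n_1+2 n_2+\deg v_{N+1}.
\end{align}
Together with $n_0+n_1+n_2=N$ this gives $n_2-n_0=1+r-\deg v_{N+1}$, so it remains to show $\deg v_{N+1}=1$. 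The top fusion vertex has $\edges_-(v_{N+1})=\set{e_1,e_2}$, where $e_1$ is the top edge of the minus tree and $e_2$ that of the plus tree, so $\deg v_{N+1}\in\set{0,1,2}$ and I must show that, in the time-respecting spanning tree of Theorem~\ref{th:intdeltas}, exactly one of $e_1,e_2$ is free.

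The key observation is that deleting the three edges $e_0,e_1,e_2$ leaves the induced graph on $\verts\setminus\set{\rootv,v_{N+1}}$ connected: the plus and minus trees are each connected down to their initial-time vertices, and at least one cluster of $S$ joins them, because each of the two trees carries an odd number of initial-time vertices ($1+2n$ and $1+2n'$), which cannot be covered by even-size clusters lying entirely within one tree. Recall now that the spanning tree is built by processing the edges in \emph{decreasing} order of creation, and that $e_0<e_1<e_2$ are the first three edges created, hence the last three to be processed; after processing all other edges one therefore has a spanning tree $\Ttree'$ of $\verts\setminus\set{\rootv,v_{N+1}}$, with $\rootv$ and $v_{N+1}$ still isolated. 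Processing $e_2$ attaches $v_{N+1}$ to $\Ttree'$ without creating a loop, so $e_2$ is integrated; after that both endpoints of $e_1$ lie in the same tree, so processing $e_1$ would close a loop and $e_1$ is left free. Hence $\deg v_{N+1}=1$ and $n_2-n_0=r$. Finally, $n_0+n_1+n_2=N$ and $n_2-n_0=r$ give $n_0=\tfrac12(N-r-n_1)$; then $0\le n_1\le N$ is immediate since $n_1$ counts a subset of the $N$ interaction vertices, $n_0\ge 0$ by definition, and $n_0\le\tfrac12(N-r)$ together with integrality of $n_0$ yields $n_0\le\lfloor\tfrac{N-r}{2}\rfloor$. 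The main obstacle in this argument is the identity $\deg v_{N+1}=1$, and within it the connectedness of the graph after deleting $e_0,e_1,e_2$, which is where the odd parity of the number of initial-time vertices in each tree and the construction of the spurious edge $e_0$ are essential.
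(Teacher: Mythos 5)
Your proof is correct and follows essentially the same route as the paper: even clusters from gauge invariance give $\Nnp\le r\le N$, and the identity $n_2-n_0=r$ comes from counting the free edges of Proposition~\ref{th:noffreemom} against the vertex degrees, with the parity of the $1+2n$ and $1+2n'$ initial-time vertices forcing a cluster to join the two trees so that exactly one of $e_1,e_2$ (equivalently, exactly one free momentum not attached to an interaction vertex) is free. You merely spell out in more detail, via the processing order of $e_2$ and $e_1$ in the time-ordered spanning tree, why $\deg v_{N+1}=1$, which the paper states more briefly as ``adding $e_1$ creates a loop''.
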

\begin{proof}
The graph is relevant and thus has no odd clusters.  Then
a cluster in $S$ is either a pair, or has a size of at least $4$.  Therefore,
$2 N +2 = \sum_{A\in S}|A|= 2 |S| + \sum_{A\in S}(|A|-2)
\ge 2|S|+ 2 \Nnp$.  This implies $r\ge \Nnp$
and $r\le N$ is obvious.  Also $r=0$ if and only if $S$ is a pairing.
If there is a cluster which contains initial time vertices from both plus and
minus trees,
then adding the second edge ($e_1$) to the top fusion vertex $v_{N+1}$ creates
a loop. Otherwise, the initial time vertices of plus and minus trees are
isolated from each other, which implies that both contain at least
one odd cluster.  Therefore, in a relevant graph
exactly one free momenta is not attached to an interaction vertex.
By Proposition \ref{th:noffreemom} thus  $2 N+1-|S|=2 n_2+n_1$,
and clearly also $N=n_0+n_1+n_2$. We substitute $|S|=N+1-r$, and find the
stated formulae for $n_2$ and $n_0$.  The upper bound for $n_0$ then arises,
as $n_1\ge 0$ and $n_0$ needs to be an integer.
\qed \end{proof}

We will also use the corresponding cumulative counters: we let
$n_j(i)$, $j=0,1,2$, denote the number of interaction vertices of degree $j$
below and including $v_i$.
\begin{proposition}\label{th:ivordering}
Consider an interaction vertex $v_i$, $1\le i \le N$, in a relevant graph.
Then always
\begin{align} 
n_2(i)\le r + n_0(i)\qand
n_0(i)\ge \frac{i-(n_1+r)}{2}\, ,
\end{align}
where $r=N+1-|S|$ and $n_1=n_1(N)$, as in Lemma \ref{th:ivdegrees}.
\end{proposition}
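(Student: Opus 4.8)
The plan is to reduce the first inequality to a bound on the number of connected components of a natural subgraph of the momentum graph, and to deduce that bound from the fact that a relevant graph has no ``bridge'' edge other than possibly $e_0$. The second inequality is then a purely formal consequence of the first: since every interaction vertex has a degree in $\set{0,1,2}$ by Proposition~\ref{th:momatintv}, we have $n_0(i)+n_1(i)+n_2(i)=i$, so $n_2(i)=i-n_0(i)-n_1(i)$; substituting into $n_2(i)\le r+n_0(i)$ and using $n_1(i)\le n_1$ gives $2 n_0(i)\ge i-n_1(i)-r\ge i-n_1-r$, which is exactly the asserted lower bound on $n_0(i)$.

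For the first inequality, let $\graph_i$ denote the subgraph of the momentum graph $\graph$ induced on the set $V_i$ of vertices $v$ with $\tau(v)\le i$; equivalently, $\graph_i$ is $\graph$ with the vertices $v_{i+1},\dots,v_{N+1},\rootv$ deleted. Its edge set $E_i$ consists precisely of the $\sum_{A\in S}|A|=2N+2$ cluster edges together with the $3i$ ``triplet'' edges entering $v_1,\dots,v_i$, and $|V_i|=(2N+2)+|S|+i$; one also checks that $V_i$ is exactly the set of endpoints of $E_i$. In the spanning-tree construction of Section~\ref{sec:momdeltas} the edges are processed in reverse creation order, i.e.\ cluster edges first, then the triplets of $v_1$, then of $v_2$, \dots, then of $v_N$, and finally $e_2,e_1,e_0$; hence all of $E_i$ is processed before any other edge, so after this initial phase a spanning forest of $\graph_i$ has been built. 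Consequently the number of free edges lying in $E_i$ equals the cyclomatic number $|E_i|-|V_i|+c_i=2i-|S|+c_i$, where $c_i$ is the number of connected components of $\graph_i$. On the other hand, by Theorem~\ref{th:intdeltas} every free edge ends at a fusion vertex, so a free edge in $E_i$ belongs to $\edges_-(v_j)$ for a unique $j\in\set{1,\dots,i}$; summing over $j$ gives $\sum_{j=1}^i\deg v_j=n_1(i)+2n_2(i)$. Comparing the two counts and using $n_0(i)+n_1(i)+n_2(i)=i$ yields
\begin{align}
  n_2(i)-n_0(i)=i-|S|+c_i\,.
\end{align}
Since $r=N+1-|S|$, the bound $n_2(i)\le r+n_0(i)$ is therefore equivalent to $c_i\le N+1-i$.

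It remains to prove $c_i\le N+1-i$ for a relevant graph. Call an edge of $\graph$ with exactly one endpoint in $V_i$ a \emph{cut edge}; each cut edge has a unique endpoint at time $\le i$ and hence belongs to exactly one component of $\graph_i$. Starting from $\graph_0$, which has $2N+2$ cut edges (the $2N+2$ edges joining the initial-time vertices to the fusion tree), moving each of $v_1,\dots,v_i$ from the upper part into the subgraph removes its three incoming edges from and adds its one outgoing edge to the collection of cut edges, changing the count by $-2$ each time; thus $\graph_i$ has exactly $2N+2-2i$ cut edges. The key claim is that \emph{in a relevant graph every component of $\graph_i$ meets at least two cut edges}. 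Granting this, summing over the $c_i$ components gives $2N+2-2i\ge 2c_i$, i.e.\ $c_i\le N+1-i$, completing the proof. To prove the claim, note every component meets at least one cut edge, for otherwise $\graph$ would be disconnected. Suppose some component $C$ met exactly one cut edge $e$. Then $e$ is the only edge of $\graph$ joining $C$ to its complement, so deleting $e$ disconnects $\graph$; by Corollary~\ref{th:zerok} this forces $k_e=0$ identically. Moreover $e\ne e_0$, since $e_0=\set{v_{N+1},\rootv}$ has both endpoints at time $>i$ and is therefore never a cut edge. But then Corollary~\ref{th:zeroisirr} shows that $S$ contains a cluster of odd size, so the graph is irrelevant — a contradiction.

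The main obstacle, and the only place where relevance is used, is precisely this last step: turning ``a component with a single cut edge'' into ``a non-spurious identically vanishing momentum'', hence into an odd cluster. Everything else is bookkeeping with the fixed combinatorial data $(S,J,n,\ell,n',\ell')$ and the time-respecting spanning tree; the one point to handle with care is that the edge counts for $E_i$ and the ``cut count'' $2N+2-2i$ hold uniformly for all $1\le i\le N$, and that $e_0$ stays inside the deleted part throughout (immediate from $\tau(v_{N+1})=N+1>i$).
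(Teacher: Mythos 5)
Your proof is correct, and it is in substance the same double counting as the paper's, though you justify the two ingredients by different means. The paper runs the ``iterative cluster scheme'': tracking the clusters $S^{(i)}$ of edges crossing time slice $i$, it records the identity $|S^{(i)}|=|S|-2i+n_1(i)+2n_2(i)$ (each added vertex changes the count by $\deg v_j-2$) and bounds $|S^{(i)}|\le N-i+1$ because the iterated clusters stay even and non-empty, evenness being inherited from relevance of $S$. Your $c_i$ is exactly $|S^{(i)}|$ (the iterated clusters are the groups of crossing edges hanging from a common component of the truncated graph), so your identity $n_1(i)+2n_2(i)=2i-|S|+c_i$ and your bound $c_i\le N+1-i$ are the same two statements; but you obtain the identity from the prefix property of the greedy spanning-forest construction plus the cyclomatic count $|E_i|-|V_i|+c_i$, and you obtain the bound from ``a component with a single cut edge gives a bridge $e\ne e_0$, hence $k_e\equiv 0$ by Corollary \ref{th:zerok}, hence an odd cluster by Corollary \ref{th:zeroisirr}'', rather than from parity preservation. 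Both routes are sound and of comparable length; the paper's parity argument is self-contained within the cluster scheme and also delivers the structural fact (even iterated clusters) reused later, e.g.\ in the analysis of leading motives, while yours leans on the already proven momentum-resolution machinery of Section \ref{sec:momdeltas} and makes the connectivity content of relevance explicit. Your deduction of the second inequality from the first, via $n_0(i)+n_1(i)+n_2(i)=i$ and $n_1(i)\le n_1$, coincides with the paper's.
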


The proof of Proposition \ref{th:ivordering} is based on one more construction
related to the momentum graph,
which we call the \defem{iterative cluster scheme}.  Since the scheme will
reappear later, let us first explain it in detail.

Let us consider the evolution of the cluster structure while
the spanning tree is being built.  We define $S^{(0)}=S$
and let $S^{(i)}$ denote a clustering of the edges intersecting the
time slice $i$, induced by the following iterative procedure
where the interaction vertices are added to the graph, one by one
from bottom to top.  The addition of the vertex $v_i$ will thus fuse the three
edges in $\edges_-(v_i)$ into the one in $\edges_+(v_i)$.  All the three
``old'' edges belong to some clusters in $S^{(i-1)}$ while the ``new'' edge
does not appear there.  We construct $S^{(i)}$ by first joining all clusters
in $S^{(i-1)}$ which intersect $\edges_-(v_i)$,
and then replacing the three edges by the unique new one in $\edges_+(v_i)$. 
The rest of the clusters are kept unchanged.

If two of the three old edges belong to the same cluster in $S^{(i-1)}$, then
adding the second one would create a loop  in the
construction of the spanning tree.
Similarly, if all three edges go into
the same cluster, then this creates two separate loops.  Therefore, this also
determines the degree of the added interaction vertex:  if the vertex joins
three separate old clusters, it has degree $0$, if it joins two clusters, it
has degree $1$, and if all edges belong to the same previous cluster, it has
degree $2$.  An explicit application of the scheme is presented in
Fig.~\ref{fig:iterclusters}.

\begin{figure}
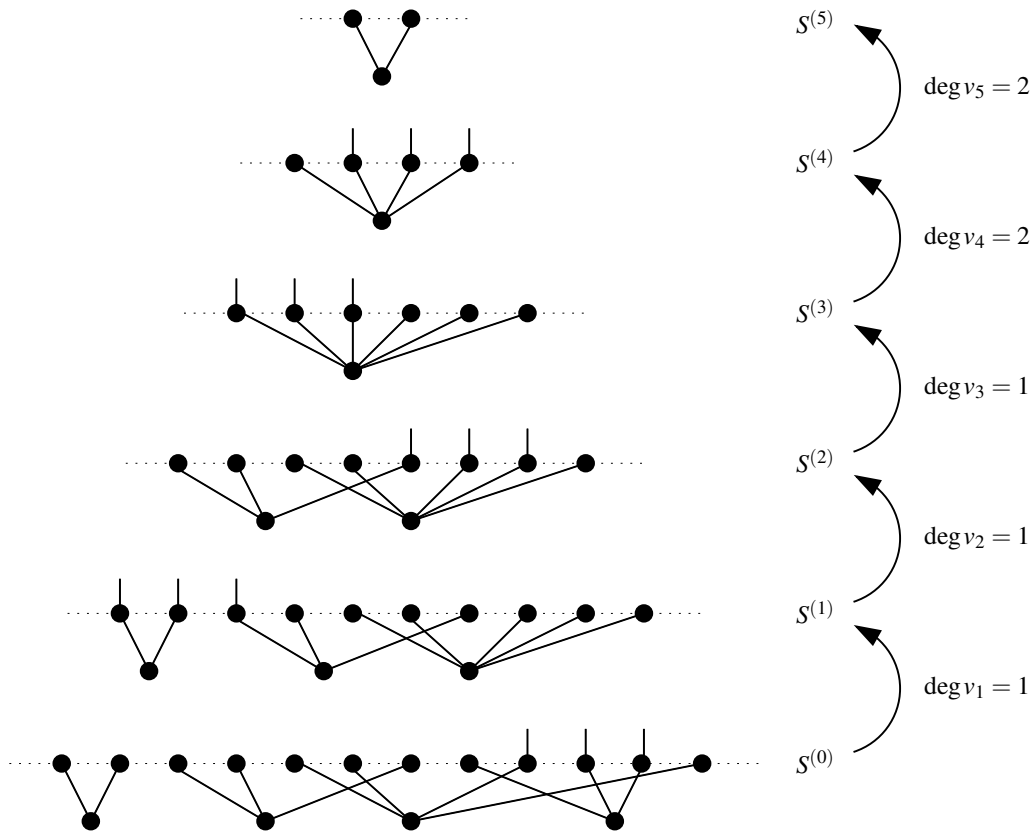

  \centering
   \medskip
  \myfigure{width=0.9\textwidth}{IterClusters}
  \medskip
  \caption{An application of the iterative cluster scheme to the example in
    Fig.~\ref{fig:graphex1}. (This graph is somewhat more general than what is
    discussed in the text since it contains clusters of odd size.) 
    For simplicity, we have also denoted at each level which of the edges will
    be fused in the next iteration step. 
    As explained in the text, the degree of the fusion vertex can also be
    deduced from the scheme. 
    These have also been shown and they can be checked to 
    coincide with the degrees which are apparent from the spanning tree 
    in Fig.~\ref{fig:graphex2}.\label{fig:iterclusters}}
\end{figure}

By going through all the alternatives, we then find that in the
iteration step the number of clusters changes as
$|S^{(i)}|=|S^{(i-1)}|-2+\deg{v_i}$.  In addition, the structure of the
clustering is conserved, in the sense that each  $S^{(i)}$ contains only even
(and non-empty) clusters.  

\begin{proofof}{Proposition \ref{th:ivordering}}
Consider the iterative cluster scheme, where
for each $i$, the set $S^{(i)}$ is a partition of $2(N-i)+2$ elements.
Since all clusters have size of at least two, we thus have
$|S^{(i)}|\le N-i+1$.  On the other hand, here
\begin{align} 
|S^{(i)}| = |S|-\sum_{j=1}^i (2-\deg v_j) =
|S|-2 i + n_1(i)+ 2 n_2(i)\, .
\end{align}
Since by construction $i=n_0(i)+n_1(i)+n_2(i)$, we have proven that
\begin{align} 
n_2(i) \le N+1-|S|+i-n_1(i)-n_2(i)= r+n_0(i),
\end{align}
as claimed in the Proposition.  But then also
\begin{align} 
i\le 2 n_0(i)+n_1(i)+r\le 2 n_0(i)+n_1+r\, ,
\end{align}
from which the second inequality follows.\qed
\end{proofof}

\section{Main lemmata}
\label{sec:lemmas}

We have collected here the main technical tools and results to be used in the
proof of the main estimates.

\subsection{Construction of momentum cutoff functions}
\label{sec:cutoff}

We first explain the construction of the cutoff function $\PFzero$, and prove
that it satisfies Proposition \ref{th:PFcorr} which was used in Section
\ref{sec:graphs} in the derivation of the basic Duhamel formulae.  We recall
that $b=\frac{3}{4}$.

Let $\Msing$ denote the singular manifold in Assumption
\ref{th:disprelass}.
Then there are $\Ns > 0$ and smooth closed
one-dimensional submanifolds $M_j$, $j=1,\ldots,\Ns$,
of $\T^d$
such that $\Msing=\bigcup_{j=1}^{\Ns} M_j$.  Since the manifold $M_j$ is
actually compact, for each $j$ there exists $\vep_j>0$
such that the map $k\mapsto d(k,M_j)$ is smooth in the neighborhood
$U_j:= \defset{k\in \smash{\T^d}}{d(k,M_j)<\vep_j}$ of $M_j$.
We define $\vep_0=\min_j \vep_j$, when $\vep_0>0$, and consider an arbitrary
$\vep$ such that $0<\vep< \vep_0$.  We recall here that
$\Msing\ne \emptyset$ since at least $0\in \Msing$.

We choose an arbitrary one-dimensional smooth
``step-function'' $\varphi$.  Explicitly, we assume that $\varphi\in
C^\infty(\R)$
is symmetric, $\varphi(-x)=\varphi(x)$, monotone on $[0,\infty)$,
and $\varphi(x)=0$ for $|x|\ge 1$ and $\varphi(x)=1$ for
$|x|\le \frac{1}{2}$.   In particular, then $\varphi(0)=1$.
We define further, for $0<\vep< \vep_0$, $j=1,\ldots,\Ns$,
the functions $f^{j}:\T^d \to [0,1]$ by
\begin{align} 
f^{j}(k;\vep) = \varphi\Bigl(\frac{d(k,M_j)}{\vep}\Bigr),
\quad k\in\T^d\, .
\end{align}
Then $f^{j}(k;\vep) =0$ for all $d(k,M_j)\ge \frac{\vep}{\vep_j}\vep_j$,
where $\frac{\vep}{\vep_j}<1$.  Thus,
by construction, $f^j$ is smooth on $\T^d$,
and we can find a constant $C$ independent of $\vep$ such that
$|\nabla f^j(k;\vep)|\le \frac{C}{\vep}$ for all $j,k,\vep$.
In addition, we have
$f^{j}(k;\vep)=1$ if $k\in M_j$.

Next we construct $d$-dimensional cut-off functions.
Let $\lambda'_0 = \min(1,\lambda_0,\vep_0^{1/b})$, and define
for all $0<\lambda<\lambda'_0$ the functions
$\Fone,\Fzero:\T^d\to \R$ by
\begin{align} 
\Fone(k) = \prod_{j=1}^{\Ns} \left(
  1-f^{j}(k;\lambda^b) \right), \quad
\Fzero = 1-\Fone \, .
\end{align}
\begin{lemma}\label{th:Foneprop}
There is a constant $C_1\ge 1$ such that for any $0<\lambda<\lambda'_0$,
\begin{jlist}
\item $0\le \Fone,\Fzero \le 1$.
\item If $k\in \Msing$, then $\Fone (k)=0$ and $\Fzero (k)=1$.
\item If $d(k,\Msing)\ge \lambda^b$, then $\Fone (k)=1$ and $\Fzero (k)=0$.
\item $\Fone,\Fzero$ are smooth, and
$|\nabla\Fone(k)|, |\nabla\Fzero(k)| \le C_1 \lambda^{-b}$, for all $k$.
\item  $0\le \Fone (k)\le C_1 \lambda^{-b} d(k,\Msing)$
for all $k\in \T^d$.
\item There is a constant $C$ such that
\begin{align} 
\int_{\T^d} \rmd k\, \Fzero(k)
\le \int_{\T^d} \rmd k\, \1(d(k,\Msing)<\lambda^b) \le C \lambda^{b(d-1)} \, .
\end{align}
\end{jlist}
\end{lemma}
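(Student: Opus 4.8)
The plan is to deduce items (1)--(5) directly from elementary properties of the one-dimensional building blocks $f^j(\cdot;\lambda^b)$, and to obtain item (6) from a standard tubular-neighbourhood volume bound, which is the only step using more than bookkeeping. So first I would record the needed facts about the $f^j$: since $\varphi:\R\to[0,1]$ is symmetric, monotone on $[0,\infty)$, equals $1$ on $[-\tfrac{1}{2},\tfrac{1}{2}]$ and vanishes outside $[-1,1]$, we get $0\le f^j(\cdot;\lambda^b)\le 1$, $f^j(k;\lambda^b)=1$ whenever $k\in M_j$ (then $d(k,M_j)=0$ and $\varphi(0)=1$), and $f^j(k;\lambda^b)=0$ whenever $d(k,M_j)\ge\lambda^b$. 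For smoothness, note $\lambda<\lambda'_0$ forces $\lambda^b<\vep_0\le\vep_j$, so $\{k:d(k,M_j)\le\lambda^b\}\subset U_j$ on which $d(\cdot,M_j)$ is smooth; hence $f^j(\cdot;\lambda^b)$ is smooth on the open set $U_j$, is identically zero on the open set $\{k:d(k,M_j)>\lambda^b\}$, and since these two sets cover $\T^d$, $f^j(\cdot;\lambda^b)\in C^\infty(\T^d)$. The chain rule together with $|\nabla d(\cdot,M_j)|\le 1$ on $U_j$ gives $|\nabla f^j(k;\lambda^b)|\le\|\varphi'\|_\infty\,\lambda^{-b}$ for all $k$.

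With these in hand items (1)--(5) are short. $\Fone$ is a finite product of factors $1-f^j(\cdot;\lambda^b)$ lying in $[0,1]$, so $0\le\Fone\le 1$ and $0\le\Fzero=1-\Fone\le 1$, which is (1). If $k\in\Msing$ then $k\in M_{j_0}$ for some $j_0$, the factor $1-f^{j_0}(k;\lambda^b)$ vanishes, and $\Fone(k)=0$, $\Fzero(k)=1$; this is (2). If $d(k,\Msing)\ge\lambda^b$ then $d(k,M_j)\ge\lambda^b$ for every $j$, every factor equals $1$, and $\Fone(k)=1$, $\Fzero(k)=0$; this is (3). Finiteness of the product and the product rule give $\Fone,\Fzero\in C^\infty(\T^d)$ with $|\nabla\Fone|=|\nabla\Fzero|\le\Ns\,\|\varphi'\|_\infty\,\lambda^{-b}$, so (4) holds with $C_1:=\max(1,\Ns\,\|\varphi'\|_\infty)$. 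For (5), given any $k$ choose $k_*\in\Msing$ with $d(k,k_*)=d(k,\Msing)$ (possible since $\Msing$ is compact) and integrate $\nabla\Fone$ along a minimizing geodesic of length $d(k,\Msing)$ joining $k_*$ to $k$ in the flat torus; since $\Fone(k_*)=0$ by (2) and the gradient bound of (4) is global, $|\Fone(k)|\le C_1\,\lambda^{-b}\,d(k,\Msing)$, while $\Fone(k)\ge 0$ by (1).

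It remains to prove item (6). The first inequality is pointwise: by (1) one has $\Fzero\le 1$, and by (3) $\Fzero(k)=0$ when $d(k,\Msing)\ge\lambda^b$, so $\Fzero(k)\le\1\bigl(d(k,\Msing)<\lambda^b\bigr)$ everywhere, and integrating over $\T^d$ gives it. For the second inequality I would use $\{k:d(k,\Msing)<\lambda^b\}\subseteq\bigcup_{j=1}^{\Ns}\{k:d(k,M_j)<\lambda^b\}$ and bound the Lebesgue measure of the union by the sum of the measures of the individual $\lambda^b$-tubes. Each $M_j$ is a compact smooth one-dimensional submanifold, so the standard tubular-neighbourhood estimate applies for radii below $\vep_j$ (in particular for $\lambda^b<\vep_0\le\vep_j$): covering $M_j$ by finitely many charts in which it is a graph over a single coordinate, the tube of radius $\rho$ around $M_j$ decomposes into finitely many pieces each of volume at most $C'\,(\text{length})\,\rho^{d-1}$, hence has volume $\order{\rho^{d-1}}$. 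Summing over $j$ yields $\int_{\T^d}\1\bigl(d(k,\Msing)<\lambda^b\bigr)\,\rmd k\le C\,\lambda^{b(d-1)}$. I expect this last volume bound to be the only real step: it is the sole place where one uses the geometry of $\Msing$ rather than the defining properties of $\varphi$ and $d(\cdot,M_j)$, although it is itself a routine piece of differential geometry.
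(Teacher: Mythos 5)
Your proof is correct and follows essentially the same route as the paper's: items (1)--(4) from the recorded properties of the $f^j$, item (5) by integrating the global gradient bound of item (4) along a minimizing path from a distance-realizing point of $\Msing$ (you merely avoid the paper's trivial case split $d(k,\Msing)\ge\lambda^b$), and item (6) via the pointwise bound $\Fzero\le\1(d(k,\Msing)<\lambda^b)$ followed by the standard tubular-neighbourhood volume estimate for the compact codimension-$(d-1)$ set $\Msing$, which the paper invokes in the same way and you simply spell out.
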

\begin{proof}
The first four items
follow from the above-mentioned properties of $f^j$.
For the fifth item, fix $k$ and let $l=d(k,\Msing)$.
If $l\ge \lambda^b$, then $\Fone(k)=1$, and the inequality holds trivially
for any $C_1\ge 1$.  If $l< \lambda^b$, then $l<(\lambda'_0)^b\le\vep_0$.
Since $\Msing$ is a compact set, there are $j$ and $k'\in M_j$, such that
$l=d(k,k')$.  In addition, there is a smooth path
$\gamma:[0,1]\to U_j$ from $k'$ to $k$ such that
$d(k,k')=\int_0^1 \rmd s |\gamma'(s)|$.
Then $\Fone(k)=\Fone(k)-\Fone(k')=\int_0^1 \rmd s \frac{\rmd}{\rmd s}
\Fone(\gamma(s))$.  Using the chain rule, and then applying the result in 
item 4,
shows that $\Fone (k)\le C_1 \lambda^{-b} l$ also in this case.
The last estimate follows by first estimating
$\Fzero(k)\le \1(d(k,\Msing)<\lambda^b)$, and then using the
compactness of the manifold and the fact that it has maximally
codimension $d-1$.
\qed \end{proof}

Now we are ready to define  the $3 d$-dimensional cut-off functions
introduced in Section \ref{sec:graphs}.  We define
$\PFone,\PFzero:(\T^d)^3 \to [0,1]$  by
\begin{align} 
\PFone(k_1,k_2,k_3) = \Fone (k_1+k_2)\Fone (k_2+k_3) \Fone (k_3+k_1)\, , 
\quad \PFzero=1-\PFone .
\end{align}
\begin{proofof}{Proposition \ref{th:PFcorr}}
Inequality (\ref{eq:PFzeroineq}) follows from the previous properties,
since for any $0\le a_i\le 1$, $i=1,2,3$, it holds that
$1-\prod_{i=1}^3 (1-a_i) \le a_1+a_2+a_3$.
The other points are obvious corollaries of Lemma \ref{th:Foneprop}.\qed
\end{proofof}

\subsection{From phases to resolvents}

The following result generalizes the standard formula used in connection with
time-dependent perturbation expansions.
\begin{theorem}\label{th:resolvents}
Let $I$ be a non-empty finite index set, assume $t>0$, and let 
$\gamma_i\in D$, 
$i\in I$, with $D\subset \C$ compact.  Suppose $A$ is a non-empty subset 
of $I$. We choose an additional time index label $*$, \itie , assume 
$*\not\in I$,
and let $A^{\rm c}=I\setminus A$, and $A'=A^{\rm c}\cup\set{*}$.  Then
for any path $\Gamma_D$ going once anticlockwise around $D$, we have
\begin{align}\label{eq:phtores}
&\int_{(\R_+)^{I}}\!\rmd s \,  \delta\Bigl(t-\sum_{i\in I} s_i\Bigr)
 \prod_{i\in I}\rme^{-\ci \gamma_i s_i}
\nonumber \\ & \quad
= -\oint_{\Gamma_D} \frac{\rmd z}{2\pi}
\int_{(\R_+)^{A'}}\!\rmd s \,  \delta\Bigl(t-\sum_{i\in A'} s_i\Bigr)
 \prod_{i\in A'} \left.\rme^{-\ci \gamma_i s_i}\right|_{\gamma_*=z}
\prod_{i\in A} \frac{\ci}{z-\gamma_i} \, .
\end{align}
\end{theorem}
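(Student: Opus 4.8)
The plan is to separate the time simplex into the $A$-block and the $A^{\rm c}$-block and to recognise the $A$-block as a contour integral. Write $A^{\rm c}=I\setminus A$ and, for $u>0$, set
\[
g_A(u):=\int_{(\R_+)^{A}}\!\rmd s\,\delta\Bigl(u-\sum_{i\in A}s_i\Bigr)\prod_{i\in A}\rme^{-\ci\gamma_i s_i},
\qquad
h(v):=\int_{(\R_+)^{A^{\rm c}}}\!\rmd s\,\delta\Bigl(v-\sum_{i\in A^{\rm c}}s_i\Bigr)\prod_{i\in A^{\rm c}}\rme^{-\ci\gamma_i s_i},
\]
using the convention for the $\delta$-constraints recalled after Lemma~\ref{th:recombinationlemma} (restriction to the simplex of size $t$, equivalently the zero-variance Gaussian limit), so that Fubini may be used freely. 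Eliminating one time variable against the overall $\delta$ then gives that the left-hand side of (\ref{eq:phtores}) equals $\int_0^t\rmd u\,g_A(u)\,h(t-u)$. On the other hand, integrating out the auxiliary variable $s_*$ against the $\delta$ on the right of (\ref{eq:phtores}) shows that $\int_{(\R_+)^{A'}}\rmd s\,\delta(t-\sum_{i\in A'}s_i)\prod_{i\in A'}\rme^{-\ci\gamma_i s_i}\big|_{\gamma_*=z}=\int_0^t\rmd u\,\rme^{-\ci z u}h(t-u)$. Hence the theorem reduces to the single ``resolvent block'' identity
\[
g_A(u)=-\oint_{\Gamma_D}\frac{\rmd z}{2\pi}\,\rme^{-\ci z u}\prod_{i\in A}\frac{\ci}{z-\gamma_i},\qquad u>0,
\]
inserted under the $z$-contour and the $u$-integral, the two integrations being interchangeable because $\Gamma_D$ is a fixed bounded contour that stays away from $\{\gamma_i\}_{i\in A}\subset D$.

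To prove this identity I would compare Laplace transforms. Factorising the product of exponentials gives $\int_0^\infty\rmd u\,\rme^{-pu}g_A(u)=\prod_{i\in A}(p+\ci\gamma_i)^{-1}$ for $\re p$ large. For the contour side, Fubini (again justified by boundedness on the compact contour) yields $\int_0^\infty\rmd u\,\rme^{-pu}\bigl(-\oint_{\Gamma_D}\tfrac{\rmd z}{2\pi}\rme^{-\ci z u}\prod_{i\in A}\tfrac{\ci}{z-\gamma_i}\bigr)=-\oint_{\Gamma_D}\tfrac{\rmd z}{2\pi}\tfrac{1}{p+\ci z}\prod_{i\in A}\tfrac{\ci}{z-\gamma_i}$. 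Writing $p+\ci z=\ci(z-\ci p)$, and noting that for $\re p$ large the pole $z=\ci p$ lies outside $\Gamma_D$ while all $\gamma_i$ lie inside, the residue theorem together with the fact that the integrand is $O(z^{-|A|-1})$ at infinity (so that the residues at the $\gamma_i$ sum to minus the residue at $\ci p$) gives, after a short computation using $\ci p-\gamma_i=\ci(p+\ci\gamma_i)$, again $\prod_{i\in A}(p+\ci\gamma_i)^{-1}$. Since both $g_A$ and the contour expression are continuous on $[0,\infty)$ with at most exponential growth, uniqueness of the Laplace transform forces them to coincide. (A more hands-on alternative is induction on $|A|$: the case $|A|=1$ is a single residue, and $g_A$ is the convolution of $\rme^{-\ci\gamma_j s}$ with $g_{A\setminus\{j\}}$, which propagates the formula; the Laplace route is preferable because it avoids any discussion of coincidences among the $\gamma_i$, the one place where a naive residue calculation would need higher-order poles.)

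Assembling, I substitute the resolvent-block identity into $\int_0^t\rmd u\,g_A(u)h(t-u)$, interchange the $u$- and $z$-integrations, and identify $\int_0^t\rmd u\,\rme^{-\ci z u}h(t-u)$ with the $A'$-simplex integral at $\gamma_*=z$ computed above; this is precisely the right-hand side of (\ref{eq:phtores}). The degenerate case $A=I$ is harmless: then $A^{\rm c}=\emptyset$, $h$ collapses to the point mass at $0$, the convolution is just $g_A(t)$, and the general formula degenerates to the resolvent-block identity with $A'=\{*\}$. I expect the only genuine care to be bookkeeping with the $\delta$-constraints — making the ``split the simplex'' step precise (choosing a variable to eliminate and noting independence of that choice, or invoking the Gaussian-regularisation remark already in the paper) and justifying the two interchanges of $\oint_{\Gamma_D}$ with a real integral; there is no real analytic obstacle since $\Gamma_D$ is compact and bounded away from the poles.
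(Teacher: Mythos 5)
Your proof is correct, and it overlaps with the paper's argument only in its reduction step, not in how the analytic core is handled. The paper proceeds by induction on $|A^{\rm c}|$: it peels off one time variable $s_{i_0}$, $i_0\in A^{\rm c}$, at a time from the simplex, and for the base case $A=I$ it simply cites the ``standard formula'' (Lemma 4.9 of \cite{ls05}) rather than proving it. You instead perform the splitting in one shot, writing the left-hand side of (\ref{eq:phtores}) as the convolution $\int_0^t\rmd u\,g_A(u)h(t-u)$ of the $A$-block and $A^{\rm c}$-block simplex integrals (your convolution splitting is just the paper's inductive step iterated, so the decomposition idea is essentially shared), and then you supply a self-contained proof of the core identity $g_A(u)=-\oint_{\Gamma_D}\frac{\rmd z}{2\pi}\,\rme^{-\ci z u}\prod_{i\in A}\frac{\ci}{z-\gamma_i}$ via Laplace transforms: both sides transform to $\prod_{i\in A}(p+\ci\gamma_i)^{-1}$ for $\re p$ large (your residue bookkeeping, including the sign and the identification $\ci p-\gamma_i=\ci(p+\ci\gamma_i)$, checks out, and phrasing the argument through the residue at $z=\ci p$ plus decay at infinity indeed sidesteps any discussion of coincident $\gamma_i$), and uniqueness of the Laplace transform for continuous, exponentially bounded functions finishes it. What your route buys is independence from the external reference and a clean treatment of degenerate $\gamma_i$; what the paper's route buys is brevity, since the induction plus citation avoids redoing the contour analysis. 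The remaining care points you flag---making the simplex splitting precise under the paper's $\delta$-convention (Gaussian regularization or integrating out one designated variable) and the two Fubini interchanges over the compact contour $\Gamma_D$ kept away from the poles---are exactly the right ones and are unproblematic.
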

\begin{proof}
Let us first consider the case $A=I$.  Then $A'=\set{*}$, and by definition the
``$s$-integral'' on the right hand side yields a factor $\rme^{-\ci z t}$. 
Therefore, in this case the formula is equal to the standard formula (whose
proof under the present assumptions can be found for instance from Lemma 4.9 in
\cite{ls05}).  If $A\ne I$, there is
$i_0\in A^{\rm c}$.  Resorting to the definition of the time-integration as an
integral over a standard simplex,
it is straightforward to prove that now
\begin{align}
& \int_{(\R_+)^{I}}\!\rmd s \,  \delta\Bigl(t-\sum_{i\in I} s_i\Bigr)
 \prod_{i\in I}\rme^{-\ci \gamma_i s_i}
\nonumber \\ & \quad
= \int_0^t  \!\rmd s_{i_0} \rme^{-\ci \gamma_{i_0} s_{i_0}}
\Bigl[\int_{(\R_+)^{I'}}\!\rmd s \,  \delta\Bigl(t-s_{i_0}-\sum_{i\in I'}
s_i\Bigr)
 \prod_{i\in I'}\rme^{-\ci \gamma_i s_i} \Bigr] \, ,
\end{align}
where $I'=I\setminus\set{i_0}$.
Therefore, we can perform an induction in the number of elements in 
$A^{\rm c}$, starting from $|A^{\rm c}|=0$.
Applying the above formula, induction assumption, and then Fubini's theorem
shows that (\ref{eq:phtores}) is valid for all $A$.
\qed \end{proof}

\subsection{Cluster combinatorics}

\begin{lemma}\label{th:clustercomb}
There is a constant $c$ such that for all $N>0$, $0<\lambda<\lambda_0$,
  \begin{align}\label{eq:ccmbbound}
 \sum_{S\in\pi(I_{N})}
 \prod_{A\in S}  \sup_{\Lambda,k,\sigma}|C_{|A|}(k,\sigma;\lambda,\Lambda)|
 \le c^{N} N!\, .
  \end{align}
If the sum is restricted to non-pairing $S$, then the bound can be improved by
a factor of $\lambda$.
\end{lemma}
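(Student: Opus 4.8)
The plan is to rewrite the left-hand side via the exponential formula for set partitions and then estimate the resulting generating function by a Cauchy integral, the key point being that the apparently dangerous $n!$ growth of the cumulants is exactly cancelled by the $1/n!$ weighting built into that formula.

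First I would collect the available bounds on the cumulants. By gauge invariance all odd truncated functions vanish, so only partitions of $I_N$ into even blocks contribute. For $n=2$ the identification $C_2((k',k),(-1,1))=W^\lambda_\Lambda(k)$ together with Lemma~\ref{th:unifW2} gives $\sup_{\Lambda,k,\sigma}|C_2|\le c_0'$, and for even $n\ge 4$ the estimate $(\ref{eq:Cnbound})$ of Assumption~\ref{th:Ainitcond} gives $\sup_{\Lambda,k,\sigma}|C_n|\le\lambda c_0^n n!$. Writing $b_n:=\sup_{\Lambda,k,\sigma}|C_n(k,\sigma;\lambda,\Lambda)|$, we therefore have $b_2/2!=c_0'/2$, $b_n/n!\le\lambda c_0^n$ for even $n\ge4$, and $b_n=0$ otherwise, so the power series $g_\lambda(x):=\sum_{n\ge1}(b_n/n!)\,x^n$ is analytic on the disc $|x|<1/c_0$ and is dominated there, coefficient by coefficient, by $h_\lambda(x):=\tfrac{c_0'}{2}x^2+\lambda\,(c_0x)^4/(1-(c_0x)^2)$.

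Next, by the exponential formula for set partitions, $\sum_{S\in\pi(I_N)}\prod_{A\in S}b_{|A|}=N!\,[x^N]\exp g_\lambda(x)$ for every $N\ge1$ (the $N$ for which no even-block partition exists simply give $0$ on both sides). Since $\exp$ has nonnegative Taylor coefficients and $g_\lambda$ is dominated coefficientwise by $h_\lambda$, one gets $[x^N]\exp g_\lambda\le[x^N]\exp h_\lambda$. A Cauchy estimate on the circle $|x|=(2c_0)^{-1}$, on which $|h_\lambda(x)|\le\tfrac{c_0'}{8c_0^2}+\tfrac{\lambda_0}{12}=:M$ uniformly in $0<\lambda\le\lambda_0$, then gives $[x^N]\exp h_\lambda\le e^{M}(2c_0)^N$, whence $\sum_{S\in\pi(I_N)}\prod_{A\in S}b_{|A|}\le e^{M}(2c_0)^N N!\le c^N N!$ with $c:=2c_0e^{M}$. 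For the non-pairing case, every non-pairing $S$ contains at least one even block of size $\ge4$, which by $(\ref{eq:Cnbound})$ carries an explicit factor $\lambda$; thus, setting $\tilde b_2:=b_2$ and $\tilde b_n:=b_n/\lambda$ for even $n\ge4$, we have $\prod_{A\in S}b_{|A|}\le\lambda\prod_{A\in S}\tilde b_{|A|}$ for every non-pairing $S$, and summing over all $S\in\pi(I_N)$ and repeating the previous argument with $\lambda$ replaced by $1$ in $h_\lambda$ (which only enlarges the constant $M$, hence $c$) produces the improved bound $\lambda\,c^N N!$.

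The main obstacle is to resist the naive approach of bounding $\prod_{A\in S}|A|!\le N!$ and then multiplying by the number of partitions: that fails because $|\pi(I_N)|$ is a Bell number and grows superexponentially in $N$. The correct book-keeping is precisely the exponential formula, after which the content of the lemma reduces to the elementary observation that $g_\lambda$ (and $h_\lambda$) has a positive radius of convergence; verifying the coefficientwise majorization (legitimate since every power series in sight has nonnegative coefficients) and carrying out the Cauchy estimate are then routine.
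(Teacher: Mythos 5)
Your proof is correct, and it takes a genuinely different route from the paper's. The paper bounds the sum by an explicit combinatorial count: after discarding singlets and odd blocks it fixes the number $m$ of blocks and uses the identity $\sum_{S:|S|=m}\prod_{A\in S}|A|! = \frac{N!}{m!}\binom{N-1}{m-1}$ (a composition count, quoted from Lemma C.4 of \cite{ls05}), then sums over $m\le \lfloor N/2\rfloor$ to get $c^N N!$ with $c=c_0\rme$. You instead package the same cancellation of the $|A|!$ factors into the exponential formula, $\sum_{S\in\pi(I_N)}\prod_{A\in S}b_{|A|}=N!\,[x^N]\exp g_\lambda(x)$, and reduce the lemma to the statement that $g_\lambda$ has a positive radius of convergence, finishing with a coefficientwise majorization and a Cauchy estimate on $|x|=(2c_0)^{-1}$. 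Both arguments exploit exactly the same inputs (odd cumulants vanish, pairs bounded by $c_0'$ without a factor $\lambda$, blocks of size $\ge 4$ bounded by $\lambda c_0^nn!$), and both give the optimal $c^NN!$ scale; your generating-function version avoids quoting the external combinatorial identity and generalizes painlessly to any cumulant bound of the form $b_n\le Ca^nn!$, while the paper's count is more elementary and self-contained at the level of finite sums. One shared small point: in the non-pairing case, when $S$ contains two or more blocks of size $\ge 4$ your inequality $\prod_A b_{|A|}\le\lambda\prod_A\tilde b_{|A|}$ uses $\lambda^k\le\lambda$, i.e.\ $\lambda\le 1$ (or absorbing the surplus factors $\lambda\le\lambda_0$ into the constant); the paper's displayed bound makes the identical move, so this is a cosmetic remark rather than a gap.
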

\begin{proof}
Any $S$ which is non-pairing either has an odd cluster, or
contains a cluster of size of at least four.  If there is an odd cluster,
the corresponding $C_{|A|}$ term is zero, and thus any positive bound works
for them.  We cancel all partitions containing a singlet, and
use the bound in (\ref{eq:Cnbound}) for all clusters
which are not pairs.  As proven in Section \ref{sec:cumulants},
the constant can be adjusted so that
for pairs we can use (\ref{eq:Cnbound}) without the factor of $\lambda$.
Let $\pi'(I_N)$ consist of all partitions of $I_N$ which do not contain
singlets, \itie , of $S\in \pi(I_N)$
such that $|A|\ge 2$ for all $A\in S$.  Then
\begin{align} 
&  \sum_{\substack{S\in\pi(I_{N}),\\
     S\text{ not a pairing}}}
 \prod_{A\in S}  \sup_{k,\sigma}|C_{|A|}(k,\sigma;\lambda,\Lambda)|
\le  \lambda
\sum_{S\in\pi'(I_{N})}
 \prod_{A\in S} \left( (c_0)^{|A|} |A|!\right)
\end{align}
and
\begin{align} 
&
\sum_{S\in\pi(I_{N})}
 \prod_{A\in S}  \sup_{k,\sigma}|C_{|A|}(k,\sigma;\lambda,\Lambda)|
\le
\sum_{S\in\pi'(I_{N})}
 \prod_{A\in S} \left( (c_0)^{|A|} |A|!\right)
\nonumber \\ & \quad
\le (c_0)^{N} \sum_{m=1}^{\lfloor N/2\rfloor}
 \sum_{S\in\pi(I_{N})}   \1(|S|=m)
 \prod_{A\in S} |A|! \, .
\end{align}
A combinatorial computation along the proof of Lemma C.4 in \cite{ls05}
shows that
\begin{align} 
& \sum_{S\in\pi(I_{N})}
  \1(|S|=m) \prod_{A\in S} |A|!
=  \frac{N!}{m!}
 \sum_{n\in \N_+^m} \1\Bigl(\sum_{j=1}^m n_j = N\Bigr)
= \frac{N!}{m!} \binom{N-1}{m-1}
\nonumber \\ & \quad
\le \frac{N!}{m!} (N-1)^{m-1}\, .
\end{align}
The sum over $m$ from $1$ to $\infty$ of the last bound is bounded by
$N! \rme^{N}$.  This proves that (\ref{eq:ccmbbound}) holds
with $c=c_0 \rme$.
\qed \end{proof}

\subsection{Integrals over free momenta}
\label{sec:freeint}

\begin{proposition}\label{th:defendelta}
Suppose the assumption (DR\ref{it:DRdisp}) holds with constants
$C,\delta>0$, and assume $f \in \ell_1((\Z^d)^3)$.
Then for all
$s\in \R$, $k_0\in \T^d$, and $\sigma,\sigma'\in\set{\pm 1}$,
\begin{align}\label{eq:defendelta}
&\left| \int_{(\T^d)^2} \!\rmd k'\rmd k\,
 \rme^{\ci s (\omega(k) + \sigma' \omega(k')+ \sigma \omega(k_0-k-k'))}
  \FT{f}(k,k',k_0-k-k')
\right|
\le C\norm{f}_1 \sabs{s}^{-1-\delta}\, .
\end{align}
In particular,
\begin{align}\label{eq:leadphasebnd}
&\left| \int_{(\T^d)^2} \!\rmd k'\rmd k\,
 \rme^{\ci s (\omega(k) + \sigma' \omega(k')+ \sigma \omega(k_0-k-k'))}
\right| \le C\sabs{s}^{-1-\delta}\, .
\end{align}
\end{proposition}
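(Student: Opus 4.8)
The plan is to pass to position space, where the momentum constraint $k_1+k_2+k_3=k_0$ (with $k_1=k$, $k_2=k'$, $k_3=k_0-k-k'$) collapses into a single shared translation index, and then to apply H\"older's inequality together with the $\ell_3$-dispersivity bound (DR\ref{it:DRdisp}). First, writing $\FT{f}(k_1,k_2,k_3)=\sum_{x\in(\Z^d)^3}f(x_1,x_2,x_3)\,\rme^{-\ci 2\pi(x_1\cdot k_1+x_2\cdot k_2+x_3\cdot k_3)}$, a uniformly convergent series since $f\in\ell_1$, and interchanging this sum with the integral over $(\T^d)^2$, the left side of (\ref{eq:defendelta}) is dominated by $\sum_x|f(x)|$ times $\sup_x|I(x;k_0,s)|$, where, with $\sigma_1=1$, $\sigma_2=\sigma'$, $\sigma_3=\sigma$,
\[
 I(x;k_0,s)=\int_{(\T^d)^2}\!\rmd k_1\rmd k_2\,\prod_{j=1}^3\rme^{\ci s\sigma_j\omega(k_j)-\ci 2\pi x_j\cdot k_j}\Big|_{k_3=k_0-k_1-k_2}.
\]

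Second, I would evaluate $I$ in terms of the free propagator. Set $P_\sigma(s,z)=\int_{\T^d}\rmd k\,\rme^{\ci \sigma s\omega(k)-\ci 2\pi z\cdot k}=\overline{p_{\sigma s}(z)}$, so that for each $j$ the function $k\mapsto\rme^{\ci s\sigma_j\omega(k)-\ci 2\pi x_j\cdot k}$ has $\bigl(P_{\sigma_j}(s,x_j+m)\bigr)_{m\in\Z^d}$ as its sequence of Fourier coefficients. Substituting the ($L^2$-convergent) Fourier expansion of the $k_3$-factor and integrating out $k_1,k_2$, the constraint $k_3=k_0-k_1-k_2$ forces all three Fourier indices to coincide, leaving the single sum
\[
 I(x;k_0,s)=\sum_{m\in\Z^d}\rme^{\ci 2\pi m\cdot k_0}\,P_{\sigma_1}(s,x_1+m)\,P_{\sigma_2}(s,x_2+m)\,P_{\sigma_3}(s,x_3+m).
\]
This identity follows from Parseval's formula on $(\T^d)^2$, which requires only $\rme^{\ci s\omega}\in L^\infty(\T^d)$; the resulting sum is absolutely convergent for fixed $s$ by an $\ell_2,\ell_2,\ell_\infty$ H\"older estimate.

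Third, this is where (DR\ref{it:DRdisp}) enters: H\"older's inequality in the single index $m$ with exponents $(3,3,3)$ and translation invariance of the $\ell_3$-norm give
\[
 |I(x;k_0,s)|\le\prod_{j=1}^3\norm{P_{\sigma_j}(s,\cdot)}_3
 =\prod_{j=1}^3\norm{p_{\sigma_j s}}_3
 \le\prod_{j=1}^3 C^{1/3}\sabs{\sigma_j s}^{-(1+\delta)/3}
 =C\sabs{s}^{-1-\delta},
\]
using $\sabs{\sigma_j s}=\sabs{s}$ for $\sigma_j=\pm1$. Multiplying by $\sum_x|f(x)|=\norm{f}_1$ yields (\ref{eq:defendelta}) with exactly the constant $C$ of (DR\ref{it:DRdisp}); then (\ref{eq:leadphasebnd}) is the special case $f(x)=\1(x_1=x_2=x_3=0)$, for which $\FT f\equiv 1$ and $\norm{f}_1=1$.

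The only point requiring care is the second step, i.e.\ justifying the position-space identity for $I$ (the interchange of the $k_3$-Fourier sum with the $(k_1,k_2)$-integration, equivalently the distributional handling of the momentum constraint); this should be run through the $L^2$ Fourier theory rather than the $\ell_3$ bounds, since the latter carry no uniformity in $s$ that would make the manipulations absolutely convergent by themselves. Once the ``shared translate'' structure of $I$ is in hand, the rest is bookkeeping.
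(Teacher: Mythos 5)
Your proposal is correct and follows essentially the same route as the paper's proof: expand $\FT{f}$ in position space, interchange sum and integral, recognize the constrained momentum integral as a single lattice sum of three free propagators $p_{\pm s}$, and conclude by H\"older's inequality in $\ell_3$ together with (DR\ref{it:DRdisp}); the special case $f(x)=\1(x_1=x_2=x_3=0)$ gives (\ref{eq:leadphasebnd}) exactly as in the paper. The only differences (pulling out $\norm{f}_1$ with a supremum over $x$, and justifying the collapse of the momentum constraint via Parseval rather than a direct convolution computation) are cosmetic.
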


In particular, the Proposition implies that $\Gamma(k_1)$  in
(\ref{eq:defGamma}) is well defined.
Adapting the proof of Proposition A.1 in \cite{ls05}, the
Proposition also shows that our assumptions on the dispersion relation 
$\omega$ guarantee that the map
\begin{align}
 F\mapsto \lim_{\beta\to 0^+}
    \int_{(\T^d)^2} \!\rmd k_2\rmd k_3\, \frac{\beta}{\pi}
\frac{1}{(\omega_1+\omega_2-\omega_3-\omega_4)^2+\beta^2}
F(k_2,k_3,k_1-k_2-k_3) \, ,
\end{align}
where $\omega_4=\omega(k_1-k_2-k_3)$,
defines for all $k_1\in \T^d$ a bounded positive Radon measure on $(\T^d)^3$
which we denote by
$\rmd k_2\rmd k_3\rmd k_4\,  \delta(k_1+k_2-k_3-k_4)
\delta(\omega_1+\omega_2-\omega_3-\omega_4)$.
In addition, if $F\in L^2((\T^d)^3)$ has summable Fourier transform, we also
have
\begin{align}
& \int_{(\T^d)^3} \!\rmd k_2\rmd k_3\rmd k_4\,
 \delta(k_1+k_2-k_3-k_4)
\delta(\omega_1+\omega_2-\omega_3-\omega_4)  F(k_2,k_3,k_4)
\nonumber \\ & \quad
= \int_{-\infty}^{\infty}\!\frac{\rmd s}{2\pi} \Bigl[ \int_{(\T^d)^3} \!\rmd
k_2\rmd k_3\rmd k_4\,
 \delta(k_1+k_2-k_3-k_4)
 \rme^{\ci s (\omega_1+\omega_2-\omega_3-\omega_4)}
 F(k_2,k_3,k_4) \Bigr] \, .
\end{align}
This gives a precise meaning to the ``energy conservation'' $\delta$-function
in (\ref{eq:Gamma2}), and proves the equality.
We wish to stress here that this $\delta$-function is a non-trivial
constraint, and can produce
non-smooth behavior even for smooth dispersion relations.

\begin{proofof}{Proposition \ref{th:defendelta}}
Since $f \in \ell_1((\Z^d)^3)$, we have as an absolutely convergent sum,
\begin{align}
\FT{f}(k,k',k_0-k-k') = \sum_{x_1,x_2,x_3\in \Z^d} \!\!
\rme^{-\ci 2\pi (k\cdot (x_1-x_3)+k'\cdot (x_2-x_3)+k_0\cdot x_3)} f(x_1,x_2,x_3)\, .
\end{align}
We insert this in the integrand and use Fubini's theorem to exchange the order
of $x$-sum and $k,k'$-integrals.
The resulting convolution integral over $k,k'$ can be expressed in terms of
$p_t(x)$, which is the inverse Fourier transform of
$k\mapsto\rme^{-\ci t \omega(k)}$.  This proves that
\begin{align}\label{eq:convsplit}
& \int_{(\T^d)^2} \!\rmd k'\rmd k\,
 \rme^{\ci s (\omega(k) + \sigma' \omega(k')+ \sigma \omega(k_0-k-k'))}
  \FT{f}(k,k',k_0-k-k')
\nonumber \\ & \quad
= \sum_{x_1,x_2,x_3\in \Z^d} f(x_1,x_2,x_3)
\nonumber \\ & \qquad\times
\sum_{y\in \Z^d}
\rme^{-\ci 2\pi k_0\cdot (y+x_3)}  p_{-s}(y+x_3-x_1)
p_{-\sigma' s}(y+x_3-x_2) p_{-\sigma s}(y)\, .
\end{align}
Thus by H\"{o}lder's inequality and the property
$\norm{p_{-s}}_3 = \norm{p_{s}}_3$,
its absolute value is bounded by
$\norm{f}_1\norm{p_{s}}_3^3\le \norm{f}_1 C\sabs{s}^{-1-\delta}$.
This proves (\ref{eq:defendelta}).  Equation (\ref{eq:leadphasebnd}) follows
then by applying the result to
$f(x_1,x_2,x_3)= \prod_{i=1}^3 \1(x_i=0)$.\qed
\end{proofof}

\begin{lemma}[Degree one vertex]\label{th:degoneest}
For any $k_0\in \T^d$,
$\alpha\in \R$, $|\beta|>0$, $0<\lambda\le \lambda'_0$,
and $\sigma,\sigma'\in\set{\pm 1}$,
\begin{align}\label{eq:degoneest}
\int_{\T^d} \!\rmd k\,
\frac{\Fone(\sigma' k_0)}{|\omega(k) + \sigma \omega(k_0-k)-\alpha 
+\ci \beta|}
\le C\lambda^{-b} \sabs{\ln |\beta|}^2\, ,
\end{align}
where $C$ depends only on $\omega$ and the basic cutoff function $\varphi$.
\end{lemma}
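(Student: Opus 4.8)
The factor $\Fone(\sigma'k_0)$ is constant in the integration variable, so the plan is to pull it out and prove the two bounds
\[
\Fone(\sigma'k_0)\le C_1\lambda^{-b}\,\rho,\qquad
\int_{\T^d}\frac{\rmd k}{|\omega(k)+\sigma\omega(k_0-k)-\alpha+\ci\beta|}\le \frac{C}{\rho}\,\sabs{\ln|\beta|}^2,
\qquad \rho:=d(\sigma'k_0,\Msing),
\]
and then multiply them. The first is exactly the fifth item of Lemma~\ref{th:Foneprop}, and if $\rho=0$ the left side of (\ref{eq:degoneest}) vanishes by the second item of that Lemma, so I may assume $\rho>0$. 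Writing $\phi(k):=\omega(k)+\sigma\omega(k_0-k)=\omega(k)+\sigma\omega(k-k_0)$ (using $\omega(-k)=\omega(k)$), everything reduces to the second bound.

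The key is a sublevel-set estimate
\[
S(\epsilon):=\bigl|\{k\in\T^d:|\phi(k)-\alpha|<\epsilon\}\bigr|\le C\,\min\!\Bigl(1,\ \tfrac{\epsilon}{\rho}\bigl(1+\sabs{\ln(\epsilon/\rho)}\bigr)\Bigr),
\]
proved by the standard Fourier/bump method. I fix once and for all $\chi\in C^\infty_c(\R)$ with $\chi\ge0$, $\chi\equiv1$ on $[-1,1]$, $\mathrm{supp}\,\chi\subset[-2,2]$, and set $\chi_\epsilon(y):=\chi(y/\epsilon)$, so that $\1(|y|<\epsilon)\le\chi_\epsilon(y)$ and $\chi_\epsilon$ has Fourier transform $\widehat{\chi_\epsilon}(u)=\epsilon\widehat{\chi}(\epsilon u)$ with $|\widehat{\chi_\epsilon}(u)|\le C_\chi\epsilon\min(1,(\epsilon|u|)^{-2})$. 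Then by Fubini
\[
S(\epsilon)\le\int_{\T^d}\chi_\epsilon(\phi(k)-\alpha)\,\rmd k=\int_\R\widehat{\chi_\epsilon}(u)\,\rme^{-\ci u\alpha}\,g(u)\,\rmd u,\qquad g(u):=\int_{\T^d}\rme^{\ci u\phi(k)}\,\rmd k.
\]
For $g$ I use the constructive-interference bound DR\ref{it:DRinterf}: when $\sigma'=+1$ it gives $|g(u)|\le C_2\sabs{u}^{-1}/d(k_0,\Msing)$ directly, and when $\sigma'=-1$ the change of variables $k\mapsto-k$ together with $\omega(-k)=\omega(k)$ rewrites $g(u)=\int_{\T^d}\rme^{\ci u(\omega(k')+\sigma\omega(k'-(-k_0)))}\,\rmd k'$, whence DR\ref{it:DRinterf} with shift $-k_0$ gives $|g(u)|\le C_2\sabs{u}^{-1}/d(-k_0,\Msing)$; in both cases $|g(u)|\le\min(1,C_2/(\rho\sabs u))$. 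Inserting the two bounds and splitting the $u$-integral at the scales $|u|\sim1$, $|u|\sim C_2/\rho$, $|u|\sim1/\epsilon$ yields the claimed estimate for $S(\epsilon)$; the only source of the logarithm is $\int_{C_2/\rho}^{1/\epsilon}|u|^{-1}\,\rmd u=\ln(\rho/(C_2\epsilon))$, so it is genuinely a logarithm of $\epsilon/\rho$, which is essential below. The trivial bound $S(\epsilon)\le|\T^d|=1$ covers the range where $\epsilon/\rho$ is not small.

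The $k$-integral is then controlled by the layer-cake identity $|w|^{-1}=\int_0^\infty\1(|w|<1/\mu)\,\rmd\mu$: substituting $\epsilon=1/\mu$ and using $\{(\phi-\alpha)^2+\beta^2<\epsilon^2\}\subset\{|\phi-\alpha|<\epsilon\}$ gives $\int_{\T^d}|\phi(k)-\alpha+\ci\beta|^{-1}\rmd k\le\int_{|\beta|}^\infty S(\epsilon)\,\epsilon^{-2}\,\rmd\epsilon$. On $\epsilon\ge\max(|\beta|,\rho)$ the bound $S(\epsilon)\le1$ contributes $\le 1/\max(|\beta|,\rho)\le1/\rho$. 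If $|\beta|<\rho$, the range $[|\beta|,\rho]$ is handled with the sublevel-set bound, and the substitution $v=\epsilon/\rho$ turns the contribution into $\tfrac{C}{\rho}\int_{|\beta|/\rho}^1\tfrac{1+\sabs{\ln v}}{v}\,\rmd v$; since $\ln(\rho/|\beta|)\le\ln(D_0/|\beta|)\le C\sabs{\ln|\beta|}$ with $D_0$ the diameter of $\T^d$, this is $\le \tfrac{C}{\rho}\sabs{\ln|\beta|}^2$. Thus $\int_{\T^d}|\phi(k)-\alpha+\ci\beta|^{-1}\rmd k\le C\rho^{-1}\sabs{\ln|\beta|}^2$, and multiplying by $\Fone(\sigma'k_0)\le C_1\lambda^{-b}\rho$ proves the Lemma with a constant depending only on $\omega$ and $\varphi$ (the auxiliary $\chi$ being universal); for $\lambda\le\lambda'_0\le1$ one has $\lambda^{-b}\ge1$, which absorbs the $O(1)$ terms. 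The only real work is the bookkeeping in the sublevel-set step: the oscillatory $u$-integral must be split so that the emerging logarithm is $\sabs{\ln(\epsilon/\rho)}$ and not $\sabs{\ln\epsilon}$, since only the former is absorbed by the rescaling $v=\epsilon/\rho$ and thereby keeps the final constant free of any $|\ln\lambda|$; the two signs of $\sigma'$ and the borderline regime $|\beta|\gtrsim\rho$ are routine.
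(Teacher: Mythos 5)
Your argument is correct, and it follows the paper's overall skeleton: factor out $\Fone(\sigma' k_0)\le C_1\lambda^{-b}\,d(\sigma' k_0,\Msing)$ via Lemma \ref{th:Foneprop} (with the trivial case $d(\sigma'k_0,\Msing)=0$ handled separately), reduce $\sigma'=-1$ to $\sigma'=+1$ by the substitution $k\mapsto\sigma'k$ and evenness of $\omega$, and then show that the resolvent integral is bounded by $C\,d(\sigma'k_0,\Msing)^{-1}\sabs{\ln|\beta|}^2$ using the constructive-interference bound DR\ref{it:DRinterf}. Where you genuinely differ is the mechanism converting the resolvent into the oscillatory integrals to which DR\ref{it:DRinterf} applies: the paper quotes the representation $|r+\ci\beta|^{-1}=\sabs{\ln\beta}\int\rmd s\,\rme^{\ci s r}F(s;\beta)$ (Lemma 4.21 of \cite{ls05}), applies Fubini, and gets one logarithm from the prefactor and the second from $\int_1^\infty \rmd s\,s^{-1}\rme^{-\beta s}$; you instead prove a sublevel-set estimate $|\{|\phi-\alpha|<\epsilon\}|\lesssim\min\bigl(1,\tfrac{\epsilon}{\rho}(1+\sabs{\ln(\epsilon/\rho)})\bigr)$ by testing against a smooth bump and estimating its Fourier side with DR\ref{it:DRinterf}, and then run a layer-cake integration in $\epsilon$, with both logarithms emerging from $\int_{|\beta|/\rho}^1 v^{-1}(1+\sabs{\ln v})\,\rmd v$. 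The trade-off is that your route is self-contained (no external lemma from \cite{ls05}) and cleanly isolates why the logarithm must be in $\epsilon/\rho$ rather than $\epsilon$, at the cost of an extra page of bookkeeping (the bump function, the three-scale split of the $u$-integral, and the borderline regime $|\beta|\gtrsim\rho$), whereas the paper's version is a three-line computation once the cited representation of $|r+\ci\beta|^{-1}$ is accepted; both yield the same constant structure, depending only on $\omega$ and $\varphi$.
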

\begin{proof}
The left hand side of (\ref{eq:degoneest}) does not depend on the sign of
$\beta$, and thus it suffices to consider $\beta>0$.
The result holds trivially for any $C\ge 1$ if $|\beta|\ge 1$.
Furthermore, if
we change the integration variable from $k$ to $k'=\sigma' k$,
the left hand side becomes
\begin{align} 
& \int_{\T^d} \!\rmd k'\,
\frac{\Fone(\sigma' k_0)}{|\omega(\sigma' k') +
\sigma \omega(\sigma'(\sigma' k_0-k'))-\alpha +\ci \beta|}
 \nonumber \\ & \quad
=
\int_{\T^d} \!\rmd k\,
\frac{\Fone(\sigma' k_0)}{|\omega(k) +
\sigma \omega(\sigma' k_0-k)-\alpha +\ci \beta|} \, .
\end{align}
Thus it is enough to prove the theorem for $\sigma'=1$.

Let us thus assume $0<\beta\le 1$, $\sigma'=1$.
We apply Lemma \ref{th:Foneprop} to the left hand side,
which proves that it is then bounded by
\begin{align}
C_1 \lambda^{-b} d(k_0,\Msing)
\int_{\T^d} \!\rmd k\,
\frac{1}{|\omega(k) + \sigma \omega(k_0-k)-\alpha+\ci \beta|} \,.
\end{align}
In particular, if $d(k_0,\Msing)=0$ the left hand side is zero, and the
bound (\ref{eq:degoneest}) holds trivially.
Let us thus assume $k_0\not\in \Msing$.
By Lemma 4.21 in \cite{ls05}, for any real $r,\beta$,
\begin{align}\label{eq:restophases}
\frac{1}{|r+\ci \beta|}
 = \sabs{\ln\beta}
 \int_{-\infty}^{\infty} \!\rmd s\, \rme^{\ci s r} F(s;\beta)
\end{align}
where $F(s;\beta)\ge 0$ is such that
$F(s;\beta)\le \rme^{-\beta|s|} + \1(|s|\le 1) \ln |s|^{-1}$.
The bound is uniformly integrable in $s$.
Applying this representation and then Fubini's theorem shows that
\begin{align}
& \int_{\T^d} \!\rmd k\,
\frac{1}{|\omega(k) + \sigma \omega(k_0-k)-\alpha+\ci \beta|}
\nonumber \\ & \quad
\le \sabs{\ln \beta} \int_{-\infty}^{\infty} \!\rmd s\, F(s;\beta)
 \left| \int_{\T^d} \!\rmd k\,  \rme^{\ci s (\omega(k) + \sigma
     \omega(k_0-k))} \right|
\nonumber \\ & \quad
\le \sabs{\ln \beta}
\int_{-\infty}^{\infty} \!\rmd s\, F(s;\beta)
\frac{C\sabs{s}^{-1} }{d(k_0,\Msing)}
\nonumber \\ & \quad
\le \sabs{\ln \beta}\frac{C}{d(k_0,\Msing)}
\Bigl( \int_{-1}^{1} \!\rmd s\, (1+\ln |s|^{-1})
+ 2 \int_{1}^{\infty} \!\rmd s\, \frac{1}{s} \rme^{-\beta s} \Bigr)
\nonumber \\ & \quad
\le \sabs{\ln \beta}^2\frac{C'}{d(k_0,\Msing)}\, ,
\end{align}
where in the second inequality we have used assumption
(DR\ref{it:DRinterf}).  Collecting the estimates together yields the bound in
(\ref{eq:degoneest}).
\qed \end{proof}

\begin{lemma}[Degree two vertex]\label{th:degtwoest}
For any $k_0\in \T^d$, $\alpha\in \R$, $|\beta|>0$,
and $\sigma,\sigma'\in\set{\pm 1}$,
\begin{align}\label{eq:degtwoest}
\int_{(\T^d)^2} \!\rmd k'\rmd k\,
\frac{1}{|\omega(k) + \sigma' \omega(k')
+ \sigma \omega(k_0-k-k')-\alpha +\ci \beta|}
\le C \sabs{\ln |\beta|}\, ,
\end{align}
where $C$ depends only on $\omega$.
\end{lemma}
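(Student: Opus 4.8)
The plan is to argue exactly as in the proof of Lemma~\ref{th:degoneest}, but using that the genuinely two-dimensional oscillatory integral over $(k,k')$ decays like $\sabs{s}^{-1-\delta}$ rather than merely $\sabs{s}^{-1}$; this stronger decay is precisely what makes the right-hand side of~(\ref{eq:degtwoest}) carry only one power of $\sabs{\ln|\beta|}$ instead of a square. First I would perform the standard reductions: since $|r+\ci\beta|$ depends on $\beta$ only through $|\beta|$ it suffices to treat $\beta>0$, and when $\beta\ge 1$ the integrand is bounded pointwise by $|\beta|^{-1}\le 1$ on the unit-measure torus $(\T^d)^2$, so~(\ref{eq:degtwoest}) holds trivially for any $C\ge 1$ because $\sabs{\ln|\beta|}\ge 1$. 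It thus remains to handle $0<\beta\le 1$.

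Writing $r(k,k')=\omega(k)+\sigma'\omega(k')+\sigma\omega(k_0-k-k')-\alpha$, I would insert the resolvent-to-phase representation already used in the proof of Lemma~\ref{th:degoneest} (Lemma~4.21 of~\cite{ls05}), namely $\frac{1}{|r+\ci\beta|}=\sabs{\ln\beta}\int_{-\infty}^{\infty}\rmd s\,\rme^{\ci s r}\,F(s;\beta)$ with $0\le F(s;\beta)\le \rme^{-\beta|s|}+\1(|s|\le 1)\ln|s|^{-1}$, a bound integrable in $s$ uniformly for $\beta\in(0,1]$. Since $|\rme^{\ci s r}|=1$ on the finite-measure set $(\T^d)^2$, Fubini's theorem lets me interchange the $s$-integral with the $(k,k')$-integral; pulling out the constant phase $\rme^{-\ci s\alpha}$ of modulus one, the left-hand side of~(\ref{eq:degtwoest}) is bounded by
\[
\sabs{\ln\beta}\int_{-\infty}^{\infty}\!\rmd s\,F(s;\beta)\,\Bigl|\int_{(\T^d)^2}\!\rmd k'\rmd k\,\rme^{\ci s(\omega(k)+\sigma'\omega(k')+\sigma\omega(k_0-k-k'))}\Bigr|.
\]

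At this point I would invoke Proposition~\ref{th:defendelta}, specifically the bound~(\ref{eq:leadphasebnd}), which estimates the inner oscillatory integral by $C\sabs{s}^{-1-\delta}$ uniformly in $k_0,\sigma,\sigma'$, with $\delta>0$ the constant from assumption~(DR\ref{it:DRdisp}). The estimate then reduces to showing that $\int_{-\infty}^{\infty}\rmd s\,F(s;\beta)\,\sabs{s}^{-1-\delta}$ is bounded by a constant independent of $\beta$, which follows from $\sabs{s}^{-1-\delta}\le 1$ near the origin and $\sabs{s}^{-1-\delta}\le |s|^{-1-\delta}$ for $|s|\ge 1$, giving the $\beta$-independent bound $\int_{-1}^1(1+\ln|s|^{-1})\,\rmd s+\int_{|s|\ge 1}|s|^{-1-\delta}\,\rmd s<\infty$. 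Multiplying by $\sabs{\ln\beta}$ then yields~(\ref{eq:degtwoest}) with $C$ depending only on $\omega$.

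There is no real obstacle beyond this bookkeeping; the only place the dispersion hypotheses enter is the convergence of the final $s$-integral, and this is exactly where the present estimate differs from Lemma~\ref{th:degoneest}. There only the decay $\sabs{s}^{-1}$ coming from constructive interference~(DR\ref{it:DRinterf}) was available, which is not integrable at infinity and forced the use of the cutoff $\rme^{-\beta|s|}$, producing a second factor $\sabs{\ln\beta}$; here the $\ell_3$-dispersivity bound~(DR\ref{it:DRdisp}) supplies the strictly stronger decay $\sabs{s}^{-1-\delta}$, so the $s$-integral converges on its own and only one logarithm appears.
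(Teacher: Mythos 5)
Your proposal is correct and follows essentially the same route as the paper: reduce to $\beta>0$, insert the resolvent-to-phase representation from Lemma 4.21 of \cite{ls05}, interchange integrals by Fubini, and apply Proposition \ref{th:defendelta} (the $\ell_3$-dispersivity bound) to get $\sabs{s}^{-1-\delta}$ decay, whose $\beta$-uniform integrability yields the single logarithm. The paper's own proof is just a terser version of exactly this argument.
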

\begin{proof}
Again it suffices to consider $\beta>0$.
We apply the same representation of the resolvent term as in the proof of the
previous Lemma.  This shows that
\begin{align}
&\int_{(\T^d)^2} \!\rmd k'\rmd k\,
\frac{1}{|\omega(k) + \sigma' \omega(k')
+ \sigma \omega(k_0-k-k')-\alpha +\ci \beta|}
\nonumber \\ & \quad
\le \sabs{\ln \beta} \int_{-\infty}^{\infty} \!\rmd s\, F(s;\beta)
 \left| \int_{(\T^d)^2} \!\rmd k'\rmd k\,
 \rme^{\ci s (\omega(k) + \sigma' \omega(k')+ \sigma \omega(k_0-k-k'))}
\right|\, .
\end{align}
Applying Proposition \ref{th:defendelta} to the absolute value
shows that a constant $C$ for (\ref{eq:degtwoest})
can be found.
\qed \end{proof}

\section{Partially paired and higher order graphs}
\label{sec:higherorder}

In this section we consider relevant graphs which are either higher order,
when they necessarily contain a cluster $A'\in S$ with $|A'|\ge 4$, or they
are pairing and contain an interaction vertex of degree one.  We will show
that the contribution of these
graphs is negligible in all error terms and in the main term.
In addition, the related estimates will suffice to prove
that also all other contributions to the amputated and constructive
interference error terms are negligible.  We will use the notations introduced
in the earlier sections, in particular, in Section \ref{sec:iterclsuters}.

\begin{lemma}[Basic $\mathcal{A}$-estimate]\label{th:basicAest}
There is a constant $C>0$ such that for any (amputated)
momentum graph $\graph(S,J,n,\ell,n,\ell')$, $1\le n\le N_0$, and
$s>0$ we have
\begin{align}\label{eq:basicAest}
& \limsup_{\Lambda\to \infty}\,
\lambda^{2n} \sum_{\sigma,\sigma'\in \set{\pm 1}^{\mathcal{I}'_{n}} }
\1(\sigma_{n,1}=1)\1(\sigma'_{n,1}=-1)
\int_{(\Lambda^*)^{\mathcal{I}'_{n}}}  \!\rmd k\,
\int_{(\Lambda^*)^{\mathcal{I}'_{n}}}  \!\rmd k'\,
 \nonumber \\ & \quad \times
\Delta_{n,\ell}(k,\sigma;\Lambda) \Delta_{n,\ell'}(k',\sigma';\Lambda)
\prod_{A\in S} \delta_\Lambda\!\Bigl(\sum_{i\in A} K_i\Bigr)
\prod_{i=1}^{n} \Bigl[ \PFone(k_{i-1;\ell_i}) \PFone(-k'_{i-1;\ell'_i})
 \Bigr]
 \nonumber \\ & \quad \times
\left|
 \int_{(\R_+)^{I_{2,2n}}}\!\rmd r \,
\delta\Bigl(s-\sum_{i=2}^{2 n} r_i\Bigr)
 \prod_{i=2}^{2n} \rme^{-\ci r_i \gamma({i;J})}
\right|
 \nonumber \\ &
\le \rme^{s\lambda^2} \frac{(s\lambda^2)^{\tilde{n}_0-n'_0}
}{(\tilde{n}_0-n'_0)!}
\lambda^{2+\tilde{n}_2+(1-b)\tilde{n}_1-\tilde{n}_0} N_0^{-b_0 n'_0}
C^{1+\tilde{n}_1+\tilde{n}_2}
\sabs{\ln n} \sabs{\ln \lambda}^{1+\tilde{n}_2+2 \tilde{n}_1}
 \, ,
\end{align}
where $\tilde{n}_j=n_j-n_j(2)$ denotes the number of non-amputated interaction
vertices of degree $j$,
and $n'_0\ge 0$ counts the number of degree zero interaction vertices
$v_i$ with $2< i\le 2 n-N_0+1$.
\end{lemma}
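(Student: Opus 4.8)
The plan is to reduce the left-hand side of (\ref{eq:basicAest}) to an iterated integral over the free momenta $k_f$, $f\in\fedges$, and then estimate it one fusion vertex at a time: a dispersive bound (Proposition \ref{th:defendelta}) at each long time slice, and a resolvent bound (Lemma \ref{th:degoneest} or \ref{th:degtwoest}) at each short time slice. First I would take the limit $\Lambda\to\infty$. Resolving all momentum $\delta_\Lambda$-functions with the canonical spanning tree of Theorem \ref{th:intdeltas} turns the discrete sums over $(\Lambda^*)^{\mathcal{I}'_n}$ into a discrete sum over the free momenta only, which by (\ref{eq:LamtoLeb}) is an integral of a step function on $(\T^d)^{\fedges}$; since the integrand is bounded uniformly in $\Lambda$ (the cutoffs $\PFone$ lie in $[0,1]$ and the time factor has modulus $\le 1$ once the $\rme^{-r_i\gamma}$ phases are bounded trivially by their imaginary parts), dominated convergence replaces $\limsup_\Lambda$ by the corresponding integral over $(\T^d)^{\fedges}$, with all $\delta_\Lambda$'s turned into $\delta_{\T^d}$'s.

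Next I would rewrite the simplex time-integral using (\ref{eq:giJ}) and Theorem \ref{th:resolvents}: take $A$ to consist of all short time slices together with the degree-zero slices lying in the amputated top band $2<i\le 2n-N_0+1$, and keep the remaining $\tilde n_0-n_0'$ degree-zero slices (together with one $*$-index) in a reduced simplex of size $\le s$. This replaces the $A$-slices by a single contour integral $\oint_{\Gamma_D}\tfrac{\rmd z}{2\pi}$ over $\prod_{i\in A}\tfrac{\ci}{z-\gamma(i;J)}$, with $\Gamma_D$ at distance of order $\kappa'=\lambda^2 N_0^{b_0}$ from the real axis so that each $|z-\gamma(i;J)|$ is bounded below by a constant multiple of $\kappa'$ plus the oscillating contribution. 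The reduced simplex integral is bounded by $s^{\tilde n_0-n_0'}/(\tilde n_0-n_0')!$; together with the $\lambda^{2}$ carried by each such vertex out of the overall $\lambda^{2n}$ this produces the factor $(s\lambda^2)^{\tilde n_0-n_0'}/(\tilde n_0-n_0')!$, and the $\rme^{s\lambda^2}$ absorbs the residual exponential damping $\rme^{-r_i(\kappa_{n-m}+\kappa_{n-m'})}$ on those slices. For each degree-zero vertex in the top band one instead uses $\rme^{-r_i(\kappa_{n-m}+\kappa_{n-m'})}\le\rme^{-r_i\kappa'}$ and extends the $r_i$-integral to $[0,\infty)$, so its $\lambda^2$ is spent against $\kappa'^{-1}$ and yields exactly the gain $N_0^{-b_0}$ per such vertex.

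Then comes the core estimate of the free-momentum integral, carried out by peeling off the fusion vertices in decreasing time order. Using the time-order property of Theorem \ref{th:intdeltas} (for a free edge $f$ all $k_e$ with $e<f$ are independent of $k_f$) together with Proposition \ref{th:momatintv}, Lemma \ref{th:kesol3}, Lemma \ref{th:nokkdiff} and Corollary \ref{th:zerok}, at each vertex the free momenta terminating there appear only in the edges of $\edges_-(v)$ and can be integrated against the phase (or resolvent) attached to the slice(s) at $v$: at a degree-two vertex this integral is exactly of the form controlled by Lemma \ref{th:degtwoest}, giving a factor $C\sabs{\ln\lambda}$; at a degree-one vertex it is of the form in Lemma \ref{th:degoneest}, giving $C\lambda^{-b}\sabs{\ln\lambda}^2$, where the extra cutoff factor $\Fone$ and its bound $\Fone(k)\le C_1\lambda^{-b}d(k,\Msing)$ from Lemma \ref{th:Foneprop} is precisely what makes the resolvent integrable near $\Msing$; at a degree-zero long vertex the two free momenta are integrated against the pure oscillating phase and Proposition \ref{th:defendelta} supplies the decay $\sabs{r}^{-1-\delta}$ needed to perform the reduced simplex integral (or, for the top-band vertices, to make the resolvent contour integral over $z$ converge with the advertised gain). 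Collecting one factor per vertex, multiplying the $\lambda$-powers and $\sabs{\ln\lambda}$-powers according to the degree, and using $\tilde n_0+\tilde n_1+\tilde n_2=2n-2$ together with the relations of Lemma \ref{th:ivdegrees} to rewrite the exponent of $\lambda$, produces the bound on the right-hand side of (\ref{eq:basicAest}); the remaining $\sabs{\ln n}$ comes from summing the $\sabs{s}^{-1}$ tails over the $\le 2n$ slices.

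The hard part will be the bookkeeping: keeping track of which free momenta are still available at each step and verifying that after integrating out the momenta at one vertex the remaining integrand still has the exact structure required by the next application of Lemma \ref{th:degoneest}, Lemma \ref{th:degtwoest} or Proposition \ref{th:defendelta} (in particular that the resolvent arguments $\alpha$ and $\beta$ stay in admissible ranges, and that no accidental cancellation of a free momentum occurs outside the cases catalogued by Corollary \ref{th:zerok}). This is where the fine structural results of Section \ref{sec:momdeltas} are indispensable, and where the amputation of the first two time slices — which guarantees that every free momentum is attached to an interaction vertex of positive degree — is what makes the iteration close.
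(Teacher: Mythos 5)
Your overall skeleton is the paper's: resolve the Kirchhoff constraints with the time-ordered spanning tree and pass to $\Lambda\to\infty$ by dominated convergence, split the time slices via Theorem \ref{th:resolvents} into a resolvent part (short slices) and a simplex part (long slices), extract the gain $N_0^{-b_0}$ from each damped degree-zero slice with index in $2< i\le 2n-N_0+1$ by spending its $\lambda^2$ against $(\kappa')^{-1}$, and then use one application of Lemma \ref{th:degoneest} or Lemma \ref{th:degtwoest} per short vertex before counting powers with $\tilde n_0+\tilde n_1+\tilde n_2=2n-2$. However, one step as you describe it would fail. The vertex-by-vertex momentum integration must run in \emph{increasing} time order (bottom to top), not in ``decreasing time order'': by Theorem \ref{th:intdeltas}, a free momentum $k_f$ ending at a fusion vertex $v$ enters only the phases of slices strictly \emph{below} $v$, so the resolvent of a given short slice depends on its own free momenta \emph{and} on those of all higher short vertices, while it is independent of the lower ones. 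Processing bottom-up, at each step exactly one remaining resolvent depends on the variables being integrated, and the not-yet-integrated higher momenta merely shift the parameter $\alpha$, which is exactly what Lemma \ref{th:degoneest} and Lemma \ref{th:degtwoest} (uniform in $\alpha$) permit. Processing top-down, the momenta of the topmost short vertex still sit inside the resolvents of lower short slices, and you cannot integrate them against a single resolvent without first bounding the others by their supremum, which costs a factor of order $\lambda^{-2}$ per resolvent and ruins the claimed power of $\lambda$.

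Your treatment of the degree-zero (long) slices in the third paragraph is also off: a degree-zero vertex has, by definition, no free momenta ending at it, so there are no ``two free momenta integrated against the pure oscillating phase'' there, and this estimate does not use the $\sabs{r}^{-1-\delta}$ dispersivity of Proposition \ref{th:defendelta} on long slices at all --- their time factors are simply bounded in modulus, producing the simplex volume $s^{\tilde n_0-n'_0}/(\tilde n_0-n'_0)!$ and the $(\kappa')^{-1}$ gains that you already wrote down correctly in your second paragraph (the dispersive bound enters only inside Lemma \ref{th:degtwoest}, and later in the analysis of fully paired graphs). Relatedly, amputation does not guarantee that every free momentum attaches to a vertex of positive degree: free momenta may end at the two amputated vertices or at the top fusion vertex, and they are handled by estimating the corresponding $\PFone\le 1$ and integrating them trivially to a factor one, since those integrations no longer see any momentum dependence. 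Finally, the factor $\sabs{\ln n}$ comes from the contour integral of $1/|z|$ along $\Gamma_N$, whose horizontal extent grows linearly in $n$, not from summing $\sabs{s}^{-1}$ tails over slices; this last point is cosmetic, but the iteration order and the role of the degree-zero vertices are not.
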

In the above, $n'_0=0$, unless $n\ge (N_0+1)/2$, when
$n'_0=n_0(2 n-N_0+1)-n_0(2)$.
\begin{proof}
We first perform the sums over $\sigma$ and $\sigma'$, which have only one
non-zero term, the one with the appropriate propagation of parities.
We resolve the momentum constraints as explained in Section
\ref{sec:momdeltas}, \itie ,  we integrate out all the $\delta_\Lambda$-terms
using the time-ordered spanning tree.  We rewrite the remaining (free)
$k,k'$-integrals as in (\ref{eq:LamtoLeb}) and
thus convert all sums into Lebesgue
integrals over step functions. Since the resulting integrand is uniformly
bounded, using dominated convergence theorem we can take the limit
$\Lambda\to \infty$ inside the integrals. This proves the existence of the
limit, and the
resulting formula is, in fact, identical to the one obtained by
replacing in the left hand side of (\ref{eq:basicAest}) every $\Lambda^*$ by
$\T^d$
and all $\delta_\Lambda$ by $\delta=\delta_{\T^d}$.
However, we continue using the time-ordered resolution of momentum constraints
also after the continuum limit $\Lambda\to\infty$ has been taken.

There are total $N=2 n$ interaction vertices,
and let $A_j$, $j=0,1,2$, denote the collection of time slice indices
$2\le i< N$ such that $\deg v_{i+1} =j$.   Some of the sets can be empty,
but they are disjoint and their union is
$\set{2,3,\ldots,N -1}$.
Let further $B=\defset{i\in A_0}{i\le N-N_0}$ (which can be
empty).  For every $i\in B$ we thus have $\lfloor i/2\rfloor \le n-(N_0/2)$.
Set $\gamma_i=\gamma(i;J)$.  Then $-2 \kappa'\le \im \gamma_i \le 0$
and $|\re\gamma_i|\le 2 N\norm{\omega}_\infty$ for all $i$.
We can thus apply Lemma \ref{th:resolvents} with
$A=\set{N}\cup A_1\cup A_2$,
and using the path $\Gamma_N$ depicted in Fig.~\ref{fig:gnpath}.
Since then  $A'=\set{*}\cup A_0$, we find
\begin{align} 
& \left|  \int_{(\R_+)^{I_{2,2n}}}\!\rmd r \,
\delta\Bigl(s-\sum_{i=2}^{2 n} r_i\Bigr)
 \prod_{i=2}^{2n} \rme^{-\ci r_i \gamma_i}
\right|
\nonumber \\ & \quad
\le
\oint_{\Gamma_N} \frac{|\rmd z|}{2\pi}
\int_{(\R_+)^{A'}}\!\rmd r \,  \delta\Bigl(s-\sum_{i\in A'} r_i\Bigr)
\left|\rme^{-\ci r_* z}\right|
\prod_{i\in A_0} \left|\rme^{-\ci r_i  \gamma_i}\right|
\prod_{i\in A} \frac{1}{|z-\gamma_i|} \, .
\end{align}

If $i\in B$, we have $\im (-\gamma_i)= \kappa_{n-m} + \kappa_{n-m'}$
with $m+m'$ equal to $2 + J_+(i{-}2;J) + J_-(i{-}2;J)=i$.  Thus then
$\min(m,m')\le \lfloor i/2\rfloor \le n-(N_0/2)$, and therefore,
$\im (-\gamma_i) \ge \kappa_{n-\min(m,m')}=\kappa'=\lambda^2 N_0^{b_0}$.
In general, $\im (-\gamma_i) \ge 0$, and we obtain the estimates
\begin{align}\label{eq:intrtrick}
& \int_{(\R_+)^{A'}}\!\rmd r \,  \delta\Bigl(s-\sum_{i\in A'} r_i\Bigr)
\left|\rme^{-\ci r_* z}\right|
\prod_{i\in A_0} \left|\rme^{-\ci r_i  \gamma_i}\right|
\nonumber \\ & \quad
\le \rme^{s (\im z)_+} \int_{(\R_+)^{B}}\!\rmd r \,
\prod_{i\in B} \rme^{-\kappa' r_i}
 \int_{(\R_+)^{A'\setminus B}}\!\rmd r \,
\delta\Bigl(s-\sum_{i\in B} r_i-\sum_{i\in A'\setminus B} r_i\Bigr)
\nonumber \\ & \quad
\le \rme^{s (\im z)_+} (\kappa')^{-|B|} \frac{s^{\tilde{n}}}{\tilde{n}!} \, ,
\end{align}
where $(\cdot)_+$ was defined in (\ref{eq:softpiold}) and
$\tilde{n}=|A'\setminus B|-1=|A_0\setminus B|=|A_0|-|B|=\tilde{n}_0-n'_0$.
Since $\gamma_{2 n-2}=-\ci 2 \kappa_0=0$, this shows that
\begin{align} 
& \left| \int_{(\R_+)^{I_{2,2n}}}\!\rmd r \,
\delta\Bigl(s-\sum_{i=2}^{2 n} r_i\Bigr)
 \prod_{i=2}^{2n} \rme^{-\ci r_i \gamma_i}
\right|
\nonumber \\ & \quad
\le  (\kappa')^{-n'_0} \frac{s^{\tilde{n}_0-n'_0}}{(\tilde{n}_0-n'_0)!}
\oint_{\Gamma_N} \frac{|\rmd z|}{2\pi}
\frac{\rme^{s (\im z)_+}}{|z|}
\prod_{i\in A_1\cup A_2} \frac{1}{|z-\gamma_i|} \, .
\end{align}

\begin{figure}
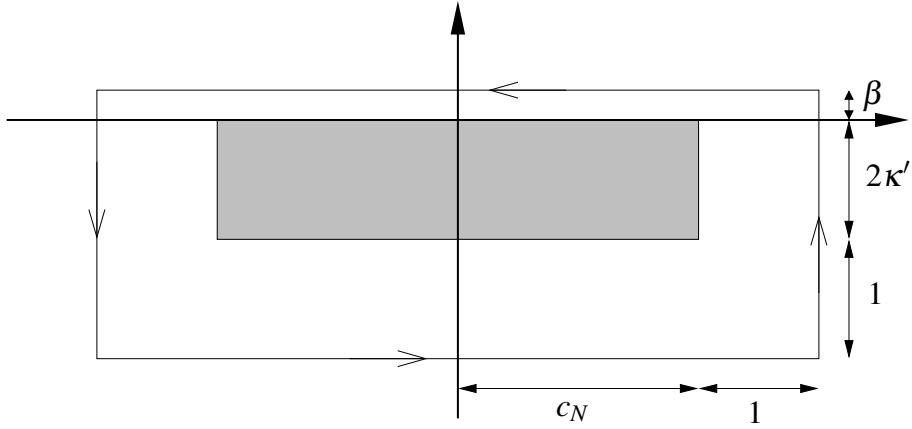

\centering
  \myfigure{width=0.8\textwidth}{Gnpath}
  \medskip
\caption{Integration path $\Gamma_N$.  
Here $c_N=2 (N+1) \norm{\omega}_\infty$,
$\beta=\lambda^2$, and the shaded area contains all possible values of
$\gamma(i;J)$ for momentum graphs with $N$ interaction vertices.
\label{fig:gnpath}}
\end{figure}

We then estimate $\PFone(k_{0;\ell_1})\PFone(-k'_{0;\ell_1}) \le 1$
to remove the dependence on the two ``amputated'' interaction vertices
at the bottom of the interaction trees.  If there is any free momenta
associated with these vertices, they will be integrated over next, resulting
in an irrelevant factor of $1$.

Each of the resolvents in $\frac{1}{|z-\gamma_i|}$,
$i\in A_1\cup A_2$, depends only on the free momenta associated with edges
ending on a time slice $i'\ge i$.  Consider a degree one vertex in the 
plus-tree.
By Proposition \ref{th:momatintv}, there is a permutation $\pi$
of $\set{0,1,2}$
such that $\tilde{k}=k_{i-1,\ell_i+\pi(1)}$ is the free momentum,
and neither $k_{i-1,\ell_i+\pi(3)}$ nor
$k_0=k_{i-1,\ell_i+\pi(2)}+k_{i-1,\ell_i+\pi(1)}$
depend on $\tilde{k}$. We then estimate $\PFone(k_{i-1;\ell_i}) \le \Fone
(k_0)$.
Analogously, for every degree one vertex in the minus tree, we can estimate
$\PFone(-k'_{i-1;\ell'_i}) \le \Fone (-k'_0)$.
We remove all remaining $\PFone$, which are thus attached to a degree zero or
two vertex, by
the trivial estimate, $\PFone\le 1$.

After this we can use the estimates given in Lemmata
\ref{th:degoneest} and \ref{th:degtwoest} to iterate through the free momentum 
integrals in the direction of time,
\itie , from bottom to top in the graph.  At each iteration step, only one
resolvent factor depends on the corresponding free momenta, and the remaining
free momenta only affect the value of
``$\alpha$'' in the resolvent factor.  The estimates can be used with
$\beta=\lambda^2$
for those $z\in \Gamma_N$ in the top horizontal part of the path; we can
ignore the imaginary part of $\gamma_i$, since this is always negative, and
thus would only increase the ``$\beta$'' in the Lemmas
and lower the value of the resolvent factor.
For the remaining $z$
we have $|z-\gamma_i|\ge 1$, and the upper bounds remain valid also for these
values of $z$, after we adjust the constant
so that $C\ge 1$.
After the last iteration step, there is one free momentum integral left,
provided that there are any free momenta attached to the
top fusion vertex.  However, since the remaining integrand is
momentum-independent, this integral yields a trivial factor $1$, and can thus
be ignored.
The only remaining integral is over $z$.  This we estimate by
\begin{align} 
\oint_{\Gamma_N} \frac{|\rmd z|}{2\pi}
\frac{\rme^{s (\im z)_+}}{|z|}
\le C \rme^{s\lambda^2} \sabs{\ln N} \sabs{\ln \beta}\, ,
\end{align}
where $C$ is a constant which depends only on $\norm{\omega}_\infty$.
Collecting the estimates together yields the upper bound in
(\ref{eq:basicAest}); the power of $\lambda$ arising from the estimates
is $2 n-2(\tilde{n}_0-n'_0)-2 n'_0-b \tilde{n}_1$ which we have simplified to
$2+\tilde{n}_2+(1-b)\tilde{n}_1-\tilde{n}_0$ using
$2 n-2=\tilde{n}_0+\tilde{n}_1+\tilde{n}_2$.
\qed \end{proof}

For the following result, let us recall the definition of the time-dependent
exponents in a main term,
$\gamma(i)$ in equation (\ref{eq:defgammam}).  In the analysis of the partial
integration error term, Section \ref{sec:partierr}, we will need a
generalization of this phase factor to a case with interactions also in the
minus tree.  To this end we define
\begin{align}\label{eq:defgtilde}
\tilde\gamma({i;J})=\gamma^+_j+\gamma^-_{j'},\quad\text{with}\quad
j=J_+(i;J),\ j'=J_-(i;J)\, ,
\end{align}
where thus $j,j'\in\set{0,1,\ldots,n}$.  For the main term, with $n'=0$,
we have $J(i)=+1$, for all $i$, and $\gamma(i)=\tilde\gamma({i;1})$. 
Although the functional dependence on the mapping $J$ is different between
$\tilde{\gamma}$ and the amputated $\gamma$, in both cases the correct
exponential can be read off from our momentum graphs by summing over
$\sigma_e\omega(k_e)$, for all edges $e$ which intersect the corresponding
time slice.  Therefore, we will make no distinction between the amputated and
non-amputated exponentials and denote both by $\gamma(i;J)$.

\begin{lemma}[Basic $\mathcal{F}$-estimate]\label{th:basicFest}
There is a constant $C>0$ such that for any
momentum graph $\graph(S,J,n,\ell,n',\ell')$, $n,n'\ge 0$, and
$s>0$ we have
\begin{align}\label{eq:basicFest}
& \lim_{\Lambda\to \infty}
\lambda^{n+n'} \sum_{\sigma\in \set{\pm 1}^{\mathcal{I}'_{n}} }
\sum_{\sigma'\in \set{\pm 1}^{\mathcal{I}'_{\smash[t]{n'}}} }
\1(\sigma_{n,1}=1)\1(\sigma'_{n',1}=-1)
 \nonumber \\ & \quad \times
\int_{(\Lambda^*)^{\mathcal{I}'_{n}}}  \!\rmd k\,
\int_{(\Lambda^*)^{\mathcal{I}'_{\smash[t]{n'}}}}  \!\rmd k'\,
\Delta_{n,\ell}(k,\sigma;\Lambda) \Delta_{n',\ell'}(k',\sigma';\Lambda)
\prod_{A\in S} \delta_\Lambda\!\Bigl(\sum_{i\in A} K_i\Bigr)
 \nonumber \\ & \qquad \times
\prod_{i=1}^{n} \PFone(k_{i-1;\ell_i})
\prod_{i=1}^{n'} \PFone(-k'_{i-1;\ell'_i})
\left| \int_{(\R_+)^{I_{0,n+n'}}}\!\rmd r \,
\delta\Bigl(s-\sum_{i=0}^{n+n'} r_i\Bigr)
 \prod_{i=0}^{n+n'} \rme^{-\ci r_i \gamma({i;J})}\right|
 \nonumber \\ &
\le \rme^{s\lambda^2} \frac{(s\lambda^2)^{n_0} }{(n_0)!}
\lambda^{n_2+(1-b)n_1-n_0} C^{1+n_1+n_2}
\sabs{\ln (n+n'+1)} \sabs{\ln \lambda}^{1+n_2+2 n_1}
 \, ,
\end{align}
where $n_i$ denotes the number of interaction vertices of degree $i$.
\end{lemma}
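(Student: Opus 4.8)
The plan is to follow the proof of Lemma~\ref{th:basicAest} almost verbatim, dropping the two features that were specific to the amputated amplitudes: there are now no amputated time slices, and we extract no decay from the artificial damping $\kappa_n$, so the factors $(\kappa'(\lambda))^{-n'_0}$ and $N_0^{-b_0 n'_0}$ do not appear. First I would perform the sums over $\sigma,\sigma'$, each of which has exactly one nonzero summand, fixed by the propagation of parities. Then I would resolve the momentum $\delta_\Lambda$-constraints using the time-ordered spanning tree of Section~\ref{sec:momdeltas}, rewrite the remaining free $k,k'$-integrals as Lebesgue integrals over step functions via (\ref{eq:LamtoLeb}), and pass to $\Lambda\to\infty$ by dominated convergence (the integrand is uniformly bounded). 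This proves that the limit exists and reduces the claim to the same expression with every $\Lambda^*$ replaced by $\T^d$ and every $\delta_\Lambda$ by $\delta$, while we continue to use the time-ordered resolution of the constraints.

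Next I would apply Theorem~\ref{th:resolvents} to the time-simplex integral. With $N=n+n'$, let $A_j$ ($j=0,1,2$) collect the time slices $0\le i\le N-1$ with $\deg v_{i+1}=j$, so that $|A_j|=n_j$, and take $A=\set{N}\cup A_1\cup A_2$, $A^{\rm c}=A_0$, $A'=\set{*}\cup A_0$, using the path $\Gamma_N$ of Fig.~\ref{fig:gnpath}. Since $\im\gamma(i;J)\le 0$ for all $i$ (the imaginary parts are nonpositive sums of $\kappa$'s), the $A_0$-integrals are bounded by $\rme^{s(\im z)_+}\,s^{n_0}/n_0!$ and $|\rme^{-\ci r_* z}|\le\rme^{s(\im z)_+}$, while the top slice contributes a factor $1/|z|$ because $\gamma(N;J)=-2\ci\kappa_0=0$ (recall $\kappa_0=0$). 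I would then estimate the cutoffs exactly as in Lemma~\ref{th:basicAest}: at every degree-one interaction vertex Proposition~\ref{th:momatintv} identifies the attached free momentum and lets me bound $\PFone\le \Fone(k_0)$ with $k_0$ independent of it, and at degree-zero and degree-two vertices I would just use $\PFone\le 1$.

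The estimate is completed by iterating the free-momentum integrals in time order from bottom to top: at each step only one resolvent $1/|z-\gamma_i|$ depends on the corresponding free variable (the remaining free momenta entering only through the ``$\alpha$'' of the resolvent, by Proposition~\ref{th:momatintv} and the time-ordering of the spanning tree in Theorem~\ref{th:intdeltas}), so one applies Lemma~\ref{th:degoneest} at a degree-one vertex (giving $C\lambda^{-b}\sabs{\ln\lambda}^2$, using $\beta=\lambda^2$ on the top horizontal part of $\Gamma_N$, where $\im(z-\gamma_i)\ge\lambda^2$, and $|z-\gamma_i|\ge 1$ on the rest of the path) and Lemma~\ref{th:degtwoest} at a degree-two vertex (giving $C\sabs{\ln\lambda}$), a degree-zero vertex carrying no resolvent; any momentum-independent integral left over at the top fusion vertex equals $1$. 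The remaining contour integral is bounded by $\oint_{\Gamma_N}\frac{|\rmd z|}{2\pi}\frac{\rme^{s(\im z)_+}}{|z|}\le C\rme^{s\lambda^2}\sabs{\ln(n+n'+1)}\sabs{\ln\lambda}$, where $(\im z)_+\le\lambda^2$ on $\Gamma_N$ and the regularized logarithm $\sabs{\ln(n+n'+1)}$ covers the trivial case $n=n'=0$. Collecting the factors, the prefactor $\lambda^{n+n'}=\lambda^{n_0}\lambda^{n_1}\lambda^{n_2}$ combines with the estimates into $\lambda^{(1-b)n_1}$ from the degree-one vertices, $\lambda^{n_2}$ from the degree-two vertices, and $\lambda^{n_0}s^{n_0}/n_0!=\lambda^{-n_0}(s\lambda^2)^{n_0}/n_0!$ from the degree-zero slices, which is exactly (\ref{eq:basicFest}). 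The only point requiring genuine care is the bookkeeping of which free momentum is integrated against which resolvent at each step, so that the one- and two-dimensional estimates of Lemmata~\ref{th:degoneest}--\ref{th:degtwoest} apply; but this is governed precisely by Proposition~\ref{th:momatintv} and Theorem~\ref{th:intdeltas} and is identical to the situation in Lemma~\ref{th:basicAest}, so no new obstacle arises.
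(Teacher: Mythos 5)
Your proposal is correct and follows essentially the same route as the paper: the paper's own proof simply sets $A_j=\defset{0\le i< N}{\deg v_{i+1}=j}$ and $B=\emptyset$ (so no decay is extracted from the partial time-integration factors) and then repeats the argument of Lemma \ref{th:basicAest}, which is exactly what you carry out, with the same power counting $\lambda^{n+n'-2n_0-bn_1}=\lambda^{n_2+(1-b)n_1-n_0}$ and the same logarithm bookkeeping.
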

\begin{proof}
There are altogether $N=n+n'$ interaction vertices, and
for $j=0,1,2$  we set
$A_j=\defset{0\le i< N}{\deg v_{i+1} =j}$  and 
$B=\emptyset$.
With these adjustments, we can derive the
bound as in the proof of Lemma \ref{th:basicAest}.  (Choosing $B=\emptyset$
implies $n'_0=0$ and the estimate thus
ignores any additional decay arising from factors with $\im \gamma_i<0$.)
The resulting power of $\lambda$ is $n+n'-2 n_0-b n_1$ which we have
simplified using $n+n'=n_0+n_1+n_2$.
\qed \end{proof}

\subsection{Amputated error term}
\label{sec:proveerramp}

\begin{proposition}\label{th:amperr}
Suppose $t>0$ and $0<\lambda<\lambda'_0$ are given, and define
$N_0$ and $\kappa$, as in Definition \ref{th:defkappaetc}.
There is a constant
$C>0$ depending only on $f$ and $g$, and
$c>0$ depending only on $\omega$ such that
  \begin{align}\label{eq:amperr}
 &  \limsup_{\Lambda\to\infty}   |Q^{\rm err}_{\rm amp}[g,f](t)|^2
\le C t^2 \rme^t \sabs{c t}^{N_0} N_0^{2 N_0+2} (4 N_0)!
\sabs{\ln \lambda}^{4 N_0+2}
\Bigl( \lambda + \lambda^{-2} N_0^{-b_0 N_0/4}\Bigr) \, ,
  \end{align}
as soon as $N_0(\lambda)\ge 56$.
\end{proposition}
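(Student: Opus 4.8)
The plan is to push the amputated error term through the Schwarz bound already prepared in Section~\ref{sec:errors}, estimate each of the resulting amplitudes by the Basic $\mathcal{A}$-estimate of Lemma~\ref{th:basicAest}, and then sum over graphs using the cluster combinatorics of Lemma~\ref{th:clustercomb}; the one delicate point is the bookkeeping of powers of $\lambda$ against the growing parameter $N_0$. First I would apply the Schwarz estimate from Section~\ref{sec:errors} to $Q^{\rm err}_{\rm amp}$ and use that $\E[|\mean{\FT f,\FT a_0}|^2]$ converges with $\limsup_\Lambda \E[|\mean{\FT f,\FT a_0}|^2]\,t/\vep\le\lambda^{-2}t c_0'\norm f_2^2$, so that it remains to bound $\lambda^{-2}t$ times $\limsup_{\Lambda\to\infty}\int_0^{t/\vep}\!\rmd s\,\E[\,|\mean{\FT g,\mathcal{A}_{N_0}(s,t/\vep,\cdot,1,\kappa)[\FT a_s]}|^2\,]$.

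Next, by \eqref{eq:def Aampl} the last expectation is, at finite $\Lambda$, a finite sum over $J$ interlacing $(N_0-1,N_0-1)$, over $\ell,\ell'\in G_{N_0}$ and over $S\in\pi(I)$ with $|I|=4N_0+2$, of the amplitudes $\mathcal{A}_{N_0}^{\rm ampl}$. In each such amplitude I would estimate $|\FT g(k_{N_0,1})|^2\le\norm g_1^2$ and $|C_{|A|}|\le\sup_{\Lambda,k,\sigma}|C_{|A|}(k,\sigma;\lambda,\Lambda)|$, pull these scalars out, and apply Lemma~\ref{th:basicAest} to the remaining momentum and time integrals; this bounds $\limsup_\Lambda|\mathcal{A}_{N_0}^{\rm ampl}|$ by $\norm g_1^2\bigl(\prod_{A\in S}\sup|C_{|A|}|\bigr)$ times the right-hand side of \eqref{eq:basicAest}. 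Since for fixed $\lambda$ the sum is finite and, after using $s\lambda^2\le t$ in \eqref{eq:basicAest}, the bound is uniform in $\Lambda$ and in $s\in[0,t/\vep]$, the limit $\Lambda\to\infty$ may be taken inside both the sum and the $s$-integral.

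Then I would carry out the remaining sums and the $s$-integration. The $s$-integral uses $\int_0^{t/\vep}\!\rmd s\,(s\lambda^2)^m/m!=\lambda^{-2}t^{m+1}/(m+1)!$, which produces the second factor $\lambda^{-2}$ together with only harmless $t$- and factorial-factors; the number of interlacings is at most $4^{N_0}$ and $|G_{N_0}|^2\le 4^{N_0}N_0^{2N_0}$; and $\sum_{S\in\pi(I)}\prod_{A}\sup|C_{|A|}|\le c^{4N_0+2}(4N_0+2)!\le c'^{\,N_0}N_0^{2}(4N_0)!$ by Lemma~\ref{th:clustercomb}, where the sum restricted to non-pairing $S$ gains an extra factor $\lambda$. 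Collecting the $\rme^t$, $\sabs{\ln\lambda}^{O(N_0)}$ and polynomial-in-$N_0$ factors in the form displayed in \eqref{eq:amperr}, the whole estimate reduces to controlling, uniformly over relevant momentum graphs $\graph(S,J,N_0,\ell,N_0,\ell')$, the graph-dependent factor $\lambda^{2+\tilde{n}_2+(1-b)\tilde{n}_1-\tilde{n}_0}N_0^{-b_0 n_0'}$ appearing in \eqref{eq:basicAest}, together with the extra $\lambda$ in the non-pairing case.

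The main work is this last balancing, and here Lemma~\ref{th:ivdegrees}, Proposition~\ref{th:ivordering} and the explicit choice of $a_0,b_0,N_0(\lambda)$ from Definition~\ref{th:defkappaetc} enter. Using $\tilde{n}_0+\tilde{n}_1+\tilde{n}_2=2N_0-2$ one rewrites the exponent of $\lambda$ as $2N_0-2\tilde{n}_0-\tfrac34\tilde{n}_1$, and recalls $n_0'=n_0(N_0+1)-n_0(2)$. For a relevant \emph{pairing} graph $r=0$, hence $n_2=n_0$, which already forces this exponent to be $\ge 0$; Proposition~\ref{th:ivordering} gives $n_0(N_0+1)\ge(N_0+1-n_1)/2$, so a short case split on whether $n_1$ is small (then $n_0'\gtrsim N_0/2$) or large (then the exponent itself is $\gtrsim N_0/8$), combined with the facts that $N_0\ln N_0\sim a_0|\ln\lambda|$ turns $N_0^{-b_0 N_0}$ into a genuine power $\lambda^{\Theta(a_0b_0)}$ of $\lambda$ and that $a_0b_0\ge16$, shows $\lambda^{2+\tilde{n}_2+(1-b)\tilde{n}_1-\tilde{n}_0}N_0^{-b_0 n_0'}\le\lambda^2 N_0^{-b_0 N_0/4}$ as soon as $N_0\ge 56$; together with the two $\lambda^{-2}$ factors this yields the $\lambda^{-2}N_0^{-b_0 N_0/4}$ term. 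For a relevant \emph{non-pairing} graph $r\ge1$ forces the exponent to be $\ge1$, which together with the extra $\lambda$ from Lemma~\ref{th:clustercomb} (and, when $n_1$ is small, again a large $n_0'$) leaves a net factor $\lesssim\lambda$ after the two $\lambda^{-2}$'s, giving the $\lambda$ term; irrelevant graphs contribute zero. Summing the two contributions and absorbing the remaining $\rme^t$, $c^{N_0}$, $N_0$-polynomial and $\sabs{\ln\lambda}$ factors into the form of \eqref{eq:amperr} finishes the proof. The genuine obstacle is exactly this last step: a priori the power of $\lambda$ in \eqref{eq:basicAest} can be strongly negative once multiplied by the two $\lambda^{-2}$'s, and one must exploit that many degree-zero interaction vertices are forced to lie in the ``late'' part of the graph, so that the partial-integration gain $N_0^{-b_0 n_0'}$ — quantitative precisely because of the choice of $N_0(\lambda)$ — compensates, a compensation which is what requires $N_0\ge 56$.
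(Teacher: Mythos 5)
Your proposal reproduces the paper's own proof essentially step for step: the Schwarz bound from Section~\ref{sec:errors}, the representation \eqref{eq:def Aampl}--\eqref{eq:Aamplsimp}, Lemma~\ref{th:basicAest} applied graph by graph, the counts of interlacings, histories and clusterings via Lemma~\ref{th:clustercomb}, and the decisive balancing of the $\lambda$-exponent against $N_0^{-b_0 n'_0}$ through Lemma~\ref{th:ivdegrees} and Proposition~\ref{th:ivordering}. The only (cosmetic) deviation is that you organize the final case split by pairing vs.\ non-pairing plus a sub-split in $n_1$, whereas the paper splits on $r+n_1\ge 24$ vs.\ $r+n_1<24$, which is exactly what produces the clean exponents ($\lambda^5$, resp.\ $n'_0\ge N_0/4$ for $N_0\ge 56$) behind the two terms in \eqref{eq:amperr}; your variant lands in the same two-term bound, though note that for non-pairing graphs with small $r+n_1$ the contribution is absorbed by the $\lambda^{-2}N_0^{-b_0N_0/4}$ term rather than the $\lambda$ term, and the decisive degree-zero vertices counted by $n'_0$ sit in the early (deep-expansion) part of the graph where $\kappa'$ is active.
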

Since $\gamma'\le 1$, we can apply the limits (\ref{eq:N0lima}) and 
(\ref{eq:N0limb}) here and conclude from the above bound that
$\limsup_{\Lambda\to\infty} |Q^{\rm err}_{\rm amp}[g,f](t)|\to 0$ as
$\lambda\to 0$.

\begin{proof}
By Proposition \ref{th:Qmainerr}, and according to the discussion in Section
\ref{sec:errors}, we have
\begin{align} 
&
\limsup_{\Lambda\to\infty} |Q^{\rm err}_{\rm amp}[g,f](t)|^2 \le C\norm{f}_2^2
t^2 \lambda^{-4}
\nonumber \\ & \times
\sup_{0\le s\le t\lambda^{-2}}
\limsup_{\Lambda\to\infty}\!\!\!
\sum_{J\text{ interlaces }(n-1,n-1)}
 \sum_{\ell,\ell' \in G_n}  \sum_{S\in \pi(I_{4 n+2})}
|\mathcal{A}_n^{\rm ampl}(S,J,\ell,\ell',s,\kappa)|\, ,
\end{align}
where $n=N_0$.  We have here first applied dominated convergence
to move the limit $\limsup_{\Lambda\to\infty}$ inside the $s$-integral which
was then estimated trivially.  The bound for domination is contained in the
following. 

The $\limsup$ can be bounded by first employing
(\ref{eq:Aamplsimp}) and then Lemma \ref{eq:basicAest}.
This yields the bound
\begin{align}
& \norm{\FT{g}}_\infty^2
\sum_{J\text{ interlaces }(n-1,n-1)}
 \sum_{\ell,\ell' \in G_n}
\sum_{S\in \pi(I_{4 n+2})}
\prod_{A\in S} \sup_{\Lambda,k,\sigma} |C_{|A|}(\sigma,k;\lambda,\Lambda)|
 \nonumber \\ & \qquad \times
 \rme^{s\lambda^2} \frac{(s\lambda^2)^{\tilde{n}_0-n'_0} }{(\tilde{n}_0-n'_0)!}
\lambda^{2+\tilde{n}_2-\tilde{n}_0+(1-b)\tilde{n}_1} n^{-b_0 n'_0}
C^{1+\tilde{n}_1+\tilde{n}_2}
\sabs{\ln n} \sabs{\ln \lambda}^{1+\tilde{n}_2+2 \tilde{n}_1}
\nonumber \\ & \quad
\le \norm{\FT{g}}_\infty^2
\sum_{J\text{ interlaces }(n-1,n-1)}
 \sum_{\ell,\ell' \in G_n}
\sum_{S\in \pi(I_{4 n+2})}
\prod_{A\in S} \sup_{\Lambda,k,\sigma} |C_{|A|}(\sigma,k;\lambda,\Lambda)|
 \nonumber \\ & \qquad \times
 \rme^{t} \sabs{c t}^{n}
\sabs{\ln \lambda}^{2 + 4 n}
\lambda^{2+\tilde{n}_2-\tilde{n}_0+(1-b)\tilde{n}_1} n^{-b_0 n'_0}\, .
\end{align}
We have used here
$\tilde{n}_0\le n_0 \le n$ which is implied by Lemma \ref{th:ivdegrees}.
We set $r=2n+1-|S|$ as in Proposition \ref{th:ivordering}, and note that here
$\tilde{n}_j=n_j-n_j(2)\ge 0$ and $n'_0=n_0(n{+}1)-n_0(2)$.
By Lemma \ref{th:ivdegrees},
$\tilde{n}_2-\tilde{n}_0= r+n_0(2)-n_2(2)
\ge r-2$, and $\tilde{n}_1\ge n_1-2$.  If $r+n_1\ge 24$,
we thus have
$\tilde{n}_2-\tilde{n}_0+(1-b)\tilde{n}_1\ge r +\frac{1}{4} n_1 -3\ge 3$.
In such cases we estimate
$\lambda^{2+\tilde{n}_2-\tilde{n}_0+(1-b)\tilde{n}_1} n^{-b_0 n'_0}\le
\lambda^5$.

If $r+n_1< 24$, we get from Proposition \ref{th:ivordering},
the estimates $n'_0\ge (n-n_1-r)/2-2$ and $\tilde{n}_2-\tilde{n}_0\ge 0$.
Thus for
any $n\ge 56$,  we have
$n'_0\ge  (n/2)-14\ge n/4$, and therefore also
$\lambda^{2+\tilde{n}_2-\tilde{n}_0+(1-b)\tilde{n}_1} n^{-b_0 n'_0}\le
\lambda^2 n^{-b_0 n/4}$.

The number of terms in the sum over $J$
is less than $2^{2 n-2}$ and the number of terms in the sum over $\ell$ 
is equal to 
$( 2n{-}1)!!\le (2 n)^n$.
For the sum over $S$ we can apply Lemma \ref{th:clustercomb}.
Collecting all the estimates together proves that (\ref{eq:amperr}) holds,
after readjustment of the constants $c$ and $C$.
\qed \end{proof}

\subsection{Constructive interference error terms}
\label{sec:proveerrcut}

\begin{proposition}\label{th:cuterr}
Suppose $t>0$ and $0<\lambda<\lambda'_0$ are given, and define
$N_0$ and $\kappa$, as in Definition \ref{th:defkappaetc}.
There is a constant
$C>0$ depending only on $f$ and $g$, and
$c>0$ depending only on $\omega$ such that
  \begin{align}\label{eq:cuterr}
 &  \limsup_{\Lambda\to\infty}   |Q^{\rm err}_{\rm cut}[g,f](t)|^2
\le C t^2 \rme^t \sabs{c t}^{N_0} N_0^{2 N_0+4} (4 N_0)!
\sabs{\ln \lambda}^{4 N_0+3} \lambda^{1/4} \,.
  \end{align}
\end{proposition}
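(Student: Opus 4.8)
The plan is to follow the route used for the amputated error term in Proposition~\ref{th:amperr}, the one essential new ingredient being the extra smallness supplied by the cutoff $\PFzero$ sitting at the distinguished fusion vertex of each $\mathcal{Z}_n$. Starting from the expression for $Q^{\rm err}_{\rm cut}$ in Proposition~\ref{th:Qmainerr}, I would first apply the Schwarz bound of Section~\ref{sec:errors} together with the uniform bound on $\E[|\mean{\FT{f},\FT{a}_0}|^2]$, reducing the claim to showing that $\sup_{0\le s\le t\lambda^{-2}}\E\bigl[\bigl|\mean{\FT{g},\mathcal{Z}_n(s,t/\vep,\cdot,1,\kappa)[\FT{a}_s]}\bigr|^2\bigr]$ decays faster than $\lambda^{4}$, uniformly for $1\le n\le N_0$; the outer sum over $n$ then costs only a factor $N_0$. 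To bound this second moment I would expand the square exactly as for the $\mathcal{A}_n$ term: interlace the two time simplices via Lemma~\ref{th:recombinationlemma}, invoke the stationarity of $\FT{\psi}_s$, and expand the resulting moment of the initial fields into cumulants via Lemma~\ref{th:cumulants}. This produces a sum over amputated diagrams with amplitudes $\mathcal{Z}_n^{\rm ampl}(S,J,\ell,\ell',s,\kappa)$ that are structurally the $\mathcal{A}_n^{\rm ampl}$ of Section~\ref{sec:errors}, with two changes: the bottom fusion vertex of each of the plus and minus trees carries $\PFzero$ instead of $\PFone$, and the pairing truncation $\Ptrunc$ forbids any block of $S$ of size two from consisting of two of the three initial-time vertices lying directly below such a $\PFzero$-vertex.

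The heart of the proof is a variant of the basic $\mathcal{A}$-estimate of Lemma~\ref{th:basicAest} that extracts this gain. After resolving the momentum constraints with the time-ordered spanning tree of Theorem~\ref{th:intdeltas}, I would bound $\PFzero$ at one of the two $\PFzero$-vertices $v$ by the three characteristic functions of (\ref{eq:PFzeroineq}), $\PFzero(k_{0;\ell_1})\le\sum_{i<j}\1(d(k_{p_i}+k_{p_j},\Msing)<\lambda^{b})$, where $p_1,p_2,p_3$ are the initial-time vertices below $v$. The truncation is precisely what guarantees that $k_{p_i}+k_{p_j}$ is not identically zero for any of the three pairs: by Corollary~\ref{th:kkpl} the identity $k_{p_i}+k_{p_j}\equiv0$ would force $\{p_i,p_j\}$ to be isolated from the rest of the initial-time vertices, which for a bottom fusion vertex means exactly that $\{p_i,p_j\}$ is a cluster of $S$---a configuration removed by $\Ptrunc$. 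Hence in each summand one free momentum can be traded for the variable $u=k_{p_i}+k_{p_j}$, whose integration is then confined to a $\lambda^{b}$-neighbourhood of $\Msing$ and thus contributes a volume factor $O(\lambda^{b(d-1)})$ by item~6 of Lemma~\ref{th:Foneprop}; using the resolvent-to-phase representation (\ref{eq:restophases}) this gain survives even when that integral also carries a resolvent, at the cost of extra logarithms. Treating the remaining free-momentum integrals by Lemmata~\ref{th:degoneest} and~\ref{th:degtwoest} as in Lemma~\ref{th:basicAest}, and using $\tilde{n}_2-\tilde{n}_0\ge0$---which follows from Proposition~\ref{th:ivordering} at $i=2$ together with $n_2-n_0=r$ from Lemma~\ref{th:ivdegrees}---the power of $\lambda$ in the resulting bound for $\mathcal{Z}_n^{\rm ampl}$ is at least $2+(\tilde{n}_2-\tilde{n}_0)+(1-b)\tilde{n}_1+\bigl(b(d-1)-2\bigr)\ge b(d-1)$. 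With $b=\tfrac34$ and $d\ge4$ this is at least $\tfrac94$, so after the $\lambda^{-4}$ cost of the Schwarz step the net exponent is at least $b(d-1)-2\ge\tfrac14$, matching the $\lambda^{1/4}$ in (\ref{eq:cuterr}). Here $d\ge4$, i.e.\ $\operatorname{codim}\Msing=d-1\ge3$, is exactly where the dimension enters.

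Finally I would assemble the bound: the sum over cluster partitions $S\in\pi(I_{4n+2})$ is controlled by Lemma~\ref{th:clustercomb} (contributing the factorial $(4N_0)!$ and a geometric factor), the sums over interaction histories $\ell,\ell'$ and over interlacings $J$ cost at most $(2N_0)^{2N_0}$ and $2^{2N_0}$, and the logarithms from the $z$- and $s$-integrations accumulate to $\sabs{\ln\lambda}^{O(N_0)}$; combining these with the $t^{2}\lambda^{-4}$ prefactor and the choices of $N_0$ and $\kappa$ from Definition~\ref{th:defkappaetc} yields (\ref{eq:cuterr}) after readjusting $c$ and $C$, whereupon the limit (\ref{eq:N0limb}) (note $\gamma'\le\tfrac14$) gives $\limsup_{\Lambda\to\infty}|Q^{\rm err}_{\rm cut}[g,f](t)|\to0$ as $\lambda\to0$. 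The genuine obstacle I anticipate is making the $\PFzero$-vertex bookkeeping watertight: one must verify, by a case analysis on its degree, that the free momentum redirected into the $\Msing$-neighbourhood is never simultaneously the one needed to control a degree-one or degree-two vertex higher in the diagram, so that the $O(\lambda^{b(d-1)})$ volume gain is always available without sacrificing a resolvent estimate; and, where this threatens to fail, one would fall back on the partial-time-integration $\kappa$-gain exactly as in the proof of Proposition~\ref{th:amperr}.
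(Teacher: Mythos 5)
Your skeleton coincides with the paper's: Schwarz reduction to $\sup_s\E[|\mean{\FT{g},\mathcal{Z}_n[\FT{a}_s]}|^2]$, interlacing and cumulant expansion into amputated diagrams, the observation (via Corollary \ref{th:kkpl}) that $\Ptrunc$ removes exactly those pairings for which $k_{e_1}+k_{e_2}\equiv 0$ below a $\PFzero$-vertex, and a volume gain from confining a momentum to the $\lambda^b$-neighbourhood of $\Msing$. But the quantitative core does not close. First, the claim that the $O(\lambda^{b(d-1)})$ gain ``survives even when that integral also carries a resolvent, at the cost of extra logarithms'' is false: if the confined momentum is the free momentum of a non-amputated degree-one vertex, you must give up the resolvent bound of Lemma \ref{th:degoneest} for it, and the improvement over the basic estimate of Lemma \ref{th:basicAest} degrades to $\lambda^{bd-2}$ (and to $\sabs{\ln\lambda}\lambda^{b(d-2)}$ at a degree-two vertex) --- this is precisely the content of the paper's Lemma \ref{th:basicZest}. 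Second, your own power count does not reach the target you correctly state at the outset (beating $\lambda^{4}$): with only the generic bounds $\tilde n_2-\tilde n_0\ge 0$, $\tilde n_1\ge 0$ your amplitude bound is $\lambda^{2+(b(d-1)-2)}=\lambda^{9/4}$ for $d=4$, and after the $\lambda^{-4}$ Schwarz cost this leaves $\lambda^{-7/4}$, which diverges; the step ``the net exponent is at least $b(d-1)-2\ge\tfrac14$'' subtracts $2$ where $4$ is required, so the final $\lambda^{1/4}$ does not follow from your chain of inequalities.

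What is missing is the case analysis that supplies the extra powers exactly when the volume gain is weak. The paper argues: if one of the amputated vertices has degree two ($n_2(2)>0$), the full gain $z_0=b(d-1)$ is available and $2+b(d-1)\ge 4+\tfrac14$; if $n_2(2)=0$, then $\tilde n_2-\tilde n_0=r+n_0(2)$, and when $n_0(2)=0$ the truncation forces either $\mathcal{Z}_n^{\rm ampl}=0$ or a cluster with $|A_0|\ge 4$, hence $r\ge 1$, so in all surviving cases $\tilde n_2-\tilde n_0\ge 1$; finally, the weak gain $z_0=bd-2$ is only ever needed when a non-amputated degree-one vertex is present, which contributes the additional $(1-b)\tilde n_1\ge\tfrac14$, and $2+1+\tfrac14+(bd-2)\ge 4+\tfrac14$. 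Your proposed fallback --- the partial-time-integration $\kappa$-gain as in Proposition \ref{th:amperr} --- cannot rescue the deficit: $Q^{\rm err}_{\rm cut}$ sums over all $1\le n\le N_0$, and for $n<N_0/2$ every $\kappa_j$ entering the product vanishes (and even for large $n$ the gain requires many early degree-zero slices); indeed the paper's proof discards the $N_0^{-b_0 n_0'}$ factor altogether and relies solely on the dichotomy above.
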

Since $\gamma'\le \frac{1}{4}$, this estimate proves that
$\limsup_{\Lambda\to\infty} |Q^{\rm err}_{\rm amp}[g,f](t)|\to 0$ as
$\lambda\to 0$.

\begin{proof}
We again denote
$\mathcal{I}'_{n} =\mathcal{I}_{n} \cup \set{(n,1)}=
\defset{(i,j)}{0\le i\le n, 1\le j \le m_{n-i}}$,
and define $I=I_{2 m_n}=I_{4 n +2}$ to give labels to the final $\FT{a}$,
as before.  By expanding the pairing truncations in $\Ptrunc$
to individual components, and then applying the cluster expansions in
Lemma \ref{th:cumulants}, we find that the effect of the extra terms in the
truncations is the cancellation of all those terms from the main cumulant
expansion which contain one of the corresponding pairings.  None of the other 
clusterings is affected.  The remainder of the analysis is completely
analogous to that used for $\mathcal{A}_{n}$ in Section \ref{sec:errors}, and
it shows that
\begin{align} 
&\E\Bigl[|\mean{\FT{g},
\mathcal{Z}_n(s,t/\vep,\cdot,+1,\kappa)[\FT{a}_s]}|^2\Bigr]
\nonumber \\ & \quad
= \sum_{J\text{ interlaces }(n-1,n-1)} \
\sum_{\ell,\ell' \in G_n}\ \sum_{S\in \pi(I_{4 n +2})}
\mathcal{Z}_n^{\rm ampl}(S,J,\ell,\ell',t/\vep-s,\kappa)\, .
\end{align}
Here
$\mathcal{Z}_n^{\rm ampl}(S,J,\ell,\ell',s,\kappa)=0$
for any partition $S$ which contains a pairing of
any two edges attached to one of the ``truncated'' vertices
(\itie , if there are $A\in S$ and $i\in \set{1,2}$ such that $|A|=2$ and
$\edges_+(u)\subset\edges_-(v_i)$ for every $u\in A$).
For all other $S$ we have
\begin{align}\label{eq:Znampl}
& \mathcal{Z}_n^{\rm ampl}(S,J,\ell,\ell',s,\kappa)
=  (-\lambda^{2})^{n} \sum_{\sigma,\sigma'\in \set{\pm 1}^{\mathcal{I}'_{n}} }
\int_{(\Lambda^*)^{\mathcal{I}'_{n}}}  \!\rmd k\,
\int_{(\Lambda^*)^{\mathcal{I}'_{n}}}  \!\rmd k'\,
 \nonumber \\ & \quad \times
\Delta_{n,\ell}(k,\sigma;\Lambda) \Delta_{n,\ell'}(k',\sigma';\Lambda)
\prod_{A\in S}\Bigl[ \delta_\Lambda\!\Bigl(\sum_{i\in A} K_i\Bigr)
  C_{|A|}(o_A,K_A;\lambda,\Lambda) \Bigr]
 \nonumber \\ & \quad \times
 \sigma_{1,\ell_{1}}
 \PFzero(k_{0;\ell_1}) \sigma'_{1,\ell'_{1}}
 \PFzero(-k'_{0;\ell'_1})
\prod_{i=2}^{n} \Bigl[ \sigma_{i,\ell_{i}}
 \PFone(k_{i-1;\ell_i}) \sigma'_{i,\ell'_{i}}
 \PFone(-k'_{i-1;\ell'_i})
 \Bigr]
 \nonumber \\ & \quad \times
\1(\sigma_{n,1}=1)\1(\sigma'_{n,1}=-1) |\FT{g}(k_{n,1})|^2
  \int_{(\R_+)^{I_{2,2n}}}\!\rmd r \,
\delta\Bigl(s-\sum_{i=2}^{2 n} r_i\Bigr)
 \prod_{i=2}^{2n} \rme^{-\ci r_i \gamma({i;J})}
\, ,
\end{align}
where $\gamma({i;J})$ is given by (\ref{eq:giJ}).  We use the Schwarz
inequality in the sum over $n$, and then proceed as in the proof of
Proposition \ref{th:amperr}.  Then using Proposition \ref{th:Qmainerr} yields
the estimate 
\begin{align} 
&
\limsup_{\Lambda\to\infty} |Q^{\rm err}_{\rm cut}[g,f](t)|^2 \le
N_0 C\norm{f}_2^2 t^2 \lambda^{-4}
\sup_{0\le s\le t\lambda^{-2}}
  \sum_{n=1}^{N_0}
\sum_{J\text{ interlaces }(n-1,n-1)}
\nonumber \\ & \qquad \times
 \sum_{\ell,\ell' \in G_n}  \sum_{S\in \pi(I_{4 n+2})}
\limsup_{\Lambda\to\infty}  |\mathcal{Z}_n^{\rm
ampl}(S,J,\ell,\ell',s,\kappa)|\, .
\end{align}

Compared to the amputated error term, the terms in the sum have the following
improved
upper bounds whose proof will be given at
the end of this section.
\begin{lemma}[Basic $\mathcal{Z}$-estimate]\label{th:basicZest}
If we change the term
$\PFone(k_{0;\ell_1}) \PFone(-k'_{0;\ell'_1})$ on
the left hand side of (\ref{eq:basicAest})
to $\PFzero(k_{0;\ell_1}) \PFzero(-k'_{0;\ell'_1})$,
the estimate can either be improved
by a factor of $C \sabs{\ln \lambda} \lambda^{z_0}$, $z_0=bd-2$,
or the corresponding $\mathcal{Z}_n^{\rm ampl}$ is zero.
If $\tilde{n}_1=0$, then the estimate is valid also with
$z_0=(d-2) b$.  If any of the amputated vertices has degree two,
the estimate is valid with  $z_0=(d-1) b$.
\end{lemma}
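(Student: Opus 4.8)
The plan is to re-run the proof of Lemma~\ref{th:basicAest} almost verbatim, changing only the treatment of the two amputated cutoff factors. In that proof one bounded $\PFone(k_{0;\ell_1})\PFone(-k'_{0;\ell'_1})\le 1$ and observed that any free momenta ending at the two amputated interaction vertices could then be integrated out with a trivial factor $1$; here we instead retain $\PFzero(k_{0;\ell_1})$ and $\PFzero(-k'_{0;\ell'_1})$. It suffices to harvest the claimed extra decay from a single one of these factors, say the one at the amputated vertex $\bar v$ (bounding the other by $1$). By Proposition~\ref{th:PFcorr}, inequality~(\ref{eq:PFzeroineq}), we have the pointwise bound $\PFzero(k_{0;\ell_1})\le \sum_{\{a,b\}}\1(d(k_{e_a}+k_{e_b},\Msing)<\lambda^b)$, the sum running over the three pairs $\{a,b\}$ of edges in $\edges_-(\bar v)$, and we bound the contribution of each indicator separately.

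First I would discard the degenerate pairs. If $k_{e_a}+k_{e_b}$ is constant in the free momenta (hence $\equiv 0$ by Corollary~\ref{th:zerok}), then Corollary~\ref{th:kkpl} forces the initial time vertices below $e_a$ and $e_b$ to be isolated from the remaining ones; since $\bar v$ is amputated, these are single initial time vertices $d_a,d_b$, so either some cluster has odd size and the term vanishes by Corollary~\ref{th:zeroisirr}, or $\{d_a,d_b\}$ is itself a two-element cluster, i.e.\ a pairing of two edges attached to a truncated vertex, and then $\mathcal{Z}_n^{\rm ampl}=0$ by definition. Hence for every relevant term $k_{e_a}+k_{e_b}\not\equiv 0$, and by Corollary~\ref{th:zerok} it is a nonempty signed sum $\sum_{f\in\mathcal{J}}\pm k_f$ of free momenta.

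For each surviving pair I would select the free edge $f_0\in\mathcal{J}$ with the largest $\taup$, so that $k_{f_0}$ is integrated after all other elements of $\mathcal{J}$ in the bottom-to-top iteration of Lemma~\ref{th:basicAest} (legitimate by the time-ordering of Theorem~\ref{th:intdeltas}); then at that step the indicator collapses to $\1(d(\pm k_{f_0}+c,\Msing)<\lambda^b)$ with $c$ frozen, and Lemma~\ref{th:Foneprop}(6) confines $k_{f_0}$ to a set of measure $\le C\lambda^{b(d-1)}$, since $\Msing$ has codimension $d-1$. What is gained net depends on which estimate of Lemma~\ref{th:basicAest} this integral displaces. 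If $f_0$ ends at an amputated vertex, the $k_{f_0}$-integral was the trivial factor $1$ there --- note that by Proposition~\ref{th:momatintv} the outgoing momentum of a degree-one or degree-two vertex is independent of its free incoming momenta, so those momenta genuinely live only in the zero-length part of the diagram --- hence we gain the full $C\lambda^{b(d-1)}$, i.e.\ $z_0=(d-1)b$; moreover if $\bar v$ itself has degree two then every pairwise sum of its incoming momenta involves one of its own amputated free edges, so this case always applies and yields $z_0=(d-1)b$. If $f_0$ ends at a non-amputated degree-two vertex, Lemma~\ref{th:degtwoest} contributed $C\sabs{\ln\lambda}$ for the joint integral over its two free momenta; restricting one of them to the $\lambda^b$-tube, bounding the remaining one-dimensional phase integral --- as in the proof of Lemma~\ref{th:degoneest} --- by $C\sabs{\ln\lambda}^2/d(\,\cdot\,,\Msing)$, and using that in the worst case $\int_{d(x,\Msing)<\lambda^b}\rmd x\,/d(x,\Msing)\le C\lambda^{b(d-2)}$ (integrability of $1/d(\,\cdot\,,\Msing)$ along a tube about $\Msing$, whose codimension is $\ge 3$ as $d\ge 4$), gives the net improvement $C\sabs{\ln\lambda}\lambda^{(d-2)b}$, i.e.\ $z_0=(d-2)b$. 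Finally, if $f_0$ ends at a non-amputated degree-one vertex, Lemma~\ref{th:degoneest} contributed $C\lambda^{-b}\sabs{\ln\lambda}^2$; replacing that resolvent by its supremum (at most $\lambda^{-2}$ on the top part of the contour $\Gamma_N$ and $1$ elsewhere) and integrating $k_{f_0}$ over the $\lambda^b$-tube gives the net improvement $C\sabs{\ln\lambda}\,\lambda^{b(d-1)}\lambda^{-2}/\lambda^{-b}=C\sabs{\ln\lambda}\lambda^{bd-2}$, i.e.\ $z_0=bd-2$. Taking the worst case that can occur yields the three assertions: generically $z_0=bd-2$; if $\tilde n_1=0$ no degree-one vertex is present, so $z_0=(d-2)b$; and if some amputated vertex has degree two, $z_0=(d-1)b$. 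All remaining resolvent, cutoff, time-simplex and combinatorial factors are exactly as in Lemma~\ref{th:basicAest}.

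The main obstacle is the bookkeeping of the previous paragraph: one has to check that in every relevant $\mathcal{Z}$-term a nondegenerate pair exists, that $f_0$ can indeed be chosen as claimed, and --- above all --- that deleting or trivializing the single estimate displaced by the $\PFzero$-tube leaves all the other per-slice estimates of Lemma~\ref{th:basicAest} intact, so that only the stated powers of $\lambda$ and $\sabs{\ln\lambda}$ change. This is routine but lengthy, and proceeds along exactly the same lines as the proof of Lemma~\ref{th:basicAest}.
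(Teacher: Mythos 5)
Your plan follows the paper's proof quite closely (eliminate degenerate pairs through the truncation and Corollaries \ref{th:zerok} and \ref{th:kkpl}, bound $\PFzero$ by a sum of indicators of $\lambda^b$-tubes around $\Msing$, and harvest the tube volume from one designated free-momentum integral, with $z_0$ determined by the type of vertex at which that edge ends), but the central step contains a genuine error: your selection rule for $f_0$. You pick the free edge in $\mathcal{J}$ with the \emph{largest} $\taup$, ``so that $k_{f_0}$ is integrated after all other elements of $\mathcal{J}$'', and claim the indicator then collapses to $\1(d(\pm k_{f_0}+c,\Msing)<\lambda^b)$ with $c$ frozen. In the iterative scheme of Lemma \ref{th:basicAest} the integrals are nested bottom-to-top, with the bottom-most free momenta innermost: at the step for a vertex $v$ only the resolvent of the slice just below $v$ remains dependent on the momenta ending at $v$, precisely because the resolvents of the lower slices were already consumed in the inner integrals, the higher momenta entering those bounds only through the parameter $\alpha$. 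A factor like your indicator, which depends on several free momenta ending at different heights, must therefore be used at the \emph{first} step at which one of its variables is integrated, i.e.\ at the member of $\mathcal{J}$ with the \emph{smallest} $\taup$ (the maximal edge in the creation order); at that moment the other members of $\mathcal{J}$ really are frozen parameters. At the step you designate, the other momenta in $\mathcal{J}$ are inner variables that have already been integrated out, so the indicator cannot be postponed there; bounding it by $1$ (or by its supremum over the inner variables) at the earlier steps destroys the gain, and reordering the integrals would spoil the property that only one resolvent per step depends on the current momenta, so the ``all other per-slice estimates intact'' claim would fail. Theorem \ref{th:intdeltas} says the opposite of what you use it for: a free momentum influences only edges at or below its ending vertex, which is exactly what forces the indicator to be consumed at the lowest member of $\mathcal{J}$ — this is the choice made in the paper (``the one added first, i.e.\ the maximum in the ordering of edges'').

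With $f_0$ chosen correctly, the rest of your argument is essentially the paper's: the three cases (amputated or top fusion vertex giving $\lambda^{b(d-1)}$; non-amputated degree-two vertex giving $\lambda^{b(d-2)}$; non-amputated degree-one vertex, where the resolvent is bounded in $L^\infty$ by $\lambda^{-2}$ and the tube volume gives the net $\lambda^{bd-2}$), the vanishing of terms with $k_{e_a}+k_{e_b}\equiv 0$, and the degree-two amputated case via Lemma \ref{th:nokkdiff} all match. Your variant for the non-amputated degree-two vertex (integrating $1/d(\cdot,\Msing)$ over the $\lambda^b$-tube instead of retaining the $\PFone$-factor and invoking Lemma \ref{th:degoneest}) can be made to work, but note that the singular argument produced by (DR\ref{it:DRinterf}) and the tube argument of the indicator are in general \emph{different} shifted copies of the same variable, so you must verify the bound $\int \1(d(x,\Msing)<\lambda^b)\,\rmd x/d(x+c',\Msing)\le C\lambda^{b(d-2)}$ uniformly in the shift $c'$ rather than assume the two coincide; also observe that the paper keeps the corresponding $\PFone$-factor unestimated for exactly this step, a detail your scheme must replicate or replace.
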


The $\limsup$-factor can then be bounded by first employing
(\ref{eq:Znampl}) and then using Lemma \ref{th:basicZest}.
We neglect the extra decay provided by the partial time integration, and
estimate $n^{-b_0 n'_0}\le 1$.  Simplifying the expression somewhat along the
lines used in the proof of Proposition \ref{th:amperr}, we thus have
the following bound for the  $\limsup$-term:
\begin{align}
& \norm{\FT{g}}_\infty^2
 \sum_{n=1}^{N_0} \sum_{J\text{ interlaces }(n-1,n-1)}
 \sum_{\ell,\ell' \in G_n}
\sum_{S\in \pi(I_{4 n+2})}
\prod_{A\in S} \sup_{\Lambda,k,\sigma} |C_{|A|}(\sigma,k;\lambda,\Lambda)|
 \nonumber \\ & \qquad \times
 \rme^{t} \sabs{c t}^{n}
\sabs{\ln \lambda}^{3 + 4 n}
\lambda^{2+\tilde{n}_2-\tilde{n}_0+(1-b)\tilde{n}_1+db}
 \nonumber \\ & \qquad \times
\1(\mathcal{Z}_n^{\rm ampl}\ne 0)\times\begin{cases}
\lambda^{-b}, & \text{if }n_2(2)>0;\\
\lambda^{-2 b}, & \text{if }n_2(2)=0, \tilde{n}_1=0;\\
\lambda^{-2}, & \text{otherwise.}\\
\end{cases}
\end{align}
By Proposition \ref{th:ivordering},
here always  $\tilde{n}_2-\tilde{n}_0\ge 0$ and $\tilde{n}_1\ge 0$.
Thus if $n_2(2)>0$, the power of $\lambda$ can be bounded from above by
$\lambda^{2+b(d-1)}\le \lambda^{4+\frac{1}{4}}$.

We can thus assume that $n_2(2)=0$.  Then
$\tilde{n}_2-\tilde{n}_0=n_2-n_0-n_2(2)+n_0(2)=r+n_0(2)$.
If also $n_0(2)=0$, then necessarily $n_1(2)=2$, \itie , that
both amputated interaction vertices have exactly one free
momentum attached to them.  By the iterative cluster scheme,
this implies that there is a cluster
$A_0\in S$ such that exactly two of the edges in the first interaction
vertex, the amputated minus vertex, connect to it.
If $A_0$ is a pairing, then $\mathcal{Z}_n^{\rm ampl}=0$ by
definition. Otherwise, $|A_0|\ge 4$, and thus then $r\ge 1$.
Therefore, in all cases we can conclude that now
either $\mathcal{Z}_n^{\rm ampl}=0$ or
$\tilde{n}_2-\tilde{n}_0\ge 1$.

Therefore, if $n_2(2)=0$ and $\tilde{n}_1=0$,
the power of $\lambda$ is bounded by
$\lambda^{2+1+b(d-2)}\le \lambda^{4+\frac{1}{2}}\le \lambda^{4+\frac{1}{4}}$.
If $n_2(2)=0$ and $\tilde{n}_1>0$, then the bound
$\lambda^{2+1+1-b+db-2}=\lambda^{2+b(d-1)}\le \lambda^{4+\frac{1}{4}}$
can be used.  Therefore, whatever the clustering, a bound
$\lambda^{4+\frac{1}{4}}$ is always available.
The rest of the sums can be bounded exactly as in the
proof of Proposition \ref{th:amperr}, apart from the first sum over $n$
which yields an additional factor $N_0$.
Collecting all the estimates together proves that (\ref{eq:cuterr}) holds.
\qed \end{proof}

\begin{proofof}{Lemma \ref{th:basicZest}}
The statement with $z_0=0$
is a corollary of the proof of Lemma \ref{th:basicAest},
since the estimate $\PFzero(k_{0;\ell_1}) \PFzero(-k'_{0;\ell'_1})\le 1$
allows to remove these terms at the right place in the proof.
However, we can improve on the estimate by using the fact that
$\PFzero(k_{0;\ell_1})$ enforces particular sums of momenta to lie very close
to the singular manifold.

Let us first consider the case were one of the amputated vertices has degree
two. If this is the amputated minus vertex, we find from
Proposition \ref{th:PFcorr} that
\begin{align} 
&\PFzero(-k'_{0;\ell'_1})
\le  \sum_{e_1,e_2 \in \edges_-(v_1);\ e_1< e_2}
\1\!\left(d(-(k_{e_1}+k_{e_2}),\Msing)<\lambda^b\right)\, .
\end{align}
By Lemma \ref{th:nokkdiff}, for any choice of $e_1$, $e_2$
here $-(k_{e_1}+k_{e_2})$ depends on the free momenta $k,k'$ of $v_1$
either as $k$, $k'$, or $-(k+k')$.  Thus when we first integrate over $k$ and
then over $k'$, in one of the integrals
we can apply Lemma \ref{th:Foneprop}, according to which
\begin{align}\label{eq:Fzerovol}
\sup_{k_0\in \T^d} \int_{\T^d} \rmd k\, \1\!\left(d(\pm
k+k_0,\Msing)<\lambda^b\right)
\le C \lambda^{b(d-1)} \, .
\end{align}
After this we can estimate $\PFzero(k_{0;\ell_1})\le 1$, and then continue as
in the proof of Lemma \ref{th:basicAest}.

If the amputated minus vertex does not have degree two, we estimate
trivially $\PFzero(-k'_{0;\ell'_1})\le 1$ and integrate over any free momenta
attached to it, which yields an irrelevant factor of one for the iterative
bound.  We then estimate the extra factor in the amputated plus vertex by
\begin{align}\label{eq:ampv2est}
&\PFzero(k_{0;\ell_1})
\le  \sum_{e_1,e_2 \in \edges_-(v_2);\ e_1< e_2}
\1\!\left(d(k_{e_1}+k_{e_2},\Msing)<\lambda^b\right)\, .
\end{align}
If the amputated plus vertex, $v_2$, has degree two, we again
gain a factor $C\lambda^{b(d-1)}$ from performing the two free integrations
attached to it.  After this the proof can proceed as in Lemma
\ref{th:basicAest}.
Thus if either of the amputated vertices has degree two, we have
proven a gain by the stated factor with $z_0=b(d-1)$.
This proves the last statement made in the Lemma.

To prove the other two statements, it is sufficient to consider
the term corresponding to some fixed pair $e_1< e_2$, 
$e_1,e_2\in \edges_-(v_2)$ in the bound (\ref{eq:ampv2est}).
If $k_{e_1}+k_{e_2}$ is independent of all free momenta, then by Proposition
\ref{th:kkpl}, the two initial time vertices belonging to $e_1\cup e_2$
must be isolated from the rest of the initial time vertices.  This is possible
only if they are paired, but then
$\mathcal{Z}_n^{\rm ampl}=0$ by definition.  Thus we can assume
that there is some free momenta on which $k_{e_1}+k_{e_2}$ depends.
Of the corresponding free edges,
let $f_0$ denote the one added first (\itie , it is the maximum in the ordering
of edges).

We next estimate all $\PFone$-factors as in the proof of Lemma
\ref{th:basicAest}, with one exception:
if the fusion vertex at which $f_0$ ends is a degree two interaction vertex,
we will need the corresponding
$\PFone$-factor, and this is kept unchanged.
Then we use the estimates in the proof of Lemma \ref{th:basicAest}, and
iterate through the interaction
vertices until the vertex at which $f_0$ ends is reached.
If $f_0$ is attached to either an
amputated interaction vertex or the top fusion vertex, then using
(\ref{eq:Fzerovol})
we gain an improvement with $z_0=b(d-1)$, which is the best bound of all
the three possibilities.

If $f_0$ is attached to a degree one non-amputated
interaction vertex,
we first remove the corresponding ``resolvent'' factor
using the trivial $L^\infty$
estimate and then apply (\ref{eq:Fzerovol}).  Since in the proof of Lemma
\ref{th:basicAest} this term would be estimated by
$C \sabs{\ln \lambda}^2 \lambda^{-b}$, we gain
an improvement by a factor of $C \lambda^{b-2+b(d-1)}$,
as compared to Lemma \ref{th:basicAest}.
This yields the worst bound with $z_0=b d-2$.

Otherwise,
$f_0$ is attached to a non-amputated interaction vertex
of degree two.  Let the two free momenta be denoted by $k_1$ and $k_2$,
and the third integrated momenta by $k_3$.  In addition, denote
$k_0=k_1+k_2+k_3$ which is independent of $k_1$ and $k_2$.
By Lemma \ref{th:nokkdiff}, $k_{e_1}+k_{e_2}=\pm k_i+k_0'$, for some
choice of sign and $i\in \set{1,2,3}$, where  $k_0'$
is independent of $k_1$, $k_2$.  Thus we need to consider
\begin{align}\label{eq:degtwoest2}
&\int_{(\T^d)^2} \!\rmd k_1\rmd k_2\,
\1\!\left(d(\pm k_i+k_0',\Msing)<\lambda^b\right)
 \nonumber \\ & \quad \times
\frac{\PFone(\pm(k_1,k_2,k_0-k_1-k_2))}{|\omega(k_1) + \sigma' \omega(k_2)
+ \sigma \omega(k_0-k_1-k_2)-\alpha +\ci \beta|}\, .
\end{align}
(The $\PFone$-factor is present here, as the only one which was not estimated
trivially. 
$\PFone$ is also clearly invariant under permutation of its arguments.)
If $i=1$, we estimate $\PFone\le \Fone(\pm(k_0-k_1))$, and
change the integration variable $k_1$ to $k=\pm k_1+k_0'$.
Then first applying Lemma \ref{th:degoneest}  to the $k_2$
integral and then (\ref{eq:Fzerovol}) to the $k$ integral,
we find that the integral is bounded by
$C\sabs{\ln \beta}^2\lambda^{-b+b(d-1)}$.   Analogous chance of variables can
be done to show that the bound is valid also if $i=2$ or $i=3$.
Thus we get an improvement by
a factor of
$C\sabs{\ln \beta}\lambda^{b(d-2)}$ compared to the estimate used in the
proof of Lemma \ref{th:basicAest}.

After one of the above estimates, we can finish the iteration of the
interaction vertices, and complete
the rest of the proof as in Lemma \ref{th:basicAest}.
If $\tilde{n}_1=0$, then there are no non-amputated
degree one vertices.  Thus either of the two better
estimates apply, and $z_0=b (d-2)> bd-2$ can be used.
Otherwise, we need to resort to the worst estimate
with $z_0=bd-2$. This completes the proof of the Lemma.\qed
\end{proofof}

\subsection{Partial time-integration error terms} \label{sec:partierr}


\begin{proposition}\label{th:ptierr1}
Suppose $t>0$ and $0<\lambda<\lambda'_0$ are given, and define
$N_0$ and $\kappa$, as in Definition \ref{th:defkappaetc}.
There is a constant
$C>0$ depending only on $f$ and $g$, and
$c>0$ depending only on $\omega$ and $\lambda'_0$ such that
  \begin{align}\label{eq:ptierr1}
 &  \limsup_{\Lambda\to\infty}   | Q^{\rm err}_{\rm pti}[g,f](t)|^2
\le C t^2 \rme^t \sabs{c t}^{N_0} N_0^{2 N_0+5+2 b_0} (4 N_0)!
\sabs{\ln \lambda}^{4 N_0+2} \lambda^{1/4}
\nonumber \\ & \quad
+ C t^2 N_0^{2+2 b_0}
\sup_{\substack{0\le s\le t\lambda^{-2},\\ N_0/2\le n < N_0}} \ 
\sum_{J\text{ interlaces }(n,n)} \
\sum_{\ell,\ell' \in G_n}\ \sum_{S\in \pi(I_{4 n +2})}
|\mathcal{G}_n^{\rm pairs}(S,J,\ell,\ell',s,\kappa)|
 \, ,
\end{align}
where $\mathcal{G}_n^{\rm pairs}(S,J,\ell,\ell',s,\kappa)=0$, if the
graph defined by $J,S,\ell,\ell'$ is not fully paired, and otherwise
it is equal to
\begin{align}\label{eq:defGnpairs}
& \mathcal{G}_n^{\rm pairs}(S,J,\ell,\ell',s,\kappa)
=  (-\lambda^{2})^{n} \sum_{\sigma,\sigma'\in \set{\pm 1}^{\mathcal{I}'_{n}} }
\int_{(\T^d)^{\mathcal{I}'_{n}}}  \!\rmd k\,
\int_{(\T^d)^{\mathcal{I}'_{n}}}  \!\rmd k'\,
\Delta_{n,\ell}(k,\sigma) \Delta_{n,\ell'}(k',\sigma')
 \nonumber \\ & \quad \times
\prod_{A=\set{i,j}\in S}
\Bigl[ \delta(K_i+K_j) \1(o_i=-o_j) W(K_i) \Bigr]
\prod_{i=1}^{n} \Bigl[ \sigma_{i,\ell_{i}}
 \PFone(k_{i-1;\ell_i}) \sigma'_{i,\ell'_{i}}
 \PFone(-k'_{i-1;\ell'_i})
 \Bigr]
 \nonumber \\ & \quad \times
\1(\sigma_{n,1}=1)\1(\sigma'_{n,1}=-1) |\FT{g}(k_{n,1})|^2
\int_{(\R_+)^{I_{0,2n}}}\!\rmd r \,
\delta\Bigl(s-\sum_{i=0}^{2 n} r_i\Bigr)
 \prod_{i=0}^{2n} \rme^{-\ci r_i \gamma({i;J})} \, .
\end{align}
\end{proposition}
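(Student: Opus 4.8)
The plan is to reduce the statement, via the Schwarz inequality of Section~\ref{sec:errors}, to a bound on $\sup_{n,s}\E[|\mean{\FT{g},\mathcal{G}_n(s,t/\vep,\cdot,1,\kappa)[\FT{a}_s]}|^2]$, to represent this expectation diagrammatically, to isolate from it the fully paired contribution (which is exactly $\mathcal{G}_n^{\rm pairs}$) and to estimate the remaining, non-fully-paired, graphs by the methods already used for the amputated and constructive interference error terms. Concretely, I start from the formula for $Q^{\rm err}_{\rm pti}$ in Proposition~\ref{th:Qmainerr}. By the choice of $\kappa$ in Definition~\ref{th:defkappaetc} we have $\kappa_n=0$ for $n<N_0/2$, so only the terms with $N_0/2\le n\le N_0-1$ survive, at most $N_0/2$ of them, each with prefactor $\kappa_n=\kappa'=\lambda^2N_0^{b_0}$; for $\lambda$ small enough $N_0\ge 2$, so every such $n\ge 1$ and $\mathcal{G}_n$ is given by~(\ref{eq:defGnviaA}). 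Applying the triangle inequality together with the Cauchy--Schwarz inequality over these $\le N_0/2$ terms, then the Schwarz estimate of Section~\ref{sec:errors} to each term (using $\E[|\mean{\FT{f},\FT{a}_0}|^2]\le c'_0\norm{f}_2^2$ uniformly in $\Lambda$, and $t/\vep=t\lambda^{-2}$), moving $\limsup_{\Lambda\to\infty}$ inside by dominated convergence and estimating the $s$-integral trivially by $(t/\vep)\sup_s$, and using that the two powers of $\lambda^{-2}$ are cancelled by $(\kappa')^2$, namely $(\kappa')^2(t/\vep)^2=t^2N_0^{2b_0}$, I obtain
\begin{align*}
& \limsup_{\Lambda\to\infty}|Q^{\rm err}_{\rm pti}[g,f](t)|^2
\le C\,t^2 N_0^{2+2b_0}
\sup_{\substack{N_0/2\le n< N_0,\\ 0\le s\le t\lambda^{-2}}}
\limsup_{\Lambda\to\infty}\E\bigl[|\mean{\FT{g},\mathcal{G}_n(s,t/\vep,\cdot,1,\kappa)[\FT{a}_s]}|^2\bigr]\,.
\end{align*}

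Next I derive the diagrammatic representation of the inner expectation. Using~(\ref{eq:defGnviaA}), the inner $r$-integral $\int_0^{t/\vep-s}\rmd r\,\rme^{-r\kappa_n}$ is folded into the time-slice structure by means of~(\ref{eq:phase2}) and~(\ref{eq:s0phase}), so that $r$ plays the role of an additional bottom time slice carrying the damping $\rme^{-r\kappa_n}$, while the field time $s$ produces only the overall phase $\rme^{\ci s\sigma_{n,1}\omla(k_{n,1})}$ (Lemma~\ref{th:omOmconv}). Taking the complex conjugate with the change of variables that swaps signs and reverses orientations on each time slice, using stationarity of $\FT{\psi}_s$, the recombination Lemma~\ref{th:recombinationlemma}, and the cumulant expansion Lemma~\ref{th:cumulants}, one obtains an identity of the same shape as the one derived for $\E[|\mean{\FT{g},\mathcal{A}_n[\FT{a}_s]}|^2]$ in Section~\ref{sec:errors}, but \emph{without amputation}: a doubled graph with $n$ interactions in each tree, $2n+1$ time slices indexed $0,\dots,2n$, and the sum running over $J$ interlacing $(n,n)$, over $\ell,\ell'\in G_n$, and over $S\in\pi(I_{4n+2})$. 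As in Section~\ref{sec:errors} the cluster $\delta$-functions force $k_{n,1}+k'_{n,1}=0$, so the $s$-dependent phase cancels and the amplitude depends on $s,t/\vep$ only through $t/\vep-s$. For a fully paired partition $S$ the cumulants reduce, via $C_2=\1(o_i=-o_j)W^{\lambda}_\Lambda$, Lemma~\ref{th:unifW2}, and replacement of the $\Lambda^*$-sums by $\T^d$-integrals (dominated convergence), to the amplitude $\mathcal{G}_n^{\rm pairs}$ of~(\ref{eq:defGnpairs}), up to an error of size $\order{\lambda}$ times the relevant combinatorial weight coming from $W^{\lambda}_\Lambda\to W$; this error is absorbed into the first term of~(\ref{eq:ptierr1}). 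For every non-fully-paired $S$ one has $\mathcal{G}_n^{\rm pairs}=0$ by definition, and those partitions are precisely the ones estimated next.

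For a non-fully-paired relevant graph I bound $\limsup_\Lambda$ of the amplitude by first estimating $\prod_{A\in S}|C_{|A|}|\le\prod_{A\in S}\sup_\Lambda|C_{|A|}|$ and $|\FT{g}(k_{n,1})|^2\le\norm{\FT{g}}_\infty^2$, and then applying the basic $\mathcal{F}$-estimate, Lemma~\ref{th:basicFest}, with $n'=n$ and $s\lambda^2\le t$. By Lemma~\ref{th:ivdegrees} the power of $\lambda$ produced equals $n_2+(1-b)n_1-n_0=r+(1-b)n_1$ with $r=2n+1-|S|$; for a higher-order graph $r\ge\Nnp\ge 1$, and the sum over non-pairing $S$ carries an extra factor $\lambda$ by Lemma~\ref{th:clustercomb}, so the gain is at least $\lambda^2$; for a partially paired graph $r=0$ and there is a degree-one vertex, so $n_1\ge 1$ and the power is $(1-b)n_1=n_1/4\ge 1/4$. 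Hence one always gains at least $\lambda^{1/4}$. Summing the cumulant weights over $S$ by Lemma~\ref{th:clustercomb}, over $J$ ($\le 2^{2n}$ terms), over $\ell,\ell'$ ($\le((2n-1)!!)^2\le(2n)^{2n}$ terms), and bounding all $n$-dependent quantities by their values at $n=N_0$, yields for the non-fully-paired contribution a bound of the form of the first term in~(\ref{eq:ptierr1}); by the choice of $a_0$ in Definition~\ref{th:defkappaetc} and the limit~(\ref{eq:N0limb}) (equivalently~(\ref{eq:N0limit2})) this term tends to $0$ as $\lambda\to0$. Combining this with the fully paired part (kept as $\mathcal{G}_n^{\rm pairs}$, to be handled in Section~\ref{sec:fullypaired}) gives~(\ref{eq:ptierr1}). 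Note that, in contrast with Proposition~\ref{th:amperr}, no partial-time-integration decay is needed here for the non-fully-paired graphs: the fully paired amplitudes with $n\ge N_0/2$ are exactly what is postponed.

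The main obstacle is the second step: establishing the precise diagrammatic representation of $\E[|\mean{\FT{g},\mathcal{G}_n(s,t/\vep,\cdot,1,\kappa)[\FT{a}_s]}|^2]$, in particular correctly incorporating the $\int\rmd r\,\rme^{-r\kappa_n}$ of~(\ref{eq:defGnviaA}) into the simplex time-integration through Lemma~\ref{th:recombinationlemma}, and tracking all the oscillating phases through the non-stationarity of $\FT{a}$, so that the fully paired part matches~(\ref{eq:defGnpairs}) verbatim and the non-amputated time-slice structure is exactly that of Lemma~\ref{th:basicFest}. This bookkeeping is the one genuinely new ingredient; once it is in place, the remaining estimates are close variants of those in the proofs of Propositions~\ref{th:amperr} and~\ref{th:cuterr}.
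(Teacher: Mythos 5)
Your proposal is correct and follows essentially the same route as the paper: the Schwarz reduction with the $(\kappa')^2(t/\vep)^2=t^2N_0^{2b_0}$ bookkeeping, the doubled non-amputated diagrammatic representation with $J$ interlacing $(n,n)$, the replacement of $C_2$ by $W$ via Lemma~\ref{th:unifW2} (error absorbed in the $\lambda^{1/4}$ term), and the basic $\mathcal{F}$-estimate combined with Lemma~\ref{th:ivdegrees} (either $r\ge 1$ or $n_1\ge 1$) for the non-fully-paired graphs. The only difference is that the bookkeeping you single out as the main obstacle is dispatched in the paper by the one-line identity $\mean{\FT{g},\mathcal{G}_{n}(s,t,\cdot,1,\kappa)[\FT{a}_s]} = \mean{\FT{g},\rme^{\ci  s \omla}\mathcal{F}_n(t-s,\cdot,1,\kappa)[\FT{\psi}_s]}$, after which stationarity, Lemma~\ref{th:recombinationlemma}, and the cumulant expansion give the representation (\ref{eq:defGnampl}) exactly as in the $\mathcal{A}_n$ case.
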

\begin{proof}
The error term $\mathcal{G}_n$ was defined in (\ref{eq:defGnviaA}),
where we can directly apply (\ref{eq:gAnprod}). Comparing the result to the
definition of $\mathcal{F}_n$ in (\ref{eq:defFn}) shows that
\begin{align} 
\mean{\FT{g},\mathcal{G}_{n}(s,t,\cdot,1,\kappa)[\FT{a}_s]}
 = \mean{\FT{g},\rme^{\ci  s \omla}
\mathcal{F}_n(t-s,\cdot,1,\kappa)[\FT{\psi}_s]} \, .
\end{align}
Thus in this case
\begin{align} 
&\E\Bigl[|\mean{\FT{g},
\mathcal{G}_n(s,t/\vep,\cdot,1,\kappa)[\FT{a}_s]}|^2\Bigr]
= \E\Bigl[|\mean{
\rme^{-\ci  s \omla} \FT{g},
\mathcal{F}_n(t/\vep-s,\cdot,1,\kappa)[\FT{\psi}_0]}|^2\Bigr]
\nonumber \\ & \quad
= \sum_{J\text{ interlaces }(n,n)} \
\sum_{\ell,\ell' \in G_n}\ \sum_{S\in \pi(I_{4 n +2})}
\mathcal{G}_n^{\rm ampl}(S,J,\ell,\ell',t/\vep-s,\kappa)\, ,
\end{align}
where
\begin{align}\label{eq:defGnampl}
& \mathcal{G}_n^{\rm ampl}(S,J,\ell,\ell',s,\kappa)
=  (-\lambda^{2})^{n} \sum_{\sigma,\sigma'\in \set{\pm 1}^{\mathcal{I}'_{n}} }
\int_{(\Lambda^*)^{\mathcal{I}'_{n}}}  \!\rmd k\,
\int_{(\Lambda^*)^{\mathcal{I}'_{n}}}  \!\rmd k'\,
 \nonumber \\ & \quad \times
\Delta_{n,\ell}(k,\sigma;\Lambda) \Delta_{n,\ell'}(k',\sigma';\Lambda)
\prod_{A\in S}\Bigl[ \delta_\Lambda\!\Bigl(\sum_{i\in A} K_i\Bigr)
  C_{|A|}(o_A,K_A;\lambda,\Lambda) \Bigr]
 \nonumber \\ & \quad \times
\prod_{i=1}^{n} \Bigl[ \sigma_{i,\ell_{i}}
 \PFone(k_{i-1;\ell_i}) \sigma'_{i,\ell'_{i}}
 \PFone(-k'_{i-1;\ell'_i})
 \Bigr]
\1(\sigma_{n,1}=1)\1(\sigma'_{n,1}=-1) |\FT{g}(k_{n,1})|^2
 \nonumber \\ & \quad \times
\int_{(\R_+)^{I_{0,2n}}}\!\rmd r \,
\delta\Bigl(s-\sum_{i=0}^{2 n} r_i\Bigr)
 \prod_{i=0}^{2n} \rme^{-\ci r_i \gamma({i;J})}\, .
\end{align}
Thus the amplitude differs from the
``amputated'' amplitudes $\mathcal{A}_n$ and $\mathcal{Z}_n$ by containing
also the propagators associated with the first two interactions.
We recall the discussion about the
definition of the non-amputated exponentials $\gamma(i;J)$ in
(\ref{eq:defgtilde}).

As in the proof of Proposition \ref{th:cuterr}, we then find
\begin{align} 
&
\limsup_{\Lambda\to\infty} |Q^{\rm err}_{\rm pti}[g,f](t)|^2 \le N_0^2
(\kappa')^2
C\norm{f}_2^2 t^2 \lambda^{-4}
\sup_{0\le s\le t\lambda^{-2}, N_0/2\le n \le N_0}
\nonumber \\ & \quad \times
\sum_{J\text{ interlaces }(n,n)} \
 \sum_{\ell,\ell' \in G_n}  \sum_{S\in \pi(I_{4 n+2})}
\limsup_{\Lambda\to\infty}
|\mathcal{G}_n^{\rm ampl}(S,J,\ell,\ell',s,\kappa)|\, .
\end{align}
For any graph which is not fully paired, we take absolute values up to the
time-integrations and apply Lemma \ref{th:basicFest}.
Using the notations of Lemma \ref{th:ivdegrees}, then
\begin{align} 
& |\mathcal{G}_n^{\rm ampl}(S,J,\ell,\ell',s,\kappa)|
\nonumber \\ & \quad
\le
\norm{\FT{g}}_\infty^2
 \rme^{t} \frac{t^{n_0} }{(n_0)!} \lambda^{r+(1-b) n_1} C^{2 n +1}
\sabs{\ln \lambda}^{2+4 n}
\prod_{A\in S} \sup_{\Lambda,k,\sigma} |C_{|A|}(\sigma,k;\lambda,\Lambda)|\, .
\end{align}
If the graph is higher order, $r\ge 1$, and if the graph is not fully paired,
$n_1\ge 1$.  Thus for both types of graphs, and trivially for irrelevant
graphs, we can use a bound
\begin{align} 
 \lambda^{1/4}  \norm{\FT{g}}_\infty^2
 \rme^{t} \sabs{c t}^{n}
\sabs{\ln \lambda}^{2+4 n}
\prod_{A\in S} \sup_{\Lambda,k,\sigma} |C_{|A|}(\sigma,k;\lambda,\Lambda)|\, .
\end{align}

Consider then a fully paired graph.  In this case, all clusters are pairings
with 
$C_2((\sigma',\sigma),(k',k))=\1(\sigma'+\sigma=0) W^\lambda_\Lambda(\sigma k)$.
By Lemma \ref{th:unifW2}, $W^\lambda_\Lambda(k)= W(k)+\Delta$,
where $\limsup_{\Lambda\to \infty}\sup_{k\in \T^d} |\Delta|\le  2 c_0^2
\lambda$. Since $W(-k)=W(k)$, we have for any finite index set $I$ 
\begin{align} 
\limsup_{\Lambda\to\infty}
\Bigl|\prod_{i\in I} W^\lambda_\Lambda(\pm k_i)-\prod_{i\in I} W(k_i)\Bigr|\le
|I| C^{|I|-1} 2 c_0^2 \lambda \, ,
\end{align}
where $C=2 c_0^2 \lambda'_0+\norm{W}_\infty<\infty$.
(The statement can be proven by induction in $|I|$.)

If $S$ is a pairing, we have $|S|=2 n +1$, and
applying Lemma \ref{th:basicFest}, we can thus exchange in the definition of
$\mathcal{G}_n^{\rm ampl}$ all $C_2$ terms by $W(K_i)\1(o_i=-o_j)$
with an error whose $\limsup$ is bounded by
\begin{align} 
C n \norm{\FT{g}}_\infty^2
 \rme^{t} \sabs{c t}^{n} \sabs{\ln \lambda}^{2+4 n} \lambda \, .
\end{align}
In the resulting formula, we first resolve all $\delta_\Lambda$-functions
as explained before.  The summations over the free momenta are then turned
into Lebesgue integrals as explained in Section \ref{sec:firstpf}, with an
integrand
which is uniformly bounded and has a pointwise limit when
$\Lambda\to \infty$. By dominated convergence we can thus take
the limit $\Lambda\to \infty$ inside the integrals which shows that the limit
is given by $\mathcal{G}_n^{\rm pairs}$ defined in (\ref{eq:defGnpairs}).

Now collecting all the above bounds together, estimating the number of terms
in the $J,\ell,\ell',S$ sums as before,
readjusting $c$ and $C$ whenever necessary,
and using $N_0^2 (\kappa')^2 \lambda^{-4}=N_0^{2+2 b_0}$, proves that
(\ref{eq:ptierr1}) holds.
\qed \end{proof}

\subsection{Main term}

We recall the graphical representation of the main term, and the related
notations, in particular, Proposition \ref{th:main1st} and the definition of
$\gamma(m)$
in (\ref{eq:defgammam}).
\begin{proposition}\label{th:main2nd}
Suppose $t>0$ and $0<\lambda<\lambda'_0$ are given, and define
$N_0$ and $\kappa$, as in Definition \ref{th:defkappaetc}.
There is a constant
$C>0$ depending only on $f$ and $g$, and
$c>0$ depending only on $\omega$ and $\lambda'_0$ such that
\begin{align}\label{eq:main2nd}
 &  \limsup_{\Lambda\to\infty}   |\Qmain [g,f](t)-
\Qpairs [g,f](t)|
\le C \rme^t \sabs{c t}^{N_0} N_0^{N_0+4} (2 N_0)!
\sabs{\ln \lambda}^{2 N_0+2} \lambda^{1/4} \, ,
\end{align}
where $\Qpairs$ is defined by
\begin{align} 
\Qpairs [g,f](t) =
 \sum_{n=0}^{N_0-1}
 \sum_{\ell \in G_n} \sum_{S\in \pi(I_{0,2 n+1})}
\mathcal{F}_n^{\rm pairs}(S,\ell,t/\vep,\kappa)
\end{align}
with $\mathcal{F}_n^{\rm pairs}(S,\ell,t/\vep,\kappa)=0$, if the
graph defined by $S,\ell$ is not fully paired, and otherwise
it is equal to
\begin{align}\label{eq:defFnpairs}
&\mathcal{F}_n^{\rm pairs}(S,\ell,t/\vep,\kappa) =
(-\ci \lambda)^n
\sum_{\sigma\in \set{\pm 1}^{\mathcal{I}''_{n}}}
\int_{(\T^d)^{\mathcal{I}''_{n}}} \!\rmd k\,
\Delta_{n,\ell}(k,\sigma)  \FT{g}(k_{n,1})^* \FT{f}(k_{n,1})
 \nonumber \\ &  \quad \times
\1(\sigma_{n,1}=1)\1(\sigma_{0,0}=-1)
\prod_{A=\set{i,j}\in S}\!\Bigl[
 \delta(k_{0,i}+k_{0,j}) \1(\sigma_{0,i}=-\sigma_{0,j}) W(k_{0,i}) \Bigr]
 \nonumber \\ & \quad \times
\prod_{i=1}^{n} \Bigl[ \sigma_{i,\ell_{i}} \PFone(k_{i-1;\ell_i}) \Bigr]
\int_{(\R_+)^{I_{0,n}}}\!\rmd r \,  \delta\Bigl(t\lambda^{-2}-\sum_{i=0}^{n}
r_i\Bigr)
\prod_{m=0}^{n} \rme^{-\ci r_m \gamma(m)}
 \, .
\end{align}
\end{proposition}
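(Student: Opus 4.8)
The plan is to compare $\Qmain$, as given explicitly by Proposition \ref{th:main1st}, term by term with $\Qpairs$, controlling the difference by the basic $\mathcal{F}$-estimate of Lemma \ref{th:basicFest} together with the cluster combinatorics of Lemma \ref{th:clustercomb}. First I would split the sum over clusterings $S\in\pi(I_{0,2n+1})$ into pairings and non-pairings. For every non-pairing $S$ in a relevant graph, Lemma \ref{th:ivdegrees} gives $r=N+1-|S|\ge 1$, so the basic $\mathcal{F}$-estimate (\ref{eq:basicFest}) provides a power of $\lambda$ of at least $\lambda^{r+(1-b)n_1}\ge \lambda^{1/4}$ (in fact $\lambda^1$, since $r\ge 1$ and $b=\frac34$, but also from Lemma \ref{th:clustercomb} the non-pairing cluster sum carries an extra factor $\lambda$); similarly any relevant pairing graph with an interaction vertex of degree one has $n_1\ge 1$, giving $\lambda^{(1-b)n_1}\ge\lambda^{1/4}$. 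Irrelevant graphs contribute zero. Hence all graphs except the fully paired ones contribute, after summing, at most the stated bound: the $n$-sum, the sum over $\ell\in G_n$ (of size $(2n{+}1)!!\le(2n)^n$), and the cluster sum (bounded via Lemma \ref{th:clustercomb} by $c^{2n}(2n)!$), together with the factors $\rme^t\sabs{ct}^n\sabs{\ln\lambda}^{2+4n}$ and the extra $\lambda^{1/4}$, collapse to $C\rme^t\sabs{ct}^{N_0}N_0^{N_0+4}(2N_0)!\sabs{\ln\lambda}^{2N_0+2}\lambda^{1/4}$ after readjusting $c,C$ and relabelling $4n\rightsquigarrow 2N_0$ (the exponent of $\sabs{\ln\lambda}$ in the $\mathcal{F}$-estimate is $1+n_2+2n_1\le 1+2(n+n')=1+2n$ here since $n'=0$).

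It then remains to handle the fully paired graphs, i.e.\ to show that the difference between $\mathcal{F}_n^{\rm ampl}(S,\ell,\cdot)$ (for pairing $S$ with no degree-one vertices) and $\mathcal{F}_n^{\rm pairs}(S,\ell,\cdot)$ is negligible. For a pairing, every cluster $A=\set{i,j}$ contributes $C_2(\sigma_{0,A},k_{0,A};\lambda,\Lambda)=\1(\sigma_{0,i}+\sigma_{0,j}=0)W^\lambda_\Lambda(\sigma_{0,i}k_{0,i})$, and by Lemma \ref{th:unifW2} we have $W^\lambda_\Lambda([k])=W(k)+\Delta$ with $\limsup_\Lambda\sup_k|\Delta|\le 2c_0^2\lambda$. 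Writing a product of $|S|=2n+1$ such factors and expanding, using $W(-k)=W(k)$, I would bound $\limsup_\Lambda|\prod W^\lambda_\Lambda(\pm k_i)-\prod W(k_i)|\le |S|\,C^{|S|-1}\,2c_0^2\lambda$ by induction on $|S|$, exactly as in the proof of Proposition \ref{th:ptierr1}. Inserting this bound in place of the product of $C_2$'s inside $\mathcal{F}_n^{\rm ampl}$ and applying Lemma \ref{th:basicFest} to the remaining (still uniformly bounded) integrand produces an error of order $C n\rme^t\sabs{ct}^n\sabs{\ln\lambda}^{2+4n}\lambda$ per graph. Once the $C_2$'s are replaced by $W$'s, the remaining momentum sums over $\Lambda^*$ are turned into Lebesgue integrals over $\T^d$ via (\ref{eq:LamtoLeb}): the integrand is uniformly bounded with a pointwise limit (the step functions converge and $W$ is continuous), so dominated convergence lets the limit $\Lambda\to\infty$ pass inside, yielding precisely $\mathcal{F}_n^{\rm pairs}(S,\ell,t/\vep,\kappa)$ as defined in (\ref{eq:defFnpairs}).

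Summing these per-graph errors over the fully paired graphs with the same combinatorial counting as above (number of $\ell$'s at most $(2n)^n$, pairing cluster sum bounded by $c^{2n}(2n)!$ from Lemma \ref{th:clustercomb}) again gives a contribution bounded by the right-hand side of (\ref{eq:main2nd}), after readjusting the constants. Combining this with the non-pairing/degree-one estimate from the first paragraph and using the triangle inequality $\limsup_\Lambda|\Qmain-\Qpairs|\le(\text{non-fully-paired part})+(\text{fully-paired replacement error})$ completes the proof.

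The main obstacle I expect is purely bookkeeping: keeping track of the exact powers of $\lambda$, $N_0$, $\sabs{\ln\lambda}$, and factorials generated by Lemmata \ref{th:basicFest} and \ref{th:clustercomb} across the $n$-, $\ell$-, and $S$-sums, and verifying that in every non-fully-paired relevant case one genuinely has either $r\ge 1$ or $n_1\ge 1$ so that a clean $\lambda^{1/4}$ is extracted — the latter is guaranteed by Lemma \ref{th:ivdegrees}, which is why the degree classification set up in Section \ref{sec:classification} is essential here.
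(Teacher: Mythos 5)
Your proposal is correct and follows essentially the same route as the paper: the paper's own proof simply applies Lemma \ref{th:basicFest} with $n'=0$ and then repeats verbatim the steps of Proposition \ref{th:ptierr1} — i.e.\ extracting $\lambda^{r+(1-b)n_1}\ge\lambda^{1/4}$ for non-fully-paired relevant graphs via Lemma \ref{th:ivdegrees}, replacing $C_2$ by $W$ with an $\order{\lambda}$ error via Lemma \ref{th:unifW2}, passing to the $\Lambda\to\infty$ limit by dominated convergence, and summing with the $\ell$- and cluster-combinatorics — exactly as you describe. (The only nitpick is that $|G_n|=(2n-1)!!$ rather than $(2n+1)!!$, which changes nothing in the final bound.)
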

\begin{proof}
We can apply Lemma \ref{th:basicFest} with $n'=0$ to estimate
$|\mathcal{F}_n^{\rm ampl}(S,\ell,t/\vep,\kappa)|$.
The steps of the proof are otherwise
identical to those used in the proof of Proposition
\ref{th:ptierr1}.  To avoid repetition, we skip the rest of the details here.
\qed \end{proof}

\section{Fully paired graphs}
\label{sec:fullypaired}

By the results proven in the previous section, only fully paired graphs
remain to be estimated, with the corresponding amplitudes given by
$\mathcal{G}_n^{\rm pairs}$ and
$\mathcal{F}_n^{\rm pairs}$.
In these terms, all sums over $\Lambda^*$ have already been
replaced by integrals over $\T^d$, and we have changed the covariance
function to its $\lambda\to 0$ limiting value. The related momentum graphs
differ by the number of
interactions in the minus tree: for $\mathcal{F}_n^{\rm pairs}$,
we have $n'=0$, and for $\mathcal{G}_n^{\rm pairs}$, $n'=n$.
For such relevant graphs, we have $r=0$ and $n_1=0$, and thus by
Proposition \ref{th:ivordering} for any $1\le i\le n+n'$,
\begin{align} 
n_2(i)\le n_0(i)\qand n_0(i)\ge \frac{i}{2}\, .
\end{align}
In addition, by Lemma \ref{th:ivdegrees} also $n_2=n_0=\frac{n+n'}{2}$.
Since $n+n'$ must then be even this implies that any
$\mathcal{F}_n^{\rm pairs}$ with odd $n$ is zero.
Also necessarily $\deg v_1=0$ and $\deg v_{n+n'}=2$.
Therefore, we can conclude that
either the degrees of the interaction
vertices form an alternating sequence
$(0,2,0,2,\ldots,0,2)$, or this alternating behavior ends in
two or more consecutive zeroes somewhere in the middle of the whole sequence.

Moreover, the first phase is always zero, since by Lemma
\ref{th:omOmconv} for any relevant graph
\begin{align} 
\re \gamma(0;J) =
\sum_{i=1}^{2 n+1}\!\! \sigma_{0,i} \omega(k_{0,i})
+\sum_{i=1}^{2 n'+1}\!\! \sigma'_{0,i} \omega(k'_{0,i})
\end{align}
and the pairing of momenta and parities on the initial time slice implies that
the terms cancel each other pairwise.
For simplicity, let us now drop the dependence on $J$ from the notation, \itie ,
we denote $\gamma(m)$ instead of $\gamma(m;J)$ also for $\mathcal{G}_n^{\rm
pairs}$.
We recall that the time slice $m<N$ is called long, if $\deg v_{m+1}=0$.
We now say that a long time slice $m$ is \defem{trivial}, if additionally $\re
\gamma(m)=0$.
Thus, for instance, the time slice $m=0$ is long and trivial in every relevant
pairing graph.
We also need to consider the
\defem{index of the last trivial long time slice}, by which we mean the last
trivial long time slice
in the initial sequence of trivial long slices:
the index is defined as
\begin{align}
m'_0=\max \defset{0\le m\le N}{\re \gamma(j)=0\text{ if }0\le j\le m
\text{ and the slice }j\text{ is long}} ,
\end{align}
where we have again set $N=n+n'$.  For relevant fully paired graphs, $0\le
m'_0\le N$,
and we will soon show that $m'_0=N$ if and only if the graph is leading.

Every degree two interaction vertex $v$
is at the top of the domain of influence
of its two free momenta.
We denote these by $f_1,f_2$, with
$f_1<f_2$, and let $e_3$ denote the third (integrated)
edge in $\edges_-(v)$, and $e_0$ the
unique edge in $\edges_+(v)$.
We typically also denote $k_1=k_{f_1}$, $k_2=k_{f_2}$, $k_3=k_{e_3}$,
and $k_0=k_{e_0}$. In particular, then
$k_{3}=k_{0}-k_1-k_2$ uniformly in the free momenta.

Consider the two loops
associated with $f_1$ and $f_2$ which would be created in the
spanning tree by the addition of the edge.  We travel the loops starting from
the edge $f_i$ and finishing with the edge $e_3$.
There is a {\em unique\/} vertex $v'$ where the two loops meet (this is the
first
vertex in common between the two paths). The vertex $v'$ is called the
\defem{X-vertex}, or
\defem{crossing-vertex}, associated with the \defem{double-loop of $v$.}
Clearly, $v'$ must belong to at
least three distinct {\em integrated\/} edges,
and thus it has to be a degree zero interaction vertex.
The remaining vertices (if there are any)
along the two loops are called \defem{$T$-vertices}, or
\defem{through-vertices}, of the double-loop of $v$.  These are
classified according to which of the three
possible combinations of the free momenta appear in that vertex:
if the vertex belongs to
the path from $v\to v'$ containing $f_1$,
it is called a \defem{$T_1$-vertex};
if to the path from $v\to v'$ containing $f_2$,
\defem{$T_2$-vertex};
if to the path from $v'\to v$,
\defem{$T_3$-vertex}.  The names are explained by the following
observation: {\em if $w$ is a $T_n$-vertex, $n=1,2,3$,
then exactly two of the momenta $k_e$, $e\in\edges(w)$, depend on
$k_{f_1}$ or $k_{f_2}$, and the dependence occurs via $\pm k_n$.\/}
The graph in Fig.~\ref{fig:crossing} illustrates these definitions.

\begin{figure}
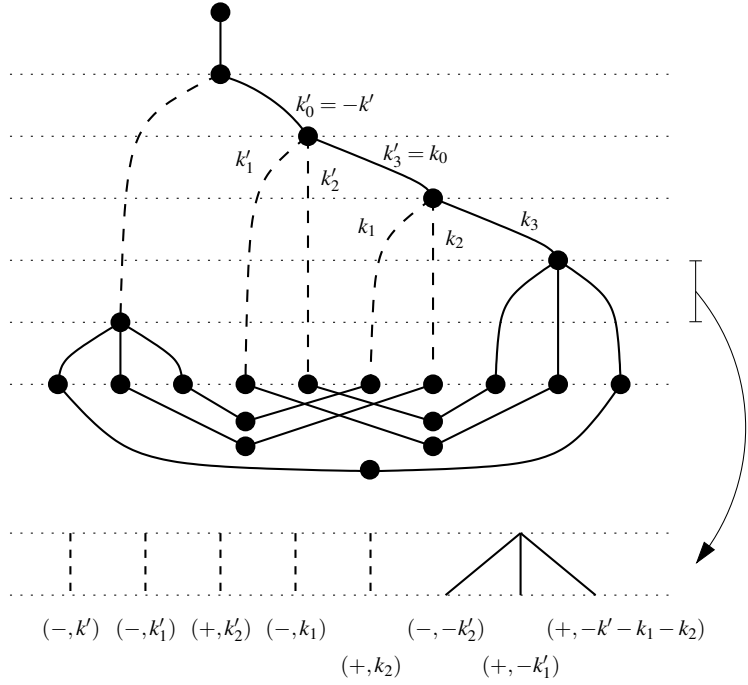

\centering
  \myfigure{width=0.65\textwidth}{Graph_cr}
  \medskip
\caption{Example of a crossing graph.  (Although this graph does not directly 
appear in our expansions, it could be completed to such a graph by adding a
loss motive to the top of the graph.)  The graph has two double-loops:
the double-loop of $v_3$, for which $v_1$ is an $X$-vertex and $v_2$ is a
$T_3$-vertex, and the double-loop of $v_4$, for which $v_2$ is an $X$-vertex
and $v_3$ is a $T_3$-vertex.  By inserting the appropriate parities for each
edge, one can check the pairwise cancellation of phase factors on the time
slice $0$, \itie , that $\re \gamma(0)=0$.   
Under the graph, we have denoted explicitly the parities and momenta of each
of the edges intersecting the time slice $1$. This shows that 
$\re\gamma(1)
=-\omega(k')-\omega(k_1)+\omega(k_2)+\omega(-k'-k_1-k_2)\ne \Omega_3
= -\omega(k_0)-\omega(k_1)+\omega(k_2)+\omega(k_0-k_1-k_2)$, with
$k_0=-k'-k_1'-k_2'$. 
Thus the time slice $1$ is long and independent of the double-loop of $v_4$
but it propagates a crossing with the double-loop of
$v_3$.\label{fig:crossing}} 
\end{figure}

Consider the interaction phase at an interaction vertex $v_i$,
$1\le i\le N$,
\begin{align} 
\Omega_{i} := \Omega(k_{\edges_-(v_i)},\sigma_{\edges_+(v_i)}).
\end{align}
Since now $\re \gamma(0)=0$, for every time slice $m$ of a fully paired graph,
$0\le m\le N$, we have
\begin{align} 
\re \gamma(m) = \sum_{i=m+1}^N \Omega_i = -\sum_{i=1}^m \Omega_i \, .
\end{align}
Consider an arbitrary long time slice $m$,
$0\le m\le N-1$, which thus ends in
a degree zero interaction vertex $v_{m+1}$, and
an arbitrary degree two vertex $v_{i_2}$, with the corresponding free
momenta $k_1,k_2$.
If $\re \gamma(m)$ does not depend on $k_1$ or $k_2$, the time slice
$m$ is said
to be \defem{independent of the double-loop of $v_{i_2}$.}
If $\re \gamma(m)$ depends on $k_1$ or $k_2$, but $\re \gamma(m)-\Omega_{i_2}$
does not, the time slice is said to be
\defem{nested inside the double-loop of $v_{i_2}$.}
If both $\re \gamma(m)$ and  $\re \gamma(m)-\Omega_{i_2}$
depend on $k_1$ or $k_2$, the time slice $m$ is said to
\defem{propagate a crossing with the double-loop of $v_{i_2}$.}

We are now ready to give the precise definition of how the subleading
relevant fully
paired graphs are divided into nested and crossing graphs.
We iterate trough degree two vertices, starting from the bottom of the graph,
and consider the double-loop associated with the vertex. If every long
time slice
is independent of the double-loop, we move on to the next vertex in the list
(we will soon prove that
the double-loop is then formed by iteration of leading motives).
Otherwise, there is a long time slice which depends on the double-loop.
If all such time slices are nested inside the double-loop, the double-loop is
called a \defem{nest} and the graph is called {\em nested}.  We have given an
example of a nested graph in Fig.\ \ref{fig:nested}.
Otherwise, there is a long time slice which depends on the double-loop but is
not nested inside it.  We call the topmost of these time slices
the \defem{last propagated crossing slice}, and the graph itself is then
called a {\em crossing graph}. 
An example is given in Fig.\ \ref{fig:crossing}.

\begin{figure}
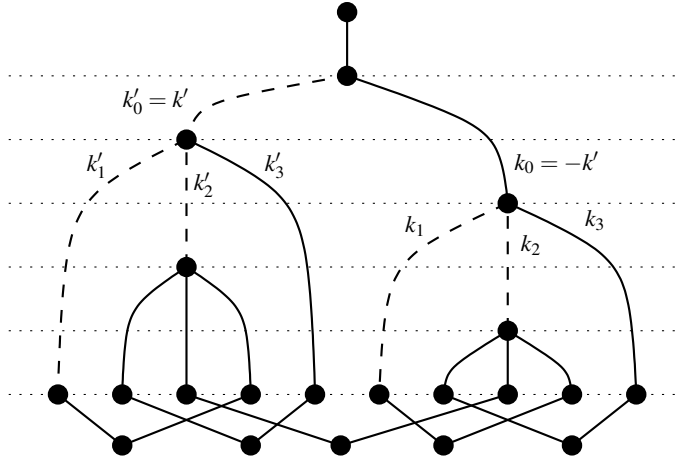

 \centering
  \myfigure{width=0.6\textwidth}{Graph_ns}
  \medskip
\caption{Example of a nested graph.  The graph has two double-loops:
the double-loop of $v_3$, for which $v_1$ is an $X$-vertex, and the
double-loop of $v_4$, for which $v_2$ is an $X$-vertex.
Reading the appropriate parities and momenta from the graph shows that
$\re \gamma(1)=-\omega(k')-\omega(k_1)+\omega(k_2)+\omega(k_3)=\Omega_3$.
Thus the time slice $1$ is long and independent of the double-loop of $v_4$
but it is nested inside the double-loop of $v_3$.\label{fig:nested}}
\end{figure}

The following Proposition shows that these definitions yield a complete
classification of relevant fully paired graphs.
\begin{proposition}\label{th:leadingremains}
Suppose a fully paired graph has a non-zero associated amplitude,
and is not nested nor crossing.  Then  every long time slice of the graph is
trivial, and the graph is leading.  In addition, any relevant fully paired
graph,
for which all long time slices are trivial, is a leading graph.
\end{proposition}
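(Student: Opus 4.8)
The plan is to establish, for a relevant fully paired graph, the two implications
\[
\text{not nested nor crossing}\ \Longrightarrow\ \text{every long time slice is trivial}\ \Longrightarrow\ \text{leading},
\]
which together yield both assertions of the Proposition (a non-zero amplitude is exactly relevance, and ``fully paired'' is part of the hypothesis). I would use freely the structural facts recorded just above it: for such a graph $r=0$, $n_1=0$, $n_0=n_2=N/2$, $\deg v_1=0$, $\deg v_N=2$, the degree sequence is $(0,2,0,2,\dots)$ up to a possible central block of consecutive zeroes, $\re\gamma(0)=0$, every degree two vertex sits atop the domain of influence of its two free momenta, and (by Corollary~\ref{th:zeroisirr}, since all clusters are pairs) no edge except $e_0$ carries the zero momentum. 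I would also use the momentum-resolution apparatus of Section~\ref{sec:momdeltas}: Corollary~\ref{th:zerok} (each $k_e$ is a signed sum of free momenta, vanishing iff $\fedges_e=\emptyset$), Lemma~\ref{th:pairmom2} ($\fedges_e=\fedges_{e'}$ iff $k_e=\pm k_{e'}$), and the $X$/$T_n$-vertex anatomy of a double-loop.

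For the first implication, unwinding the classification algorithm shows that ``neither nested nor crossing'' means precisely that every long time slice is independent of the double-loop of \emph{every} degree two vertex. Fix a long slice $m$. By Lemma~\ref{th:omOmconv},
\[
\re\gamma(m)=\sum_{j}\sigma_{m,j}\,\omega(k_{m,j})-\omega(k_{n,1}),
\]
summed over the edges crossing slice $m$; by hypothesis this is independent of the $N$ free momenta terminating at degree two vertices, hence depends only on the single free momentum $k_*$ not attached to any interaction vertex, which for a relevant graph equals $\pm k_{n,1}$ (one of the two spine edges at $v_{N+1}$). Expanding each $k_{m,j}$ via Corollary~\ref{th:zerok} and using that $\omega$ is even, independence of all double-loop momenta forces the $\omega$ phases of the crossing edges to cancel in pairs $\{e,e'\}$ with $\fedges_e=\fedges_{e'}$, the only exception being edges whose momentum is exactly $\pm k_*$; tracking those exceptional edges down the two tree spines and matching parities, their net contribution is $\omega(k_*)$, which cancels against $-\omega(k_{n,1})$. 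Thus $\re\gamma(m)=0$, i.e.\ slice $m$ is trivial.

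For the second implication I would induct on $N$, peeling leading motives from the bottom of the graph. The base case $N=0$ is the single loop. For the step, let $v$ be the time-earliest degree two vertex; since the degree sequence begins $0,2,\dots$, every vertex strictly below $v$ has degree zero, and the $X$-vertex of the double-loop of $v$ is one of them. Using the $T_n$-vertex anatomy together with the first implication one sees that there can be no $T$-vertex between the $X$-vertex and $v$: a degree zero $T$-vertex $w$ below $v$ would make the long time slice just beneath $w$ depend, with a non-cancelling phase, on a free momentum of $v$, contradicting triviality. Hence the double-loop of $v$ is a bare immediate recollision, i.e.\ a leading motive (a gain motive if the inner pair is a fresh cluster, a loss motive otherwise). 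Excising it --- contracting the bubble back into a single through-line, respectively deleting the attached bubble --- produces a relevant fully paired graph with $N-2$ interactions: the removed motive added only a trivial long slice and a cancelling degree-$0$/degree-$2$ pair, so all long slices of the smaller graph are still trivial, and by induction it is leading; re-inserting the motive keeps it leading. (The converse inclusion, that a leading graph is neither nested nor crossing and so has only trivial long slices, is immediate from the construction, since an immediate recollision preserves its incoming momentum and the phases telescope from $\re\gamma(N)=0$.)

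The main obstacle is the momentum bookkeeping underlying both implications: one must track, edge by edge, which free momenta thread a given long time slice, use the $X$/$T_n$-anatomy of double-loops to control how the vertex phases $\Omega$ at the degree zero vertices combine, and then pin down which lone edge carries the exceptional free momentum $k_*$ and with which parity, so as to conclude that all remaining $\omega$ phases cancel pairwise. Organising this as an induction over the degree two vertices in time order --- peeling the innermost recollision first, in the spirit of the iterative cluster scheme of Section~\ref{sec:iterclsuters} --- is what keeps the cancellations tractable.
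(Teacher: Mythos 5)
Your second implication (all long slices trivial $\Rightarrow$ leading) is essentially the paper's Lemma \ref{th:immrecstart}: peel the time-earliest degree two vertex, use the difference of two consecutive long slices, which is a single vertex phase $\Omega_i$, together with the non-degeneracy of $\omega$ (Lemma \ref{th:Oi0notconstant}) to force the double-loop to be a bare immediate recollision, then excise it and induct via the iterative cluster scheme. That part is sound in outline.

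The gap is in your first implication. You claim that if $\re\gamma(m)$ is independent of all double-loop momenta, then ``the $\omega$ phases of the crossing edges cancel in pairs $\{e,e'\}$ with $\fedges_e=\fedges_{e'}$,'' the spine edges supplying the leftover $\omega(k_*)$. This is asserted, not proved, and it does not follow from Corollary \ref{th:zerok} or Lemma \ref{th:pairmom2}: those results let you write each $k_e$ as a $\pm 1$-combination of free momenta, but independence of a sum $\sum_j \sigma_{e_j}\omega(\epsilon_j k_f+u_j)$ in $k_f$ (with the $u_j$ themselves depending on the other free momenta, and with possibly more than two terms per $k_f$, since a long slice can cross the loop of $f$ several times) does not formally force term-by-term pairing. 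Ruling out non-pairwise or ``accidental'' cancellations requires a non-degeneracy argument for $\omega$ well beyond Lemma \ref{th:Oi0notconstant}, which is only proved for the single-vertex phase $\Omega_{i_0}$ and moreover leans on the $\PFone$ factors at interaction vertices to discard degenerate momentum coincidences — a device unavailable for a generic collection of edges crossing a slice. The parity bookkeeping for the exceptional spine edges is likewise only sketched. The paper avoids this entirely by reversing the order of deductions: ``not nested nor crossing'' gives independence of every long slice from every double-loop; Lemma \ref{th:immrecstart} (whose proof only ever invokes single-vertex phases, via $\re\gamma(i-1)-\re\gamma(i)=\Omega_i$ for consecutive long slices) then shows the graph is an iteration of leading motives; and triviality of all long slices follows for free, since immediate recollisions preserve the phase and $\re\gamma$ vanishes on the top slice. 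Your argument is repaired the same way: run your step-(2) induction with ``independent of all double-loops'' in place of ``trivial,'' and deduce triviality afterwards from the leading structure, rather than proving it directly by a global cancellation over a slice.
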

The Proposition also provides a connection between the present definition of a
leading graph via iteration of leading motives, and the alternative earlier
definitions (\itcf \ ``Kinetic Conjecture'' on page
1078 in \cite{spohn05}, and ``Definition 5.6'' in \cite{ls09}).  In
particular, it implies that, if the edges ``cancel pairwise''
on every long time slice of a relevant graph, then the graph is obtained by
iteration of leading motives.
We also remark here that, due to the $\PFone$-factors at interaction vertices,
any graph, in which two of the three interacting momenta sum identically to
zero, is irrelevant.
Before giving a proof of the Proposition, we need two lemmas which show that
our dispersion relations are sufficiently non-degenerate.
\begin{lemma}\label{th:Oi0notconstant}
Consider an interaction vertex $v_{i_0}$ in a relevant graph.
If $f\in \fedges_e$ for some $e\in \edges(v_{i_0})$, then
$\nabla_{k_f}\Omega_{i_0}\ne 0$.
In addition, $\Omega_{i_0}$ cannot be independent of all free momenta.
\end{lemma}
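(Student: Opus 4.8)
The plan is to compute $\nabla_{k_f}\Omega_{i_0}$ explicitly using the structure of the interaction phase and the form of the momentum dependencies established in Section \ref{sec:momdeltas}, and then invoke the real-analyticity of $\omega$ (item DR\ref{it:DR1}) together with the fact that in a \emph{relevant} graph no two of the three interacting momenta at $v_{i_0}$ sum identically to zero (this last point is exactly the remark just before the Lemma, since such a coincidence would make a $\PFone$-factor vanish and render the graph irrelevant).

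First I would set up notation: write $\edges_-(v_{i_0})=\{e_a,e_b,e_c\}$ with parities arranged in the canonical order $(-1,\sigma,+1)$ dictated by $\Delta_{n,\ell}$, so that $\Omega_{i_0}=\omega(k_{e_c})-\omega(k_{e_a})+\sigma(\omega(k_{e_b})-\omega(k_{e_a}+k_{e_b}+k_{e_c}))$, where $k_{e_a}+k_{e_b}+k_{e_c}=k_{e_0}$ is the outgoing momentum. By Corollary \ref{th:zerok} each of the four momenta $k_{e_a},k_{e_b},k_{e_c},k_{e_0}$ is an integer combination with coefficients in $\{-1,0,+1\}$ of the free momenta, and $\nabla_{k_f}$ of each is a fixed vector in $\{-1,0,1\}^d$ (coordinatewise, the identity or zero). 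Suppose $f\in\fedges_e$ for some $e\in\edges(v_{i_0})$; I would argue that $k_f$ then appears with a nonzero coefficient in at least one of $k_{e_a},k_{e_b},k_{e_c}$ — if $e\in\edges_-(v_{i_0})$ this is immediate, and if $e\in\edges_+(v_{i_0})$, i.e.\ $e=e_0$, then since $k_{e_0}=k_{e_a}+k_{e_b}+k_{e_c}$ the momentum $k_f$ must appear in at least one of the three. Differentiating, $\nabla_{k_f}\Omega_{i_0}$ is a $\pm1$ combination of terms $\omega'(k_{e_a}),\omega'(k_{e_b}),\omega'(k_{e_c}),\omega'(k_{e_0})$ evaluated at the respective arguments, with the key feature that the coefficient structure forces it to be a nonzero function \emph{as a function of the free momenta} unless there is an exact cancellation of arguments.

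The main step — and the place where care is needed — is to rule out such an exact cancellation. The potential cancellations are of the form $k_{e_a}=\pm k_{e_b}$, $k_{e_a}=\pm k_{e_c}$, etc., or $k_{e_0}$ coinciding with one of them, holding identically in the free momenta; by Lemma \ref{th:pairmom2} these correspond to $\fedges_{e}=\fedges_{e'}$ for the relevant pair of edges. I would enumerate the few cases. In a relevant graph one cannot have $k_{e_a}+k_{e_b}=0$, $k_{e_b}+k_{e_c}=0$, or $k_{e_a}+k_{e_c}=0$ identically, because each of these makes one argument of $\PFone(k_{i_0-1;\ell_{i_0}})$ vanish, so by Proposition \ref{th:PFcorr}(1) the factor $\PFone$ is identically zero on the support and the graph is irrelevant; this excludes the dangerous sign combinations that would let the two $\omega'$-terms cancel. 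The remaining relations (e.g.\ $k_{e_a}=k_{e_b}$) would, combined with the momentum conservation, force $k_{e_0}$ into a special position but do not produce the sign pattern needed for the $\omega'$-terms to cancel, so $\nabla_{k_f}\Omega_{i_0}$ remains a nontrivial real-analytic function of the free momenta, hence is not the zero vector. For the final assertion, that $\Omega_{i_0}$ cannot be independent of all free momenta: by the first part, $\nabla_{k_f}\Omega_{i_0}\ne 0$ for every free momentum $k_f$ appearing in any $k_e$, $e\in\edges(v_{i_0})$; and at least one such $k_f$ exists, because by Corollary \ref{th:zerok} if all of $k_{e_a},k_{e_b},k_{e_c}$ were constants (i.e.\ all zero) then removing any of these edges would disconnect the graph, which is impossible for the edges below an interaction vertex whose subtrees carry initial-time vertices, and in any case $k_{e_a}=k_{e_b}=k_{e_c}=0$ would again violate relevance via $\PFone$. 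Therefore $\Omega_{i_0}$ depends genuinely on some free momentum, completing the proof.

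I expect the enumeration of cancellation cases to be the only real obstacle — it is mostly bookkeeping, but one must be careful to use relevance (no identically vanishing $\PFone$-argument) at exactly the right moments and to invoke real-analyticity of $\omega$ to pass from ``not identically zero as a combination of $\omega'$ terms'' to ``not the zero function''; a degenerate $\omega$ (e.g.\ $\omega$ constant, or affine on a subtorus) could in principle defeat the argument, but DR\ref{it:DR1} combined with the non-degeneracy already exploited in Section \ref{sec:freeint} rules this out.
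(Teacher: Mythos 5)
There is a genuine gap in the step where you rule out cancellation of the gradient terms. You reduce the problem to excluding ``exact cancellation of arguments'' ($k_{e_a}=\pm k_{e_b}$, etc.), and for the residual worry about a degenerate dispersion relation you appeal to real-analyticity (DR\ref{it:DR1}) and an unspecified ``non-degeneracy''. This is not sufficient: an identity $\nabla\omega(k)\pm\nabla\omega(k+v)\equiv 0$ can hold with a \emph{fixed nonzero} offset $v$ for a perfectly smooth, even, real-analytic $\omega$ --- indeed the nearest-neighbor dispersion relation of Appendix \ref{sec:appNN} satisfies $\nabla\omega(k)+\nabla\omega\bigl(k+(\tfrac12,\ldots,\tfrac12)\bigr)\equiv 0$. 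The paper's proof closes exactly this case, and it needs two ingredients you omit. First, by uniqueness of the loop attached to the free edge $f$, the momentum $k_f$ enters \emph{exactly two} of the four momenta $k_0,k_1,k_2,k_3$ at $v_{i_0}$ (with coefficients $\pm 1$); you only argue ``at least one'', so your gradient is not reduced to the two-term form $\pm\nabla\omega(\pm(k_f+u))\pm\nabla\omega(\pm(k_f+u'))$ on which the case analysis rests. Second, assuming $\nabla_{k_f}\Omega_{i_0}\equiv 0$ identically in \emph{all} free momenta, the offset $u'-u$ is a $\pm1$-combination of the other free momenta; if it is not identically zero one differentiates the identity once more with respect to a free momentum occurring in it, forcing the Hessian of $\omega$ to vanish everywhere, hence $\omega$ linear and periodic, hence constant --- which contradicts the dispersivity assumption DR\ref{it:DRdisp}, not DR\ref{it:DR1}. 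Without this second-differentiation argument (which crucially uses that in a momentum graph the offset is never a nonzero constant but always depends on a free momentum if it is nonzero at all), the fixed-offset cancellation is simply not excluded by anything you wrote.

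A secondary weakness is the coinciding-argument case analysis itself. The $\PFone$-vanishing argument (Proposition \ref{th:PFcorr}) only disposes of the combinations in which a pairwise sum of the three incoming momenta vanishes identically; the remaining combinations, e.g.\ $k_i=k_j$ with $i,j\in\set{1,2,3}$ or $k_i=-k_0$, are excluded in the paper not by any ``sign pattern'' of the $\omega'$-terms but by momentum conservation $k_0=k_1+k_2+k_3$ together with the uniqueness of the $\pm1$ representation in Corollary \ref{th:zerok}/Lemma \ref{th:kesol3} (for instance $k_0=k_{i'}+2k_i$ is impossible unless $k_i=0$, which makes the graph irrelevant by Corollary \ref{th:zeroisirr}); your assertion that these cases ``do not produce the sign pattern needed'' is a claim, not an argument. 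Your deduction of the final statement (some momentum at the vertex must depend on a free momentum) is essentially correct, though the clean route is the one in the paper: each $\fedges_{e_i}\ne\emptyset$ by Corollaries \ref{th:zerok} and \ref{th:zeroisirr}, and then the second claim follows from the first.
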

\begin{proof}
Let us assume the converse, \itie , that $\Omega_{i_0}$ is a constant
in $k_f$.
Let us denote the momenta associated with the edges in $\edges(v_{i_0})$
by $k_i$, $i=0,1,2,3$, as explained above, and let analogously 
$\fedges_i=\fedges_{e_i}$. Since the graph is relevant, by Corollaries
\ref{th:zerok} and \ref{th:zeroisirr} 
each $k_i$ depends on some
free momenta, and $\fedges_i\ne \emptyset$.
By uniqueness of the loops used in the definition of
free edges, any free
momenta can appear in zero or exactly two of the four sets $\fedges_i$.
In particular, there are unique $i,j$ such that
$f\in \fedges_i\cap\fedges_j$, and let $i',j'$ denote the remaining two
indices.
Then $k_i=\pm (k_f + u)$,
$k_j=\pm (k_f + u')$, where $u$ and $u'$ are some linear
combinations of the remaining free momenta, possibly even zero.

Differentiating $\Omega_{i_0}$  with respect to $k_f$, we find that
$\nabla \omega (k_i)\pm \nabla \omega(k_i+u'-u)=0$, for all
$k_i\in \T^d$, and any $u'-u$, which is some
linear combination of the free momenta excluding $k_f$.
If $u'-u$ is not zero,
we can further differentiate with respect to a free momentum
appearing in the sum, which implies that the Hessian of $\omega$
is uniformly zero.
Since then $\omega$ is a linear map which is periodic, it
is a constant.  However, a constant dispersion relation
obviously cannot satisfy Assumption (DR\ref{it:DRdisp}).

Thus we can assume that $u'=u$, which implies $k_i=\pm k_j$.
We recall that $k_0=k_1+k_2+k_3$.  Suppose first that $k_i=k_j$.
If either $i$ or $j$ is zero, this implies $k_{i'}+k_{j'}=0$
uniformly, where $i',j'\in\set{1,2,3}$.
However, in this case by Proposition \ref{th:PFcorr}
the factor $\PFone(\pm k)$ appearing at the interaction vertex
is identically zero and the amplitude of
the graph is also identically zero.
If both $i,j$ are different from zero,
we have $k_0=k_{i'}+ 2 k_i$ which is not compatible with
the uniqueness of the representation in (\ref{eq:kesol3}), unless $k_i=0$.
(Any free edge in $\fedges_i$ appears already also in $\fedges_j$ and thus
cannot appear in $\fedges_{i'}$.)
This however is possible only if the graph is irrelevant.
If $k_i=-k_j$, and one of $i,j$ is zero, say $j=0$,
then $2 k_i+k_{i'}+k_{j'}=0$
which is not possible unless the graph is irrelevant.
Otherwise, $k_i+k_j=0$ implies $\PFone(\pm k)=0$ and the graph must again be
irrelevant.

We can thus conclude that $\nabla_{k_f}\Omega_{i_0}\ne 0$.
Since for instance $\fedges_1\ne \emptyset$, the second statement is an
obvious corollary of this.
\qed \end{proof}

\begin{lemma}\label{th:immrecstart}
Suppose that all long time slices of a relevant fully paired graph
are independent of the double-loops of
the first $M$ degree two vertices.  Then all of the corresponding
double-loops are immediate recollisions: there is a graph from which
the full graph can be obtained by iteratively adding $M$ leading motives.
\end{lemma}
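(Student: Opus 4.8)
The plan is to induct on $M$, stripping one leading motive off the bottom of the graph at each step. For $M=0$ there is nothing to prove. So assume $M\ge1$ and that the assertion is known for $M-1$, and let $\graph$ be a relevant fully paired momentum graph all of whose long time slices are independent of the double-loops of its first $M$ degree two interaction vertices. Write $v_{i_2}$ for the first of these in the time order $\tau$, with associated free momenta $k_1,k_2$. Since $\graph$ is relevant and fully paired, $r=0$ and $n_1=0$, so by Proposition~\ref{th:ivordering} (applied with $r=0$, $n_1=0$) we have $\deg v_1=0$; as $v_{i_2}$ is the first degree two vertex, every $v_j$ with $1\le j<i_2$ has degree $0$, hence the time slices $0,1,\dots,i_2-2$ are all long, and by hypothesis $\re\gamma(m)$ is independent of $k_1,k_2$ for each such $m$.

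The decisive step is to show that the double-loop of $v_{i_2}$ is an \emph{immediate recollision}, i.e.\ a leading motive. First I would record that no interaction vertex $v_j$ with $1\le j\le i_2-2$ can lie on this double-loop. Indeed, if $v_j$ were an $X$- or $T$-vertex of the double-loop of $v_{i_2}$, then $k_1$ or $k_2$ reaches some edge of $\edges(v_j)$, so by Lemma~\ref{th:Oi0notconstant} $\Omega_j$ depends on $k_1$ or $k_2$; but $\Omega_j=\re\gamma(j-1)-\re\gamma(j)$, and for $1\le j\le i_2-2$ both slices $j-1$ and $j$ are long, hence both $\re\gamma(j-1)$ and $\re\gamma(j)$ are independent of $k_1,k_2$, a contradiction. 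Since the $X$-vertex and every $T$-vertex of the double-loop are interaction vertices strictly below $v_{i_2}$, and since $v_{i_2}$ is the first degree two vertex, they must all coincide with $v_{i_2-1}$; being pairwise distinct, it follows that there are no $T$-vertices and the $X$-vertex is $v_{i_2-1}$, so the double-loop occupies only the single (short) time slice $i_2-1$. Relevance — the factors $\PFone$ at $v_{i_2-1}$ and $v_{i_2}$ are nonzero, so no two of the three momenta meeting at either vertex sum identically to zero — together with the pairing structure of $S$ and Corollaries~\ref{th:zerok}, \ref{th:kkpl} (which exclude identically vanishing edge momenta and force the initial-time vertices fed into the motive to be suitably paired) then reduces the local configuration around $v_{i_2-1},v_{i_2}$ to one of the leading motives of Fig.~\ref{fig:leadmot}.

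Having identified a leading motive at the bottom, I would then remove it: delete $v_{i_2}$, $v_{i_2-1}$ and the associated pair of initial-time vertices (or, for a gain motive, contract the inserted pairing cluster), reconnect the line entering the motive to the line leaving it, and relabel the interaction history, the interlacing, and the remaining pairing accordingly. The resulting momentum graph $\graph'$ has $N-2$ interactions, is again fully paired, and is relevant because its amplitude is a nonzero factor of that of $\graph$ (the removed motive contributes only two nonvanishing $\PFone$-factors, a covariance $W$, and edges whose momenta are identically $\pm$-equal to surviving ones). The long time slices of $\graph'$ are exactly those of $\graph$ other than the removed one, and their phases $\re\gamma$ are unchanged, because the edges internal to a leading motive are identically $\pm$-equal to edges already present and so contribute nothing new to any surviving phase. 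Hence $\graph'$ satisfies the hypothesis of the lemma with $M-1$ in place of $M$, its first $M-1$ degree two vertices being the images of the remaining ones. By the induction hypothesis $\graph'$ is obtained from a graph $\graph''$ by iteratively adding $M-1$ leading motives, and re-inserting the motive stripped above exhibits $\graph$ as obtained from $\graph''$ by adding $M$ leading motives, closing the induction.

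The main obstacle is the second step: checking that ``removing a leading motive'' really returns an admissible momentum graph — that the interaction history, the interlacing $J$, the pairing $S$, and the time-ordered spanning tree all descend coherently, and that relevance and full pairing are preserved — so that the induction genuinely closes at the level of the combinatorial data rather than merely the amplitudes. A secondary point requiring care is the identification, at the end of Step~2, of the constrained double-loop with a specific motive of Fig.~\ref{fig:leadmot}: one must verify, using the Feynman rules of Section~\ref{sec:diagrams} and the parity bookkeeping, that the forced pairing of the initial-time vertices reproduces one of the eight gain or six loss motives and no spurious configuration. Both difficulties are essentially combinatorial; the analytic input is limited to the non-degeneracy of $\omega$ encapsulated in Lemma~\ref{th:Oi0notconstant} and the absence of identically vanishing edge momenta in relevant graphs (Corollaries~\ref{th:zerok}, \ref{th:zeroisirr}).
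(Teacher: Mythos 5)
Your overall strategy coincides with the paper's: induct on $M$, take the first degree two vertex $v_{i_2}$, use the hypothesis on long slices together with Lemma~\ref{th:Oi0notconstant} to exclude every interaction vertex $v_j$ with $j\le i_2-2$ from the double-loop, conclude that its only interaction vertex is the $X$-vertex $v_{i_2-1}$, then strip the motive and apply the induction hypothesis. That first localization step is fine. The genuine gap is the next step: you assert that relevance, the pairing structure of $S$, and Corollaries~\ref{th:zerok} and \ref{th:kkpl} ``reduce the local configuration around $v_{i_2-1},v_{i_2}$ to one of the leading motives,'' but this is exactly the nontrivial combinatorial content of the lemma and none of the cited facts delivers it. Knowing that $v_{i_2-1}$ is the unique interaction vertex on the double-loop does not by itself force the three edges in $\edges_-(v_{i_2})$ to run straight down to initial time vertices whose pairing partners sit directly below $v_{i_2-1}$ (gain case), or one of them to be $\edges_+(v_{i_2-1})$ with the other two paired straight across (loss case). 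The paper closes this gap with the iterative cluster scheme: the degree zero vertex $v_{i_2-1}$ merges three distinct iterative clusters $A_1,A_2,A_3$; each edge of $\edges_-(v_{i_2})$ lies in a distinct part $A'_{j_e}$ (distinctness uses that $v_{i_2-1}$ is the crossing vertex), each such part must be a singleton (otherwise, since all clusters are pairs, a path between two edges of $A_{j_e}$ would pass through a further interaction vertex on the double-loop, contradicting the minimality you established), and — because only degree zero fusions occur below $v_{i_2}$, so iterative cluster sizes can only grow — any surviving two-element cluster is an untouched original pairing with both legs running uninterrupted to time zero. Only after this bookkeeping does the dichotomy $j_e\ne 0$ for all $e$ (gain motive) versus $j_e=0$ for some $e$ (loss motive) identify the configuration, and only then is the removal step you perform afterwards legitimate. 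You flagged this identification yourself as a ``secondary point requiring care,'' but it is in fact the main point, and your proposal does not contain an argument for it.

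A smaller but real error feeds into this: you claim that ``the $X$-vertex and every $T$-vertex of the double-loop are interaction vertices'' and conclude that the double-loop ``occupies only the single (short) time slice $i_2-1$.'' By the paper's definition the $T$-vertices are \emph{all} remaining vertices along the two loops, and these generically include initial time vertices in $\Dverts$ and cluster vertices; indeed for an actual gain or loss motive the loops of $f_1,f_2$ descend through the internal pairings down to time zero, so the double-loop always extends over the slices $0,\dots,i_2-1$. What your first step legitimately gives is only that no interaction vertex other than $v_{i_2-1}$ lies on the double-loop; the stronger geometric statement is false, and relying on it obscures exactly the structure (which initial time vertices the edges of $\edges_-(v_{i_2})$ and $\edges_-(v_{i_2-1})$ reach, and how they are paired) that must be pinned down to recognize a leading motive.
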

\begin{proof}
We do the proof by induction in $M$.  Case $M=0$ is vacuously true.
We make the induction assumption that the statement holds for
up to $M-1\ge 0$ degree two vertices,
independently of the other properties of the graph.
We then suppose that also the double-loop of the
$M$:th degree two vertex has no long time slices dependent on it.
The proof heavily uses the iterative cluster scheme, and to facilitate
following it, we have given two applications of the scheme in
Fig.~\ref{fig:IC2}.  The first of these examples is not leading, but it is
related to the discussion in the next paragraph.  The second is a leading
graph, and provides a convenient example for the rest of the proof. 

\begin{figure}
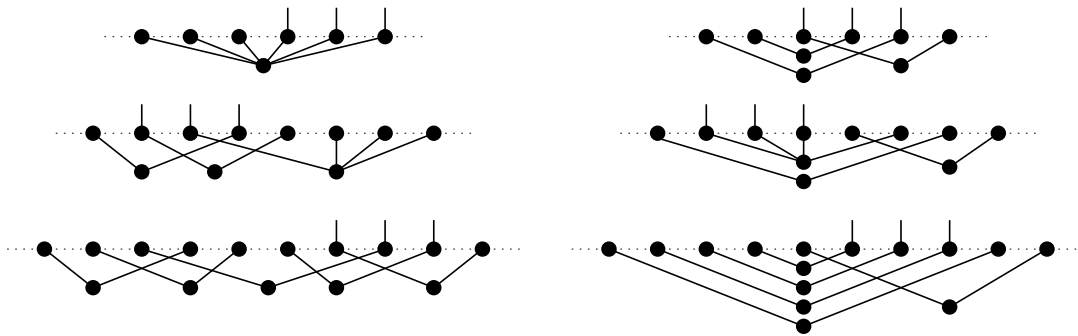

  \centering
  \medskip
  \myfigure{width=0.95\textwidth}{IC2}
  \medskip
  \caption{Two applications of the iterative cluster scheme, using the same
    notations as in  
  Fig.~\ref{fig:iterclusters}.  The left tower corresponds to the nested graph
  in Fig.~\ref{fig:nested}, and  
  the right one to the right leading graph in Fig.~\ref{fig:leaditer}.  We
  have left out the last two iteration steps  
  which are very similar for any fully paired graph.\label{fig:IC2}}
\end{figure}

Consider the iterative cluster scheme just before addition of the
first degree two vertex $v=v_{i_2}$, in which case 
$\deg v_i=0$ for all $i<i_2$. 
As before, let $f_1,f_2$ denote the free edges in 
$\edges_-(v)$, and $k_i$, $i=0,1,2,3$, 
the momenta associated with $v$.
The double-loop of $v$ contains also some
other interaction vertices
(any double-loop contains a crossing-vertex, and it is straightforward to
check that this must be an interaction vertex in a pairing graph). 
By construction, all of these vertices have degree zero.
Let $i$ denote the minimum of
indices such that $v_i\ne v$ belongs to the double-loop of $v$.
Suppose first that $i<i_2-1$.  Since $\deg v_{i}=0=\deg v_{i+1}$, the
time slices $i-1$ and $i$ are then long and, by assumption, independent of
$k_1,k_2$. 
This implies that $\re \gamma(i-1)-\re \gamma(i) = \Omega_i$ is also
independent of $k_1,k_2$. Since at least one of the free
edges $f_1,f_2$  appears in $\fedges_e$ for some $e\in \edges(v_i)$, by Lemma
\ref{th:Oi0notconstant}
we can conclude that $\Omega_i$ can be independent of the corresponding
momentum only in an irrelevant graph.

Thus we can assume that $i=i_2-1$, which implies that the
double-loop of $v$ contains only one interaction vertex, $v_i$, which then is
a crossing-vertex.  Since $\deg v_i =0$,
in the iterative cluster scheme
the addition of $v_i$ combines three
distinct clusters $A_1,A_2,A_3$ into one new cluster $A'$.
We denote $A'_j=A'\cap A_j$, $j=1,2,3$, and define $A'_0=\edges_+(v_i)$.
Clearly, $\set{A'_j}$ then forms a partition of the set $A'$.
Since $v$ is a degree two vertex, all edges in $\edges_-(v)$ belong to the
same iterative cluster,
which must be $A'$ since $v_i$ is along the double-loop of $v$.  Thus for each
$e\in \edges_-(v)$ there is
$j_e\in I_{0,3}$ such that $e\in A'_{j_e}$.  In addition, all indices must be
different, since otherwise $v_i$ cannot be the crossing vertex. Finally,
suppose that $|A'_j|>1$ for some $j$.  Then $j\ne 0$, $|A_j|>2$, and, since
all clusters are pairs, any path from one edge of $A_j$ to another must then
go via an interaction vertex $v'\ne v,v_i$.  However, then there is $i'<i$
such that the double-loop goes through $v_{i'}$ and this is not allowed, since
$i$ was assumed to be the smallest of such indices. 
Thus we can conclude that $A'_{j_e}=\set{e}$ for all $e\in \edges_-(v)$.

If $j_e\ne 0$ for all $e$, it follows that all of $A_j$ are pairings, and
that the addition
of $v_i$ and $v$ is equivalent to splitting of a pairing using a gain motive,
as in the second example in Fig.~\ref{fig:IC2}. 
Else $j_e=0$ for some $e$.  Then the remaining two edges
connect via a pairing to $v_i$, while the size of the third
iterated cluster remains unaffected.  This is equivalent to an addition of a
loss motive to one of the edges in the third cluster. In both cases, the
result is an immediate recollision, which thus leaves the $\gamma$-factors and 
$k$-dependence
invariant.  This implies that we can apply the induction assumption to the
graph which is obtained by cutting out the leading motive of $v$
from the original graph, \itie , by removing the time slices $i-1$ and $i$,
all edges and vertices associated with the pairings used in the leading
motive, and then repairing the graph
by either adding the pairing previously formed by a gain term or by
reconnecting the two ends previously joined by a loss motive.
We can then apply the induction assumption and conclude that the 
statement holds for arbitrary $M\ge 0$. 
\qed \end{proof}

\begin{proofof}{Proposition \ref{th:leadingremains}}
By assumption the
graph is relevant, but none of its long time slices depends on
any of the double-loops.  Then by Lemma \ref{th:immrecstart}
all double-loops correspond to immediate recollisions, and if all recollisions
are removed,
a simple loop corresponding to $n=n'=0$ is left over.
Thus the graph is leading, and as immediate recollisions preserve the phase
factor, which is initially zero, all long time slices are trivial.
Conversely, if all long slices are trivial,
they are zero independently of all free momenta.  Therefore, also then the
graph is leading.\qed
\end{proofof}

\subsection{Crossing graphs}

\begin{proposition}\label{th:crossingbnd}
There is a constant $c_0$, which
depends only on $\omega$, and a constant $C$, which depends only on
$\omega,f,g$,
such that the amplitudes of all crossing graphs
satisfy the bounds
\begin{align} 
& |\mathcal{G}_n^{\rm pairs}(S,J,\ell,\ell',s,\kappa)|
\le C \lambda^{2 \gamma}
\rme^{s\lambda^2} \sabs{c_0 s\lambda^2}^{n-1}
\sabs{\ln \lambda}^{3+c_2+2 n}\, , \\
& |\mathcal{F}_n^{\rm pairs}(S,\ell,t \lambda^{-2},\kappa)|
\le C \lambda^{2 \gamma}
\rme^{t} \sabs{c_0 t}^{n/2-1}
\sabs{\ln \lambda}^{3+c_2+n} \, .
\end{align}
\end{proposition}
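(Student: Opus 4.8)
The starting point is the integral representation of $\mathcal{G}_n^{\rm pairs}$ in (\ref{eq:defGnpairs}) and of $\mathcal{F}_n^{\rm pairs}$ in (\ref{eq:defFnpairs}); since these are pairing amplitudes, the limit $\Lambda\to\infty$ has already been performed, so the free momenta are integrated over copies of $\T^d$. The first step is, as in the proof of Lemma~\ref{th:basicAest}, to resolve all momentum $\delta$-functions by means of the time-ordered spanning tree of Section~\ref{sec:momdeltas}, and then to convert the time-simplex integral into a contour integral of a resolvent product via Theorem~\ref{th:resolvents}. Since the graph is fully paired and relevant, Lemma~\ref{th:ivdegrees} gives $n_1=0$ and $n_0=n_2=N/2$, with $N=2n$ for $\mathcal{G}_n^{\rm pairs}$ and $N=n$ for $\mathcal{F}_n^{\rm pairs}$. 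I would apply Theorem~\ref{th:resolvents} with $A$ the set of short time slices (those ending in a degree-two vertex) together with the topmost slice, so that each of them yields a factor $\ci/(z-\gamma(m))$ and $z$ runs over the path $\Gamma_N$ of Fig.~\ref{fig:gnpath}; the long slices would be kept as pure phases $\rme^{-\ci r_m\gamma(m)}$. For a leading graph all these phases vanish identically (Proposition~\ref{th:leadingremains}), so the $r_m$-integrations over the long slices produce only a volume factor which, combined with the overall $\lambda^{2n}$ and with the simplex normalization, gives the $\lambda$-free part $\rme^{s\lambda^2}\sabs{c_0 s\lambda^2}^{N/2-1}$ of the claimed bounds. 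The content of the Proposition is to extract, for a crossing graph, one extra power $\lambda^{2\gamma}$.

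For a crossing graph I would single out the \emph{last propagated crossing slice} $m_c$ and the degree-two vertex $v=v_{i_2}$ whose double-loop it propagates, together with the two free momenta $k_1,k_2$ attached at $v$, the crossing vertex $v'$ of that double-loop, and the $T_1$-, $T_2$-, $T_3$-vertices lying between $v$ and $v'$. The free-momentum integrations are carried out from bottom to top exactly as in Lemma~\ref{th:basicAest}, using Lemma~\ref{th:degtwoest} at the ordinary degree-two vertices (there is no degree-one vertex here, so no factor $\lambda^{-b}$ appears), with one exception: the pair $k_1,k_2$ is integrated jointly, together with the phase integration coming from the slice $m_c$. By Lemma~\ref{th:nokkdiff} and the defining property of $T_n$-vertices, the integrand that survives at this stage is precisely of the form handled by the crossing bounds (\ref{eq:crossingest1c}) and (\ref{eq:crossingest1b}): the $T$-vertices below $v'$ supply the $\norm{p}_3$-factors (two of them combining into a $\norm{p}_3^2$), the crossing vertex $v'$ supplies the factor $K(\,\cdot\,;t_0,t_1,t_2,u_1,u_2)$, whose offsets $u_1,u_2$ are the fixed momenta dictated by the pairings on the initial slice, and the phase of $m_c$ --- which by the choice of $m_c$ has become independent of $k_1,k_2$ once the double-loop is closed --- provides the variable $s$ carrying the weight $\rme^{-\beta|s|}$ with $\beta$ of order $\lambda^2$, coming either from the resolvent representation (\ref{eq:restophases}) of the neighbouring short slice or from the $\kappa'$-decay built into the partial time integration. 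Applying (\ref{eq:crossingest1c}) or (\ref{eq:crossingest1b}) then produces $\beta^{\gamma-1}\Fbcr(u;\beta)$, and (\ref{eq:crossingest2}), or (\ref{eq:crossingest2a}) when no resolvent in $u$ survives, absorbs the remaining $u$-dependence through the initial-time pairings at the cost of $\sabs{\ln\beta}^{1+c_2}$. Every other factor --- the $z$-contour integral, the remaining resolvents, the cluster factors $W(K_i)$, the test functions $\FT{f},\FT{g}$ --- is estimated exactly as in Lemmas~\ref{th:basicAest} and~\ref{th:basicFest}.

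Assembling the powers of $\lambda$: one factor $\lambda^2$ from the coupling per interaction vertex and one $\lambda^{-2}$ from Lemma~\ref{th:degtwoest} per ordinary degree-two vertex, but a single $\beta^{\gamma-1}=\lambda^{2\gamma-2}$ in place of one such $\lambda^{-2}$ at the crossing, which leaves a net surplus $\lambda^{2\gamma}$; the logarithmic factors accumulate as in Lemma~\ref{th:basicFest}, together with the $\sabs{\ln\lambda}^{1+c_2}$ from (\ref{eq:crossingest2}) and the $\sabs{\ln N}\sabs{\ln\lambda}$ from the $z$-integral, yielding the stated exponents $\sabs{\ln\lambda}^{3+c_2+2n}$ and $\sabs{\ln\lambda}^{3+c_2+n}$. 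I expect the main obstacle to be this identification step: one must check, using Lemma~\ref{th:nokkdiff}, Corollary~\ref{th:zerok} and the $X$/$T$-vertex structure of the double-loop, that the surviving $k_1,k_2$-integrals really can be cast in the exact functional shape of (\ref{eq:crossingest1c})/(\ref{eq:crossingest1b}) --- with the correct count of $\norm{p}_3$- and $K$-factors and with the offsets $u_1,u_2$ being momenta independent of $k_1,k_2$ --- and to dispose of the degenerate configurations (crossing vertex coinciding with the top vertex or an amputated vertex, missing $T$-vertices, several interaction vertices on one double-loop), which are routine but must be enumerated. Once the amplitude is in that form, the crossing bounds DR\ref{it:DRcrossing} do the rest.
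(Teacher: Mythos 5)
Your skeleton matches the paper's strategy (resolvent representation via Theorem \ref{th:resolvents}, iteration of the basic estimates, isolating the double-loop of $v_{i_2}$ and the last propagated crossing slice, invoking DR\ref{it:DRcrossing}), but the central mechanism is misstated at the point you yourself flag as the main obstacle. First, you assert that the phase of the last propagated crossing slice $m_c$ ``has become independent of $k_1,k_2$''; this is the opposite of its defining property. The point is that both $\re\gamma(m_c)$ and $\re\gamma(m_c)-\Omega_{i_2}$ depend on $k_1,k_2$, and the paper's first (and missing from your plan) step is to prove the decomposition $\re\gamma(i_0-1)=p\,\Omega_{i_2}+\Omega_{i_0}+\alpha_1+\alpha_2$ with $p\in\set{0,1}$ and $\alpha_1,\alpha_2$ independent of the double-loop, together with the fact that the vertex $v_{i_0}$ ending that slice is an $X$- or $T$-vertex of the double-loop. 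Second, with your choice of $A$ (short slices plus the topmost) the slice $m_c$ stays a long slice and contributes only a time-volume factor; then only one resolvent (from the short slice under $v_{i_2}$) depends on $k_1,k_2$, and the $k_1,k_2$-integral never acquires the two coupled time variables $(t,s)$ that (\ref{eq:crossingest1c})--(\ref{eq:crossingest1b}) require. The essential move in the paper is to transfer $m_c$ from $A_0$ into $A$, so that it becomes a second resolvent; re-expanding \emph{both} resolvents via (\ref{eq:restophases}) is what produces the $\rme^{-\beta|s|}$ weight with $\beta=\lambda^2$, and it is also why the bound has $\sabs{c_0 s\lambda^2}^{n-1}$ rather than $\sabs{\cdot}^{n}$. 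Your alternative sources of this weight fail: the $\kappa'$-decay is unavailable (it is zero for the first half of the slices and is explicitly discarded in the proof), and a weight attached to the neighbouring short slice sits on the wrong time variable.

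The identification of the factors entering DR\ref{it:DRcrossing} is also off. The $\norm{p_t}_3^2\norm{K}_3$ ($T$-case) and $\prod_{i}\norm{K}_3$ ($X$-case) structures arise from the \emph{joint} $k_1,k_2$-integration of the two phases $\Omega_{i_2}$ and $\Omega_{i_0}$ via the convolution bound as in (\ref{eq:convsplit}); they are not supplied separately by the crossing vertex and the $T$-vertices of the double-loop, and $v_{i_0}$ need not be the crossing vertex $v'$ at all. The offsets $u,u'$ (resp.\ $u_i$) are in general linear combinations of \emph{other free momenta}, not data fixed by the initial pairings: one must show via Lemma \ref{th:nokkdiff} and Corollary \ref{th:kkpl} that they depend on some free momentum (otherwise the graph is irrelevant), and then use (\ref{eq:crossingest2}) to integrate $\Fbcr$ against a later resolvent, which is where the $\sabs{\ln\lambda}^{1+c_2}$ enters. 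The degenerate $X$-configuration with all $u_i=0$ is likewise not routine: it is excluded by the iterative-cluster argument showing it would be a delayed recollision with $\Omega_{i_0}=-\Omega_{i_2}$, contradicting the choice of $m_c$. Finally, the power counting ``one $\lambda^{-2}$ from Lemma \ref{th:degtwoest} per degree-two vertex'' is incorrect---that lemma costs only a logarithm; the $\lambda^{-2}$ that the crossing estimate replaces is the time-volume $s\sim t\lambda^{-2}$ of the long slice $m_c$ moved into the resolvent set, which is exactly the content of the gain factor (\ref{eq:crossinggain}).
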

\begin{proof}
Both bounds can be derived simultaneously, if we consider
a general crossing graph.  Obviously, it also suffices to derive the bound
merely for relevant graphs.
By Lemma \ref{th:immrecstart},
then there is $i_2\in I_{2,N}$ such that
$\deg v_{i_2}=2$ and every $1\le i<i_2$ has either $\deg v_i=0$ or corresponds
to
an immediate recollision.  In addition, there is a long time slice
which propagates a crossing with the double-loop of $v_{i_2}$.
Let $i_0-1$ be the largest index of such time slices, \itie , of the last
propagated crossing slice.

We denote by $k_i$, $i=0,1,2,3$, the momenta of the edges in 
$\edges(v_{i_2})$, as before.  In particular, then $k_1$ and $k_2$ are free
momenta.  Now $i_0<i_2$, and
\begin{align} 
\re \gamma(i_0-1) = \Omega_{i_2} + \Omega_{i_0} +
a_1 + \alpha_2\, ,
\end{align}
where
\begin{align} 
a_1 =  \sum_{i=i_0+1}^{i_2-1} \Omega_i
\qand
\alpha_2 =  \sum_{i>i_2} \Omega_i \, .
\end{align}
By construction of the spanning tree, $\alpha_2$ cannot depend on $k_1,k_2$,
nor on any other double-loop
of $v_i$ with $i\le i_2$.
We prove next that there is $\alpha_1$, which is also independent of all such
double-loops,
and $p\in\set{0,1}$ such that $a_1=-(1-p) \Omega_{i_2}+\alpha_1$.
This implies
\begin{align} 
\re \gamma(i_0-1) = p\Omega_{i_2} + \Omega_{i_0} +
\alpha_1 + \alpha_2\, .
\end{align}
Then the vertex $v_{i_0}$ has to be
either an $X$- or $T$-vertex for the double-loop of $v_{i_2}$.
Otherwise, $\Omega_{i_0}$ does not depend on $k_1,k_2$, which would imply
that either $\re \gamma(i_0-1)$ or
$\re \gamma(i_0-1) - \Omega_{i_2}$ is independent of $k_1,k_2$ contradicting
the assumption that $i_0-1$ propagates a crossing.

Let us first consider $i$ such that $\deg v_i=2$ and $i_0+1\le i<i_2$.  By
construction, the corresponding double-loop is determined by a leading motive, 
and thus $\Omega_{i-1}=-\Omega_i$. If $i\ge i_0+2$, the corresponding terms
thus cancel each other in the sum defining $a_1$.  On the other hand,
$i=i_0+1$ implies $\Omega_{i_0}=-\Omega_{i_0+1}$,
and thus $\re \gamma(i_0-1)=\re \gamma(i_0+1)$.  Since $i_0-1$ is the last
propagated crossing slice, this
is allowed only if $\deg v_{i_0+2}=2$, but then necessarily $i_0+2=i_2$.
However, in this case  
$\re \gamma(i_0-1) = \Omega_{i_2} + \alpha_2$ and $i_0-1$
does not propagate a crossing.
Therefore, we can conclude that if we define $I'=\defset{i\in
I_{i_0+1,i_2-1}}{\deg v_{i}=0,\text{with }i=i_2-1\text{ or } \deg v_{i+1}=0}$,
then
$a_1 = \sum_{i\in I'} \Omega_i$, which implies that $a_1$ is independent of all
double-loops of $v_i$, $i<i_2$.
In particular, if $I'=\emptyset$, then $a_1=0$ and the claim holds with $p=1$.
Otherwise, $I'\ne\emptyset$, and there is $i'=\min I'$. Then $\deg v_{i'}=0$,
the time slice $i'-1>i_0-1$ is long, and
we have $\re \gamma(i'-1)=\Omega_{i_2} + a_1 + \alpha_2$.  Since this slice
cannot propagate a crossing,
we must have that either $a_1$ or $a_1+\Omega_{i_2}$ is independent of the
double-loop of $v_{i_2}$.
In the first case, we define $\alpha_1=a_1$ and $p=1$, and in the second we
let $\alpha_1=a_1+\Omega_{i_2}$ and $p=0$.
With these definitions, all of the previous claims hold.

Set $m'=i_0-1$.
We follow the iteration scheme used in the basic $\mathcal{F}$-estimate
(Lemma \ref{th:basicFest}) with the following exceptions:
we now have $A_1=\emptyset$, and we define $A=\set{m',2 n}\cup A_2$,
\itie , we move the index $m'$ from $A_0$ to $A$.
Since $|A_2|=|A_0|=N/2$,
the integrated phase factor satisfies
\begin{align} 
& \Bigl|\int_{(\R_+)^{I_{0,N}}}\!\rmd r \,
\delta\Bigl(s-\sum_{i=0}^{N} r_i\Bigr)
 \prod_{i=0}^{N} \rme^{-\ci r_i \gamma(i)}
\Bigr|
 \nonumber \\ & \quad
\le  \frac{s^{N/2-1}}{(N/2-1)!}
\oint_{\Gamma_N} \frac{|\rmd z|}{2\pi}
\frac{\rme^{s (\im z)_+}}{|z|}
\prod_{i\in \set{m'}\cup A_2} \frac{1}{|z-\gamma(i)|}
 \, .
\end{align}
Then we follow the iteration
procedure until index $m'$ is reached.  At that point we have to deal with the
dependence of the factor $1/|z-\gamma(m')|$ on the various free momenta.
If $z$ does not belong to the top of the integration path, we can
estimate trivially $1/|z-\gamma(m')|\le 1$, and then complete the iterative
estimate as in the proof of Lemma \ref{th:basicFest}.  This yields an
improvement of the upper bound by a full factor of $\lambda^2$.

For those $z$ belonging to the top of the integration path, we have
$z=\alpha+\ci \beta$ for some $|\alpha|\le 1+2 (N+1) \norm{\omega}_\infty$
and with $\beta=\lambda^2>0$.
We next remove any dependence on $\kappa'$:
since  $\im \gamma(m)\le 0$ for all $m$,
we can estimate all the remaining resolvent factors by
\begin{align} 
& \frac{1}{|\alpha-\gamma(m)+\ci \beta|} \le
 \frac{1}{|\alpha-\re \gamma(m)+\ci \beta|}
\, .
\end{align}
As we have shown above, $\re \gamma(m')$ is independent of any free momenta
appearing before $k_1,k_2$.  These can thus be estimated as before.
Finally, we arrive at the $k_1,k_2$-integral, which is equal to
\begin{align}\label{eq:crossTint}
&  \int_{(\T^d)^2}\!\! \rmd k_1\rmd k_2\,
\frac{1}{|\alpha-\alpha_2-\Omega_{i_2}+\ci\beta|
|\alpha-\alpha_2-\alpha_1-p \Omega_{i_2} -\Omega_{i_0} +\ci\beta|}
 \, .
\end{align}
We represent both factors in terms of the oscillating integrals, using
(\ref{eq:restophases}).  Since all $\alpha$-terms above
are independent of $k_1,k_2$,
Fubini's theorem shows that the integral is bounded by
\begin{align}\label{eq:crossTint2}
& \sabs{\ln \beta}^2
\int_{\R^2}\!\! \rmd r\rmd s\, F(r;\beta) F(s;\beta)
 \Bigl| \int_{(\T^d)^2}\!\! \rmd k_1\rmd k_2\,
 \rme^{-\ci (r+p s) \Omega_{i_2}-\ci s\Omega_{i_0}} \Bigr|
 \nonumber \\ & \quad
\le 4 \sabs{\ln \beta}^2 \biggl( 1 +
\int_{\R^2}\!\! \rmd r\rmd s\, \rme^{-\beta |s|}
 \Bigl| \int_{(\T^d)^2}\!\! \rmd k_1\rmd k_2\,
 \rme^{-\ci (r+p s) \Omega_{i_2}-\ci s\Omega_{i_0}} \Bigr|
\biggr) \, .
\end{align}

Suppose $v_{i_0}$ is an $T_j$-vertex.  Then
$\Omega_{i_0} = \pm \omega(k_j+u)\pm \omega(k_j+u') + \alpha'$
for some choice of the signs and for
$\alpha'$ and $u,u'$ which are independent of $k_1,k_2$.
The $j$-part of the double-loop goes through the vertex via two edges.
If both edges $e,e'$ are in $\edges_-(v_{i_0})$, we have
$k_e=\sigma(k_j+u)$ and $k_{e'}=-\sigma (k_j+u')$ for some 
$\sigma\in \set{\pm 1}$, which implies
$u'-u=-\sigma(k_e+k_{e'})$.  Otherwise, the loop uses 
$\tilde{e} \in \edges_-$
and $\tilde{e}'\in \edges_+$, and then
$k_{\tilde{e}}=\sigma(k_j+u)$ and $k_{\tilde{e}'}=\sigma (k_j+u')$, implying
$u'-u=\sigma(k_{\tilde{e}'}-k_{\tilde{e}})=\sigma(k_{e}+k_{e'})$ 
where $e,e'$ are the remaining two edges in $\edges_-$.
Thus by Lemma \ref{th:nokkdiff}, for any free momenta of a
degree two vertex, $u'-u$ is either independent of the momenta, or depends
on it by
``$\pm k_{j'}$'' for some $j'\in\set{1,2,3}$.  In addition, if $u'-u$ is
independent of all free momenta, then Corollary \ref{th:kkpl} implies
$k_e+k_{e'}=0$, and the corresponding graph is thus irrelevant.

We change variable $r$ to $t=r+p s$, and estimate
\begin{align} 
&  \Bigl| \int_{(\T^d)^2}\!\! \rmd k_1\rmd k_2\,
 \rme^{-\ci t \Omega_{i_2}-\ci s\Omega_{i_0}} \Bigr|
\nonumber \\ & \quad
= \Bigl| \int_{(\T^d)^2}\!\! \rmd k_1\rmd k_2\,
 \rme^{-\ci t (\pm \omega_1\pm \omega_2\pm \omega_3)
   -\ci s (\pm \omega(k_j+u)\pm \omega(k_j+u'))} \Bigr|
\nonumber \\ & \quad
\le \norm{p_{t}}_3^2
\norm{K(\pm t,\pm s,\pm s,u,u')}_3\, ,
\end{align}
where we have used a convolution estimate similar to (\ref{eq:convsplit}).
It is obvious from the definitions that not only
$\norm{p_{-t}}_3=\norm{p_{t}}_3$, but also the norm of $K$ remains invariant
under a swap of the signs of its time arguments.
Thus we can use this to change the first argument of $K$ to $t$.
The resulting integral over $t,r$
is of a form given in (\ref{eq:crossingest1c}).  Thus by
Assumption (DR\ref{it:DRcrossing})
we find that (\ref{eq:crossTint2}) is bounded by
\begin{align} 
 4 \sabs{\ln \beta}^2 \beta^{\gamma-1} ( 1 +  \Fbcr(u'-u;\beta) ) \, .
\end{align}
As mentioned above, for a relevant graph, $u'-u$ must depend on some free
momenta.
We iterate the basic estimates, until the first such momenta appear.
Since the dependence of $u'-u$ is of the form $\pm k_{j'}$, we can then apply
the
second part of Assumption (DR\ref{it:DRcrossing}).
If $u'-u$ depends only on free momenta of the top fusion vertex, we use the
first estimate,
otherwise we use the second estimate.  The remainder of the integrals can be
iterated as in the basic estimate.  Comparing the resulting bound to the
basic estimate shows that we have gained an improvement by a factor of
\begin{align}\label{eq:crossinggain}
C \frac{1}{\sabs{s\beta}}\sabs{\ln \beta}^{c_2+1} \beta^\gamma\, .
\end{align}
This yields the bounds given in the Proposition, since for a
$\mathcal{G}_n$-graph we have $N=2 n$ and for
$\mathcal{F}_n$-graph $N=n$.

We still need to consider the case where
$v_{i_0}$ is an $X$-vertex.  Then we have
$\Omega_{i_0} = \sum_{i=1}^3 (\pm \omega(k_i+u_i)) + \alpha'$
for some choice of the signs and for
$\alpha'$ and $u_i$, $i=1,2,3$, which are independent of $k_1,k_2$.
Suppose $u_i=0$ for all $i$.
Then also the fourth momentum is equal to
$\pm k_0$.  We can use the iterative cluster scheme as in the
proof of Lemma \ref{th:immrecstart} to show that then
$\re \gamma(i_0-1)$ would be independent of $k_1,k_2$, which is against the
construction.
For this, consider the addition of the degree zero vertex $v_{i_0}$ and let
$A_i$, $i=1,2,3$, $A'$, and its partition $A'_i$,  $i=0,1,2,3$, be defined as
in  Lemma \ref{th:immrecstart}.
Since $v_{i_0}$ is not part of an immediate recollision,
the next added vertex is either $v_{i_2}$ or has degree zero.  In the latter
case, there can be also more vertices before $i_2$ is added.  Of these, we can
ignore all immediate recollisions, since they leave the momenta and phase
factors invariant.  Thus we only need to consider the iterative cluster scheme
when a number of degree zero vertices are added
before $i_2$.  Any such addition either leaves $A'$ invariant, or increases
the number of edges in one of $A'_j$
by at least two.   However, the argument used in  Lemma \ref{th:immrecstart}
implies that at the moment when $v_{i_2}$
is added, all of the sets $A'_i$ which intersect $\edges_-(v_{i_2})$ must be
singlets, as else one of $u_i$ is not zero,
or $v_{i_0}$ is not an $X$-vertex.  Then none of $\Omega_{i}$, $i\in I'$, can
depend on $k_1,k_2$,
and thus we have $p=1$ above.  On the other hand,
$v_{i_0}$ is effectively a delayed recollision, and an explicit computation
shows that $\Omega_{i_0}=-\Omega_{i_2}$,
implying that $\re \gamma(i_0-1)$ is independent of $k_1,k_2$.

Therefore,
there is $j'=1,2,3$ such that $u_{j'}$ depends on some free momenta.
We change variable $r$ to $t=r+p s$, and estimate
\begin{align} 
& \Bigl| \int_{(\T^d)^2}\!\! \rmd k_1\rmd k_2\,
 \rme^{-\ci t \Omega_{i_2}-\ci s\Omega_{i_0}} \Bigr|
= \Bigl| \int_{(\T^d)^2}\!\! \rmd k_1\rmd k_2\,
\prod_{i=1}^3
 \rme^{-\ci (\pm t \omega(k_i)\pm s \omega(k_i+u_i))} \Bigr|
\nonumber \\ & \quad
\le \prod_{i=1}^3
\norm{K(t,\pm s,0,u_i,0)}_3\, ,
\end{align}
where we have again used the invariance of $\norm{K}_3$ under reversal
of its time arguments.
Employing the assumption (DR\ref{it:DRcrossing})
we thus find that (\ref{eq:crossTint2}) is then bounded by
\begin{align} 
 4 \sabs{\ln \beta}^2 \beta^{\gamma-1} ( 1 +  \Fbcr(u_{j'};\beta) ) \, .
\end{align}
By construction, $u_{j'}$ depends on some free momenta,
and by Lemma \ref{th:nokkdiff} the dependence is of the earlier encountered
form.
Thus we can then
conclude the rest of the estimate following the steps used for the
$T$-vertex.  This results in an improvement by a factor given in
(\ref{eq:crossinggain})
compared to the basic estimate, and concludes the proof the Proposition.
\qed \end{proof}

\subsection{Leading and nested graphs}

For the leading and nested graphs we cannot take the absolute value of too
many phase factors.  In addition, the contribution from the immediate
recollisions needs to be estimated more carefully.

Let us thus consider a {\em relevant\/} graph which is either nested or
leading.  The momentum cut-offs have now fulfilled their purpose,
and need to be removed.  We use an iterative scheme to expand
$\PFone = 1- \PFzero$ one by one, going through the interaction vertices
$i'$ from the bottom to the top.  At each step, we obtain two terms,
one of which corresponds to replacing $\PFone \to 1$ in the iterated
vertex.  This term will be continued for the next iteration step.
In the other term we take absolute values inside the $k$-integrals and
estimate the phase factor using the iteration scheme of the basic estimate.
We can then estimate the $\PFzero$-factor
using Proposition \ref{th:PFcorr}:
$\PFzero (\pm k)\le \sum_{e_i<e_j}
\1\!\bigl(d(k_{e_i}+k_{e_j},\Msing)<\lambda^b\bigr)$
where the sum runs over pairs of edges in $\edges_-(v_{i'})$.
Since the graph is relevant, whatever pair we choose, the sum depends
on some free momenta, none of which ends before $v_{i'}$.

In the resulting integral, we can use the iteration step of the basic
estimate, until we reach the first double-loop on which the extra
characteristic function depends.
If there is no such double-loop, the characteristic function depends only on
the free momenta at the top fusion vertex, and we get then an additional
factor $c\lambda^{b(d-1)}$ using Lemma \ref{th:Foneprop}.
Otherwise, at that iteration step
there is exactly one resolvent factor left over which depends on the
double-loop of the interaction vertex.
We estimate this factor trivially
and use Lemma \ref{th:Foneprop} to gain a factor $c\lambda^{b(d-1)}$ from the
remaining double-loop integral.
In this case, we thus find a bound
$c\lambda^{b(d-1)-2}\le c\lambda^{1/4}\le c \lambda^{\gamma'}$.
After these steps, the characteristic function has been removed, and we can
continue the iteration scheme
exactly as in the basic estimate.  In any case, we then find that
the term containing the additional characteristic function has an upper bound
\begin{align}\label{eq:somebound}
C \lambda^{\gamma'}
\rme^{s\lambda^2} \sabs{c s\lambda^2}^{N/2}
\sabs{\ln \lambda}^{1+N/2} \, .
\end{align}
Since after $N$ iteration steps we have exchanged all
$\PFone$ to $1$, the difference between this and the original integral
is bounded by $N$ times (\ref{eq:somebound}).

We then consider the term which does not contain any $\PFone$.
Before further estimates of the time-integrals, we
integrate out immediate recollisions at the bottom of the graph.
Since an ``internal'' momentum of one recollision can be the ``external''
momentum of earlier
recollisions, also this step needs to be performed iteratively from the bottom
to the
top.  To study the effect of one immediate recollision, let us consider
$G_{s,\tau}: L^2(\T^d)^4 \to L^2(\T^d)$ defined
for $s\in \R$, $\tau\in \set{\pm 1}^{I_{0,3}}$, by
\begin{align} 
& G_{s,\tau}[f_0,f_1,f_2,f_3](k_0)
\nonumber \\ & \quad
=
\int_{(\T^d)^3}\!\! \rmd k_1 \rmd k_2\rmd k_3\,
 \delta(k_0-k_1-k_2-k_3) \prod_{i=0}^3
\left( \rme^{-\ci s \tau_i \omega(k_i)} f_i(k_i) \right) ,\quad  k_0\in \T^d\, .
\end{align}
Obviously, then
$|G_{s,\tau}(k_0)| \le |f_0(k_0)| \prod_{i=1}^3 \norm{f_i}_2$,
and thus $G_{s,\tau}\in L^2(\T^d)$,
$\norm{G_{s,\tau}}_2 \le \prod_{i=0}^3\norm{f_i}_2$.
Let us also denote the
free evolution semigroup on $\ell_2$ by $U_t$, \itie , let
\begin{align} 
  (U_t g)(x) := \sum_{y\in \Z^d} p_t(x-y) g(y)\, , \quad g\in
  \ell_2(\Z^d)\, ,
\end{align}
where
$p_t(x) :=  \int_{\T^d}\!\rmd k \, \rme^{\ci 2 \pi x \cdot k -\ci t \omega(k)}$
is defined as before.
The convolution structure can again be employed to improve the above
bound, at the price of introducing stronger norms.
With an absolutely convergent sum and denoting the inverse Fourier-transform
of $f_i$ by $\IFT{f}_i$,
\begin{align}\label{eq:Gsest1}
 |G_{s,\tau}(k)|\le |f_0(k)|
\prod_{i=1}^3 \norm{U_{\tau_i s} \IFT{f}_i}_3 \, .
\end{align}

Let $v_{i_2}$ be the first degree two vertex corresponding to an immediate
recollision.
Then $\gamma(i_2-1)= \Omega_{i_2} + \zeta_{i_2}$
where $\zeta_{i_2}=\sum_{i>i_2} \Omega_i+\ci \im \gamma(i_2-1)$ is
independent of the free momenta of $v_{i_2}$.
In addition, $\zeta_{i_2}$ is also
independent of free momenta of any other immediate
recollisions.  Thus for the first step where immediate recollisions are
integrated out, we can take the corresponding exponential factors out of these
integrals.
If the recollision corresponds to a gain motive,
adding it to a pairing
corresponds to changing a factor $W(k_0)$ to
\begin{align} 
-G_{s,\tau(\sigma)}[1,W,W,W](k_0)  \rme^{-\ci s \zeta_{i_2}}\, ,
\end{align}
where $s=s_{i_2-1}$
is the time-variable of the ``recollision'' time slice
and we have defined $\tau(\sigma)=(-\sigma,-1,\sigma,1)$, $\sigma$
being the parity of the part
where the \defem{higher} of the two vertices is attached.
Similarly, adding a loss motive to a line with parity $\sigma$
and momentum $k_0$ changes a factor $W(k_0)$ to
\begin{align} 
\sigma \tau_{1,j} G_{s,\tau(\sigma)}[W,W_{j,1},W_{j,2},W_{j,3}](k_0)
 \rme^{-\ci s \zeta_{i_2}}\, ,
\end{align}
where $j\in \set{1,2,3}$ and
$W_{j,j}=1$, $W_{j,i}=W$ if $i\ne j$.  The addition of
the remaining immediate recollisions corresponds to
similar modification of the integrand.  However,
an input function can then also be one of the previously generated
$G_{s,\tau}$-factors in addition to the initial factors $W$.

We need to control the time-integrability of these iterated terms.
We use (\ref{eq:Gsest1}), which requires controlling the $\ell_3$-norm
of $G_{s,\tau}$.  For this, we note that for any
$h\in \ell_1(\Z^d)$ and $t\in \R$, $x\in \Z^d$,
\begin{align} 
\sum_{x\in \Z^d} |(U_t h)(x)|^3 \le
\sum_{x\in \Z^d} \sum_{y\in (\Z^d)^3} \prod_{i=1}^3 (|p_t(x-y_i)|\, |h(y_i)|)
\le \norm{p_t}_3^3 \norm{h}_1^3 \, ,
\end{align}
and thus $\norm{U_t h}_3\le \norm{p_t}_3 \norm{h}_1$.
Then Assumption (DR\ref{it:DRdisp}) provides decay in $t$.
However, the estimate is useful only if $\ell_1$-norm of $h$
remains bounded, and this requires carefully separating the free evolution
from the initial states; we note that even if
$\norm{h}_1<\infty$, typically
$\norm{U_t h}_1=\order{t^{p}}$ with $p \ge  d/2$.

Consider thus one of the factors obtained from the iteration of the leading
motive integrals and let $f\in \ell_2$ denote its inverse Fourier-transform.
Since $G_{s,\tau}$ is linear in all of its arguments, we
can neglect the sign- and $\zeta$-factors in the estimation of the
$\ell_3$-norm. However, we have to iteratively expand the first argument
until either $1$ (gain motive)
or $W$ (the initial pairing for a sequence of loss motives) is reached.
Let $M\ge 1$ denote the number of iterations needed for this.
Since both $1$ and $W$  have an inverse Fourier-transform in $\ell_1$,
we conclude that the factor is then of the form
\begin{align} 
 \hat{f}(k) := \hat{h}_0(k)
\prod_{m=1}^M \rme^{\ci \sigma_m s_m \omega(k_0)}
\prod_{m=1}^M  F_{m}(k) \, ,
\end{align}
where
$\sigma_m\in\set{\pm 1}$, $s_m\in \R$, are the appropriate parity 
and time variables, $\FT{h}_0\in\set{1,W}$, and
\begin{align} 
& F_{m}(k_0) =
\int_{(\T^d)^3}\!\! \rmd k_1 \rmd k_2\rmd k_3\,
 \delta(k_0-k_1-k_2-k_3) \prod_{i=1}^3
\left( \rme^{-\ci s_m \tau_{m,i} \omega(k_i)} \FT{f}_{m,i}(k_i) \right)
\, ,
\end{align}
where $\tau_{m,i}=\tau(\sigma_m)_i$ and
all of the functions
$\FT{f}_{m,i}$ are obtained from earlier iterations, and thus are
one of $1$, $W$, or $G_{s,\tau}$.
In any case, $h_0\in \ell_1(\Z^d)$.

Now for any $t\in \R$,
\begin{align} 
(U_t f)\hat{\;}(k) = (U_{t-\sum_{m=1}^M \sigma_m s_m}H)\hat{\;}(k),
\quad \text{where }
\FT{H}(k) = \hat{h}_0(k)
\prod_{m=1}^M  F_{m}(k) \, .
\end{align}
As $F_{m}(k_0)
= \sum_{x\in \Z^d} \rme^{-\ci 2 \pi x\cdot k_0}
\prod_{i=1}^3 (U_{\tau_{m,i} s_m } f_{m,i})(x)$,
we have
\begin{align} 
H(y) = \sum_{x\in (\Z^d)^M} h_0\Bigl(y-\sum_{m=1}^M x_m\Bigr)
\prod_{i=1}^3 \prod_{m=1}^M (U_{\tau_{m,i} s_m } f_{m,i})(x_m)\, .
\end{align}
Therefore,
$\norm{H}_1\le \norm{h_0}_1 \prod_{i=1}^3
\prod_{m=1}^M \norm{U_{\tau_{m,i} s_m } f_{m,i}}_3$.  We
conclude that
\begin{align}\label{eq:Giterest}
\norm{U_t f}_3
\le c_1  \norm{p_{t-\sum_{m=1}^M \sigma_m s_m}}_3
\prod_{m=1}^M \prod_{i=1}^3
\norm{U_{\tau_{m,i} s_m } f_{m,i}}_3
\, ,
\end{align}
where $c_1=\max(1,\norm{\IFT{W}}_1)<\infty$.

\begin{proposition}\label{th:leadingest}
There are constants $c,c_0>0$, which depend only on $\omega$,
and a constant $C$, which depends only on $\omega,f,g$,
such that for any leading graph
\begin{align}
& |\mathcal{G}_n^{\rm pairs}(S,J,\ell,\ell',s,\kappa)|
\le C \lambda^{\gamma'}
\rme^{s\lambda^2} \sabs{c s\lambda^2}^{n}
\sabs{\ln \lambda}^{2+n} +
\frac{C}{n!} (c_0 \lambda^2 s)^n \, ,
\label{eq:Gnleadb}\\
& \left|\left.\mathcal{F}_n^{\rm pairs}(S,\ell,t \lambda^{-2},\kappa)
  \right|_{\PFone \to 1} \right|
\le
\frac{C}{(n/2)!} (c_0 t)^{n/2} \, , \label{eq:Fnleadb1}\\
& \left|\mathcal{F}_n^{\rm pairs}(S,\ell,t \lambda^{-2},\kappa)
-\left.\mathcal{F}_n^{\rm pairs}(S,\ell,t \lambda^{-2},\kappa)
  \right|_{\PFone \to 1}\right|
\le C \lambda^{\gamma'}
\rme^{t} \sabs{c t}^{n/2}\sabs{\ln \lambda}^{2+n/2}
\label{eq:Fnleadb2} \, .
\end{align}
\end{proposition}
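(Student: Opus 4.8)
The plan is to follow the two-part decomposition already set up in the paragraphs preceding the Proposition: write $\mathcal{F}_n^{\rm pairs}$ (and likewise $\mathcal{G}_n^{\rm pairs}$) as the sum of the contribution in which every $\PFone$ is replaced by $1$ and the remainder generated by iterating $\PFone=1-\PFzero$ over the interaction vertices from bottom to top. The remainder is the easy half and already yields (\ref{eq:Fnleadb2}) together with the first summand of (\ref{eq:Gnleadb}). Indeed, by Proposition \ref{th:PFcorr} each of the at most $N=n+n'$ terms of the remainder carries one factor $\PFzero(\pm k_{i'-1;\ell_{i'}})$ forcing a sum of two edge momenta within distance $\lambda^b$ of $\Msing$; since a leading graph is relevant, these momenta depend on a free momentum ending no earlier than $v_{i'}$, so running the iterative $k$-integration of the basic $\mathcal{F}$/$\mathcal{A}$-estimates (Lemmas \ref{th:basicFest}, \ref{th:basicAest}) up to that free momentum and then using the volume bound $\int_{\T^d}\rmd k\,\1(d(\pm k+k_0,\Msing)<\lambda^b)\le C\lambda^{b(d-1)}$ of Lemma \ref{th:Foneprop} gains a factor $C\lambda^{b(d-1)}$, or $C\sabs{\ln\lambda}\lambda^{b(d-1)-2}$ when a resolvent has to be discarded first; as $d\ge4$ gives $b(d-1)-2\ge\tfrac14\ge\gamma'$, both are $\le C\sabs{\ln\lambda}\lambda^{\gamma'}$. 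This is precisely the estimate (\ref{eq:somebound}), and summing the at most $N\le C\sabs{\ln\lambda}$ such terms, with $s=t\lambda^{-2}$ in the $\mathcal{F}$-case, produces the two claimed cut-off remainder bounds.

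For the main term, with all $\PFone$ replaced by $1$, I would integrate out the immediate recollisions one at a time from the bottom of the graph, exactly as prepared in the text. By Proposition \ref{th:leadingremains} every long time slice of a leading graph is trivial, so along long slices the only surviving factor $\rme^{-\ci r_m\gamma(m)}$ has $\re\gamma(m)=0$, $\im\gamma(m)\le0$, hence modulus $\le1$. Each gain motive turns a covariance factor $W(k_0)$ into $-G_{s,\tau(\sigma)}[1,W,W,W](k_0)\rme^{-\ci s\zeta_{i_2}}$, and each loss motive into $\sigma\tau_{1,j}G_{s,\tau(\sigma)}[W,W_{j,1},W_{j,2},W_{j,3}](k_0)\rme^{-\ci s\zeta_{i_2}}$, where $s$ is the length of the corresponding recollision slice and $\zeta_{i_2}$ is independent of all free momenta of that and of every later recollision; for a nested recollision an input may itself be a previously generated $G_{s',\tau'}$-factor. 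After carrying out all the momentum integrals the amplitude is thus bounded by $\lambda^{n+n'}$, times a testfunction sup-norm, times
\[
\int_{(\R_+)^{I_{0,N}}}\rmd r\,\delta\Bigl(s-\sum_{i=0}^{N}r_i\Bigr)\,\prod_{i=0}^{N}\phi_i(r)\,,
\]
where $\phi_i$ has modulus $\le1$ on a long slice $i$ and on a recollision slice equals the absolute value of the integrand produced by the $G_{s,\tau}$-factor attached there.

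The core step is to bound this time integral. I would control the recollision-slice integrands through the pointwise bound (\ref{eq:Gsest1}) and the iterated convolution inequality (\ref{eq:Giterest}), $\norm{U_tf}_3\le c_1\norm{p_{t-\sum_m\sigma_ms_m}}_3\prod_{m,i}\norm{U_{\tau_{m,i}s_m}f_{m,i}}_3$, applied recursively until every input is one of the $\ell_1(\Z^d)$-functions $1$ or $W$, whence $\norm{U_th}_3\le\norm{p_t}_3\norm{h}_1$, and then $\norm{p_t}_3^3\le C\sabs{t}^{-1-\delta}$ from (DR\ref{it:DRdisp}). The task is then to organise the resulting product of $\norm{p}_3$-factors so that each recollision time variable sits in an integrable triple — using that every $G_{s_m,\tau}$ carries three $U$-operations at the shared time $s_m$ — making every recollision-time integral finite with value $\le c_0$. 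Integrating the remaining long-slice times against the $\delta$-constraint leaves the simplex factor $s^{m_\ell-1}/(m_\ell-1)!$ over the $m_\ell$ long slices; recalling from Lemma \ref{th:ivdegrees} that a leading graph has $n_0=n_2=\tfrac{n+n'}2$, one has for $\mathcal{F}_n^{\rm pairs}$ exactly $n/2$ recollision slices and $m_\ell=n/2+1$ long slices, so with the prefactor $\lambda^{n}$ and $s=t\lambda^{-2}$ the bound is $C(c_0t)^{n/2}/(n/2)!$, i.e.\ (\ref{eq:Fnleadb1}), while for $\mathcal{G}_n^{\rm pairs}$ ($N=2n$, $n$ recollision slices, $m_\ell=n+1$, prefactor $\lambda^{2n}$) one gets $C(c_0\lambda^2s)^n/n!$, the remaining summand of (\ref{eq:Gnleadb}). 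The interlacing $J$ plays no structural role, since minus-tree recollisions are again leading motives with swapped parities; the testfunction factors are absorbed into $C$; and no contour or exponential factor is needed for the main term because the long slices carry no oscillation.

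The step I expect to be the \emph{main obstacle} is making precise this grouping of the $\norm{p_t}_3$-factors for nested chains of recollisions, where a recollision time $s_m$ occurs not only inside its own $G_{s_m,\tau}$ but also in the displacement argument $t-\sum_m\sigma_ms_m$ of every free evolution above it, so that a careless estimate would leave isolated factors $\sabs{r}^{-(1+\delta)/3}$ which are not integrable. Resolving this needs exactly the separation of free evolution from the initial states stressed in the text — each ``$1$'' or ``$W$'' leaf carrying an $\ell_1$-norm bound uniform in all intervening times, so that precisely three $\norm{p}_3$-factors with the common argument $s_m$ become available per recollision and $\int\rmd r\,\norm{p_r}_3^3<\infty$ from (DR\ref{it:DRdisp}) can be applied — together with the bookkeeping that keeps the accumulated constant at the level $c_0^{\,\#\textrm{recollisions}}$ rather than growing factorially. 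Once that is in place, the rest is the routine combination of these estimates with the counting of vertex degrees from Lemmas \ref{th:ivdegrees} and \ref{th:immrecstart}.
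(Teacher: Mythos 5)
Your proposal is correct and follows essentially the same route as the paper: the cutoff-removal remainder is bounded by $N$ times the estimate (\ref{eq:somebound}) obtained from the bottom-to-top expansion of $\PFone=1-\PFzero$, while the $\PFone\to1$ term is handled by integrating out the leading motives via $G_{s,\tau}$, iterating (\ref{eq:Gsest1}) and (\ref{eq:Giterest}) until only $\ell_1$-leaves ($1$ or $\IFT{W}$) remain, and then combining H\"older with (DR\ref{it:DRdisp}) for the recollision-time integrals and a simplex bound for the long-slice times, exactly as you outline. The obstacle you flag is resolved in the paper precisely as you anticipate: the iterated bound (\ref{eq:leadingphase}) produces, for each recollision $m$, three $\norm{p}_3$-factors whose argument is $\pm r_{2m-1}$ shifted only by \emph{earlier} odd times, so integrating from $r_{N-1}$ downwards turns each step into a shift-independent $\int\rmd r\,\norm{p_r}_3^3\le c_0$ and keeps the constant at $c_0^{N/2}$.
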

\begin{proof}
Consider a leading graph with $N$ interaction vertices.  Then
$N$ is even.  The first term in (\ref{eq:Gnleadb}) and the bound in
(\ref{eq:Fnleadb2})
arise from exchanging all
$\PFone$ factors to $1$ and both follow from applying (\ref{eq:somebound}).
In the remaining term, we leave the time-integrals unmodified,
and perform first all $k$-integrals apart from the top fusion integral
on which the original $\FT{f}$- and $\FT{g}$-factors depend.

A leading graph consists of a sequence of $N/2$ leading motives.
Since the leading motives preserve the phase, we thus
have $\re \gamma(i)=0$ for all even $i$.
In addition, for all odd $i$ we have $\Omega_{i} = -\Omega_{i+1}$.
Therefore, in this case the total phase is
\begin{align}\label{eq:leadingphases}
\sum_{i=0}^N r_i \re \gamma(i) =
\sum_{j=1}^N \Omega_j \sum_{i=0}^{j-1} r_i
= \sum_{m=1}^{N/2} \Omega_{2m} r_{2 m-1} \, .
\end{align}
As explained above, performing the immediate recollision
$k$-integrals results in an iterative application
of $G_{r_i,\tau}$.  We estimate the absolute value of the amplitude
by taking the absolute value inside the time-integrals.  For the outmost
(\itie , last) application of $G_{r_i,\tau}$ corresponding to $m=N/2$
we use (\ref{eq:Gsest1}) and in the resulting bound
we can iterate  estimates (\ref{eq:Gsest1}) and (\ref{eq:Giterest})
further until only $\ell_1$-norms of $\tilde{W}$ remain.
This shows that for each $m=1,2,\ldots,N/2$ there are three subsets
$B_{m,i}$, $i=1,2,3$,
of $I_{1,m-1}$ such that the $k$-integrated phase factor has a
bound
\begin{align}\label{eq:leadingphase}
c_1^{3 N/2+1} \prod_{m=1}^{N/2} \prod_{i=1}^3
\norm{p_{\pm r_{2 m -1}-\sum_{j\in B_{m,i}} (\pm r_{2 j-1})}}_3 \, ,
\end{align}
for some choice of signs.
By H\"{o}lder's inequality and assumption (DR\ref{it:DRdisp}) there is a
constant $c_0$ such that if this bound is integrated over
all of $r_j$, $j$ odd, the result is bounded by
$c_0^{N/2}$.  (The integration over $r_{N-1}$ is performed first, and
the rest are iterated until $r_{1}$ is reached.)
We can apply an estimate similar to that used in (\ref{eq:intrtrick})
to separate the odd and even integrations and obtain an additional factor
$s^{N/2}/(N/2)!$ from the even integrations.  Collecting all the estimates
together yields the bounds stated in the Proposition.
\qed \end{proof}

\begin{proposition}\label{th:nestedbnd}
There is a constant $c_0$, which
depends only on $\omega$, and a constant $C$, which depends only on
$\omega,f,g$,
such that the amplitudes of all nested graphs
satisfy the bounds
\begin{align} 
& |\mathcal{G}_n^{\rm pairs}(S,J,\ell,\ell',s,\kappa)|
\le C \lambda^{\gamma'}
\rme^{s\lambda^2} \sabs{c_0 s\lambda^2}^{n}
\sabs{\ln \lambda}^{2+n}\, , \\
& |\mathcal{F}_n^{\rm pairs}(S,\ell,t \lambda^{-2},\kappa)|
\le C \lambda^{\gamma'}
\rme^{t} \sabs{c_0 t}^{n/2}
\sabs{\ln \lambda}^{2+n/2} \, .
\end{align}
\end{proposition}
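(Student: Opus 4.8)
The plan is to treat nested graphs by reducing them, step by step, to the leading-graph analysis of Proposition \ref{th:leadingest}, isolating the single ``nest'' that distinguishes a nested graph from a purely leading one and extracting from it the extra factor $\lambda^{\gamma'}$. As in the leading case, I would first perform the iterative expansion $\PFone = 1-\PFzero$ going up through the interaction vertices; the terms carrying at least one $\PFzero$ are controlled exactly as in the passage preceding Proposition \ref{th:leadingest}, yielding a contribution bounded by $N$ times the quantity in (\ref{eq:somebound}), which is already of the claimed form $C\lambda^{\gamma'}\rme^{s\lambda^2}\sabs{c_0 s\lambda^2}^n\sabs{\ln\lambda}^{2+n}$ (and analogously for $\mathcal{F}_n$). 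So the real work is the term with all $\PFone$ replaced by $1$.

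For that term, recall from Lemma \ref{th:immrecstart} and the definition of a nested graph that there is a first degree-two vertex $v_{i_2}$ whose double-loop is a genuine nest: every long time slice that depends on the double-loop is \emph{nested inside} it, i.e.\ $\re\gamma(m)-\Omega_{i_2}$ is independent of the two free momenta $k_1,k_2$ of $v_{i_2}$ whenever $\re\gamma(m)$ is not. All degree-two vertices below $v_{i_2}$ are immediate recollisions (they preserve the phase), so I can integrate them out first exactly as in Proposition \ref{th:leadingest}, producing iterated $G_{s,\tau}$-factors whose $k$-integrated phases obey the $\ell_3$-product bound (\ref{eq:leadingphase})/(\ref{eq:leadingphase}-type) and hence contribute a benign $c_0^{N/2}$ after the odd-time integrations, together with a factor $s^{N/2}/(N/2)!$ from the even integrations via the trick in (\ref{eq:intrtrick}). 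The key new point is the treatment of the nest: I would apply Theorem \ref{th:resolvents} to convert the time-simplex integral to a contour integral, but now, in addition to the resolvents at degree-two vertices $A_2$, I keep the resolvent at the ``entry'' long slice $m'$ of the nest (the analogue of $m'=i_0-1$ in the crossing proof), so $A=\{m',2n\}\cup A_2$. Because the nest is nested and not crossing, when I finally reach the $k_1,k_2$-integral there is only \emph{one} resolvent depending on these momenta, of the form $1/|\alpha-\Omega_{i_2}+\ci\beta|$ with $\alpha$ independent of $k_1,k_2$; all the other phase dependence on $k_1,k_2$ has collapsed. I represent this resolvent via (\ref{eq:restophases}) and am left with a single oscillatory integral $\int \rmd k_1\rmd k_2\, \rme^{-\ci s\Omega_{i_2}}$, which by the convolution estimate (as in (\ref{eq:convsplit})) is $\le \norm{p_s}_3^2\norm{p_{\pm s}}_3 = \norm{p_s}_3^3 \le C\sabs{s}^{-1-\delta}$ by assumption DR\ref{it:DRdisp}. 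Integrating this decaying factor against the bound on $F(s;\beta)$ and against the $\rme^{s(\im z)_+}$ from the contour, and comparing with the basic estimate (where this vertex and slice would instead have contributed the ``trivial'' factor), gives an improvement by a factor of order $C\sabs{s\beta}^{-1}\sabs{\ln\beta}^{1+c_2}\beta^{\min(\delta,\dots)}$; with $\beta=\lambda^2$ and $\gamma' = \min(\tfrac14,2\gamma,2\delta)$ this is at most $C\lambda^{\gamma'}$ up to logarithms, exactly as needed. The remaining free-momentum integrals, and the degree-two vertices above $v_{i_2}$ (which, the nested structure guarantees, are again immediate recollisions or contribute resolvents handled by Lemmata \ref{th:degoneest}--\ref{th:degtwoest}), are iterated as in the basic $\mathcal{F}$-estimate.

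Assembling the pieces: the $\PFzero$-terms give $O(N)\cdot(\ref{eq:somebound})$; the all-$\PFone$-replaced term gives $c_1^{O(N)}c_0^{N/2}\,\frac{s^{N/2}}{(N/2)!}$ from the recollision/time structure, times the single nest gain $C\lambda^{\gamma'}\sabs{\ln\lambda}^{O(1)}$, times a contour factor $\rme^{s\lambda^2}\sabs{\ln N}\sabs{\ln\lambda}$, times the testfunction factor $\norm{\FT g}_\infty^2$. Using $N=2n$ for $\mathcal{G}_n^{\rm pairs}$ and $N=n$ for $\mathcal{F}_n^{\rm pairs}$, and absorbing $\sabs{\ln N}$, $\frac{s^{N/2}}{(N/2)!}$ and powers of $\sabs{\ln\lambda}$ into the stated constants and the $\sabs{c_0 s\lambda^2}^n$ (resp.\ $\sabs{c_0 t}^{n/2}$) and $\sabs{\ln\lambda}^{2+n}$ (resp.\ $\sabs{\ln\lambda}^{2+n/2}$) factors, yields the two bounds in the Proposition. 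To avoid repetition, I would present the argument as a modification of the proofs of Propositions \ref{th:crossingbnd} and \ref{th:leadingest} and spell out only the new step at the nest.

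I expect the main obstacle to be the bookkeeping that shows, for a nested graph, that at the moment one reaches the $k_1,k_2$-integral exactly one resolvent depends on $k_1,k_2$ and the rest of the $k_1,k_2$-dependence has genuinely cancelled — in other words, that ``nested'' really does collapse the double-loop to a single oscillatory factor $\rme^{-\ci s\Omega_{i_2}}$ rather than a crossing-type $K$-integral. This requires carefully tracking, via the iterative cluster scheme and Lemma \ref{th:nokkdiff}, how the free momenta of $v_{i_2}$ propagate through the $T$- and $X$-vertices of its double-loop and verifying that any long slice above the nest that still ``sees'' $k_1,k_2$ does so only through $\Omega_{i_2}$, so that after factoring out $\rme^{-\ci s\Omega_{i_2}}$ the remaining integrand is $k_1,k_2$-independent. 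Once this structural fact is in hand, the analytic estimate is just the $\norm{p_s}_3^3$ bound from DR\ref{it:DRdisp}, which is strictly easier than the crossing bounds DR\ref{it:DRcrossing} used in Proposition \ref{th:crossingbnd}.
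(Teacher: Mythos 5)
Your reduction of the $\PFzero$-terms to (\ref{eq:somebound}) and the integration of the immediate recollisions below the nest are in line with the paper, but the step where you extract the factor $\lambda^{\gamma'}$ from the nest does not work. With $A=\set{m',2n}\cup A_2$ the set $A_2$ still contains the short slice $i_2-1$ ending at the nesting vertex $v_{i_2}$, whose resolvent carries $\re\gamma(i_2-1)=\Omega_{i_2}+(\text{indep.})$; hence when you reach the $k_1,k_2$-integral there are \emph{two} resolvents depending on $k_1,k_2$, and -- this is the crux of what ``nested'' means -- they depend on $k_1,k_2$ through the \emph{same} phase $\Omega_{i_2}$. The resulting integral $\int\rmd k_1\rmd k_2\,|\alpha_1-\Omega_{i_2}+\ci\beta|^{-1}|\alpha_2-\Omega_{i_2}+\ci\beta|^{-1}$ has no smallness: representing both factors via (\ref{eq:restophases}) gives $\sabs{\ln\beta}^2\int\rmd r\,\rmd s\,F(r;\beta)F(s;\beta)\sabs{r+s}^{-1-\delta}$, which is of order $\beta^{-1}$ because $r+s$ can stay bounded while $r,s\to\pm\infty$ (worst case $\alpha_1=\alpha_2$); compared with the basic estimate (a factor $s$ for the long slice $m'$ times $\sabs{\ln\beta}$ for the degree-two vertex) this gains at most a logarithm, not $\lambda^{\gamma'}$. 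Your alternative reading -- only one resolvent and a gain $\beta^{\min(\delta,\dots)}$ from integrating $\sabs{s}^{-1-\delta}$ against $F(s;\beta)$ -- is also unfounded: $\int_\R\rmd s\,F(s;\beta)\sabs{s}^{-1-\delta}=O(1)$ uniformly in $\beta$, and the factor (\ref{eq:crossinggain}) you quote comes from the crossing bounds (DR\ref{it:DRcrossing}), which require a \emph{second, independent} oscillation ($\Omega_{i_0}$ at an $X$- or $T$-vertex); a nest by definition provides no such second oscillation, so no crossing-type $\beta^{\gamma}$ gain is available.

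The mechanism the paper uses is different and is the missing idea here: one first shows that the vertex $v_{j_0}$ closing the nest from below is an $X$-vertex and derives the phase factorization (\ref{eq:nestedphasef}), which exhibits the nest as a \emph{delayed recollision} whose oscillatory factor is $\rme^{-\ci\Omega_{i_2}\sum_{i=j_0}^{i_2-1}r_i}$, i.e.\ the oscillation time is the \emph{total} elapsed time across the nest, which contains at least one long slice. One then integrates the momenta of all vertices up to $i_2$ \emph{before} taking absolute values (producing the $G_2$-factor), and the gain is extracted in the time domain: by (DR\ref{it:DRdisp}) the $k$-integrated nest phase decays like $\sabs{T}^{-1-\delta}$ in $T=\sum_{i=j_0}^{i_2-1}r_i$, and the simplex estimate $\int_0^s\rmd T\,T^{M-1}\sabs{T+\alpha}^{-1-\delta}\le s^{M-2}\bigl(Cs^{1-\gamma'/2}+|\alpha|\bigr)$ shows that the long slice inside the nest contributes $s^{1-\gamma'/2}$ instead of $s$; since $s\lesssim t\lambda^{-2}$, this is exactly the factor $\lambda^{\gamma'}$. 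So the analytic input is indeed only DR\ref{it:DRdisp}, but applied to the sum of time variables spanning the nest, not to a resolvent's auxiliary variable; a resolvent-based bookkeeping of the type you propose cannot see this gain.
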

\begin{proof}
Consider a relevant nested graph.
Let $i_2$ denote the index of the
first degree two interaction vertex $v=v_{i_2}$
which is not an immediate recollision.
By assumption,
every long time slice which depends on the double-loop of $v$
is nested inside the double-loop, and there is at least one such
time slice.
Let $N_2$ collect the indices of these time slices.
In addition, applying Lemma \ref{th:immrecstart}, we can conclude
that every double-loop before $i_2$
corresponds to an immediate recollision.

As explained in the beginning of the
section, it suffices to consider the case where every 
$\PFone$ at an interaction vertex $v_i$ with $i\le i_2$
has been replaced by one.  A bound for such a term then needs to be
summed with (\ref{eq:somebound}) 
times $N$ to get a bound for the original amplitude.

Let $j_0=\min N_2<i_2-1$.
Since $\re \gamma(0)=0$, we have
$j_0>0$ and thus there is a time slice $j_0-1$.
If it is short, then $j_0-1>0$ and
it belongs to an immediate recollision. This however leads to contradiction,
since immediate
recollisions preserve the phase factor, and thus
$\re \gamma(j_0)=\re \gamma(j_0-2)$ implying that the slice $j_0-2<j_0$ is
also nested inside
the double-loop of $v$.
Thus the slice $j_0-1$ must be long, but not nested inside the double-loop.
This implies that $\re \gamma(j_0-1)$ is independent, and thus
$\re \gamma(j_0)-\re \gamma(j_0-1)= -\Omega_{j_0}$
depends on the double loop of $v$.  However, as $j_0$ is nested inside the
double-loop,
we then also have that
$\re \gamma(j_0)-\re \gamma(j_0-1)-\Omega_{i_2}$ is independent,
\itie , that $\Omega_{j_0}+\Omega_{i_2}$ is independent of the
double-loop.  The vertex $v_{j_0}$ belongs to the double-loop of $v$.
It cannot be a
$T_j$-vertex, as then
by differentiation of $\Omega_{j_0}+\Omega_{i_2}$ in a direction orthogonal to
$k_j$ we should have a constant dispersion relation $\omega$.  Therefore,
$v_{j_0}$ is an $X$-vertex.

Consider the contribution of the first $i_2$ time slices to the total phase,
that is,  $\rme^{-\ci \sum_{i=0}^{i_2-1} r_i \gamma(i)}$.
Expanding $\gamma(i)$ we have here
\begin{align} 
\sum_{i=0}^{i_2-1} r_i \gamma(i)
= \zeta(r) + \sum_{i=0}^{i_2-1} r_i \sum_{j=i+1}^{i_2} \Omega_{j} \, ,
\end{align}
where $\zeta(r)$ does not depend on any of the free momenta appearing before
$i_2$ and $\im \zeta(r)\le 0$.
Denote $B_j=\defset{1\le i < i_2}{\deg v_i=j}$, $j=0,2$.
Since every $v_i$ with $i\in B_2$ ends an immediate recollision,
then $i-1\in B_0$ and $\Omega_{i-1}=-\Omega_i$.
We denote the remaining indices by
$B'_0 = \defset{j\in B_0}{j+1 \not \in B_2}$.
Since the time slices $j_0-1$ and $j_0$ are long, we have
$j_0\in B'_0$.  Therefore,
\begin{align} 
& \sum_{i=0}^{i_2-1} r_i \sum_{j=i+1}^{i_2} \Omega_{j}
= \sum_{j\in B_2} \Omega_{j} r_{j-1}
  + \Omega_{i_2} \sum_{i=j_0}^{i_2-1} r_i
 + (\Omega_{j_0}+\Omega_{i_2}) \sum_{i=0}^{j_0-1} r_i
  + \sum_{j\in B'_0\setminus\set{j_0}} \Omega_{j} \sum_{i=0}^{j-1} r_i \, .
\end{align}

Since $v_{j_0}$ is an $X$-vertex, by an argument similar to what was used for
crossing graphs,
we see that there cannot be any index
$j\in B'_0\setminus\set{j_0}$ such that $\Omega_j$ depends on the double-loop
of $v$ or on any of the immediate recollision momenta.
Therefore, there is $\tilde{\zeta}(r)$, which does not depend on any
double-loop before $i_2+1$ and has $\im \tilde{\zeta}(r)\le 0$,
such that
\begin{align}\label{eq:nestedphasef}
\exp\Bigl( -\ci \sum_{i=0}^{i_2-1} r_i \gamma(i)\Bigr)
=\rme^{-\ci \tilde{\zeta}(r)}
\rme^{-\ci \Omega_{i_2} \sum_{i=j_0}^{i_2-1} r_i  }
\prod_{j\in B_2} \rme^{-\ci r_{j-1} \Omega_{j} } \, .
\end{align}

We then apply the basic iterative estimate with slight modifications:
we integrate all free momenta of vertices $i\le i_2$ before
taking the absolute value of the phase-integral.  We recall the definition of
the sets $A_0$ and $A_2$ whose sizes for a fully paired graph are
equal to $N/2$.  Here we do not include all elements of $A_2$ in $A$, but use
$A=\set{N}\cup \defset{i_2\le i<N}{\deg v_{i+1}=2}$
in Theorem \ref{th:resolvents}. Then
$A'=\set{*}\cup I_{0,i_2-1} \cup A_0$ and we find
\begin{align} 
& \int_{(\R_+)^{I_{0,N}}}\!\rmd r \,
\delta\Bigl(s-\sum_{i=0}^{N} r_i\Bigr)
 \prod_{i=0}^{N} \rme^{-\ci r_i \gamma({i})}
= -\oint_{\Gamma_N} \frac{\rmd z}{2\pi}
\frac{\ci}{z}
\prod_{i\in A_2; i\ge i_2} \frac{\ci}{z-\gamma(i)}
 \nonumber \\ & \qquad  \times
\int_{(\R_+)^{A'}}\!\rmd r \,  \delta\Bigl(s-\sum_{i\in A'} r_i\Bigr)
\rme^{-\ci r_* z}
 \prod_{i=i_2; i\in A_0}^{N-1}\rme^{-\ci r_i\gamma(i)}
\exp\Bigl( -\ci \sum_{i=0}^{i_2-1} r_i \gamma(i)\Bigr)
 \, .
\end{align}

We apply (\ref{eq:nestedphasef}) and integrate over all double-loop
momenta of the immediate recollisions $i\in B_2$
and of the nesting vertex $v_{i_2}$.  We note that also the last integral
yields a $G_{s,\tau}$-factor but with a sum of multiple time-variables in its
argument.  (In fact, the nested integral corresponds to a leading
motive in the iterative cluster scheme, however, with a {\em delayed}
recollision.)
Let $G_2(k,r)$ denote the resulting factor.
After this, we take absolute value of
the amplitude inside all of the remaining integrals, yielding a bound
\begin{align}\label{eq:pkbound}
& \oint_{\Gamma_N} \frac{|\rmd z|}{2\pi}
\frac{1}{|z|} \rme^{s \lambda^2}
\prod_{i\in A_2; i\ge i_2} \frac{1}{|z-\gamma(i)|}
\int_{(\R_+)^{A'}}\!\rmd r \,  \delta\Bigl(s-\sum_{i\in A'} r_i\Bigr)
|G_2(k,r)|
 \, .
\end{align}
Now if we neglect all extra decay arising from the possible
$\im \gamma(i)\le 0$, we have
$|G_2|\le |\tilde{G}_2|$ where the upper bound depends only on
the time-integrals corresponding to the index set
$B=I_{j_0,i_2-1}\cup \defset{i}{i+1\in B_2}$.  Therefore,
\begin{align} 
&\int_{(\R_+)^{A'}}\!\rmd r \,  \delta\Bigl(s-\sum_{i\in A'} r_i\Bigr)
|G_2(k,r)|
 \nonumber \\ & \quad
\le
\int_{(\R_+)^{B}}\!\rmd r \,
|\tilde{G}_2(k,r)| \1\Bigl(\sum_{i\in B} r_i\le s\Bigr)
\int_{(\R_+)^{A'\setminus B}}\!\rmd r \,
\delta\Bigl(s-\sum_{i\in A'} r_i\Bigr)
 \nonumber \\ & \quad
\le \frac{s^{\tilde{n}}}{\tilde{n}!}
C^{|B_2|+1}
\int_{(\R_+)^{B}}\!\rmd r \,
 \1\Bigl(\sum_{i\in B} r_i\le s\Bigr)
\prod_{i=1}^3
\norm{p_{\sum_{i=j_0}^{i_2-1} r_i+\sum_{j\in B; j<j_0} a_{j_0,i,j} r_{j}}}_3
 \nonumber \\ & \qquad \times
\prod_{m\in B; m<j_0} \Bigl(
 \1\Bigl(\sum_{i\in B; i\le m} r_i\le s\Bigr)
\prod_{i=1}^3
\norm{p_{r_{m}+\sum_{j\in B; j<m} a_{m,i,j} r_{j}}}_3 \Bigr)\, ,
\end{align}
where $\tilde{n}=|A'\setminus B|-1$, $a_{m,i,j}\in \set{-1,0,1}$,
and we have used
(\ref{eq:leadingphase}) to estimate $|\tilde{G}_2|$.

We estimate the time-integrals iteratively,
starting from the last index.  At each step we first use H\"{o}lder's
inequality to simplify the argument of the integral into a single
third power and then use Assumption (DR\ref{it:DRdisp}).
Then in the first iteration step we need to estimate
an $M$-dimensional integral,
$M=i_2-j_0\ge 2$,  of the type
\begin{align} 
& \int_{(\R_+)^M} \! \rmd t\, \1\Bigl(\sum_{i=1}^{M} t_i\le s\Bigr)
\Bigl\langle\sum_{i=1}^M t_i +
\alpha\Bigr\rangle^{-1-\delta}
\le \int_0^s\!\rmd T \, T^{M-1} \sabs{T+\alpha}^{-1-\delta}
\nonumber \\ & \quad
\le s^{M-2}
\int_0^s\!\rmd T \, (T+\alpha-\alpha) \sabs{T+\alpha}^{-1-\delta}
\nonumber \\ & \quad
\le s^{M-2} \Bigl(
\int_0^s\!\rmd T \, \sabs{T+\alpha}^{-\delta}
+|\alpha|
\int_0^s\!\rmd T \, \sabs{T+\alpha}^{-1-\delta}
\Bigr) \, ,
\end{align}
where $|\alpha|\le \sum_{j\in B; j<j_0} r_j\le s$.
Since $2\delta\ge \gamma'$ and $\gamma'<1$, we have
\begin{align} 
\int_0^s\!\rmd T \, \sabs{T+\alpha}^{-\delta}
\le s
\int_{\alpha/s}^{1+\alpha/s}\!\rmd x \, \sabs{s x}^{-\gamma'/2}
\le s^{1-\gamma'/2} \int_{-1}^{2}\!\rmd x |x|^{-\gamma'/2}\, ,
\end{align}
where the last integral is convergent.  Since also
\begin{align} 
\int_0^s\!\rmd T \, \sabs{T+\alpha}^{-1-\delta}\le
\int_{-\infty}^\infty\!\rmd y \, \sabs{y}^{-1-\delta}<\infty \, ,
\end{align}
we can
conclude that there is a constant $C$, depending only on $\delta$, such that
\begin{align} 
\int_0^s\!\rmd T \, \sabs{T+\alpha}^{-\delta}
+|\alpha|
\int_0^s\!\rmd T \, \sabs{T+\alpha}^{-1-\delta}
\le C \Bigl( s^{1-\gamma'/2} + \sum_{j\in B; j<j_0} r_j \Bigr)\, .
\end{align}
This estimate can be iterated for the remaining
$r$-integrations, which proves that there is a constant $C$ such that
\begin{align} 
&\int_{(\R_+)^{A'}}\!\rmd r \,  \delta\Bigl(s-\sum_{i\in A'} r_i\Bigr)
|G_2(k,r)|
\le s^{\tilde{n}+i_2-j_0-1-\gamma'/2}
C^{|B_2|+1} \, .
\end{align}
Here $\tilde{n}=
1+m'+|B_2|+1-(i_2-j_0)-|B_2|-1$,  and thus
$\tilde{n}+i_2-j_0-1=N/2$.
Therefore, (\ref{eq:pkbound}) is bounded by
\begin{align} 
C^{|B_2|+1} \lambda^{\gamma'-N}
\sabs{\lambda^2 s}^{N/2} \rme^{s \lambda^2}
\oint_{\Gamma_N} \frac{|\rmd z|}{2\pi}
\frac{1}{|z|}
\prod_{i\in A_2; i\ge i_2} \frac{1}{|z-\gamma(i)|} \, .
\end{align}
This bound can then be integrated over the remaining $k$-variables
using the iteration scheme of the
basic estimate.  This results in the following bound for this contribution
to the amplitude
\begin{align} 
C \lambda^{\gamma'}
\rme^{s\lambda^2} \sabs{c s\lambda^2}^{N/2}
\sabs{\ln \lambda}^{2+N/2} \, .
\end{align}
Comparing this with (\ref{eq:somebound}) proves the
bounds
stated in the Proposition.
\qed \end{proof}

\section{Completion of the proof of the main theorem}
\label{sec:completion}

Since for a graph with $N$ interaction vertices there are $2 N+2$ fields at
time $0$, the total number of pairing graphs is bounded by $2^{N+1} (N+1)!$.
However, there are much fewer leading graphs, at most
$2^{3 N} (N/2)!$, as the following Lemma shows.
\begin{lemma}
Consider momentum graphs with $n'$ interaction vertices in the minus tree and
$n$ in the plus tree.  If $n+n'$ is odd, none of these graphs is
leading.  If $n+n'$ is even, there are at most
\begin{align}\label{eq:nofleading}
 4^{n+n'} (n+n'-1)!! \le 8^{n+n'} \Bigl(\frac{n+n'}{2}\Bigr)!
\end{align}
leading graphs.  (We have defined $(-1)!!=1$.)
\end{lemma}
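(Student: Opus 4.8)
The plan is to treat the two assertions separately: the parity claim ($n+n'$ must be even) is immediate, and the counting bound is the real content. For the parity, I would note that every leading graph is, by definition, obtained from the trivial loop ($n=n'=0$) by successively attaching leading motives, and that each gain motive and each loss motive in Fig.~\ref{fig:leadmot} introduces exactly two interaction vertices (one $X$-vertex of degree $0$ and one vertex of degree $2$). Hence after $M$ insertions the graph has $2M$ interaction vertices, so $N:=n+n'=2M$ is even. Alternatively one can quote that leading graphs are fully paired and that relevant fully paired graphs satisfy $n_0=n_2=N/2$, which already forces $N$ even via Lemma~\ref{th:ivdegrees}.

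For the count I would set $N=2M$ and bound the number of leading graphs by the number of ordered construction sequences that produce them. A construction sequence consists of $M$ steps, step $j$ ($1\le j\le M$) being the choice of a leading motive together with a legal attachment site in the current partial graph $\mathcal G^{(j-1)}$: a pairing cluster in the case of a gain motive, an edge of matching parity reaching the initial time in the case of a loss motive. Since every leading graph arises from at least one such sequence (this is exactly the definition of a leading graph), the over-counting is harmless. The graph $\mathcal G^{(j-1)}$ has $2(j-1)$ interaction vertices, so by the construction of momentum graphs in Section~\ref{sec:momdeltas} the number of its edges reaching the initial time and the number of its clusters both grow only linearly in $j$; I would verify from the attachment rules that the number of legal sites at step $j$ is at most $2j-1$, while only a bounded number of motive shapes -- at most $16$, accounting for gain versus loss, the admissible parity patterns, and, for a loss motive, which sub-edge it is inserted on -- are compatible with a fixed site. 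This yields
\[
 \text{(number of leading graphs with given }n,n')\ \le\ \prod_{j=1}^{M} 16\,(2j-1)\ =\ 16^{M}\,(2M-1)!!\ =\ 4^{N}\,(N-1)!!\,,
\]
which is the first inequality in (\ref{eq:nofleading}).

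The second inequality is then a short computation: from $(N-1)!!=N!/(2^{N/2}(N/2)!)$ and $N!=\binom{N}{N/2}\big((N/2)!\big)^2\le 2^{N}\big((N/2)!\big)^2$ one gets $(N-1)!!\le 2^{N/2}(N/2)!$, hence $4^{N}(N-1)!!\le 2^{5N/2}(N/2)!\le 8^{N}(N/2)!$.

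The hard part will be the per-step count in the middle paragraph: pinning down precisely which attachment sites are legal in a partial leading graph, showing that their number never exceeds $2j-1$, and checking that only boundedly many motive orientations are compatible with a given site. This is pure bookkeeping against Fig.~\ref{fig:leadmot} and the definitions in Section~\ref{sec:fullypaired}, but it is the only delicate point; the parity statement, the reduction to construction sequences, and the double-factorial estimate are all routine.
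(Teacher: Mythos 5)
Your proposal is correct and is essentially the paper's own argument: the paper proves the bound by induction on the number $m$ of motives, which is exactly your product over construction steps, with the same per-step factor $16\,(2m+1)$ coming from $2\cdot 6$ loss attachments plus $4$ gain splittings per pairing cluster and $2m+1$ clusters available. The "hard part" you defer is precisely this short bookkeeping against Fig.~\ref{fig:leadmot}, and your parity argument and the estimate $(2m-1)!!\le 2^m m!$ also coincide with the paper's.
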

\begin{proof}
Since all leading motives contain $2$ interaction vertices,
every leading graph has even number of them.  Thus we can assume
that there is an integer $m$ such that
$2 m=n+n'$, and we need to prove that the number of leading diagrams is
then bounded by $16^{m} (2 m-1)!!$.  We make the proof by induction in $m$.
If $m=0$, then $n'=0=n$ and
the result is obviously true.  (There is only one diagram, which is leading.)
We make the induction assumption that the above is true for any graph with
$2m$ interaction vertices, $m\ge 0$.

Consider adding a leading motive to the bottom of such a graph.
There are $4 m +2$ edges at the bottom, and thus $2 m+1$ pairs.
The loss motive can be connected to left or right edge of a pair, in six
different ways.  (This is independently of the parity of the edge.)
The gain motive splits the cluster vertex of the pair, in four different ways
which respect the parities,
see Fig.~\ref{fig:leadmot}.
Thus there are altogether $2*6+4=16$ different ways of connecting the
new motive into an existing pair.
Using the induction assumption,
we find that there are at most
$16 (2 m+1)* 16^{m} (2 m-1)!!=16^{m+1} (2 m+1)!!$ ways to
make a leading diagram with $2(m+1)$ interaction vertices.
This completes the induction step.  We remark that the ``at most''-part is
necessary
since not all of these graphs have $n'$ interactions on the left.

The inequality in (\ref{eq:nofleading}) follows then from
$(2 m-1)!!\le 2^m m!$.
\qed \end{proof}

Let $c_0>0$ denote a constant for which Proposition \ref{th:leadingest}
holds.
We choose $t_0 = (2^6 c_0)^{-1}>0$, when for all $0<t<t_0$ and
$0\le s\le t\lambda^{-2}$ we have
$2^6 c_0 \lambda^2 s \le t/t_0<1$,
and
$\sum_{m=0}^\infty (2^6 c_0 t)^m$ is always summable.
\begin{corollary}
Let $t_0$ be the constant defined above, and assume $0<t<t_0$.
Then
\begin{align}\label{eq:QpairsisOK}
&\lim_{\lambda\to 0}\limsup_{\Lambda\to \infty}\left|
\Qlfin[g,f](t)-\Qpairs [g,f](t)\right| = 0\, , \\
&\lim_{\lambda\to 0}\left|
\Qpairs [g,f](t)-  \sum_{n=0; n \text{ even }}^{N_0-1}
\sum_{\text{leading graphs}}
\left.\mathcal{F}_n^{\rm pairs}(S,\ell,t\lambda^{-2},\kappa)
\right|_{\PFone \to 1}
\right| = 0\, . \label{eq:FnpairisOK}
\end{align}
\end{corollary}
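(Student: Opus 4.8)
The plan is to assemble the estimates of Sections \ref{sec:higherorder} and \ref{sec:fullypaired} and feed the resulting bounds into the limit relations (\ref{eq:N0limit2})--(\ref{eq:N0limb}); throughout, $N_0=N_0(\lambda)$ and $\kappa=\kappa(\lambda)$ are fixed as in Definition \ref{th:defkappaetc}, and we let $\lambda\to0$, so that in particular $N_0(\lambda)\ge56$ and $0<\lambda<\lambda'_0$ eventually hold. For (\ref{eq:QpairsisOK}) I would start from the decomposition $\Qlfin[g,f](t)=\Qmain+Q^{\rm err}_{\rm pti}+Q^{\rm err}_{\rm cut}+Q^{\rm err}_{\rm amp}$ of Proposition \ref{th:Qmainerr} and bound $\limsup_\Lambda|\Qlfin-\Qpairs|$ by $\limsup_\Lambda|\Qmain-\Qpairs|$ plus the $\limsup_\Lambda$ of the three error terms (note $\Qpairs$ does not depend on $\Lambda$). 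By Proposition \ref{th:main2nd} together with (\ref{eq:N0limb}) the first summand tends to $0$ as $\lambda\to0$; by Proposition \ref{th:amperr} and (\ref{eq:N0lima}), (\ref{eq:N0limb}) so does $\limsup_\Lambda|Q^{\rm err}_{\rm amp}|$, and by Proposition \ref{th:cuterr} and (\ref{eq:N0limb}) so does $\limsup_\Lambda|Q^{\rm err}_{\rm cut}|$.

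It remains to treat $Q^{\rm err}_{\rm pti}$ via Proposition \ref{th:ptierr1}: the first term of (\ref{eq:ptierr1}) vanishes by (\ref{eq:N0lima}), and the second term is $Ct^2N_0^{2+2b_0}$ times a supremum over $N_0/2\le n<N_0$ of a sum of $|\mathcal{G}_n^{\rm pairs}|$ over $S,J,\ell,\ell'$. Since $\mathcal{G}_n^{\rm pairs}=0$ unless the graph is fully paired, Proposition \ref{th:leadingremains} lets me split this sum into leading, nested and crossing graphs. For nested and crossing graphs Propositions \ref{th:nestedbnd} and \ref{th:crossingbnd} give bounds carrying a positive power of $\lambda$ ($\lambda^{\gamma'}$, resp.\ $\lambda^{2\gamma}\le\lambda^{\gamma'}$); using $s\lambda^2\le t$, that the combinatorial factors from the sums over $S,J,\ell,\ell'$ are bounded by $C^{N_0}N_0^{cN_0}$, and $n\le N_0$, their total contribution is of the form controlled by (\ref{eq:N0limit2}) and hence vanishes. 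For the leading part of the $\mathcal{G}_n^{\rm pairs}$ sum I would use (\ref{eq:Gnleadb}): the first summand carries $\lambda^{\gamma'}$ and is disposed of as above, while the second summand is $\frac{C}{n!}(c_0\lambda^2 s)^n\le\frac{C}{n!}(c_0 t)^n$, and the preceding Lemma bounds the number of leading graphs with $2n$ interaction vertices by $8^{2n}n!$, so the leading contribution to the second term of (\ref{eq:ptierr1}) is at most $Ct^2N_0^{2+2b_0}(64\,c_0 t)^n$ for each $n\ge N_0/2$. Because $t<t_0=(2^6 c_0)^{-1}$ one has $64\,c_0 t<1$, so this is $\le Ct^2N_0^{2+2b_0}(64\,c_0 t)^{N_0/2}\to0$. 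Collecting the four limits with the triangle inequality proves (\ref{eq:QpairsisOK}).

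For (\ref{eq:FnpairisOK}) I would write $\Qpairs[g,f](t)=\sum_{n=0}^{N_0-1}\sum_{\ell\in G_n}\sum_{S\in\pi(I_{0,2n+1})}\mathcal{F}_n^{\rm pairs}(S,\ell,t\lambda^{-2},\kappa)$, recall that $\mathcal{F}_n^{\rm pairs}$ vanishes unless $(S,\ell)$ is fully paired (which forces $n$ even and, by Proposition \ref{th:leadingremains}, puts the graph into exactly one of the classes leading, nested, crossing). The nested and crossing contributions are killed, as in the previous paragraph, by the positive powers $\lambda^{\gamma'}$ and $\lambda^{2\gamma}$ in Propositions \ref{th:nestedbnd} and \ref{th:crossingbnd} together with (\ref{eq:N0limit2}), the pairing count $(2n+1)!!$ and the history sum over $G_n$ contributing only factorial/exponential factors in $N_0$. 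For each leading graph the difference $\mathcal{F}_n^{\rm pairs}-\mathcal{F}_n^{\rm pairs}\big|_{\PFone\to1}$ is bounded by (\ref{eq:Fnleadb2}), which carries $\lambda^{\gamma'}$; since there are at most $8^n(n/2)!$ leading graphs with $n$ interactions, summing over $n\le N_0$ and invoking (\ref{eq:N0limit2}) shows that the total difference tends to $0$. Since there are no leading graphs with odd $n$, what survives is exactly $\sum_{n=0,\ n\text{ even}}^{N_0-1}\sum_{\text{leading graphs}}\mathcal{F}_n^{\rm pairs}(S,\ell,t\lambda^{-2},\kappa)\big|_{\PFone\to1}$, which is (\ref{eq:FnpairisOK}).

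The genuinely delicate step, and the only place where the hypothesis $t<t_0$ is used, is the leading part of the $\mathcal{G}_n^{\rm pairs}$ sum inside $Q^{\rm err}_{\rm pti}$: unlike every other contribution, the term $\frac{1}{n!}(c_0\lambda^2 s)^n$ in (\ref{eq:Gnleadb}) offers no gain in $\lambda$, so its smallness can come only from the forced largeness $n\ge N_0/2$ of the number of fusions in the partial-time-integration remainder, which produces the geometric factor $(64\,c_0 t)^{N_0/2}$ — convergent to $0$ precisely because $64\,c_0 t<1$. Everywhere else the proof is routine bookkeeping: a positive power of $\lambda$ supplied by the nested/crossing/cutoff-removal estimates always dominates the combinatorial factors, which grow no faster than $C^{N_0}N_0^{cN_0}(cN_0)!$ and are therefore controlled by the slow growth of $N_0$ encoded in (\ref{eq:N0limit2}).
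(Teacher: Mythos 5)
Your proposal is correct and follows essentially the same route as the paper: the decomposition of Proposition \ref{th:Qmainerr}, the error-term Propositions \ref{th:main2nd}, \ref{th:cuterr}, \ref{th:amperr}, then Proposition \ref{th:ptierr1} with the fully paired graphs split via Proposition \ref{th:leadingremains} into crossing/nested (killed by the $\lambda^{\gamma'}$-type gains) and leading graphs, whose non-small term $\frac{C}{n!}(c_0\lambda^2 s)^n$ is beaten by the forced $n\ge N_0/2$ and $64\,c_0 t=t/t_0<1$, and an analogous argument with (\ref{eq:Fnleadb2}) for (\ref{eq:FnpairisOK}). You even make explicit the bookkeeping the paper leaves as ``can be checked,'' and you correctly identify the leading part of $Q^{\rm err}_{\rm pti}$ as the only place where $t<t_0$ enters.
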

\begin{proof}
To prove the first limit, we apply Proposition \ref{th:Qmainerr} and triangle
inequality, which shows that  
\begin{align}
\left|\Qlfin-\Qpairs \right|
\le \left|\Qmain-\Qpairs \right|
+ | Q^{\rm err}_{\rm pti} | +
|Q^{\rm err}_{\rm cut}| + |Q^{\rm err}_{\rm amp} |\, .
\end{align}
By Proposition \ref{th:main2nd},
the first term on the right hand side vanishes in the limit.  The third and
fourth terms also vanish by Propositions  
\ref{th:cuterr} and \ref{th:amperr}, respectively.

To study the remaining second term, we first apply Proposition
\ref{th:ptierr1}.  To derive an upper bound for the second term on the right
hand side of (\ref{eq:ptierr1}), consider arbitrary $s$ and $n$ such that  
$0\le s\le t\lambda^{-2}$ and $N_0/2\le n<N_0$.
Then $\mathcal{G}_n^{\rm pairs}(S,J,\ell,\ell',s,\kappa)$ is non-zero only if
the corresponding  
graph is fully paired.  If the graph is either crossing or nested, we can
apply Proposition \ref{th:crossingbnd} or \ref{th:nestedbnd} to bound
$|\mathcal{G}_n^{\rm pairs}|$.  There are at most $2^{2 N_0 +1} (2 N_0+1)!$
such graphs, 
and it can then be checked that the sum over these graphs decays faster than
the factor $N_0^{2+2 b_0}$ in 
(\ref{eq:ptierr1}).   All other graphs are leading, with the total number
bounded by $2^{6 n} n!$, as shown by the above Lemma.  
According to the estimate (\ref{eq:Gnleadb}) in Proposition
\ref{th:leadingest}, for any such graph 
$|\mathcal{G}_n^{\rm pairs}|$ is bounded by a sum of two terms,
a term containing a power $\lambda^{\gamma'}$ and 
$\frac{C}{n!} (c_0 \lambda^2 s)^n$.  
The sum over the first terms leads to a vanishing contribution, similarly to
what happened for crossing and nested graphs.  Since here $n\ge N_0/2$,
the sum over the second terms is bounded by $C (t/t_0)^{N_0/2}$.  
Now $(t/t_0)^{N_0/2} N_0^{2+2 b_0} \to 0$ when $\lambda\to 0$, and
we can thus conclude 
that also $\limsup_{\Lambda\to\infty} | Q^{\rm err}_{\rm pti}|\to 0$.  This
concludes the proof of   (\ref{eq:QpairsisOK}).

Propositions \ref{th:main2nd},
\ref{th:crossingbnd}, \ref{th:leadingest}, and \ref{th:nestedbnd},
combined with the above estimates for the number of fully paired and
leading graphs, can be used to prove similarly that also (\ref{eq:FnpairisOK})
holds. 
\qed \end{proof}

To complete the proof of the main theorem, the sum over the leading graphs
needs to be computed.
\begin{lemma}\label{th:sumleadingm}
For any $n=2 m$, $m\ge 0$, and with $\Gamma(k)$ defined by
(\ref{eq:defGamma}),  we have
  \begin{align}\label{eq:sumleading}
&\lim_{  \lambda \to 0 }
 \sum_{\text{leading graphs}}
  \left.\mathcal{F}_n^{\rm pairs}(S,\ell,t\lambda^{-2},\kappa)
  \right|_{\PFone \to 1}
 = \int_{\T^d}\rmd k\, \FT{g}(k)^* \FT{f}(k) W(k) \Gamma(k)^m
  \frac{(-t)^m}{m!}\, .
  \end{align}
\end{lemma}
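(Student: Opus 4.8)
The plan is to exploit the recursive structure of leading graphs: every leading graph with $n=2m$ interaction vertices is obtained from a leading graph with $2(m-1)$ vertices by adding one leading motive at the bottom, either a gain motive splitting some initial-time pairing or a loss motive attached to some line. Consequently the sum over leading graphs factorizes in a way that, after performing the immediate-recollision momentum integrals using the $G_{s,\tau}$-operators introduced in Section~\ref{sec:fullypaired}, produces an $m$-fold nested time-ordered integral whose integrand is, at each level, a single ``collision kernel''. First I would fix a leading graph and use the phase identity (\ref{eq:leadingphases}): since leading motives preserve the phase, the total oscillating factor collapses to $\prod_{m'=1}^{m}\rme^{-\ci r_{2m'-1}\Omega_{2m'}}$, i.e. only the odd-indexed (``recollision'') time-slices carry nontrivial oscillation. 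The even-indexed slices contribute a simplex time-integration which, in the limit, will reconstruct the factor $(-t)^m/m!$; more precisely, one integrates the odd variables $r_{2m'-1}$ over $[0,\infty)$ against the oscillating factors $\rme^{-\ci r \Omega}$ and the even variables over the remaining simplex of size $t\lambda^{-2}$, exactly as in the separation used in (\ref{eq:intrtrick}) and in the proof of Proposition~\ref{th:leadingest}.

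Next I would identify the value of a single recollision integral in the $\lambda\to0$ limit. Fix an edge with momentum $k$ and parity; adding a gain motive replaces the pairing weight $W(k)$ by $-\int_0^\infty\!\rmd r\, G_{r,\tau}[1,W,W,W](k)$ times the rest, and adding a loss motive replaces $W(k)$ by $\sigma\tau\int_0^\infty\!\rmd r\, G_{r,\tau}[W,W_{j,1},W_{j,2},W_{j,3}](k)$, with the remaining-time dependence $\rme^{-\ci r\zeta}$ decoupling from the integral. Summing over the four admissible gain motives and the six admissible loss motives at a given line of momentum $k_1$ (respecting the parity bookkeeping of Fig.~\ref{fig:leadmot}), performing the $k_2,k_3,k_4$ integrals subject to $\delta(k_1+k_2-k_3-k_4)$, and using $\int_0^\infty\!\rmd r\, \rme^{\ci r(\omega_1+\omega_2-\omega_3-\omega_4)}$, one recognizes precisely the combination
\begin{align*}
-2\int_0^\infty\!\rmd r\int_{(\T^d)^3}\!\rmd k_2\rmd k_3\rmd k_4\,
\delta(k_1+k_2-k_3-k_4)\,\rme^{\ci r(\omega_1+\omega_2-\omega_3-\omega_4)}
\bigl(W_3W_4-W_2W_4-W_2W_3\bigr) = \Gamma(k_1),
\end{align*}
which is exactly the definition (\ref{eq:defGamma}). (The factor $2$ and the three terms come from counting which of the four motive-legs is the ``internal'' one, as encoded by the parity structure.) I would then argue that in the $\lambda\to 0$ limit each recollision integral can be evaluated independently, because the remaining phases $\zeta$ contribute only to the outer simplex integration over the even time-variables, and because the $\kappa$-cutoffs are inactive for $n<N_0/2$ while the contributions with $n\ge N_0/2$ vanish by the estimates already established; the cutoff $\PFone\to 1$ has been performed, so the momentum integrals are over the full $\T^d$.

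Assembling the pieces: for a fixed ordered sequence of $m$ leading motives one gets $\int_{\T^d}\rmd k\,\FT g(k)^*\FT f(k)$ times a product of $m$ factors each of which, after the $\lambda\to0$ limit and the odd-time integrations, equals $W(k)$ carrying one power of $\Gamma(k)$ (for a loss motive on the top line the running momentum stays $k$; for a gain motive the split pairing momenta are summed, again producing $\Gamma(k)$ on the same line by momentum conservation at an immediate recollision). Since all $m$ motives act on the same top momentum $k$ for a leading graph, and since the ordered simplex integration over the $m$ even-slice variables of total length $t\lambda^{-2}$, combined with $\vep=\lambda^2$, yields $(t)^m/m!$ up to the sign collected from the $m$ factors of $-\ci\lambda$ squared per motive (this is where the overall $(-t)^m/m!$ originates), one obtains the right-hand side of (\ref{eq:sumleading}). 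The main obstacle will be the bookkeeping in the second step: showing rigorously that summing the ten motive-types with their parity constraints reproduces exactly the kernel in (\ref{eq:defGamma}) with the correct numerical factor, and controlling uniformly in $m$ (for $m$ up to $N_0(\lambda)\to\infty$) the interchange of the $\lambda\to0$ limit with the $k$-integral and the $m$-fold time integral — but the uniform integrability needed for the latter is already supplied by the bounds (\ref{eq:Fnleadb1}) of Proposition~\ref{th:leadingest} together with the choice $t<t_0$, so dominated convergence applies term by term and the series is summable.
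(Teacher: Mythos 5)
The first half of your argument is sound and coincides with the paper's first step: using the phase identity (\ref{eq:leadingphases}), changing time variables so that the even slices produce $\lambda^{-2m}\frac{1}{m!}(t-\lambda^{2}\sum_i s_i)^m$ against the prefactor $\lambda^{2m}$, integrating the double-loop momenta via iterated $G_{s,\tau}$-operators, and passing to the limit by dominated convergence gives, graph by graph, $(-1)^m\frac{t^m}{m!}\int\rmd k\,\FT{g}^*\FT{f}\int\rmd s\,\tilde G(k,s;S,\ell)$. (As an aside, no uniformity in $m$ is needed here: the lemma is for fixed $n=2m$; summability over $m$ is only used after the lemma.) The gap is in your combinatorial second half, i.e.\ in the identification of $\sum_{\text{leading graphs}}\int\rmd s\,\tilde G$ with $W(k)\Gamma(k)^m$.

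Two things go wrong there. First, it is not true that ``all $m$ motives act on the same top momentum $k$'': leading graphs are built by attaching gain motives to \emph{any} pairing and loss motives to \emph{any} line of the right parity, including lines created by earlier motives and carrying integrated internal momenta (see Fig.~\ref{fig:leaditer}); such graphs have non-vanishing individual limits which are not of the form $W(k)\Gamma(k)^m$. Second, the sum of gain and loss motives at a given line does not equal $\Gamma(k_1)$: gain motives produce cubic terms $W_2W_3W_4$, which are absent from (\ref{eq:defGamma}) --- $\Gamma$ consists only of the quadratic loss combination $W_3W_4-W_2W_4-W_2W_3$. What actually makes the lemma true is a cancellation your proposal never invokes: for every pairing \emph{not} connected to the minus tree, the sum over all sixteen ways of attaching a motive yields, after the momentum integrations and using $W(k)^{-1}=\beta(\omega(k)-\mu)$, an integrand proportional to $\Omega(k,\sigma)\,\rme^{-\ci s_1\Omega(k,\sigma)}\,W_0W_1W_2W_3$, whose $s_1$-integral over $\R_+$ vanishes (a total derivative; equivalently, $W$ is a stationary solution of the kinetic equation, so the collision operator annihilates it). Only the six loss motives attached to the right leg of the unique pairing touching the minus tree are admissible there (gain motives, or loss motives on the left leg, would create interaction vertices in the minus tree, which do not occur for $\mathcal{F}_n$), and these six sum to $W(k)\Gamma(k)$; iterating the argument gives $W(k)\Gamma(k)^m$. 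Without this restriction-plus-cancellation mechanism, the factorization you assert does not hold, and the bookkeeping cannot reproduce (\ref{eq:sumleading}).
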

Proposition \ref{th:leadingest} shows that each even
term in the sum over $n$ can be dominated by $c (t/t_0)^{n/2}$ which is
summable.  Thus we can move the $\lambda\to 0$ limit inside the sum over $n$.
Combined with the above results this proves that Theorem \ref{th:main}
holds.

\begin{proofof}{Lemma \ref{th:sumleadingm}}
The convergence of the leading diagrams has been discussed in detail in
\cite{ls09}, we will mainly sketch the argument here under the present
notations and assumptions.  We have already proven the result for $m=0$,
and let us thus assume $m\ge 1$.

By  (\ref{eq:defFnpairs}) and (\ref{eq:leadingphases}) here
\begin{align}\label{eq:Fnpairs2}
&\left. \mathcal{F}_{2 m}^{\rm pairs}(S,\ell,t \lambda^{-2},\kappa)
 \right|_{\PFone \to 1}
 = (-1)^m \lambda^{2 m}
\int_{\T^d} \!\rmd k_{e_1}\, \FT{g}(-k_{e_1})^* \FT{f}(-k_{e_1})
 \nonumber \\ &  \quad \times
\int_{(\T^d)^{2 m}} \!\prod_{j=1}^m (\rmd k_{m,1} \rmd k_{m,2})\,
\prod_{j=1}^{m} \Bigl[ \sigma_{2 j,\ell_{2 j}} \sigma_{2 j-1,\ell_{2 j-1}}
\Bigr]
\prod_{A=\set{i,j}\in S} W(k_{0,i})
 \nonumber \\ & \quad \times
\int_{(\R_+)^{I_{0,2m}}}\!\rmd r \,
\delta\Bigl(t\lambda^{-2}-\sum_{i=0}^{2 m}
r_i\Bigr)
\prod_{j=0}^{2 m} \rme^{-r_j \kappa_{2 m-j}}
\prod_{j=1}^{m} \rme^{-\ci r_{2 j-1} \Omega_{2 j}}
 \, .
\end{align}
We next assume that $\lambda$ is so small that $N_0(\lambda)> 4m $.
Then $\kappa_{2 m-j}=0$ for all $j$, and only the $\delta$-function depends
on $r_i$, for $i$ even.  We change integration variables from $r$
to $(t,s)$ with
$t_i=\lambda^2 r_{2 i}$, for $i=0,1,\ldots,m$, and $s_i=r_{2 i-1}$ for
$i=1,\ldots,m$.  Then the last line of (\ref{eq:Fnpairs2}) is equal to
\begin{align} 
& \lambda^{-2 m} \int_{(\R_+)^{I_{m}}}\!\rmd s
\prod_{j=1}^{m} \rme^{-\ci s_{j} \Omega_{2 j}}
 \int_{(\R_+)^{I_{0,m}}}\!\rmd t \,
\delta\Bigl(t-\sum_{i=0}^{m} t_i - \lambda^{2}\sum_{i=1}^{m} s_i\Bigr)
 \nonumber \\ & \quad
= \lambda^{-2 m} \int_{(\R_+)^{I_{m}}}\!\rmd s
\prod_{j=1}^{m} \rme^{-\ci s_{j} \Omega_{2 j}}
\1\Bigl(\sum_{i=1}^{m} s_i \le t \lambda^{-2}\Bigr)
\frac{1}{m!} \Bigl(t-\lambda^{2}\sum_{i=1}^{m} s_i\Bigr)^m\, .
\end{align}

Next we integrate over the double-loop momenta.  This leads to iterated
applications of $G_{s_i,\tau_i}$ which yields a
function $\tilde{G}(-k_{e_1},s;S,\ell)$.
This function also has
an $s$-integrable upper bound which is independent of $k$ and
$\lambda$, and thus we can apply dominated convergence to
take the $\lambda\to 0$ limit inside the $s$-integration.
This proves that for a leading graph
\begin{align}\label{eq:sumleading2}
&\lim_{  \lambda \to 0 }
  \left.\mathcal{F}_n^{\rm pairs}(S,\ell,t\lambda^{-2},\kappa)
  \right|_{\PFone \to 1}
 = (-1)^m \frac{t^m}{m!}
\int_{\T^d}\rmd k\, \FT{g}(k)^* \FT{f}(k)
\int_{(\R_+)^{I_{m}}}\!\rmd s \, \tilde{G}(k,s;S,\ell) \, .
\end{align}

We now sum over the
leading graphs with $m$ motives, which is a finite sum and thus can be
taken directly of $\tilde{G}(k,s;S,\ell)$.
Every leading diagram in the sum in (\ref{eq:sumleading})
is obtained by iteratively adding $m$ leading motives
to the graph formed by using a single pairing cluster.  It is then easy to see
that there is a one-to-one correspondence
between leading graphs with $m$
motives and no interaction vertices in the minus tree,
and graphs which are obtained by adding
a leading motive to a pairing cluster of such a graph with $m-1$ motives
so that the result does not contain any interaction vertices in
the minus tree.

Consider thus a graph with $m-1$ motives which could give rise to a
leading diagram in the sum in (\ref{eq:sumleading}).
This graph has $2 m -1$ pairing clusters,
exactly one of which connects to the minus tree.
If we add the leading motive to this special pairing,
then the resulting graph should not contain any interaction vertices in
the minus tree.  This rules out all gain motives and six of the loss motives,
those added to the left leg of the pairing cluster.  Thus we get only six
new terms from such an addition, and these correspond to addition of a loss
motive to the right leg of the pairing cluster.

There are no such restrictions to  motives added to any of the remaining
$2m-2$ pairings since all the new interaction vertices then belong to the
plus tree.  Fix the graph and such a pairing, and let
$\sigma$ denote the parity on the left leg of the pairing.
We sum over all graphs obtained by adding a leading motive to this pairing.
Then the iteration steps to obtain the corresponding $\tilde{G}(k,s;S,\ell)$
are equal apart from the first step which is of the form
$\pm G_{s_1,\tau}(k_0)$.  Computing the sum over all possibilities in the first
iteration yields
a contribution
\begin{align}\label{eq:alllead}
& \int_{(\T^d)^2}\!\! \rmd k_1 \rmd k_2\, \Bigl[
\rme^{-\ci s_1 \Omega(k,\sigma)}
\left( -2 W_1 W_2 W_3+ 2 \sigma W_0 W_1 W_2
+ 2 W_0 W_1 W_3- 2 \sigma W_0 W_2 W_3\right)
\nonumber \\ & \quad
+\rme^{-\ci s_1 \Omega(k,-\sigma)}
\left( -2 W_1 W_2 W_3- 2 \sigma W_0 W_1 W_2
+ 2 W_0 W_1 W_3+ 2 \sigma W_0 W_2 W_3\right)
\Bigr]\, ,
\end{align}
where $W_i=W(k_i)$ and $k_3=k_0-k_1-k_2$.  Since always we have
$\Omega((k_1,k_2,k_3),-\sigma)=-\Omega((k_3,k_2,k_1),\sigma)$,
we can make a change of integration
variables $k_1\to k_3$ in the second term, which shows that
(\ref{eq:alllead}) is equal to
\begin{align}
& 2 \int_{(\T^d)^2}\!\! \rmd k_1 \rmd k_2\, \Bigl[
\rme^{-\ci s_1 \Omega(k,\sigma)} +\rme^{\ci s_1 \Omega(k,\sigma)}
\Bigr] W_0 W_1 W_2 W_3
\nonumber \\ & \qquad \times
\left( -W_0^{-1}+ \sigma W_3^{-1} + W_2^{-1} - \sigma W_1^{-1}\right)
\, .
\end{align}
Since $W(k)^{-1} = \beta(\omega(k)-\mu)$, the final factor in parenthesis
is equal to $\beta \sigma \Omega(k,\sigma)$.  Thus
integrating 
(\ref{eq:alllead}) over $s_1\in [0,M]$ for some $M>0$ yields
\begin{align} 
& 2 \beta\sigma
  \int_{-M}^M\! \rmd s_1 \int_{(\T^d)^2}\!\! \rmd k_1 \rmd k_2\,
 \Omega(k,\sigma) \rme^{-\ci s_1 \Omega(k,\sigma)}
W_0 W_1 W_2 W_3
\nonumber \\ & \quad
= 2 \beta\sigma
 \int_{(\T^d)^2}\!\! \rmd k_1 \rmd k_2\,
 \ci \Bigl[
\rme^{-\ci M \Omega(k,\sigma)} +\rme^{\ci M \Omega(k,\sigma)}
\Bigr]
W_0 W_1 W_2 W_3
\, ,
\end{align}
which vanishes as $M\to \infty$.  Therefore, the sum over such graphs in
(\ref{eq:sumleading2}) is
exactly zero, even though the individual terms can be non-vanishing.
(The vanishing is not accidental, as a comparison with the discussion in
Section \ref{sec:link} reveals:  
$W$ is a stationary solution of
the corresponding nonlinear Boltzmann equation, (\ref{eq:4}), and the above
sum corresponds to the action of its collision operator $\mathcal{C}$ to $W$,
which thus should be equal to zero.)

Therefore, in the sum over the relevant leading diagrams, only those terms
can be non-zero which come from the application of a loss term to the right
leg of the unique pairing which connects to the minus tree.  Since this
only changes the multiplicative factor associated with the pairing,
we can iterate the above argument and conclude that the sum must be equal
to $W(k) \Gamma(k)^m$, where $\Gamma(k)$ is the result from the sum over
the six relevant loss terms.  As above, this can be computed explicitly, and
it is seen to be equal to $\Gamma(k)$ defined in (\ref{eq:defGamma})
after a change of variables and using the evenness of the functions
$\omega$ and $W$.
Collecting all the results together yields
(\ref{eq:sumleading}).\qed
\end{proofof}

\appendix

\section{Nearest neighbor interactions}\label{sec:appNN}

Let us consider here the dispersion relation $\omega$ defined by
\begin{align} 
  \omega(k):= c - \sum_{\nu=1}^d \cos p^\nu, \quad \text{with }p=2\pi k,
\end{align}
where $c\in \R$ is arbitrary.
This clearly satisfies (DR1).
We consider the function $K$ defined in (\ref{eq:defp2tx}).
Then
\begin{align} 
 &   K(x;t_0,t_1,t_2,\tfrac{1}{2\pi} q_1, \tfrac{1}{2\pi} q_2)
=   \rme^{-\ci c (t_0+t_1+t_2)}
    \prod_{\nu=1}^d
    \int_{0}^{2\pi} \!\frac{\rmd p}{2\pi}\, \rme^{\ci p x^\nu}
    \rme^{\ci (t_0 \cos p +t_1 \cos (p +q_1^\nu) + t_2 \cos (p +q_2^\nu))} \, .
\end{align}
Since
\begin{align} 
& t_0 \cos p +t_1 \cos (p +q_1^\nu) + t_2 \cos (p +q_2^\nu)
= \re\!\left[ \rme^{\ci p}
  ( t_0 + t_1 \rme^{\ci q_1^\nu} + t_2 \rme^{\ci q_2^\nu} ) \right]\, ,
\end{align}
there is $\varphi^\nu$, which does not depend on $p$,
such that this is equal to
\begin{align} 
  R^\nu \cos(p+\varphi^\nu),\quad \text{with}\quad
 R^\nu = | t_0 + t_1 \rme^{\ci q_1^\nu} + t_2 \rme^{\ci q_2^\nu} |\, .
\end{align}
This proves that
\begin{align} 
| K(x;t_0,t_1,t_2,\tfrac{1}{2\pi} q_1, \tfrac{1}{2\pi} q_2)|
= \prod_{\nu=1}^d \left|
    \int_{0}^{2\pi} \!\frac{\rmd p}{2\pi}\,
    \rme^{\ci p x^\nu+\ci R^\nu \cos p}\right|  \, ,
\end{align}
and, therefore,
\begin{align} 
 \norm{K(t_0,t_1,t_2,\tfrac{1}{2\pi} q_1, \tfrac{1}{2\pi} q_2)}_3
\le \prod_{\nu=1}^d \norm{p^{(d=1)}_{R^\nu}}_3
\le C \prod_{\nu=1}^d \frac{1}{\sabs{R^\nu}^{\frac{1}{7}}},
\end{align}
where we have applied a known bound for the $\ell_3$-norm of 
the one-dimensional propagator, following \cite{HoLa97,LaLu01}.

We note that $p_t(x)=K(x;t,0,0,0,0)$ and thus the above bound shows that
\begin{align}\label{eq:pt33bound}
\norm{p_t}_3^3 \le C \sabs{t}^{-\frac{3 d}{7}}
 \le C \sabs{t}^{-1-\frac{2}{7}}
\end{align}
for all $d\ge 3$.  Thus (DR2) is satisfied then.

On the other hand,
\begin{align} 
  \Bigl| \int_{\T^d}\!\rmd k\, \rme^{-\ci t (\omega(k)+\sigma
    \omega(k-k_0))}\Bigr|
 = |K(0;t,\sigma t,0,-k_0,0)|
 \le C \prod_{\nu=1}^d \frac{1}{\sabs{R^\nu}^{\frac{1}{2}}}, \,
\end{align}
since $\cos p$ is a Morse function.
Here $R^\nu = |t|\, | 1 + \sigma \rme^{-\ci 2 \pi k_0^\nu}|
\ge |t| \,|\sin (2 \pi k_0^\nu )|$.
Thus, given $k_0$, let $\nu_0$ denote the index corresponding to
the {\em second\/} largest of the numbers $|\sin (2 \pi k_0^\nu )|$.
(This might not be unique, but this is irrelevant for the following
estimates.)
Then we have
\begin{align} 
  \Bigl| \int_{\T^d}\!\rmd k\, \rme^{-\ci t (\omega(k)+\sigma
    \omega(k-k_0))}\Bigr|
 \le C \sabs{t}^{-1} \frac{1}{|\sin (2\pi k_0^{\nu_0})|}
 \le C \sabs{t}^{-1} \frac{1}{d(k_0^{\nu_0},\set{0,\frac{1}{2}})} \, .
\end{align}
Thus (DR3) holds if $d\ge 3$ and we choose
$\Msing$ to consist of those $k$
for which all but one component belong to the set $\set{0,\frac{1}{2}}$.
(This set is clearly a union of lines.)

Therefore, we only need to check (DR4).  Let us first consider
(\ref{eq:crossingest1b}), where
$t_0=t$, $t_1=\pm s$, $t_3 = 0$, $q_2=0$, and we have some fixed $n\in
\set{1,2,3}$.
Then
$R^\nu = | t\pm s \rme^{\ci q_1^\nu}|$
which, by the triangle inequality, has a lower bound
$| |t| -|s| |$.  On the other hand, by inspecting only the imaginary
part, we find that it also has a lower bound
$|s| |\sin q_1^\nu|$.
We use the second bound in the $n$:th factor in
(\ref{eq:crossingest1b}) and the first bound in the remaining two factors.
This shows that
the left hand side of (\ref{eq:crossingest1b})
can be bounded by
\begin{align}
& C  \int_{\R^2}\!\! \rmd t\rmd s\, \rme^{-\beta |s|}
\sabs{ |t|-|s| }^{-\frac{8}{7}} |s|^{-\frac{4}{7}}
 \prod_{\nu=1}^4 \frac{1}{|\sin (2 \pi u_n^\nu)|^{\frac{1}{7}}}
\le C \beta^{\frac{4}{7}-1}
\prod_{\nu=1}^d \frac{1}{|\sin (2 \pi u_n^\nu)|^{\frac{1}{7}}}
 \, ,
\end{align}
where we have assumed $d\ge 4$.

In the other inequality (\ref{eq:crossingest1c}), we need to consider
$t_0=t$, $t_1=\pm s$, $t_3 = \pm s$.
We apply the previous estimates which shows that
the left hand side of (\ref{eq:crossingest1c}) is bounded by
\begin{align} 
& C  \int_{\R}\! \rmd t\, \sabs{t}^{-\frac{8}{7}}
\int_{\R}\!\rmd s\, \rme^{-\beta |s|}
 \prod_{\nu=1}^4 \frac{1}{\sabs{R^\nu}^{\frac{1}{7}}}
\end{align}
where
\begin{align} 
 R^\nu = | t \pm  s (\rme^{\ci q_1^\nu} \pm \rme^{\ci q_2^\nu})|
\ge \left|\, |s|\, |1\pm \rme^{\ci (q_2^\nu-q_1^\nu)}|-|t|\,\right| \,.
\end{align}
Since here $|1\pm \rme^{\ci (q_2^\nu-q_1^\nu)}|\ge |\sin (q_2^\nu-q_1^\nu)|$
and $q_2-q_1=2\pi(u_2-u_1)$, this shows that
(\ref{eq:crossingest1c}) can be bounded by
$C \beta^{\frac{4}{7}-1}
\prod_{\nu=1}^d |\sin (2 \pi (u_2-u_1)^\nu)|^{-\frac{1}{7}}$.

Thus we now only need to check that the second item in (DR4) holds for
$\Fbcr$ defined by (\ref{eq:defnnFcr}).  Since $\Fbcr$ is independent of
$\beta$ and obviously belongs to $L^1(\T^d)$,
(\ref{eq:crossingest2a}) holds with $c_2=0$.
For the second integral we need to estimate
\begin{align}\label{eq:DR4nn2est}
\left| \int_{(\T^d)^2} \rmd k_1\rmd k_2\, \Fbcr(k_1+u;\beta)
\rme^{-\ci s (\sigma_1 \omega(k_1)+ \sigma_2 \omega(k_2)+ \sigma_3
  \omega(k_1+k_2-k_0))}\right|
\end{align}
for any choice of the signs $\sigma\in \set{\pm 1}^3$. Since
\begin{align} 
  \cos(p_2) \pm \cos(p_1+p_2-p_0) =
 \cos(p_2+\varphi) | 1\pm \rme^{\ci(p_1^\nu-p_0^\nu)}|
\end{align}
for some $\varphi$ independent of $p_2$, we can bound (\ref{eq:DR4nn2est}) by
\begin{align} 
 C \sabs{s}^{-\frac{d}{2}} \prod_{\nu=1}^d \left[ \int_0^{2\pi} \rmd p_1^\nu\,
 |\sin(p_1^\nu +2\pi u^\nu)|^{-\frac{1}{7}}
 |\sin(p_1^\nu - 2\pi k_0^\nu)|^{-\frac{1}{2}}
 \right] \, .
\end{align}
Then an application of H\"{o}lder's inequality with the conjugate pair
$(3,\frac{3}{2})$, reveals that the remaining integral is bounded uniformly in
$u$ and $k_0$.
Thus (\ref{eq:DR4nn2est}) is uniformly bounded by $C\sabs{s}^{-2}$ which is
integrable in $s$.  Then an application of (\ref{eq:restophases})
shows that (\ref{eq:crossingest2}) holds for $n=1$
with $c_2=0$.  This implies that the same bound is valid also for $n=2$ and
$n=3$ by a simple change of integration variables.
This completes the proof that the nearest neighbor dispersion relation
satisfies Assumption \ref{th:disprelass} for any $d\ge 4$.

\end{document}